\DeclareMathOperator{\BCOV}{BCOV}
\DeclareMathOperator{\dR}{dR}
\newcommand{\C}{{\mathbb{C}}}
\newcommand{\D}{{\mathscr D}}
\newcommand{\E}{{\mathscr E}}
\newcommand{\End}{{\operatorname{End}}}
\newcommand{\Hc}{\mathcal H}
\newcommand{\IH}{{\mathbb{H}}}
\newcommand{\N}{{\mathbb N}}
\newcommand{\IP}{{\mathbb{P}}}
\newcommand{\Q}{{\mathbb{Q}}}
\newcommand{\R}{{\mathbb{R}}}
\newcommand{\Tr}{{\operatorname{Tr}}}
\newcommand{\Z}{{\mathbb{Z}}}
\newcommand{\alphab}{{\underline{\alpha}}}
\newcommand{\bu}{{\mathbf u}}
\newcommand{\diag}{\operatorname{diag}}
\newcommand{\e}{{\mathbf E}}
\newcommand{\fb}{{\bar{f}}}
\newcommand{\ib}{{\bar{i}}}
\newcommand{\im}{\operatorname{Im}}
\newcommand{\jb}{{\bar{j}}}
\newcommand{\kb}{{\bar{k}}}
\newcommand{\lb}{{\bar{l}}}
\newcommand{\mub}{{\bar{\mu}}}
\newcommand{\nub}{{\bar{\nu}}}
\newcommand{\om}{{\omega}}
\newcommand{\q}{{\mathbf q}} 
\newcommand{\re}{\operatorname{Re}}
\newcommand{\sgn}{\operatorname{sgn}}
\newcommand{\tr}{\operatorname{tr}}
\newcommand{\w}{{\mathbf w}}
\newcommand{\x}{{\mathbf x}}
\newcommand{\z}{{\mathbf z}}
\newcommand{\zb}{{\bar{z}}}
\newcommand{\pat}{{\partial}}
\newcommand{\bpat}{{\bar{\partial}}}
\newcommand{\hc}{\hat c}
\newcommand{\bc}{\bar c}
\newcommand{\hbc}{\hat{\bar c}}
\newcommand{\str}{\operatorname{str}}
\newcommand{\Str}{\operatorname{Str}}
\newtheorem{thm}{Theorem}[section]
\newtheorem{lm}[thm]{Lemma}
\newtheorem{prop}[thm]{Proposition}
\newtheorem{crl}[thm]{Corollary}
\theoremstyle{definition}
\newtheorem{rem}[thm]{Remark}
\newtheorem{df}[thm]{Definition}
\newtheorem{ex}[thm]{Example}
\theoremstyle{remark}
\begin{document}

\title{Torsion type invariants of singularities}
\date{\today}
\author{Huijun Fan$^\dag$}
\address{School of Mathematical Sciences, Peking University, Beijing, China}
\email{fanhj@math.pku.edu.cn}
\author{ Hao Fang$^*$}
\address{Department of Mathematics, University of Iowa, Iowa city, IA, USA}
\email{hao-fang@uiowa.edu}
\thanks{$^\dag$ Supported by NSFC (11271028, 11325101) and Doctoral Fund of Ministry of Education of China (20120001110060).}
\thanks{$^*$ Partially supported by NSF DMS-100829. }

\maketitle
\begin{abstract} Inspired by the LG/CY correspondence, we study the local index theory of the Schr\"odinger operator associated to a  singularity defined on $\C^n$ by a quasi-homogeneous polynomial $f$. Under some mild assumption on $f$, we show that the small time heat kernel expansion of the corresponding Schr\"odinger operator exists and is a series of fractional powers of time $t$. Then we prove a local index formula which expresses the Milnor number of $f$ by a Gaussian type integral. Furthermore, the heat kernel expansion provides  spectral invariants of $f$. Especially, we  define  torsion type invariants associated to a singularity.  These spectral invariants provide a new direction to study the singularity. 
\end{abstract}

\section{Introduction}

In this article, we study the local index theory of the Schr\"odinger operator associated to a  singularity defined on $\C^n$ by a quasi-homogeneous polynomial $f$. 

The pair $(\C^n,f)$ provides the input data of the supersymmetric Landau-Ginzburg (LG) model in physics. The topological field theory of the supersymmetric Landau-Ginzburg model has been studied by physicists in the end of 1980s. For example, in \cite{CGP1,CGP2, Ce1, Ce2}, the chiral ring structure was computed and  in \cite{CV} the $tt*$-structure was studied. It turns out that the LG model is closely related to the classical singularity theory and generates many interesting mathematical structures (see more detail in Section 2). 

 The LG model $(\C^n,f)$ is a special example of the so called section-bundle system introduced in \cite{Fa}. A section bundle system $(M,g, f)$ consists of a complete non-compact K\"ahler manifold $M$ with metric $g$, which has bounded geometry, and a holomorphic function $f$ defined on $M$. Motivated by the early work of Cecotti-Vafa \cite{CV}, the first-named author has studied the geometry of the twisted Cauchy-Riemann operator $\bpat_f=\bpat+\pat f\wedge$ on the strongly tame section bundle system~\cite{Fa}. He obtains the L$^{2}$ spectral theory for the corresponding Laplacian and established a Hodge theory. It is thus natural to discuss  torsion type invariants for this construction.

On a very different note, a particular type of analytic torsion plays a crucial role in the mirror symmetry for  Calabi-Yau (CY) manifolds. 

Torsion invariants have been introduced in topology to distinguish delicate topological spaces~\cite{Re, Fr, Mi}. Developed by Ray and Singer~\cite{RS1}, analytic torsion is an analog construction on chain complexes of Hilbert spaces which has profound impact in recent development of geometry (see Section 2 for more details and references). In particular, Bershadsky-Cecotti-Ooguri-Vafa have defined the so-called BCOV torsion on CY manifolds, which is of crucial importance in the $N=2$ superconformal field theory. In~\cite{FLY}, the second-named author and his collaborators, Lu and Yoshikawa, have computed the explicit anomaly formula for BCOV torsion; they have
examined  asymptotic behaviors of the BCOV torsion on the polarized moduli space of CY 3-folds near singular CY varieties.  Guided by the recent development in the LG/CY correspondence in the global mirror symmetry picture (see \cite{CR}), it is desirable for us to define and study the LG analogue of the BCOV torsion.
 
It is thus our goal to develop the heat kernel analysis and local index theory for the $\bpat_{f}$ operator and its corresponding chain complex. Let us state some of our main results.
 
Let $\Delta_f$ be the corresponding Laplacian. Firstly, under very mild constraints to a strongly tame system $(M,g,f)$, we prove that $e^{-t\Delta_f}$ is of trace class.  

Secondly, we consider $(\C^n, f)$ with a non-degenerate quasi-homogeneous polynomial $f$. Let $q_i$ be the weights of $f$ with respect to $z_i,i=1,\cdots,n$. Denote by $q_M=\max_i\{q_i\}, q_m=\min_i\{q_i\}$, and $\delta=\frac{1-3(q_M-q_m)}{3(1-q_M)}$. Let 
$$
P^0_k(\z,\w,t):=P_k(\z,\w,t)=\E_0(\z,\w,t)\E_1(\z,\w,t)\sum^k_{i=0}t^i U_i(\z,\w)
$$
be the parametrix of $L:=\pat_t+\Delta_f$ as defined in (\ref{sect-4-equa-0}) which satisfies $LP_k=R_k$. Let $P^i_k:=P_k^{i-1}*R_k$ as defined in (\ref{sect-4-equa-61}). Then we have the following:

\begin{thm}[Theorem \ref{sect-3-main-theorem-1}]\label{sect-1-theo-1} Let $f$ be a non-degenerate quasi-homogeneous polynomial on $\C^n$ satisfying $q_M-q_m<\frac{1}{3}$ and let $k\in \N$ satisfies $k>\frac{3l_0+n+1+2\sum_i q_i}{\delta}$. Fix $T>0$. Then for any $t\in (0,T]$, the series
$$
P(\z,\w,t)=\sum^\infty_{i=0}(-1)^i P^i_k(\z,\w,t)
$$
converges for any $(\z,\w)\in \C^n\times \C^n$, and further $P(\z,\w,t)$ has up to $2l_0$-order $\z$-derivatives and up to $l_0$ order $t$ derivatives. $P(\z,\w,t)$ is the unique heat kernel of the operator $\pat_t+\Delta_f$.  
\end{thm}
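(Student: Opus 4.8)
The plan is to run the classical Levi (Duhamel) parametrix iteration, adapted to this non-compact, growing-potential setting. We already have the formal parametrix $P_k=\E_0\E_1\sum_{i=0}^k t^iU_i$ with $LP_k=R_k$, and the point is that $P=\sum_{i\ge0}(-1)^iP^i_k$, where $P^i_k=P^{i-1}_k*R_k$ (equivalently $P^i_k=P_k*R_k^{*i}$, by associativity of the space-time convolution), is an exact fundamental solution. The argument rests on three pointwise estimates on $\C^n\times\C^n\times(0,T]$. First, a Gaussian-type bound on the remainder: since $U_0,\dots,U_k$ are chosen so that the first $k+1$ terms of $LP_k$ cancel, $R_k$ vanishes to high order at $t=0$, and one shows $|R_k(\z,\w,t)|\le C\,t^{\alpha}\,(\E_0\E_1)(\z,\w,t')$ for a slightly enlarged time parameter $t'$, with $\alpha=\delta k-c_0$ and $c_0$ depending only on $n$ and the weights $q_i$; enlarging $t'$ absorbs into the exponential weights the polynomial growth in $(\z,\w)$ that survives in $R_k$ after the construction. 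Second, a convolution inequality: the space-time convolution of $\E_0\E_1$ with itself, against any fixed polynomial weight in the intermediate variable, is again bounded by $C\,(\E_0\E_1)$, with a constant independent of the splitting time. Third, the Beta identity $\int_0^t\sigma^{a}(t-\sigma)^{b}\,d\sigma=t^{a+b+1}B(a+1,b+1)$, whose iteration produces a factorially small coefficient. The hypothesis $q_M-q_m<\frac13$, equivalently $\delta>0$, enters in the first two estimates: $\E_1$ is a Mehler-type model factor for the potential $|\pat f|^2$, built from the values of $\pat f$ along the segment joining $\z$ and $\w$, and $q_M-q_m<\frac13$ controls the deviation of $|\pat f|^2$ from this model — exactly what makes $\E_1$ convolution-stable up to a bounded multiplicative error and makes $\alpha$ a positive power once $k$ is large.

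Granting these, an induction on $i$ gives $|P^i_k(\z,\w,t)|\le\dfrac{C^{i+1}}{\Gamma(i\alpha+1)}\,t^{i\alpha}\,(\E_0\E_1)(\z,\w,t')$, the Gamma factors arising from repeatedly evaluating the Beta integrals of the third estimate. Summing, $\sum_{i\ge0}|P^i_k|$ is dominated by $(\E_0\E_1)(\z,\w,t')$ times the convergent Mittag-Leffler-type series $\sum_{i\ge0}\frac{(CT^{\alpha})^i}{\Gamma(i\alpha+1)}$; hence the series defining $P$ converges absolutely and uniformly on $\C^n\times\C^n\times(0,T]$, in particular at every point $(\z,\w)$.

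For the heat equation and the regularity I would differentiate the series term by term. Each $P^i_k=P_k*R_k^{*i}$ is smooth away from $t=0$ — both $P_k$ and $R_k$ are — so the only issue is the blow-up rate as $t\to0^+$: distributing $2l_0$ derivatives in $\z$ and $l_0$ derivatives in $t$ across the convolution factors, and converting time derivatives into space derivatives via $LP_k=R_k$, costs negative powers of the splitting time; when the bookkeeping is carried out — the $2l_0$ space and $l_0$ time derivatives together contributing $3l_0$, the $t^{-n}$ prefactor of the Euclidean heat kernel on $\C^n\cong\R^{2n}$ contributing $n$, the time convolution contributing a base gain $1$, and the anisotropic scaling of $\E_1$ contributing $2\sum_iq_i$ — the effective order of vanishing of the differentiated iterated kernels becomes $\delta k-(3l_0+n+1+2\sum_iq_i)$. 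The hypothesis $k>\frac{3l_0+n+1+2\sum_iq_i}{\delta}$ is exactly the requirement that this be positive, so the differentiated series obey the same Mittag-Leffler-type bound and converge uniformly; hence $P$ has continuous $\z$-derivatives up to order $2l_0$ and continuous $t$-derivatives up to order $l_0$. Using the Duhamel identity $L(P_k*C)=C+R_k*C$, the series $LP=R_k+\sum_{i\ge1}(-1)^i(R_k^{*i}+R_k^{*(i+1)})$ telescopes to $0$, while as $t\to0^+$ the terms $P^i_k$ with $i\ge1$ vanish and $P^0_k=P_k$ tends to the Dirac mass on the diagonal; hence $P$ is a fundamental solution of $\pat_t+\Delta_f$, i.e. a heat kernel.

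Finally, uniqueness follows from the $L^2$ theory of \cite{Fa}: $\Delta_f$ is essentially self-adjoint and nonnegative on $\C^n$, so $e^{-t\Delta_f}$ is a well-defined strongly continuous contraction semigroup, and by the trace-class statement recalled above it has a jointly continuous integral kernel. If $P$ and $P'$ are two heat kernels with the Gaussian decay supplied by the construction, then for every $\varphi$ in a dense class the function $u(\z,t)=\int_{\C^n}(P-P')(\z,\bu,t)\varphi(\bu)\,d\bu$ solves $Lu=0$ with $u(\cdot,t)\to0$ in $L^2$, so uniqueness for the abstract Cauchy problem with self-adjoint generator forces $u\equiv0$; letting $\varphi$ run over the dense class gives $P=P'$. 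I expect the main obstacle to be the second estimate: because $\E_1$ is not a semigroup kernel, the Gaussian reproducing identity fails, so one must make the width loss in each convolution explicit and show it stays summable — and this is precisely where the quantitative bound $q_M-q_m<\frac13$ is indispensable.
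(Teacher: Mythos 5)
Your plan is the same Levi/Duhamel parametrix iteration the paper uses, and the quantitative skeleton matches: the remainder gains a power $t^{\delta k-\cdot}$ from the quasi-homogeneous rescaling $\z\mapsto\z_t$, the Beta-type integral $\int_0^t(t-\tau)^a\tau^b\,d\tau$ produces the factorial (Mittag-Leffler) decay across iterations, the series is differentiated term by term, $LP=0$ by telescoping, and your exponent bookkeeping ($3l_0$ from the derivatives, $n$ from the Euclidean prefactor, $1$ from the time convolution, $2\sum_i q_i$ from the anisotropic rescaling of the spatial integration) reproduces exactly the paper's hypothesis $k>\frac{3l_0+n+1+2\sum_i q_i}{\delta}$. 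The one place you genuinely diverge is your ``second estimate'': you propose to prove that $\E_0\E_1$ is convolution-stable (with a widened time parameter) and you correctly flag this as the delicate point, since $\E_1=e^{-tg}$ is not a semigroup kernel. The paper never proves such a statement; instead, after the scaling step it absorbs $\E_1$ together with all polynomial and derivative factors into functions $\hat F^\alpha_{k,1}(\z_t,\w_t)$ of the rescaled variables, which are shown (using $g(\z,\w)\ge\frac18(|\z|^2+|\w|^2)$ and $|\pat f|^2\ge|\z|^2/C-1$) to be uniformly bounded and integrable, and then carries only the free Gaussian $e^{-|\z-\w|^2/4t}$ through the iterated convolutions via the elementary triangle inequality $\frac{d^2(x,y)}{\tau}+\frac{d^2(y,z)}{t-\tau}\ge\frac{d^2(x,z)}{t}$ (taking a sup over the intermediate variable in one factor and integrating the other). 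This buys a much simpler induction at the cost of a weaker pointwise bound on the iterates (no $\E_1$-decay is retained, only the free Gaussian and the factorially small, $t$-power-improving constants), which is all that is needed for convergence and regularity; your route, if the convolution stability of $\E_0\E_1$ can be pushed through, would yield sharper Gaussian-with-potential bounds on $P$ itself, but it is not required for the theorem. Your uniqueness argument via the $L^2$ semigroup of \cite{Fa} is reasonable and is in fact more explicit than the paper, which asserts uniqueness without spelling out this step.
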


Note that the existence of the heat kernel expansion $\tr(e^{-t\Delta_f})$ as $t\to 0$ is not obvious due to the non-compactness of the manifold $M$.   In our case, $\Delta_f$ is a Schr\"odinger operator of matrix type having the following form:
$$
\Delta_f=\Delta_{\bpat}+L_f+|\pat f|^2.
$$
Hence the heat kernel expansion depends not only on the local information, but also depends on the growth of the potential function $|\pat f|^2$ at infinity. There are many results on the heat kernel expansion of scalar Schr\"odinger operator with upper bounded potential function (ref. \cite{HP,GW}), but there are seldom result about the heat kernel expansion of the matrix Schr\"odinger operator with unbounded potential function.   

\begin{rem}The singularities satisfying the condition $q_M-q_m<\frac{1}{3}$ include all homogeneous polynomials, ADE simple singularities, and the unimodal singularities which have three types: parabolic type, hyperbolic type and 14 Exceptional families (see \cite[Vol. I, page 246]{AGV}). However, it is easy to find singularities which do not satisfying this condition, e.g., $z_1^2+z_1 z_2^3+z_2 z_3^3,z_1^2+z_1z_2^2+z_2 z_3^4$ which have weights $(\frac{1}{2},\frac{1}{6},\frac{5}{18})$ and $(\frac{1}{2},\frac{1}{4},\frac{3}{16})$ respectively.
\end{rem}

Applying the heat kernel expansion, we  obtain the following index formula regarding the L$^{2}$ $\bpat_f$-cohomology. Notice that $(-1)^{n}\mu(f)$ is the Euler characteristic of the above mentioned cohomology. We get the following conclusion.

\begin{thm}[Corollary \ref{sect-5-coro-index1}] \label{add2}Under the condition of Theorem \ref{sect-1-theo-1}, the Milnor number $\mu(f)$ is given by the Gaussian type integral
\begin{equation}
\mu(f)=\frac{t^n}{\pi^n}\int_{\C^{n}}\exp (-t|\pat f|^{2})|\det \partial^{2}f|^{2}d\text{vol},\forall t>0,
\end{equation}
if we take $f(z)=z^2/2$, then we obtain the well-known formula:
\begin{equation}
\pi^n=\int_{\C^n}e^{-|\z|^2}d\text{vol}.
\end{equation}
\end{thm}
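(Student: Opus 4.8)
\emph{Proof strategy.} The backbone is a McKean--Singer argument for the $\bpat_f$--complex, with the heat kernel produced in Theorem~\ref{sect-1-theo-1} supplying the regularity a non--compact domain requires. Since $e^{-t\Delta_f}$ is trace class (established above) and acts through the smooth kernel $P(\z,\w,t)$, the supertrace
\[
\Str\,e^{-t\Delta_f}:=\sum_{q=0}^{n}(-1)^{q}\,\tr\bigl(e^{-t\Delta_f}\big|_{\Omega^{0,q}}\bigr)=\int_{\C^{n}}\str\,P(\z,\z,t)\,d\text{vol}
\]
is well defined. First I would show $\tfrac{d}{dt}\Str\,e^{-t\Delta_f}=0$: writing $\Delta_f=\bpat_f\bpat_f^{*}+\bpat_f^{*}\bpat_f$, the $t$--derivative is the supertrace of a supercommutator and hence vanishes --- the pointwise decay and smoothing bounds for $P$ from Theorem~\ref{sect-1-theo-1} are what license differentiating under the integral and rule out boundary contributions at infinity. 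Letting $t\to\infty$ and invoking the $L^{2}$ Hodge theory of \cite{Fa}, this constant equals the Euler characteristic of the $L^{2}$ $\bpat_f$--cohomology, namely $(-1)^{n}\mu(f)$ as recalled in the text; hence $\Str\,e^{-t\Delta_f}=(-1)^{n}\mu(f)$ for all $t>0$.

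Next I would evaluate $\int_{\C^{n}}\str\,P(\z,\z,t)\,d\text{vol}$ from the parametrix $P_k=\E_0\E_1\sum_i t^{i}U_i$ and identify it with a Gaussian (Mathai--Quillen type) integral. On the diagonal the two scalar factors produce an explicit Gaussian weight: $\E_1$ carries the scalar--potential factor $e^{-t|\pat f(\z)|^{2}}$ and $\E_0$ the Euclidean heat factor. The supertrace of the matrix part $\sum_i t^{i}U_i(\z,\z)$ then collapses in the usual way --- the supertrace of the low heat coefficients vanishes pointwise, while at the critical order it is a universal expression in the Hessian $\pat^{2}f(\z)$, since $\pat^{2}f$ enters $\Delta_f$ exactly through its zeroth--order (Clifford--type) term; evaluated on the exactly solvable constant--Hessian complex oscillator via Mehler's formula, this expression is a universal constant times $|\det\pat^{2}f(\z)|^{2}$, and the iterated remainders $P^{i}_k$, $i\ge1$, being of strictly lower order, drop out. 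Collecting the powers of $t$ and the universal constant --- both pinned down on the elementary model $f(\z)=\tfrac12\sum_i z_i^{2}$, where $\Str\,e^{-t\Delta_f}=(-1)^{n}$ and the integral in question is the classical $\int_{\C^{n}}e^{-t|\z|^{2}}\,d\text{vol}=(\pi/t)^{n}$ --- yields $(-1)^{n}\Str\,e^{-t\Delta_f}=\tfrac{t^{n}}{\pi^{n}}\int_{\C^{n}}e^{-t|\pat f(\z)|^{2}}|\det\pat^{2}f(\z)|^{2}\,d\text{vol}$, which together with the first paragraph is the stated formula. (As an independent check on the identity and the normalization: non--degeneracy forces $(\pat f)^{-1}(0)=\{0\}$ and the weighted $\C^{*}$--action forces $|\pat f(\z)|\to\infty$, so $\pat f\colon\C^{n}\to\C^{n}$ is a proper map of topological degree $\dim_{\C}\C[\z]/\Jac(f)=\mu(f)$; under the holomorphic substitution $\w=\pat f(\z)$, which pulls $d\text{vol}(\w)$ back to $|\det\pat^{2}f(\z)|^{2}\,d\text{vol}(\z)$, the right--hand side becomes $\mu(f)\cdot\tfrac{t^{n}}{\pi^{n}}\int_{\C^{n}}e^{-t|\w|^{2}}\,d\text{vol}=\mu(f)$.)

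The displayed corollary then needs nothing further: for $f(\z)=\tfrac12\sum_i z_i^{2}$ one has $|\pat f|^{2}=|\z|^{2}$, $\det\pat^{2}f=1$ and $\mu(f)=1$, so the formula at $t=1$ reads $\pi^{n}=\int_{\C^{n}}e^{-|\z|^{2}}\,d\text{vol}$. I expect the two genuinely hard points to be: (i) the non--compact McKean--Singer step --- the vanishing of the supercommutator supertrace and the $t\to\infty$ limit --- where the pointwise and trace--class estimates of Theorem~\ref{sect-1-theo-1} are indispensable; and (ii) the local identification of the heat data with the density $|\det\pat^{2}f|^{2}$ (vanishing of the lower supertraces, the Hessian/oscillator computation), which must be organized so as to survive integration, since on $\C^{n}$ the Gaussian weight $e^{-t|\pat f|^{2}}$ cannot be detached from the heat coefficients --- this is precisely where properness of $\pat f$, hence the concentration of the Gaussian near the unique critical point, does the real work.
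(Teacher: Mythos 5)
Your strategy coincides with the paper's own proof of Theorem \ref{index} and Corollary \ref{sect-5-coro-index1}: a McKean--Singer constancy argument for $\str(e^{-t\Delta_f})$, the $t\to\infty$ limit identified with the Euler characteristic $(-1)^{n}\mu(f)$ via the $L^{2}$ Hodge theory of \cite{Fa}, and the $t\to 0$ limit read off from the heat-kernel expansion, where $\str U_j$ vanishes pointwise for $1\le j\le 2n-1$ and the coefficient $U_{2n}$ contributes $\str(L_f^{2n})/(2n)!$, so that the diagonal supertrace collapses to $(4\pi t)^{-n}e^{-t|\pat f|^{2}}\,t^{2n}\cdot(2n)!^{-1}\str(L_f^{2n})$ with the higher coefficients and iterated remainders of order $O(t^{\delta})$. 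The only real differences are in presentation: the paper pins the universal constant by a direct Clifford-algebra computation, $\str L_f^{2n}=(2n)!\,(-1)^{n}4^{n}|\det\pat^{2}f|^{2}$ (Proposition \ref{sec1:prop-1}), rather than by normalizing on the model oscillator, and your closing change-of-variables check ($\w=\pat f(\z)$, a proper map of degree $\mu(f)$) does not appear in the paper, although it is a clean way to see that the Gaussian integral is independent of $t$ (the paper obtains the all-$t$ statement from the quasi-homogeneous rescaling $z_i\mapsto t^{q_i/2}z_i$). One detail to correct: the supertrace must be taken over the full complex $\Lambda^{\bullet}T^{*}\C^{n}$ graded by total degree $k=0,\dots,2n$ with sign $(-1)^{k}$, not over $(0,q)$-forms alone, since $\pat f\wedge$ raises the holomorphic degree and $L_f$ does not preserve bidegree, so $e^{-t\Delta_f}$ does not restrict to $\Omega^{0,\bullet}$; with this adjustment the Euler characteristic $(-1)^{n}\mu(f)$ and your normalization on $f=\tfrac12\sum_i z_i^{2}$ come out exactly as you state.
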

It is thus natural for us to consider torsion type invariants. Define the $i$-th zeta function $\Theta^i_f(s)$

$$
\Theta^i_{f}(s)=\frac{1}{2}\sum_{k=0}^{2n}(-1)^k k^{i}\frac{1}{\Gamma(s)}\int^\infty_0 \Tr(e^{-t\Delta^k_{f}}-\Pi)t^{s-1}dt,
$$ which is analytical in the domain $\{s\in \C|\re(s)>2n\}, n=\dim_\C M$.  Due to our trace class result and the asymptotic formula for heat kernel, we define the $i$-th torsion type invariant
\begin{equation}
\log T^i(f)=-(\Theta^{R,i}_f)'(0).
\end{equation}
Thus, $T^{1}_{f}$ is the natural torsion for the chain complex associated to the $\bpat_{f}$, which is shown to be constant by Theorem \ref{add1}. It is however interesting to point out that $T^{2}_{f}$ is an analogue of the BCOV torsion in this setting. 

\begin{thm}[Theorem \ref{sec4:thm-1}] \label{add1}Let $f$ be a non-degenerate quasi-homogeneous polynomial, then
\begin{equation}
\Theta^1_f(s)\equiv 0
\end{equation}
\end{thm}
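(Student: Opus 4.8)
The plan is to deduce $\Theta^1_f(s)\equiv 0$ from a Poincar\'e--Hodge symmetry of the $\bpat_f$-complex together with the supersymmetric (McKean--Singer) cancellation of the nonzero spectrum, the only genuine analytic input being the trace-class property of $e^{-t\Delta_f}$ established above. Write $\zeta_k(s):=\frac{1}{\Gamma(s)}\int_0^\infty\Tr(e^{-t\Delta^k_f}-\Pi)\,t^{s-1}\,dt$, so that $\Theta^1_f(s)=\frac12\sum_{k=0}^{2n}(-1)^k k\,\zeta_k(s)$ for $\re(s)>2n$. First I would establish the duality $\zeta_k(s)=\zeta_{2n-k}(s)$. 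Let $\star$ be the conjugate-linear operator on forms on $\C^n$ obtained by composing the Hodge star with complex conjugation; it is a conjugate-linear isometry sending $\Omega^{p,q}$ to $\Omega^{n-q,\,n-p}$, hence the total degree $k$ part $\Omega^k=\bigoplus_{p+q=k}\Omega^{p,q}$ to $\Omega^{2n-k}$. Using the K\"ahler identities, the fact that $\star$ conjugates $\bpat$ to $\pm\,\bpat^\dagger$, and the fact that $\star$ turns $\pat f\wedge$ into contraction with the metric dual of $\overline{\pat f}$, i.e.\ into $(\pat f\wedge)^\dagger$, one obtains $\star\,\bpat_f=\pm\,\bpat_f^\dagger\,\star$ on the relevant domains; since $\Delta_f=\bpat_f\bpat_f^\dagger+\bpat_f^\dagger\bpat_f$ this gives $\star\,\Delta^k_f=\Delta^{2n-k}_f\,\star$. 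Consequently $\Delta^k_f$ and $\Delta^{2n-k}_f$ are isospectral with multiplicities, $\star$ identifies their harmonic subspaces, and $\zeta_k(s)=\zeta_{2n-k}(s)$.

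Then, reindexing $k\mapsto 2n-k$ and using $(-1)^{2n-k}=(-1)^k$ together with $\zeta_{2n-k}=\zeta_k$,
$$
\Theta^1_f(s)=\frac12\sum_{k=0}^{2n}(-1)^k(2n-k)\,\zeta_k(s)=n\sum_{k=0}^{2n}(-1)^k\zeta_k(s)-\Theta^1_f(s),
$$
whence $\Theta^1_f(s)=\tfrac n2\sum_{k=0}^{2n}(-1)^k\zeta_k(s)$. Finally I would show the remaining alternating sum vanishes identically. For $\re(s)>2n$ one may interchange the finite sum with the integral, so it suffices to prove $\sum_{k=0}^{2n}(-1)^k\Tr(e^{-t\Delta^k_f}-\Pi)=0$ for every $t>0$. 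Since $e^{-t\Delta_f}$ is trace class, each eigenvalue $\lambda>0$ of the total Laplacian has finite multiplicities $\dim E^k_\lambda$ in every degree, and for $\lambda>0$ the operator $\bpat_f$ maps the orthocomplement of $\ker\bpat_f$ in $E^k_\lambda$ isomorphically onto $\ker\bpat_f\cap E^{k+1}_\lambda$ (with $\tfrac1\lambda\bpat_f^\dagger$ the inverse), which telescopes to $\sum_k(-1)^k\dim E^k_\lambda=0$. Summing over $\lambda$ against the absolutely convergent weights $e^{-t\lambda}$ gives the McKean--Singer identity $\sum_k(-1)^k\Tr(e^{-t\Delta^k_f})=\sum_k(-1)^k\dim\ker\Delta^k_f=(-1)^n\mu(f)$, while $\sum_k(-1)^k\Tr\Pi=\sum_k(-1)^k\dim\ker\Delta^k_f$ as well; the difference is zero. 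Hence $\sum_k(-1)^k\zeta_k(s)\equiv0$ on $\re(s)>2n$, so $\Theta^1_f(s)\equiv0$ there and, by uniqueness of meromorphic continuation, $\Theta^{R,1}_f(s)\equiv0$; in particular $\log T^1(f)=-(\Theta^{R,1}_f)'(0)=0$, so $T^1_f$ is constant.

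The step I expect to be the main obstacle is making the duality of the first paragraph rigorous in the non-compact $L^2$ setting: one must check that $\star$ carries $\Dom(\Delta^k_f)$ onto $\Dom(\Delta^{2n-k}_f)$ and intertwines the \emph{self-adjoint} realizations, not merely the formal differential expressions --- this relies on the strong tameness and bounded-geometry hypotheses and the $L^2$-Hodge theory of \cite{Fa} --- and one must pin down once and for all the $\sqrt{-1}$ and $(-1)^{\deg}$ sign factors in $\star\,\bpat_f=\pm\,\bpat_f^\dagger\,\star$. Once that is in place the algebra and the supersymmetric cancellation are formal, the trace-class bound being exactly what legitimizes the rearrangement of the spectral sums.
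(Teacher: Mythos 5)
Your reduction in the second and third paragraphs is fine (the reindexing giving $\Theta^1_f=\tfrac n2\sum_k(-1)^k\zeta_k$, and the vanishing of $\sum_k(-1)^k\Tr(e^{-t\Delta^k_f}-\Pi)$ by pairing the nonzero eigenspaces through $\bpat_f$, which is legitimate because the spectrum is discrete and $e^{-t\Delta_f}$ is trace class). The gap is in your key duality lemma: the claimed intertwining $\star\,\bpat_f=\pm\,\bpat_f^{\dag}\,\star$ is false, whether or not you compose the Hodge star with complex conjugation. The relative sign between the $\bpat$-part and the $\pat f\wedge$-part is degree dependent and resolves to the $(-f)$-twisted adjoint, not to $\bpat_f^{\dag}$. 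This is cleanest in the real picture: with $h=f+\bar f$ one has $\bpat_f+\pat_{\bar f}=d+dh\wedge=e^{-h}de^{h}$, and since $*$ commutes with multiplication by functions, $*(e^{-h}de^{h})*^{-1}=\pm e^{-h}d^{\dag}e^{h}=\pm(d_{-h})^{\dag}$; composing with conjugation changes nothing because $d_h$ is a real operator. Hence what the star actually yields is $\star\,\Delta^{k}_{f}=\Delta^{2n-k}_{-f}\,\star$ --- exactly the identity $*\bpat_f\bpat_f^{\dag}=\pat_{-f}^{\dag}\pat_{-f}*$ used in the paper --- so you only get $\zeta_k(f;s)=\zeta_{2n-k}(-f;s)$, not $\zeta_k(s)=\zeta_{2n-k}(s)$. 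The ``sign bookkeeping'' you flag as the main obstacle is therefore not a technicality to be checked at the end; it is the point at which the argument, as written, breaks.

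To close it you need the additional input that $\Delta_{-f}$ and $\Delta_f$ are isospectral degree by degree, and this is precisely where (and, in the paper's proof, the only place where) the quasi-homogeneity hypothesis enters: choosing $\xi$ with $\xi^d=-1$, the map $I_\xi=\diag(\xi^{k_1},\dots,\xi^{k_n})$ satisfies $f(I_\xi\cdot\z)=-f(\z)$ and conjugates $\Delta_{-f}$ to $\Delta_f$. (Alternatively one can note that the bidegree involution $\sigma=(-1)^q$ on $(p,q)$-forms commutes with $\Delta_{\bpat}+|\pat f|^2$ and anticommutes with $L_f$, so $\sigma\Delta_f\sigma^{-1}=\Delta_{-f}$ within each total degree; this would even avoid quasi-homogeneity at this step.) With such an argument inserted, your duality becomes correct and the remainder of your proof is sound, and is then essentially parallel to the paper's route, which combines the same $*$-duality (with the $f\mapsto-f$ flip), the pairing $\Tr(e^{-t\bpat_f^{\dag}\bpat_f})_{2,k}=\Tr(e^{-t\bpat_f\bpat_f^{\dag}})_{1,k+1}$ to obtain $\Theta^1_f+\Theta^1_{-f}=0$, and then $\Theta^1_{-f}=\Theta^1_f$ via $I_\xi$. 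You would also still need to address the domain issues you yourself raise, i.e.\ that these intertwinings hold for the self-adjoint $L^2$-realizations and their heat semigroups, not just for the formal expressions.
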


We have obtained several interesting properties of the new torsion type invariants. There are still many questions arising from this paper. It is interesting to see the variation of our new torsion type invariant under some compact perturbation of metrics. The technical condition $q_M-q_m<1/3$ in Theorem \ref{sect-1-theo-1} may be not necessary for the convergence estimate. We hope that our result can be extended to more generalized LG system $(M,g,f)$. Finally, we expect the torsion invariant defined (or its "equivariant version") can be identified to the BCOV torsion by LG/CY correspondence. 

The rest of the paper is organized as follows: In Section 2, we review some background on analytic torsion and strongly tame systems; In Section 3, we review some preliminary results; In Section 4, we study the special case of harmonic oscillator; In Section 5, we study the short time heat kernel expansion; In Section  6, we discuss the trace class property for the heat kernel and establish Theorem~\ref{add1}; In Section 7, we prove the index theorem Theorem~\ref{add2} and some results related to the new torsion type invariants.

\section{Background }
Let $(M^n, g)$ be a closed oriented $n$-dimensional Riemannian manifold. Given a representation $\rho:\pi_1(M)\to O(r)$, there is the associated flat bundle $E_\rho\to M$ with metric $h$. There is the de Rham complex valued on $E_\rho$:
$$
\cdots\to \Omega^{p-1}(M,E_\rho)\xrightarrow{d^{p-1}} \Omega^p(M,E_\rho)\xrightarrow{d^p}\Omega^{p+1}(M,E_\rho)\to \cdots.
$$
By means of the metrics on $M$ and on $E_\rho$, we can define the adjoint operator $(d^p)^\dag$ and the Laplacian operator acting on $p$-forms:
$$
\Delta_p=(d^{p-1})(d^{p-1})^\dag+(d^p)^\dag d^p. 
$$
$\Delta_p$ can be viewed as a unbounded self-adjoint operator acting on the $L^2$ space $L^2\Omega^p(M,E_\rho)$. 
Since $M$ is compact, $\Delta_p$ has purely discrete spectrum. This induces the de Rham-Hodge theorem: 
\begin{align*}
&L^2\Omega^p(M,E_\rho)=\Hc^p\oplus \im d^{p-1}\oplus \im (d^p)^\dag\\
&\Hc^p\cong H^p_{\dR}(M,E_\rho),
\end{align*}
where $\Hc^p:=\{\phi\in \Omega^p|\Delta_p \phi=0\}$. The operator $e^{-t\Delta_p}:L^2\Omega^p\to L^2\Omega^p$ forms a strongly continuous semigroup and there exits the heat kernel expansion as $t\to 0$: 
\begin{equation}
\Tr e^{-t\Delta_p}=t^{-\frac{n}{2}}(c^p_1+c^p_2 t+c^p_3 t^2+\cdots).
\end{equation}
The constants $c^p_i,i=1,2,\cdots$ are called spectral invariants of $(M,g,\rho)$. Let $\Pi: L^2\Omega^*\to \Hc^*$ be the projection operator. 
The zeta function $\zeta_p(s)$ of $\Delta_p$ is defined as 
$$
\zeta_p(s)=\frac{1}{\Gamma(s)}\int^\infty_0 \Tr(e^{-t\Delta_p}-\Pi)t^{s-1}dt
$$
By heat kernel expansion, if $\re(s) >\frac{n}{2}$, then $\zeta_p(s)$ is an analytic function and actually there is $\zeta_p(s)=\sum_{\lambda^p_i>0}(\lambda_i^p)^{-s}$, where $\lambda^p_i$ are the spectrum of $\Delta_p$. By a result of Seeley \cite{Se}, $\zeta_p(s)$ can be extended meromorphically to $\C$ such that it is analytical at $s=0$. Then the Ray-Singer torsion is defined as 
\begin{equation*}
\log T_{M,\rho}:=\frac{1}{2}\sum_{p=0}^n (-1)^p p\zeta'_p(0). 
\end{equation*}
$T_{M\rho}$ was defined in \cite{RS1}, and was proved to have such properties: (1) $T_{M,\rho}$ is independent of the metric $g$;(2)If $M$ is even, then $T_{M,\rho}$ is constant $1$. The definition of the analytic torsion by Ray-Singer imitates the combinatory definition of The Reidemeister-Franz torsion $\tau_{M,\rho}$, which was defined much earlier (see \cite{Re, Fr, Mi}). $\tau_{M,\rho}$ was observed to be the first topological invariant which can distinguish two homotopy manifolds. The famous Ray-Singer conjecture says that the two invariants are identical. This conjecture was proved by Cheeger and Mueller \cite{Ch, Mu1}. Mueller (\cite{Mu2})and Bimut-Zhang (\cite{BZ}) extended this equivalence theorem to torsions with general coefficient rings.  Afterwards, the torsion invariants have been defined and studied on more generalized situations (see \cite{BGS, BFKM, DM, Ma1, Ma2, MW} and references there).  

Ray-Singer \cite{RS2} generalized their definition of torsion invariants to the $\bpat$-operator on complex manifold. Let $M$ be a complex $n$-dimensional hermitian manifold and let $L\to M$ be a holomorphic flat bundle on $M$ with hermitian metric. We have Dolbeault complex 
$$
\cdots\to \Omega^{p,q-1}(M,L)\xrightarrow{\bpat}\Omega^{p,q}(M,L)\xrightarrow{\bpat}\Omega^{p,q+1}(M,L)\to \cdots
$$
and the Dolbeault cohomology $H^{p,q}_{\bpat}(M,L)=\ker\bpat/\im \bpat$. By using the metrics on $M$ and $L$, one can define the adjoint operator $\bpat^\dag$ and the complex Laplacian $\Delta_{p,q}=\bpat\bpat^\dag+\bpat^\dag \bpat$ from $\Omega^{p,q}(M,L)$ to itself. If $M$ is K\"ahler, $\Delta_\bpat$ is half of the real Laplacian. Similarly, one can define the zeta function $\zeta_{p,q}(s)$ of $\Delta_{p,q}$.  Fix $p=0,1,\cdots,$ the $\bpat$-torsion $T_p(M,\rho)$ is defined as 
$$
\log T_p(M,\rho):=\frac{1}{2}\sum_{q=0}^n(-1)^q q\zeta'_{p,q}(0).
$$
Now assuming that $(M,\om)$ is a K\"ahler manifold with K\"ahler form $\om$. In \cite{RS2}, Ray and Singer proved the following properties:
\begin{enumerate}
\item If $\rho_1,\rho_2$ are two representations, then $T_p{M,\rho_1}/T_p(M,\rho_2)$ is independent of the choice $\om$ in the same cohomology class. 
\item $T_p(M,\rho)$ may depend on the K\"ahler class $[\om]$. 
\item There is the vanishing theorem: $\sum_{p=0}^n (-1)^p\log T_p(M,\rho)=0$.
\end{enumerate}

The above definition of zeta function can be simplified via the concepts of superspace and supertrace.  Let $N:\Omega^p\to \Omega^p$ be the number operator given by $N(\phi^p)=p\phi^p$. $\Lambda^*(\C^n)=\oplus_k \Lambda^k(\C^n)$ is a superspace and if $A\in \End(\Lambda^*(C^n))$, then the superstrace of $A$ is defined as 
$$\str A=\sum_{k=1}^{2n} (-1)^{k} \tr A|_{\Lambda^k(\C^{n})}.$$
Then the zeta function for a real manifold (or complex manifold for $\bpat$-operator) can be rewritten as 
$$
\zeta(s)=\frac{1}{\Gamma(s)}\int^\infty_0 \Str(N(e^{-t\Delta}-\Pi))t^{s-1}dt,
$$
and the torsion is defined as 
$$
T_{M,\rho}:=e^{-\zeta'(0)}.
$$
Here we use "$\tr,\str$" to denote pointwise trace and supertrace and use "$\Tr, \Str$" to denote corresponding concepts in Hilbert spaces, respectively. 

The vanishing theorem for $\bpat$-operator is equivalent to the fact that $\zeta(s)\equiv 0$. Unlike the real case, it is less obvious to extract information of the holomorphic structure from the $\bpat$-torsion.  However, physicists have found the new torsion invariants when studying the quantum field theory and K\"ahler gravity. 

In~\cite{BCOV1,BCOV2}, Bershadsky, Cecotti, OOguri and Vafa have defined a new zeta function, which we call the BCOV zeta function, as
$$
\zeta_{\BCOV}(s):=\frac{1}{\Gamma(s)}\int^\infty_0 \Str(pq(e^{-t\Delta}-\Pi))t^{s-1}dt,
$$
and the corresponding BCOV-torsion is defined as 
$$
T_{M,\rho}^{\BCOV}:=e^{-\zeta'_{\BCOV}(0)}.
$$
Set $\rho$ be the trivial representation and let $M$ be a CY manifold, then in physics, $T_M^{\BCOV}$ is interpreted as the genus $1$ partition function of the Topological field theory on $M$.  

We remark that, considering the vanishing theorem for the torsion for complex manifolds, it is easy to see that
$$
\zeta_{\BCOV}(s):=\frac{1}{2\Gamma(s)}\int^\infty_0 \Str(N^{2}(e^{-t\Delta}-\Pi))t^{s-1}dt.
$$
Thus, the BCOV invariant can be defined independent of the complex structures.

BCOV theory corresponds to the Kodaira-Spencer deformation theory of complex structures on K\"ahler manifolds. It works especially well for CY manifolds. See~\cite{FLY}.  It gives the genus $0$ and genus $1$ theory of CY B model in mirror symmetry phenomenon. The genus $0$ part was given by Kontsevich-Barannikov construction \cite{BK}, The genus $1$ invariant has been defined rigorously by the second-named auther, Z. Lu and K. Yoshikawa~\cite{FLY}. Fang-Lu-Yoshikawa \cite{FLY} and A. Zinger \cite{Zi} have also solved a conjecture of Bershadsky-Cecotti-Ooguri-Vafa about the quintic threefold. The discussion of torsion invariants and modular forms on $K3$ surface can be found in \cite{Yo1} and the subsequent papers \cite{Yo2}. An progress about the construction of the higher genus invariants has been made by K. Castello and S. Li \cite{CL}. 

Given a quintic polynomial $f(\z)=z_1^5+\cdots+z_5^5$, the typical CY 3-fold is given by the zero locus $X_f:=\{[\z]|f(\z)=0\}\subset \IP^4$. However, according to LG/CY correspondence conjecture by physicists, the quantum information of $X_f$ can also be read out from the Landau-Ginzburg model $(\C^5,f)$. The CY and LG models form the known global mirror symmetry picture (ref. \cite{CR}).  The realization of LG model in mathematics is the singularity theory. In the global mirror symmetry picture, there is also the LG to LG mirror conjecture (see the recent work \cite{HLSW}). The LG A model has been built by Fan, Jarvis and Ruan based on Witten's early work on $r$-spin curves, and now is called FJRW theory (ref. \cite{FJR2,FJR3, FFJMR}). The LG B model theory treat the universal unfolding of a singularity $(\C^n,f(\z))$. The Frobenius manifold structure was first found by K. Saito \cite{Sa1,Sa2, ST} via the theory of primitive forms and then later by B. Dubrovin \cite{Du} by oscillatory integration.  However, compared to the CY case, the genus $1$ geometrical invariants are absent in singularity theory.  

Motivated by the early work of Cecotti-Vafa \cite{CV}, the first-named author studied the Hodge theory of the twisted Cauchy-Riemann operator $\bpat_f=\bpat+\pat f\wedge$ on the "strongly tame section bundle system $(M,g,f)$". A section bundle system $(M,g, f)$ consists of a complete non-compact K\"ahler manifold $M$ with metric $g$, which has bounded geometry, and a holomorphic function $f$ defined on $M$. 

\begin{df} The section-bundle system is said to be strongly tame, if for any
constant $C>0$, there is
\begin{equation}
|\nabla f|^2-C|\nabla^2 f|\to \infty, \;\text{as}\;d(x, x_0)\to
\infty.
\end{equation}
Here $d(x,x_0)$ is the distance between the point $x$ and the base
point $x_0$.
\end{df}

A famous example of strongly tame system is a quintic polynomial $f$ on the Euclidean space $\C^5$).  For strongly tame section-bundle system $(M,g,f)$, we have the
fundamental theorem. 

\begin{thm}\label{thm:intro-1}\cite[Theorem 1.4]{Fa} Suppose that $(M,g)$ is a K\"ahler manifold with bounded
geometry. If $\{(M.g),f\}$ is a strongly tame section-bundle system,
then the form Laplacian $\Delta_f$ has purely discrete spectrum and
all the eigenforms form a complete basis of the Hilbert space
$L^2(\Lambda^\bullet(M))$.
\end{thm}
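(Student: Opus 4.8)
The statement is a spectral result, so the plan is to prove that the Friedrichs realization of $\Delta_f=\bpat_f\bpat_f^\dag+\bpat_f^\dag\bpat_f$ has \emph{compact resolvent}; discreteness of the spectrum (eigenvalues of finite multiplicity accumulating only at $+\infty$) and existence of a complete orthonormal basis of $L^2(\Lambda^\bullet(M))$ by eigenforms then follow from the spectral theorem for self-adjoint operators with compact resolvent, while smoothness of the eigenforms comes from interior elliptic regularity ($\Delta_f$ being locally an elliptic operator of Laplace type).

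First I would record a Bochner--Kodaira--Weitzenb\"ock identity for the twisted operator $\bpat_f=\bpat+\pat f\wedge$. Since $f$ is holomorphic, $\bpat(\pat f)=0$, and the cross terms between $\bpat$ and $\pat f\wedge$ produce the anticommutator $\{\pat f\wedge,(\pat f\wedge)^\dag\}=|\pat f|^2$ together with zeroth-order terms built from the complex Hessian $\pat^2 f$. Combining this with the K\"ahler Bochner--Kodaira--Nakano formula for $\Delta_\bpat$ on $(M,g)$ gives, for $\phi\in C^\infty_c(\Lambda^\bullet(M))$,
\[
\langle\Delta_f\phi,\phi\rangle \;=\; \|\bpat_f\phi\|^2+\|\bpat_f^\dag\phi\|^2 \;=\; c_0\|\nabla\phi\|^2+\int_M|\pat f|^2|\phi|^2+\int_M\langle\mathcal R\,\phi,\phi\rangle,
\]
where $c_0>0$ and $\mathcal R$ is a pointwise self-adjoint endomorphism whose norm is bounded by $c\,|\nabla^2 f|$ plus a term controlled by the curvature of $(M,g)$, hence uniformly bounded by the bounded-geometry hypothesis. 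Now strong tameness enters: given any $C>0$, outside a compact set $K_C$ one has $|\nabla f|^2-C|\nabla^2 f|\ge 1$, so $|\nabla^2 f|\le C^{-1}|\nabla f|^2$ and $|\nabla f|^2\to\infty$ there. Choosing $C$ so large that $c\,C^{-1}\le\frac12$ and absorbing the bounded contribution over $K_C$ into a constant $A$, I obtain
\[
\langle\Delta_f\phi,\phi\rangle+A\|\phi\|^2 \;\ge\; c_0\|\nabla\phi\|^2+\frac12\int_M|\pat f|^2|\phi|^2 .
\]
A cutoff/Agmon-type approximation argument, again using the confining potential $|\pat f|^2\to\infty$ and bounded geometry, shows $C^\infty_c$ is a core for the quadratic form, so this inequality holds on the whole form domain $\mathcal Q$.

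With this coercivity in hand, the compactness of $\mathcal Q\hookrightarrow L^2$ is routine: for a sequence $\{\phi_k\}$ bounded in $\mathcal Q$, the bound on $\|\nabla\phi_k\|_{L^2}$ gives, on every geodesic ball $B_R(x_0)$, a subsequence converging in $L^2(B_R(x_0))$ by Rellich--Kondrachov (uniformly available by bounded geometry), while the bound on $\int_M|\pat f|^2|\phi_k|^2$ together with $|\pat f|^2\to\infty$ yields the uniform tail estimate $\int_{M\setminus B_R(x_0)}|\phi_k|^2\le\varepsilon(R)\to 0$; a diagonal extraction produces a subsequence Cauchy in $L^2(\Lambda^\bullet(M))$. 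Hence $\Delta_f$ has compact resolvent, and the theorem follows. The main obstacle is the one flagged in the introduction: the matrix term $\mathcal R$ is not sign-definite, so it must be dominated by the confining potential, and it is precisely the strong-tameness hypothesis $|\nabla f|^2-C|\nabla^2 f|\to\infty$ that lets one beat $|\nabla^2 f|$ by $|\pat f|^2$ with an arbitrarily large margin; the secondary (but genuinely technical) point is justifying that compactly supported smooth forms are dense in the form domain on the noncompact $M$.
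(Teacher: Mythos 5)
This theorem is not proved in the present paper at all; it is imported verbatim from \cite[Theorem 1.4]{Fa}, and the only trace of its proof used here is inequality (95) of \cite{Fa} (the relative form-bound of the matrix part $H^1_f=R+L_f$ by the scalar Schr\"odinger operator $\Delta^0_f$ with confining potential $|\nabla f|^2$), invoked in Lemma \ref{lm:trac-equi}. Your proposal is correct and follows essentially that same strategy: split off the zeroth-order Hessian/curvature terms, use strong tameness to absorb them into the confining potential, and deduce compact resolvent (hence discrete spectrum and a complete eigenbasis) from coercivity plus a Rellich-type argument with a tail estimate.
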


This theorem leads to the following:

\begin{crl}\cite[Theorem 2.42 and Theorem 2.49]{Fa}\label{crl:intro-2} The section-bundle systems $(\C^n, i\sum_j dz^j\wedge dz^\jb, W)$
and $((\C^*)^n, i\sum_j \frac{dz^j}{z^j}\wedge
\frac{dz^\jb}{z^\jb},f)$ are strongly tame, if
\begin{itemize}
\item $W$ is a non-degenerate quasi-homogeneous polynomial with
homogeneous weight $1$ and of type $(q_1,\cdots,q_n)$ with all
$q_i\le 1/2$.
\item $f$ is a convenient and non-degenerate Laurent polynomial
defined on the algebraic torus $(\C^*)^n$.
\end{itemize}
Therefore, the corresponding form Laplacian has purely discrete
spectrum and all the eigenforms form a complete basis of the Hilbert
space $L^2(\Lambda^\bullet(M))$.
\end{crl}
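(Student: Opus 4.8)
The plan is to reduce the corollary to Theorem~\ref{thm:intro-1}: once we know each of the two section-bundle systems is strongly tame, the stated conclusions — purely discrete spectrum of the form Laplacian and completeness of the eigenforms in $L^2(\Lambda^\bullet(M))$ — are immediate. So the task is to verify, for every constant $C>0$, that $|\nabla f|^2-C|\nabla^2 f|\to\infty$ as $d(x,x_0)\to\infty$, in each geometry. In both cases the metric is flat (Euclidean on $\C^n$; and, after passing to logarithmic coordinates, Euclidean on $(\C^*)^n$), so the distance to the base point is comparable to a Euclidean norm and $\nabla f,\nabla^2 f$ are ordinary partial derivatives in the appropriate coordinates.

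\emph{The quasi-homogeneous case $(\C^n,\,i\sum_j dz^j\wedge dz^\jb,\,W)$.} Here $d(x,0)\asymp|z|$. Since $W$ has weight $1$ and type $(q_1,\dots,q_n)$, each $\partial_j W$ is quasi-homogeneous of weight $1-q_j$ and each $\partial_j\partial_k W$ of weight $1-q_j-q_k$ (a quasi-homogeneous polynomial of negative weight being $0$). Introduce the weighted sphere $\Sigma=\{\,w:\sum_j|w_j|^{2/q_j}=1\,\}$, which is compact and avoids $0$, and write each $z\neq 0$ uniquely as $z_j=\lambda^{q_j}w_j$ with $\lambda>0$, $w\in\Sigma$; one checks $|z|\to\infty\iff\lambda\to\infty$, uniformly in $w$. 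Non-degeneracy of $W$ means the $\partial_j W$ have no common zero off the origin, so $\sum_j|\partial_j W(w)|^2\ge c_0>0$ on $\Sigma$ by compactness. Inserting the scaling and using $q_m\le q_j\le q_M$ together with $\lambda\ge 1$ gives $|\nabla W(z)|^2\ge c_0\lambda^{2(1-q_M)}$ and $|\nabla^2 W(z)|\le C_1\lambda^{1-2q_m}$ for a constant $C_1=C_1(W)$. Since all weights are positive and $q_M\le\tfrac12$, we have $q_M-q_m<\tfrac12$, equivalently $2(1-q_M)>1-2q_m$; hence for $\lambda$ (and so $|z|$) large,
\begin{equation*}
|\nabla W|^2-C|\nabla^2 W|\ \ge\ c_0\lambda^{2(1-q_M)}-CC_1\lambda^{1-2q_m}\ \longrightarrow\ \infty.
\end{equation*}

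\emph{The convenient non-degenerate Laurent case $((\C^*)^n,\,i\sum_j\tfrac{dz^j}{z^j}\wedge\tfrac{dz^\jb}{z^\jb},\,f)$.} Put $u_j=\log z_j$; the metric becomes the flat $i\sum_j du^j\wedge du^\jb$, and $f=\sum_{a\in A}c_a e^{\langle a,u\rangle}$ with $A\subset\Z^n$ finite. The gradient and Hessian in this metric are built from the logarithmic derivatives $z_j\partial_{z_j}f=\partial_{u_j}f=\sum_a c_a a_j e^{\langle a,u\rangle}$ and $\sum_a c_a a_j a_k e^{\langle a,u\rangle}$. Because the angular (imaginary) directions are bounded, $d(x,x_0)\to\infty$ is equivalent to $\xi:=\re(u)\to\infty$ in $\R^n$, and $|e^{\langle a,u\rangle}|=e^{\langle a,\xi\rangle}$. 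Let $m(\xi)=\max_{a\in A}\langle a,\xi\rangle$ be the support function of the Newton polytope $\Delta=\mathrm{conv}(A)$. Convenience gives $0\in\mathrm{int}\,\Delta$, hence $m(\xi)\ge\delta_0|\xi|$ for a fixed $\delta_0>0$. The upper bound is immediate: $|\nabla^2 f|\le C_1 e^{m(\xi)}$. For the lower bound on $|\nabla f|^2$ one isolates the dominant monomials — those $a$ on the face $\sigma$ of $\Delta$ where $\langle a,\xi\rangle=m(\xi)$ — and observes that their contribution to the logarithmic gradient equals $e^{m(\xi)}$ times the logarithmic gradient of the face polynomial $f_\sigma$ evaluated at the angular point $(e^{i\theta_1},\dots,e^{i\theta_n})$, $\theta=\im(u)$. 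Kouchnirenko non-degeneracy says $f_\sigma$ has no critical point on $(\C^*)^n$, so its logarithmic gradient is bounded below by a positive constant on the compact real torus; since $\Delta$ has finitely many faces this bound is uniform, and the remaining monomials are strictly subleading. One concludes $|\nabla f|^2\ge c_0 e^{2m(\xi)}$ for $|\xi|$ large, whence
\begin{equation*}
|\nabla f|^2-C|\nabla^2 f|\ \ge\ c_0 e^{2m(\xi)}-CC_1 e^{m(\xi)}\ \longrightarrow\ \infty.
\end{equation*}

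\emph{Main obstacle.} The quasi-homogeneous case is essentially bookkeeping with the weighted scaling. The delicate point is the \emph{uniform} lower bound $|\nabla f|^2\ge c_0 e^{2m(\xi)}$ in the Laurent case. Two issues must be handled: cancellation among the dominant monomials — this is exactly where non-degeneracy enters, via the face polynomials having non-vanishing logarithmic gradient on the compact angular torus — and the degeneration of the spectral gap $m(\xi)-m'(\xi)$ between the dominant face and the rest as $\xi/|\xi|$ approaches a wall of the normal fan of $\Delta$. The clean way to control the latter is to organize the estimate over the finitely many maximal cones of the normal fan — equivalently, to work on the toric compactification $X_\Delta$, on which the critical set of $f$ is finite (by convenience and non-degeneracy) and the estimates degenerate only in a controlled way along the toric boundary. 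With this uniformity in hand, the two displays above complete the verification of strong tameness, and the final assertion of the corollary then follows from Theorem~\ref{thm:intro-1}.
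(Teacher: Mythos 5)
This corollary is not proved in the paper at all: it is imported verbatim from \cite[Theorems 2.42 and 2.49]{Fa}, so the comparison can only be with the cited source, whose route you correctly reproduce in outline (verify strong tameness, then invoke Theorem~\ref{thm:intro-1}). Your quasi-homogeneous half is complete and correct: the weighted-sphere decomposition $z_j=\lambda^{q_j}w_j$, the lower bound $c_0\lambda^{2(1-q_M)}$ coming from compactness of $\Sigma$ and the fact that non-degeneracy forces $\nabla W\neq 0$ off the origin, the Hessian bound $C_1\lambda^{1-2q_m}$, and the inequality $q_M-q_m<\tfrac12$ (automatic from $q_m>0$, $q_M\le\tfrac12$) do fit together as you claim.

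The Laurent half, however, contains a genuine gap, and you have in effect pointed at it yourself. Everything reduces to the uniform bound $|\nabla f|^2\ge c_0e^{2m(\xi)}-C$, which is precisely \cite[Proposition 2.47]{Fa} and is quoted as such later in this paper in the proof of Lemma~\ref{lm:trac-laur}; it is the real content of the Laurent case, not a routine observation. Your argument for it --- dominant face polynomial, non-degeneracy giving a positive lower bound for its logarithmic gradient on the angular torus, ``the remaining monomials are strictly subleading'' --- does not close, because for $a$ off the maximizing face the suppression factor is $e^{-|\xi|(m(\nu)-\langle a,\nu\rangle)}$ with $\nu=\xi/|\xi|$, and the directional gap $m(\nu)-\langle a,\nu\rangle$ degenerates as $\nu$ approaches a wall of the normal fan, so for a fixed face the error is not $o\bigl(e^{m(\xi)}\bigr)$ uniformly in direction. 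A naive compactness/limiting argument meets the same obstruction: along directions tending to a wall the renormalized gradient converges to $\sum_{a}c_a a_j t_a e^{i\langle a,\theta\rangle}$ with partial weights $t_a\in[0,1]$, and ruling out its vanishing requires showing the limit is again (a nonzero multiple of) the logarithmic gradient of a face polynomial evaluated at a point of $(\C^*)^n$ --- an argument you do not give. Your closing sentence, that one should ``organize the estimate over the finitely many maximal cones of the normal fan'' or pass to the toric compactification, names the right idea but is not carried out; as written, the strong tameness of the Laurent system, and hence that half of the corollary, is asserted rather than proved.
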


By the above result, the first-named author proved in \cite[Theorem 2.52]{Fa} the following Hodge-de Rham theorem:
\begin{align*}
&L^2\Omega^p(M)=\Hc^p\oplus \im \bpat^{p-1}_f\oplus \im (\bpat^p_f)^\dag\\
&\Hc^p\cong H^p_{((2),\bpat_f)}\cong\begin{cases}
0 &\text{if}p\neq n\\
\Omega^n(M)/df\wedge \Omega^{n-1}(M),&\text{if}p=n,
\end{cases}
\end{align*}
where $\Hc^*:=\{\phi\in \Omega^*|\Delta_f \phi=0\}$ and $H^*_{((2),\bpat_f)}$ is the $L^2$ $\bpat_f$-cohomology. This Hodge-de Rham theorem recovers the partial information of the local $\C$-algebra  of the singularity $f$. The further study of the tt*-structure on the Hodge bundle over the deformation parameter space reveals essentially the Frobenius manifold structure of the singularity $f$ (see \cite{Fa}).

\section{Preliminary}

\subsection{Basic notations}

Let $\C^n$ be the standard Euclidean space with complex coordinates $\z=(z_1,\cdots,z_n)$, where $ z_j=x_j+iy_j, j=1,\cdots, n$. Let $\alpha$ be a $2n$-multiple index having the form
$$
\alpha=(\alpha_x,\alpha_y)=(\alpha_1,\cdots, \alpha_n,\alpha_{n+1},\cdots,\alpha_{2n}).
$$
The real derivatives are defined as
$$
D^\alpha_\z=D_x^{\alpha_x}D_y^{\alpha_y}=(\frac{\pat}{\pat x_1})^{\alpha_1}\cdots (\frac{\pat}{\pat y_{2n}})^{\alpha_{2n}}.
$$
If $\alpha$ is a $n$-multiple index, then we have the complex derivatives
$$
\pat^\alpha_\z=(\frac{\pat}{\pat z_1})^{\alpha_1}\cdots (\frac{\pat}{\pat z_{n}})^{\alpha_{n}}
$$
Given a $2n$-multiple index $\alpha=(\alpha_x,\alpha_y)$, we have an induced $n$-multiple index $\alphab:=(\alpha_x+\alpha_y)$  which satisfies $|\alpha|=|\alphab|$. Since
$$
\frac{\pat}{\pat x_i}=\frac{\pat}{\pat z_i}+\frac{\pat }{\pat \zb_i},\;\frac{\pat}{\pat y_i}=\sqrt{-1}(\frac{\pat}{\pat z_i}-\frac{\pat}{\pat \zb_i}),
$$
we have for any holomorphic function $f$ that
$$
D^\alpha_\z |f|^2=\sum_{\alpha_1+\alpha_2=\alpha}\frac{\alpha!}{\alpha_1!\alpha_2!}D^{\alpha_1}_\z f\cdot D^{\alpha_2}_\z \fb=\sum_{\alpha_1+\alpha_2=\alpha}\frac{\alpha!}{\alpha_1!\alpha_2!}(\sqrt{-1})^{|\alpha_{1y}|+3|\alpha_{2y}|}\pat^{\alphab_1}_\z f\cdot \overline{\pat^{\alphab_2}_\z f}.
$$
Define the pointwise norm
$$
|D^\alpha |f|^2|_+:=\sum_{\alpha_1+\alpha_2=\alpha}\frac{\alpha!}{\alpha_1!\alpha_2!}|\pat^{\alphab_1}_\z f|\cdot |\pat^{\alphab_2}_\z f|,
$$
then we have
\begin{equation}\label{sec0-ineq1}
|D^\alpha_\z |f|^2|\le |D^\alpha |f|^2|_+.
\end{equation}

Furthermore, if $f$ is a polynomial of the following form:
$$
f(\z)=\sum_l \frac{c_l}{b_l!}z^{b_l}=\sum_l \frac{c_l}{b_{l1}!\cdots b_{ln}!}z_1^{b_{l1}}\cdots z_n^{b_{ln}},
$$
then we have
\begin{align}\label{sec0-esti3}
|\pat f|^2&=\sum_i |\pat_i f|^2=\sum_i \left|\sum_l \frac{c_l}{(b_l-e_i)!}z^{b_l-e_i} \right|^2\nonumber\le \sum_i \left|\sum_l \frac{c_l}{(b_l-e_i)!}|z|^{|b_l|-1} \right|^2\\
&\le \sum_i |\sum_l \frac{c_l}{(b_l-e_i)!}|^2 (|z|^{2\max|b_l|}+1)=C(|z|^{2\max|b_l|}+1)
\end{align}

\subsection{Clifford operators and supertrace}

Let $\iota_{\pat_{z_i}}$ be the contraction operator such that $\iota_{\partial z^{i}}(dz^{i})=1$. Then we define the Clifford operators
\begin{align*}
&c_{i}=dz^{i}\wedge-\iota_{\pat_{z_i}},\quad \hc_i=dz^{i}\wedge+\iota_{\pat_{z_i}}\\
&\bc_i=d\zb^i\wedge-\iota_{\pat_{\zb_i}},\quad \hbc_i=d\zb^i\wedge+\iota_{\pat_{\zb_i}}
\end{align*}
Conversely, we have the expression
\begin{align*}
&dz^i\wedge=\frac{1}{2}(c_i+\hc_i),\quad \iota_{\pat_{\z_i}}=\frac{1}{2}(\hc_i-c_i)\\
&d\zb^i\wedge=\frac{1}{2}(\bc_i+\hbc_i),\quad\iota_{\pat_{\zb_i}}=\frac{1}{2}(\hbc_i-\bc_i)
\end{align*}

We have 
$$
c_{i}^{2}=\bc_{i}^{2}=-1,\quad\hc_{i}^{2}=\hbc_{i}^{2}=1.
$$ 
Otherwise these elements are anti-commutative. $\End(\Lambda^{*}\C^{n})$ is then generated by the above elements.
\begin{df}
The number operator $N$ is defined as $N\alpha =k\alpha$, when $\alpha\in \Lambda^{k}(\C^{n})$. 
\end{df}
It is not hard to see that 
$$
N= n+{1\over 2}\sum_{i=1}^{n}(c_{i}\hc_{i}+\bc_{i}\hbc_{i}).
$$

\begin{df} Let $A\in \End(\Lambda^*\C^n)$, then the super trace of $A$ on $\Lambda^{*}(\C^{n})$ is defined as 
$$\str A=\sum_{k=1}^{2n} (-1)^{k} \Tr A|_{\Lambda*(\C^{n})}.$$
\end{df}

The Clifford algebra structure of $\End(\Lambda T^{*}\C^{n})$ easily induces the following result

\begin{lm}\label{sec2:lm-0}
\begin{equation}
\str [\Pi_{i=1}^{n}(c_{i}\hc_{i}\bc_{i}\hbc_{i})] =4^{n}.
\end{equation} 
Furthermore, all other monomials of $c_{*}, \hc_{*},\bc_{*}$ and $\hbc_{*}$ has super-trace 0.
\end{lm}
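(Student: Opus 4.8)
The plan is to reduce the evaluation of $\str$ on an arbitrary word in the generators $c_*,\hc_*,\bc_*,\hbc_*$ to a one-variable computation, using first the Clifford relations to pass to a normal form and then the multiplicativity of the supertrace over graded tensor products. Using $c_i^2=\bc_i^2=-1$, $\hc_i^2=\hbc_i^2=1$ and the anticommutativity of distinct generators, every monomial equals $\pm 1$ times a \emph{reduced} word, i.e.\ a product in which each of the $4n$ generators occurs at most once; after one more round of anticommuting we may take it written in the fixed order $c_1,\hc_1,\bc_1,\hbc_1,c_2,\dots,\hbc_n$. So it is enough to compute $\str(m_S)$, where $m_S:=\prod_{g\in S}g$ and $S$ ranges over subsets of the set of $4n$ generators.

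The vanishing part asserts $\str(m_S)=0$ whenever $S$ is a proper subset. I would argue as follows: pick a generator $g_0\notin S$. On the one hand, $g_0$ anticommutes with every factor of $m_S$ and $g_0^{-1}=\pm g_0$, so conjugation yields $g_0\,m_S\,g_0^{-1}=(-1)^{|S|}m_S$. On the other hand, the super-cyclicity of the supertrace, $\str(AB)=(-1)^{p(A)p(B)}\str(BA)$ with $p$ the $\Z/2$-parity, applied with $A=g_0$ (odd) and $B=m_S g_0^{-1}$ (of parity $|S|+1$), gives $\str(g_0\,m_S\,g_0^{-1})=(-1)^{|S|+1}\str(m_S)$. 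Comparing the two expressions forces $\str(m_S)=-\str(m_S)$, hence $\str(m_S)=0$; this already gives the second ("furthermore") assertion. The only reduced word left to handle is $M:=\prod_{i=1}^n c_i\hc_i\bc_i\hbc_i$, which uses all $4n$ generators.

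To compute $\str(M)$ I would use the tensor factorization $\Lambda^*\C^n=\bigotimes_{i=1}^n\Lambda^*V_i$ with $V_i=\C\,dz^i\oplus\C\,d\zb^i$. The block $B_i:=c_i\hc_i\bc_i\hbc_i$ is an even operator, and the Koszul signs its four factors collect from the slots $V_j$ with $j<i$ cancel; hence $B_i$ acts only on the $i$-th slot, the $B_i$ commute, $M=\bigotimes_i B_i$, and so $\str(M)=\prod_i\str_{\Lambda^*V_i}(B_i)$. Splitting once more, $\Lambda^*V_i=\Lambda^*(\C\,dz^i)\otimes\Lambda^*(\C\,d\zb^i)$, a similar sign check gives $B_i=(c_i\hc_i)\otimes(\bc_i\hbc_i)$; and a direct $2\times2$ computation in the degree-ordered basis $\{1,dz^i\}$ gives $c_i\hc_i=\diag(-1,1)$, so its two-term supertrace is $-1-1=-2$, and likewise for $\bc_i\hbc_i$. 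Therefore $\str_{\Lambda^*V_i}(B_i)=(-2)(-2)=4$ and $\str(M)=4^n$.

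The step I expect to require the most care is the Koszul-sign bookkeeping in the two tensor factorizations: the factors $\bc_i,\hbc_i$ pick up signs $(-1)^{N}$ coming from the preceding slots, and one must verify these cancel so that $B_i$ really is local and equals $(c_i\hc_i)\otimes(\bc_i\hbc_i)$. One must also be careful to use the super-version of trace-cyclicity, not ordinary cyclicity, in the vanishing argument. Everything else is routine finite-dimensional linear algebra.
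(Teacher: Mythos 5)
Your proof is correct. Note that the paper itself gives no argument for this lemma (it is stated as following "easily" from the Clifford algebra structure of $\End(\Lambda^*\C^n)$), so there is no proof to compare against; your write-up supplies exactly the missing details. Both halves check out: the vanishing via super-cyclicity $\str(AB)=(-1)^{p(A)p(B)}\str(BA)$ together with conjugation by a generator $g_0\notin S$ is valid for every proper subset $S$ (including $S=\emptyset$, consistently with $\str(I)=0$), and the Koszul-sign bookkeeping is harmless precisely because each block $c_i\hc_i\bc_i\hbc_i$ contains an \emph{even} number of odd factors, so the parity operators on the preceding slots occur to an even power and cancel; the resulting local computation $\str(c_i\hc_i)=\str(\bc_i\hbc_i)=-2$ per sub-slot gives $4$ per slot and $4^n$ in total, which agrees with the direct observation that $\prod_i c_i\hc_i\bc_i\hbc_i=(-1)^N$. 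One small point worth recording: as literally stated, "all other monomials" would include reorderings such as $\hc_1c_1\bc_1\hbc_1\cdots$, whose supertrace is $-4^n$; your reduction to normal form makes the correct interpretation precise, namely that the supertrace vanishes unless the word reduces, up to sign, to the full product $\prod_{i=1}^n c_i\hc_i\bc_i\hbc_i$.
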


Notice that $\Delta_f$ has local expression: 
\begin{equation}
\Delta_f=\Delta_{\bpat}+L_f+|\nabla f|^2,
\end{equation}
where 
$$
L_f=-(g^{\mub\nu}\nabla_\nu f_l \iota_{\pat_{\mub}}dz^l\wedge+\overline{g^{\mub\nu}\nabla_\nu f_l \iota_{\pat_{\mub}}dz^l\wedge}).
$$
Here we have used the Einstein summation convention with respect to $\mu,\nu,l$. 

\begin{prop}\label{sec1:prop-1} Let $m\in \N$, we have
\begin{equation}
\str L_f^m=
\begin{cases}
0,&\text{if}\;1\le m<2n\\
\frac{(2n)!}{2^{2n}}(-1)^n 4^n |\det(g)|^{-2}|\det(\pat^2 f)|^2,&\text{if}\;m=2n.
\end{cases}
\end{equation}
\end{prop}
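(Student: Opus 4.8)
The plan is to compute $\str L_f^m$ fibrewise at an arbitrary point $p$, after normalising both the metric and the Hessian of $f$ there. First I would choose holomorphic normal coordinates at $p$, so that $g_{i\jb}(p)=\delta_{ij}$ and the Christoffel symbols vanish at $p$; then $(\nabla_\nu f_l)(p)=(\pat_\nu\pat_l f)(p)=:H_{\nu l}$ is a complex \emph{symmetric} $n\times n$ matrix (symmetric because the covariant Hessian of a function is symmetric for the torsion-free Chern connection of a K\"ahler metric). Using the Autonne--Takagi factorisation I would then perform a further unitary linear change of the coordinates at $p$; this keeps $g(p)=\delta$ and conjugates $L_f$, as an endomorphism of the fibre $\Lambda^*\C^n$, by an even, degree-preserving automorphism, hence does not alter $\str L_f^m$, and after it we may assume $H=\diag(d_1,\dots,d_n)$ with $d_k\ge 0$.

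The key step is to rewrite $L_f$ at $p$ in terms of the two anticommuting fermionic systems $a_k^\dag:=dz^k\wedge$, $a_k:=\iota_{\pat_{z_k}}$, $b_k^\dag:=d\zb^k\wedge$, $b_k:=\iota_{\pat_{\zb_k}}$. Reading $L_f$ off from its given expression and using $b_\mu a_l^\dag=-a_l^\dag b_\mu$ together with $H_{kl}=H_{lk}$ on the first summand, and complex conjugation on the second, one gets
\[
L_f=\sum_{k,l}\bigl(H_{kl}\,a_k^\dag b_l+\overline{H_{kl}}\,b_k^\dag a_l\bigr)=\sum_{k=1}^n\ell_k,\qquad \ell_k:=d_k\,a_k^\dag b_k+\overline{d_k}\,b_k^\dag a_k .
\]
Since each $\ell_k$ involves only the ``$k$-th slot'' $\Lambda^*\langle dz^k,d\zb^k\rangle$ and is $\Z_2$-even, the $\ell_k$ commute with one another, and under the graded tensor splitting $\Lambda^*\C^n=\bigotimes_{k=1}^n\Lambda^*\langle dz^k,d\zb^k\rangle$ the supertrace is multiplicative. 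Hence
\[
\str L_f^m=\sum_{m_1+\dots+m_n=m}\binom{m}{m_1,\dots,m_n}\prod_{k=1}^n\str_k\bigl(\ell_k^{m_k}\bigr),
\]
where $\str_k$ denotes the supertrace on the four-dimensional $k$-th factor.

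Then I would record the trivial one-slot computation: $\ell_k$ annihilates $1$ and $dz^k\wedge d\zb^k$ and acts on $\langle dz^k,d\zb^k\rangle$ by $\bigl(\begin{smallmatrix}0&d_k\\ \overline{d_k}&0\end{smallmatrix}\bigr)$, so $\ell_k^2=|d_k|^2$ on $\langle dz^k,d\zb^k\rangle$ and $0$ elsewhere; hence $\str_k(\ell_k^{m_k})=-2|d_k|^{m_k}$ if $m_k$ is even and $m_k\ge 2$, and $\str_k(\ell_k^{m_k})=0$ otherwise (in particular for $m_k=0$). A term in the multinomial sum therefore survives only if every $m_k$ is even and at least $2$. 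If $m<2n$ this cannot happen, so $\str L_f^m=0$. If $m=2n$ it forces $m_k=2$ for all $k$, and the unique surviving term gives
\[
\str L_f^{2n}=\frac{(2n)!}{2^n}\prod_{k=1}^n\bigl(-2|d_k|^2\bigr)=(-1)^n(2n)!\prod_{k=1}^n|d_k|^2=(-1)^n(2n)!\,|\det(\pat^2 f)|^2
\]
at $p$ in the chosen coordinates, where $|\det g|=1$ and $\pat^2 f=\nabla^2 f$ (the last equality using $\prod_k|d_k|^2=|\det H|^2$). Since $\str L_f^{2n}$, $|\det g|$ and $\det(\pat^2 f)$ transform compatibly under change of frame, the coordinate-free expression agreeing with this in normal coordinates is $\tfrac{(2n)!}{2^{2n}}(-1)^n 4^n|\det(g)|^{-2}|\det(\pat^2 f)|^2$ (note $4^n/2^{2n}=1$), which is the asserted formula.

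The only genuine work is the bookkeeping in the key step: translating $L_f$ into creation/annihilation operators correctly and checking that, after the Takagi normalisation, it really does split as a sum of commuting single-slot operators. Once that is in place, multiplicativity of $\str$ over the tensor factors reduces the whole statement to the $4\times 4$ computation above, and the vanishing for $1\le m<2n$ is immediate from the constraint $m_k\ge2$. A secondary, purely formal point is the reinstatement of $|\det(g)|^{-2}$, handled either by carrying the single factor $g^{\mub\nu}$ in each copy of $L_f$ through the computation or simply by the frame-invariance of $\str L_f^m$.
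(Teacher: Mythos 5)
Your proof is correct, but it runs along a route that differs from the paper's in its normalization and in how the supertrace selection rule is implemented. The paper fixes the point $p$, normalizes the \emph{Hessian} to the identity by a general linear change $C^T\pat^2 f\, C=I$ (so the metric becomes an arbitrary $\tilde g$ that is carried through the whole computation), rewrites $L_f$ in the Clifford generators $c_i,\hc_i,\bc_i,\hbc_i$, expands the $m$-th power of the commuting quadratic pieces multinomially, and invokes Lemma \ref{sec2:lm-0} (only the full monomial $\prod_i c_i\hc_i\bc_i\hbc_i$ has nonzero supertrace, equal to $4^n$) to kill everything for $m<2n$ and to evaluate the $m=2n$ term, with $|\det\tilde g|^{-2}$ converting back to $|\det g|^{-2}|\det\pat^2 f|^2$. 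You instead normalize the \emph{metric} ($g(p)=\delta$ in normal coordinates) and diagonalize the Hessian by Autonne--Takagi with a unitary change that preserves this normalization, which splits $L_f$ into commuting single-slot operators $\ell_k$; multiplicativity of the supertrace over the graded tensor factors then reduces everything to a $4\times 4$ computation per slot, replacing the paper's Clifford-monomial lemma. Two small trade-offs: your version needs no nondegeneracy assumption at $p$ (the paper's ``WLOG $\pat^2f(p)$ invertible'' step is simply absent, since $d_k=0$ is allowed), and the per-slot bookkeeping is arguably cleaner than the paper's permutation-sign computation; on the other hand, you must reinstate the factor $|\det g|^{-2}$ at the end by the frame-invariance of $\str L_f^{2n}$ and of $|\det g|^{-2}|\det(\pat^2 f)|^2$, whereas the paper gets it directly by keeping $\tilde g$ in the formulas. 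Both arguments are at the same (pointwise, purely algebraic) level of rigor, including the identification of $\nabla_\nu f_l$ with $\pat_\nu\pat_l f$ at $p$, which your use of normal coordinates justifies.
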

In particular, if $g=\frac{1}{2}\sum_i dz^i d\zb^i$ is the standard hermitian metric, we have 
\begin{equation}
\str L_f^{2n}=(2n)! (-1)^n 4^n |\det(\pat^2 f)|^2.
\end{equation}

\begin{proof} 

Assume without loss of generality that the Hessian $\pat^2f=\nabla \pat f$ of $f$ is nonzero at point $p$. Since $\pat^2 f$ is symmetric, we can choose a $\C$-valued matrix $C$ such that $C^T\cdot \pat^2f(p)\cdot C=I$. Do holomorphic coordinate transformation $z^i=C_{ij} w^j$, then in coordinate system $\{w^i\}$, the holomorphic function $\tilde{f}(\w)=f(\z)=f(C\cdot \w)$ has Hessian at $p$:
$$
\pat^2_\w f=C^T\cdot \pat^2_\z f\cdot C=I.
$$
Assume that the metric tensor $g=i g_{\mu \nub}dz^\mu d z^\nub=i\tilde{g}_{k \lb}dw^k dw^\lb$, then we have the relation
$$
\tilde{g}=C^T\cdot g\cdot \bar{C},\;\text{and}\;\tilde{g}^{-1}=\overline{C^{-1}}\cdot g^{-1}\cdot (C^{-1})^T.
$$
Note that $-g^{\mub\nu}\nabla_\nu f_l \iota_{\pat_{\mub}}dz^l\wedge=[\bpat^\dag,\pat f\wedge]$ is independent of the choice of local coordinate system. Hence in $\{w_i\}$ system, we have 
\begin{align*}
L_f(p)&=-(\tilde{g}^{\kb l} \iota_{\pat_{\kb}}dw^l\wedge+\overline{\tilde{g}^{\kb l} \iota_{\pat_{\kb}}dw^l\wedge})\\
&=-\tilde{g}^{\kb l} (\iota_{\pat_{\kb}}d w^l\wedge+\iota_{\pat_l}d w^\kb\wedge )\\
&=-\tilde{g}^{\kb l}\left( \frac{1}{2}(\hbc_k-\bc_k)\frac{1}{2}(c_l+\hc_l)+\frac{1}{2}(\hc_l-c_l)\frac{1}{2}(\hbc_k+\bc_k)\right)\\
&=-\frac{1}{4}\tilde{g}^{\kb l}\left(\hbc_k c_l+\hbc_k \hc_l-\bc_k c_l-\bc_k \hc_l+\hc_l \hbc_k+\hc_l \bc_k-c_l \hbc_k-c_l \bc_k \right)\\
&=\frac{1}{2}\tilde{g}^{\kb l}\left(\bc_k \hc_l-\hbc_k c_l \right)
\end{align*}
Define 
$$
A_k=\tilde{g}^{\kb l}\hc_l,\; B_l=\tilde{g}^{\kb l}c_l,
$$
then 
$$
L_f^m=\frac{1}{2^m}\{\bc_1A_1+\cdots+\bc_n A_n+B_1\hbc_1+\cdots+B_n\hbc_n\}^{m}.
$$
Note that if $x_1,\cdots,x_a$ are commutative quantities, then we have the expansion
$$
(x_1+\cdots+x_a)^m=\sum_{|\alpha|=m}\frac{m!}{\alpha !}x^\alpha,
$$
where $\alpha=(\alpha_1,\cdots,\alpha_a)$ and $x^\alpha=\prod_{i=1}^a x_i^{\alpha_i}$.
Hence for $1\le m<2n$, by Lemma \ref{sec2:lm-0}, we have pointwise identity
$$
\str L_f^m=0.
$$
If $m=2n$, then we have 
\begin{align*}
\str L_f^{2n}&=\frac{(2n)!}{2^{2n}}\str (\bc_1A_1\circ\cdots\bc_n A_n\circ\cdots B_1\hbc_1\circ\cdots B_n\hbc_n),
\end{align*}
Since all the terms $\bc_\mu A_\mu$ and $B_\nu\hbc_\nu$ commute mutually. 

So we have
\begin{align*}
\str L_f^{2n}(p)&=\frac{(2n)!}{2^{2n}}\str \left(\bc_1\circ\cdots\circ\hbc_n\circ A_1\cdots A_nB_1\cdots B_n\right)\\
&=\frac{(2n)!}{2^{2n}} \str \left( \bc_1\circ\cdots\circ\hbc_n\circ \tilde{g}^{\bar{1} l_1}\hc_{l_1}\cdots \tilde{g}^{\bar{n} l_n}\hc_{l_n}\circ \tilde{g}^{\bar{1} k_1}c_{k_1}\cdots \tilde{g}^{\bar{n} k_n} c_{k_n}\right)\\
&=\frac{(2n)!}{2^{2n}}(-1)^n \str \left( \bc_1\circ\cdots\circ\hbc_n\circ c_1\cdots \hc_n (-1)^{\sgn(l_1,\cdots,l_n)}\tilde{g}^{\bar{1}l_1}\cdots \tilde{g}^{\bar{n}l_n}(-1)^{\sgn(k_1,\cdots,k_n)}\tilde{g}^{\bar{1}k_1}\cdots \tilde{g}^{\bar{n} k_n} \right)\\
&=\frac{(2n)!}{2^{2n}}(-1)^n 4^n |\det \tilde{g}|^{-2}\\
&=\frac{(2n)!}{2^{2n}}(-1)^n 4^n|\det g|^{-2} |\det \pat^2 f|^2(p)
\end{align*}
\end{proof}

\subsection{Estimates of potential function and standard heat kernel }

\begin{df}\label{df:qhomPoly}
A \emph{quasi-homogeneous} (or \emph{weighted homogeneous})
\emph{polynomial} $f \in \mathbb{C} [z_1, \dots, z_n]$ is a polynomial for which there
exist positive rational numbers $q_1, \dots, q_n \in \Q^{>0}$, such
that for any $\lambda \in \mathbb{C}^*$

\begin{equation}\label{eq:qhomocharge}
f(\lambda^{q_1}z_1, \dots, \lambda^{q_n}z_n) =\lambda f(z_1,
\dots, z_n).
\end{equation}
 We will call $q_j$ the \emph{weight} of $z_j$. We define
$d$ and $k_i$ for $i\in \{1,\dots,n\}$ to be the unique positive
integers such that $(q_1,\dots,q_n) = (k_1/d, \dots, k_n/d)$ with
$\gcd(d,k_1,\dots,k_n) =1$.

\end{df}

Throughout this paper we will need a certain non-degeneracy condition on $f$.

\begin{df}\label{df:nondegenerate} We call $f$ \emph{nondegenerate}
if \begin{enumerate}
\item\label{it:nondegen-unique}
$f$ contains no monomial of the form $z_iz_j$, for $i\neq j$ and
\item The hypersurface defined by $f$ in weighted
projective space is non-singular, or, equivalently, the affine
hypersurface defined by $f$ has an isolated singularity at the
origin.
  \end{enumerate}
    \end{df}

\begin{prop}\cite[Theorem 3.7(b)]{HK}.
If $f$ is a non-degenerate, quasi-homogeneous polynomial, then the weights
   $q_i$ are bounded by $q_i\leq \frac{1}{2}$ and are unique.
   \end{prop}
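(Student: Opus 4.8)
The plan is to derive both assertions from a single structural fact about the monomials of $f$: for every index $i\in\{1,\dots,n\}$, $f$ contains a monomial of the form $z_i^{a_i}$ or of the form $z_i^{a_i}z_j$ with $j\neq i$, and in either case $a_i\geq2$. Before attacking this, I would record two elementary reductions. First, since a quasi-homogeneous $f$ with positive weights has no constant term, and since the hypothesis that $\{f=0\}$ has a singularity at the origin forces $df(0)=0$, the polynomial $f$ has no monomial of total degree $\leq1$; hence every monomial of $f$ has degree at least $2$. Second, the Euler identity $\sum_i q_iz_i\,\partial_if=f$ (an immediate consequence of (\ref{eq:qhomocharge})) shows that the critical set of $f$ coincides with the singular locus of $\{f=0\}$; this locus is invariant under the $\C^*$-action of (\ref{eq:qhomocharge}), so, having $0$ as an isolated point, it equals $\{0\}$.

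To prove the structural fact I would argue by contradiction. If for some $i$ there is no monomial $z_i^{a_i}$ and no monomial $z_i^{a_i}z_j$ in $f$, then every monomial $z^b$ occurring in $f$ with $b_i\geq1$ satisfies $\sum_{j\neq i}b_j\geq2$. Restricting each partial derivative $\partial_kf$ to the coordinate line $L_i=\{z_j=0:\ j\neq i\}$, one checks that all of them vanish identically on $L_i$: for $k=i$ this uses the inequality $\sum_{j\neq i}b_j\geq2$ for the monomials with $b_i\geq1$, while the monomials with $b_i=0$ do not contribute to $\partial_if$; for $k\neq i$ one uses the same inequality when $b_i\geq1$ and the bound $|b|\geq2$ when $b_i=0$. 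Thus $L_i$ lies in the critical set of $f$, contradicting the fact that this set is $\{0\}$. Finally $a_i\geq2$: a pure power with $a_i=1$ would be the linear monomial $z_i$, already excluded, and a monomial $z_i^1z_j$ with $j\neq i$ is forbidden by condition~(\ref{it:nondegen-unique}) of Definition~\ref{df:nondegenerate}.

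Granting the structural fact, the bound is immediate: if $z^b$ is the distinguished monomial attached to $i$, then $1=\sum_\ell b_\ell q_\ell\geq a_iq_i\geq2q_i$, so $q_i\leq\frac12$. For uniqueness, suppose $(q_1,\dots,q_n)$ and $(q_1',\dots,q_n')$ are both weight systems for $f$, and set $v_\ell=q_\ell-q_\ell'$. Every monomial exponent $b$ of $f$ satisfies $\sum_\ell b_\ell v_\ell=0$. If $v\neq0$, choose $i_0$ with $|v_{i_0}|=\max_\ell|v_\ell|>0$ and substitute the distinguished monomial of $i_0$ into this relation; it becomes $a_{i_0}v_{i_0}=0$ or $a_{i_0}v_{i_0}+v_j=0$, forcing $a_{i_0}|v_{i_0}|\leq|v_{i_0}|$ and hence $a_{i_0}\leq1$, which contradicts $a_{i_0}\geq2$. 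Therefore $v=0$, i.e. the weights are unique.

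The only genuinely delicate point is the combinatorial bookkeeping inside the structural fact — checking that \emph{every} $\partial_kf$, and not merely $\partial_if$, vanishes on the line $L_i$; once that is in place the bound and uniqueness are essentially one-liners. This structural fact is in effect the opening step of the Kreuzer--Skarke classification of non-degenerate quasi-homogeneous polynomials, which underlies the argument of \cite{HK}.
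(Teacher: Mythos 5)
Your proof is correct. Note that the paper itself gives no argument for this proposition --- it is quoted directly from Hertling--Kurbel \cite{HK} --- so there is no internal proof to compare against; what you have written is a self-contained justification along the standard lines of that reference. Your structural fact (for each $i$ the polynomial contains a monomial $z_i^{a_i}$ or $z_i^{a_i}z_j$, $j\neq i$, with $a_i\ge 2$) is precisely the key lemma in the classification literature, and your derivation of it is sound: the Euler identity $\sum_i q_i z_i\,\partial_i f=f$ plus $\C^*$-invariance of the critical set turns the isolated-singularity hypothesis into the statement that $0$ is the only critical point of $f$, and your check that \emph{every} $\partial_k f$ vanishes on the coordinate line $L_i$ when the structural fact fails at $i$ is complete in both cases $k=i$ and $k\neq i$ (the latter correctly invoking the absence of linear monomials, which you justified from $df(0)=0$, and of the excluded $z_i^{a}z_j$ terms). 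The weight bound $q_i\le\tfrac12$ and the uniqueness argument via $v=q-q'$ evaluated on the distinguished monomial of a maximizing index then go through exactly as you state, and both only use weight-independent information about the monomials of $f$, as required.
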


For nondegenerate quasihomogeneous polynomials, we have the following important estimates:

\begin{prop}\cite[Theorem 5.7]{FJR1}\label{prop-esti1}

Let $f \in {\C}[z_1, \dots, z_n]$ be a non-degenerate,
quasi-homogeneous polynomial. Then for any $n$-tuple $(u_1,
\dots, u_n) \in {\C}^n$ we have
\[ |u_i| \leq C \left(\sum^n_{i=1}\left|\frac{\partial f}{\partial z_i}(u_1, \dots,
u_n)\right|+1 \right)^{\gamma_i},\] where
$\gamma_i=\frac{q_i}{\min_j(1-q_j)}\le 1$ for all $ i \in \{1, \dots, n\}$ and the constant $C$ depends
only on $f$. If $q_i<1/2$ for all $ i \in \{1, \dots, n\}$, then $\gamma_i<1$ for
all $ i \in \{1, \dots, n\}$.
\end{prop}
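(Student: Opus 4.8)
The strategy is to turn the desired inequality into a compactness statement via the natural $\R^{>0}$-action attached to the weights. Differentiating the defining relation $f(\lambda^{q_1}z_1,\dots,\lambda^{q_n}z_n)=\lambda f(z_1,\dots,z_n)$ with respect to $z_i$ shows that $\partial f/\partial z_i$ is quasi-homogeneous of weighted degree $1-q_i$; that is, writing $\lambda\cdot u:=(\lambda^{q_1}u_1,\dots,\lambda^{q_n}u_n)$ for $\lambda>0$, one has $\tfrac{\partial f}{\partial z_i}(\lambda\cdot u)=\lambda^{1-q_i}\tfrac{\partial f}{\partial z_i}(u)$. I would then introduce the weighted ``norm'' $\rho(u):=\max_i|u_i|^{1/q_i}$, which is continuous, satisfies $\rho(\lambda\cdot u)=\lambda\,\rho(u)$, and vanishes only at the origin. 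Its level set $\Sigma:=\{\rho=1\}$ is compact (on it $|u_i|\le 1$ for every $i$) and does not contain the origin, and every $u\neq 0$ can be written as $u=\lambda\cdot v$ with $\lambda=\rho(u)>0$ and $v\in\Sigma$.

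Here is where non-degeneracy enters: because the affine hypersurface $\{f=0\}$ has an isolated singularity at the origin, the functions $\partial f/\partial z_1,\dots,\partial f/\partial z_n$ have no common zero other than $0$, so the continuous function $g(u):=\sum_i|\partial f/\partial z_i(u)|$ is strictly positive on $\Sigma$; set $c:=\min_\Sigma g>0$. Then, for any $u\neq 0$, write $u=\lambda\cdot v$ as above and split on the size of $\lambda=\rho(u)$. For $\lambda\ge 1$, since $1-q_i\ge 1-q_M$ with $q_M=\max_j q_j$, one gets $g(u)=\sum_i\lambda^{1-q_i}\bigl|\tfrac{\partial f}{\partial z_i}(v)\bigr|\ge\lambda^{1-q_M}g(v)\ge c\,\lambda^{1-q_M}$, hence $\lambda\le(g(u)/c)^{1/(1-q_M)}$ (note $1-q_M\ge\tfrac12>0$) and therefore $|u_i|=\lambda^{q_i}|v_i|\le\lambda^{q_i}\le(g(u)/c)^{q_i/(1-q_M)}=(g(u)/c)^{\gamma_i}$, using $1-q_M=\min_j(1-q_j)$. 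For $\lambda<1$ (and also for $u=0$) one simply has $|u_i|\le\lambda^{q_i}\le 1$. Setting $C:=\max\{1,\max_i c^{-\gamma_i}\}$, which depends only on $f$, both cases are absorbed into $|u_i|\le C\bigl(\sum_j|\partial f/\partial z_j(u)|+1\bigr)^{\gamma_i}$, since the right-hand side is $\ge 1$ (so it handles the small-$\lambda$ case) and dominates $c^{-\gamma_i}g(u)^{\gamma_i}$.

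It remains to record the bounds on the exponents. By the previously cited fact $q_i\le\tfrac12$ we have $q_i+q_j\le 1$ for all $i,j$, i.e. $q_i\le 1-q_j$, so $q_i\le\min_j(1-q_j)$ and $\gamma_i\le 1$; and if $q_j<\tfrac12$ for all $j$, then $q_i+\max_j q_j<1$, so $q_i<\min_j(1-q_j)$ and $\gamma_i<1$. The one genuinely nontrivial step is the passage from the hypothesis ``isolated singularity at the origin'' to the positivity $\min_\Sigma g>0$ --- that is, recognizing that non-degeneracy is exactly the statement that the gradient of $f$ does not vanish on the weighted sphere; everything else is the scaling bookkeeping above. (Without the quasi-homogeneous $\C^*$-action one would replace this compactness argument by a {\L}ojasiewicz-type inequality near the zero of the gradient, but the weighted action makes that unnecessary.)
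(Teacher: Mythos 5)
Your proof is correct. There is nothing to compare it with inside the paper: the proposition is imported verbatim from \cite{FJR1} (Theorem 5.7 there) and no proof is reproduced in this article, so the argument has to be judged on its own merits. It holds up: differentiating the quasi-homogeneity relation indeed gives $\partial_i f(\lambda\cdot u)=\lambda^{1-q_i}\partial_i f(u)$; the weighted sphere $\Sigma=\{\max_i|u_i|^{1/q_i}=1\}$ is compact and avoids the origin; and the isolated-singularity hypothesis does force $\sum_i|\partial_i f|>0$ on $\Sigma$, since a nonzero common zero of the partials would, via the scaling action (together with the Euler identity $f=\sum_i q_i z_i\partial_i f$, which places all critical points on the hypersurface), produce singular points of $\{f=0\}$ accumulating at the origin. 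The scaling bookkeeping then gives exactly the exponent $q_i/(1-q_M)=\gamma_i$ on the region $\rho(u)\ge 1$, the region $\rho(u)<1$ is absorbed by the additive constant $1$, and the bounds $\gamma_i\le 1$ (resp. $\gamma_i<1$) follow correctly from the previously cited bound $q_i\le 1/2$ (resp. $q_i<1/2$). This weighted-sphere compactness/properness argument is the standard route to such tameness estimates and is in the same spirit as the cited source, so no gap remains.
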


\begin{crl}\label{sect-2-lemm-2.9} If $f \in {\C}[z_1, \dots, z_n]$ is a non-degenerate,
quasi-homogeneous polynomial. Then there exists a constant $C$ depending only on $f$ such that
\begin{equation}
|\pat f|^2\ge \frac{1}{C}|\z|^2-1.
\end{equation}
\end{crl}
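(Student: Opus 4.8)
The plan is to read this off directly from Proposition \ref{prop-esti1}. Applying that estimate to the tuple $(u_1,\dots,u_n)=(z_1,\dots,z_n)$ gives, for each $i$,
\[
|z_i|\le C\Big(\sum_{j=1}^n\big|\tfrac{\pat f}{\pat z_j}(\z)\big|+1\Big)^{\gamma_i},\qquad \gamma_i=\frac{q_i}{\min_j(1-q_j)}\in(0,1].
\]
Write $S:=\sum_{j=1}^n|\partial_j f(\z)|$, so $S+1\ge 1$. Since $\gamma_i\le 1$ and $x\mapsto x^{\gamma_i}$ is monotone, $(S+1)^{\gamma_i}\le S+1$, hence $|z_i|\le C(S+1)$ for every $i$.

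Next I would square and sum over $i$ to get $|\z|^2=\sum_i|z_i|^2\le nC^2(S+1)^2$. By Cauchy--Schwarz, $S=\sum_j|\partial_j f|\le\sqrt{n}\,|\pat f|$, so $S+1\le\sqrt{n}\,|\pat f|+1$ and therefore, using $2ab\le a^2+b^2$, $(S+1)^2\le 2\big(n|\pat f|^2+1\big)$. Combining these,
\[
|\z|^2\le 2nC^2\big(n|\pat f|^2+1\big)=2n^2C^2|\pat f|^2+2nC^2,
\]
which rearranges to $|\pat f|^2\ge\frac{1}{2n^2C^2}|\z|^2-\frac1n\ge\frac{1}{2n^2C^2}|\z|^2-1$. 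Replacing $C$ by $2n^2C^2$ (still depending only on $f$) gives the claimed inequality.

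There is no genuine obstacle here: all the real content sits in the FJR1 estimate quoted as Proposition \ref{prop-esti1}, and the rest is elementary bookkeeping with constants. The only point needing a little care is that the exponents $\gamma_i$ can be as large as $1$ (when some $q_j=1/2$), so one must invoke $\gamma_i\le 1$ together with $S+1\ge 1$ to discard the exponents — the strict inequality $\gamma_i<1$, which holds only when all $q_i<1/2$, is not needed.
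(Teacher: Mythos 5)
Your proof is correct and is exactly the route the paper intends: the corollary is stated as an immediate consequence of Proposition \ref{prop-esti1}, and your argument just supplies the elementary bookkeeping (using $\gamma_i\le 1$ with $S+1\ge 1$, then Cauchy--Schwarz) that the paper leaves implicit.
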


Define the potential function:
\begin{equation}
V(\z):=|\pat f|^2=\sum_{i=1}^n|\pat_i f|^2.
\end{equation}

Set the following quantities
\begin{align*}
&\q=(q_1,\cdots,q_n),\;q_M:=\max\{q_i\},\;q_m:=\min\{q_i\}\\
&\delta_M=\frac{1}{2(1-q_M)},\;\delta_0=\delta_M q_m.
\end{align*}
It is obvious that
\begin{equation}
q_m,q_M\le \frac{1}{2},\delta_M\le 1,\delta_0\le \frac{1}{2}.
\end{equation}

Let
$$
h=\sum_l \frac{c_l}{b_l!}z^{b_l}=\sum_l \frac{c_l}{b_{l1}!\cdots b_{ln}!}z_1^{b_{l1}}\cdots z_n^{b_{ln}}
$$
be a quasihomogeneous polynomial with charge $c_h$, i.e., for any $\lambda\in \C^*$, there is
$$
h(\lambda_1^{q_1}z_1,\cdots,\lambda_n^{q_n} z_n)=\lambda^{c_h} h(z_1,\cdots,z_n).
$$
If $f$ is a nondegenerate quasihomogeneous polynomial, then by Proposition \ref{prop-esti1} we have
$$
|z_i| \leq C \left(\sum^n_{i=1}\left|\frac{\partial f}{\partial z_i}(z_1, \dots,
z_n)\right|+1 \right)^{\gamma_i}.
$$
Hence for multiple index $\alpha$,
\begin{align*}
&\prod^n_{i=1}|z_i|^{b_{li}-\alpha_i}\le \prod^n_{i=1}C^{b_{li}-\alpha_i}(\sum^n_{i=1}|\frac{\partial f}{\partial z_i}(z_1, \dots,
z_n)|+1 )^{(b_{li}-\alpha_i)\gamma_i}\\
&\le C^{\sum_i b_{li}-|\alpha|}(\sum^n_{i=1}|\frac{\partial f}{\partial z_i}(z_1, \dots,
z_n)|+1 )^{\frac{c_h-\alpha\cdot \q}{\min_j(1-q_j)}}\\
&\le C^{\sum_i b_{li}-|\alpha|}(n+1)^{c_h-\alpha\cdot \q}(V+1)^{c_h-\alpha\cdot \q}.
\end{align*}

This shows that
\begin{align}
|\pat^\alpha h(\z)|&\le \sum_l \frac{c_l}{(b_l-\alpha)!}|z^{b_l-\alpha}|\le \sum_l \frac{c_l}{(b_l-\alpha)!}C^{\sum_i b_{li}-|\alpha|}(n+1)^{c_h-\alpha\cdot \q}(V+1)^{c_h-\alpha\cdot \q}\nonumber\\
&=C_\alpha (V+1)^{c_h-\alpha\cdot \q}.
\end{align}
By this inequality, we have
\begin{align*}
|D^\alpha|\pat h|^2|_+&=\sum_{\alpha_1+\alpha_2=\alpha}\frac{\alpha!}{\alpha_1!\alpha_2!}|\pat^{\alphab_1}\pat_i h||\pat^{\alphab_2} \pat_i h|\\
&\le \sum_{\alpha_1+\alpha_2=\alpha}\frac{\alpha!}{\alpha_1!\alpha_2!}C_{\alphab_1}(V+1)^{c_h-\alphab_1\cdot \q-q_i}C_{\alphab_2}(V+1)^{c_h-\alphab_2\cdot \q-q_i}\\
&\le C_\alpha (V+1)^{2c_h-|\alpha| q_m-2q_m}.
\end{align*}
In particular, we have
\begin{equation}\label{sec0-esti2}
|D^\alpha|\pat f|^2|_+\le C_\alpha (V+1)^{2-|\alpha| q_m-2q_m}.
\end{equation}

Denote by
$$
\z_t=(\z^t_1,\cdots,\z^t_n),\;z^t_i:=t^{\delta_M q_i}z_i.
$$

\begin{lm}\label{sect-2-Lemm-1} If $1-\q\cdot \alphab\ge 0$, then we have
\begin{equation}
\quad\pat^\alphab_\z f(z_1,\cdots,z_n)=t^{-\delta_M(1-\q\cdot\alpha)}(\pat^\alphab_\z f)(z^t_1,\cdots,z^t_n).
\end{equation}
In particular, we have
\begin{align*}
(i)&\quad |\pat_i\pat_j f|(z_1,\cdots,z_n)=t^{-\delta_M(1-q_i-q_j)}|\pat_i\pat_j f|(z^t_1,\cdots,z^t_n)\\
(ii)&\quad |\pat_i f|^2(z_1,\cdots,z_n)=t^{-2\delta_M(1-q_i)} |\pat_i f|^2(z^t_1,\cdots,z^t_n).
\end{align*}
\end{lm}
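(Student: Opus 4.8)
The plan is to obtain the identity by differentiating the defining quasi-homogeneity relation of $f$ in the $\z$-variables via the chain rule. First I would repackage the scaling in the statement: since $t>0$, set $\mu:=t^{\delta_M}>0$, so that $z^t_i=t^{\delta_M q_i}z_i=\mu^{q_i}z_i$ for every $i$. Because $f$ is quasi-homogeneous of homogeneous weight $1$, relation \eqref{eq:qhomocharge} with $\lambda=\mu$ reads
\[
f(z^t_1,\dots,z^t_n)=f(\mu^{q_1}z_1,\dots,\mu^{q_n}z_n)=\mu\,f(z_1,\dots,z_n)=t^{\delta_M}f(\z),
\]
an identity of polynomials in $\z$ (with $t$-dependent coefficients).

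Next I would apply the iterated holomorphic derivative $\pat^{\alphab}_\z$ to both sides. On the right-hand side this gives simply $t^{\delta_M}\pat^{\alphab}_\z f(\z)$. On the left-hand side, a single $\pat_{z_j}$ applied to $f(\mu^{q_1}z_1,\dots,\mu^{q_n}z_n)$ produces the factor $\mu^{q_j}$ by the chain rule (no other argument depends on $z_j$); iterating over the components of $\alphab$ (mixed partials of a polynomial commute) yields
\[
\pat^{\alphab}_\z\bigl[f(z^t_1,\dots,z^t_n)\bigr]=\mu^{\q\cdot\alphab}\,(\pat^{\alphab}_\z f)(z^t_1,\dots,z^t_n)=t^{\delta_M(\q\cdot\alphab)}(\pat^{\alphab}_\z f)(z^t_1,\dots,z^t_n).
\]
Equating the two expressions and solving for $\pat^{\alphab}_\z f(\z)$ gives
\[
\pat^{\alphab}_\z f(\z)=t^{\delta_M(\q\cdot\alphab-1)}(\pat^{\alphab}_\z f)(z^t_1,\dots,z^t_n)=t^{-\delta_M(1-\q\cdot\alphab)}(\pat^{\alphab}_\z f)(z^t_1,\dots,z^t_n),
\]
which is the asserted formula (here $\q\cdot\alpha$ in the statement is read as $\q\cdot\alphab$). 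The hypothesis $1-\q\cdot\alphab\ge0$ is not actually needed for the identity itself — when it fails $\pat^{\alphab}_\z f\equiv0$, being quasi-homogeneous of negative charge, so both sides vanish — but it is precisely the regime of interest, where the exponent $-\delta_M(1-\q\cdot\alphab)$ of $t$ is non-positive.

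Finally, the two displayed special cases are specializations. For (i), take $\alphab=e_i+e_j$, so $\q\cdot\alphab=q_i+q_j\le1$ since $q_i,q_j\le\tfrac12$, and pass to absolute values. For (ii), take $\alphab=e_i$ to get $\pat_i f(\z)=t^{-\delta_M(1-q_i)}(\pat_i f)(z^t_1,\dots,z^t_n)$; multiplying by the complex conjugate and using that the scalars $t^{\delta_M q_i}$ and $t^{-\delta_M(1-q_i)}$ are real and positive gives $|\pat_i f|^2(\z)=t^{-2\delta_M(1-q_i)}|\pat_i f|^2(z^t_1,\dots,z^t_n)$. Since the argument is essentially a one-line chain-rule computation there is no serious obstacle; the only point requiring a little care is the bookkeeping of the exponent of $\mu$ (equivalently of $t$) when iterating the chain rule through the mixed higher-order derivative $\pat^{\alphab}_\z$, together with checking that the substitution $\mu=t^{\delta_M}$ is compatible with the scaling $z^t_i=t^{\delta_M q_i}z_i$ used in the statement.
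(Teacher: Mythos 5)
Your proposal is correct and is essentially the paper's argument: the paper simply records that $\pat^{\alphab}_\z f$ is quasi-homogeneous of charge $1-\q\cdot\alphab$ (which is exactly what you obtain by differentiating the relation \eqref{eq:qhomocharge} via the chain rule) and then sets $\lambda=t^{\delta_M}$. Your treatment of the special cases (i) and (ii) and the aside about the hypothesis $1-\q\cdot\alphab\ge0$ are also fine.
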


\begin{proof} The results are due to the homogeneity
 \begin{equation}
 (\pat^\alphab_\z f)(\lambda^{q_1}z_1,\cdots,\lambda^{q_n}z_n)=\lambda^{1-\q\cdot \alphab} (\pat^\alphab_\z f)(z_1,\cdots,z_n).
 \end{equation}
 and set $\lambda=t^{\delta_M}$.
\end{proof}

\begin{lm}\label{sec1-lma1}Fix any $T\ge 1$. We have the estimate of $V$:
\begin{align*}
(i)&\quad V(z^t_1,\cdots,z^t_n)\le t V(z_1,\cdots,z_n),\;\forall t\in (0,1]\\
(ii)&\quad |\pat^\alphab_\z V|(z_1,\cdots,z_n)\le t^{\delta_0 (|\alphab|+2)-2\delta_M} T^{|\alphab|+2} |\pat^\alphab_\z V|(z^t_1,\cdots,z^t_n),\;\forall t\in (0,T]\\
(iii)&\quad |D^\alpha_\z V|(z_1,\cdots,z_n)\le t^{\delta_0(|\alpha|+2)-2\delta_M} T^{|\alpha|+2} |D^\alpha_\z V|_+(z^t_1,\cdots,z^t_n).\;\forall t\in (0,T]
\end{align*}

In particular, we have for any $t\in (0,T]$:
\begin{align*}
(iv)&\quad t^{\frac{3}{2}}|\nabla V|(z_1,\cdots,z_n)\le t^{\delta_3} T^3|\nabla V|(z^t_1,\cdots,z^t_n)\\
(v)&\quad t^2 |\Delta V|(z_1,\cdots,z_n)\le t^{2\delta_2} T^4 |\Delta V|(z^t_1,\cdots,z^t_n),
\end{align*}
where
\begin{equation}
\delta_2=\frac{1-2(q_M-q_m)}{2(1-q_M)},\;\delta_3=\frac{1-3(q_M-q_m)}{2(1-q_M)}
\end{equation}
\end{lm}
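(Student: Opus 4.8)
The plan is to reduce every statement to the elementary scaling behaviour of the individual summands of $V=\sum_{i=1}^n|\pat_i f|^2$ and then to a bookkeeping of exponents. Write $g_i(\z):=|\pat_i f(\z)|^2$. Differentiating the identity $f(\lambda^{q_1}z_1,\dots,\lambda^{q_n}z_n)=\lambda f(\z)$ in $z_i$ shows that $\pat_i f$ is quasi-homogeneous of weight $1-q_i$, so for any real $\lambda=t^{\delta_M}>0$ one has the \emph{exact} identity $g_i(\z_t)=t^{2\delta_M(1-q_i)}g_i(\z)$. Since $q_i\le q_M$, the exponent satisfies $2\delta_M(1-q_i)=\tfrac{1-q_i}{1-q_M}\ge 1$, hence $g_i(\z_t)\le t\,g_i(\z)$ for $t\in(0,1]$; summing over $i$ gives $(i)$ at once.

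For $(ii)$ and $(iii)$ I would differentiate the scaling identity in $\z$. Because the map $\z\mapsto\z_t$ rescales both $x_i$ and $y_i$ by $t^{\delta_M q_i}$, the chain rule turns $D^\alpha_\z$ (respectively $\pat^\alphab_\z$) applied to $g_i(\z_t)$ into $t^{\delta_M\,\q\cdot\alphab}$ times $(D^\alpha_\z g_i)(\z_t)$, so that
\begin{equation*}
(D^\alpha_\z g_i)(\z)=t^{\,\delta_M\,\q\cdot\alphab-2\delta_M(1-q_i)}\,(D^\alpha_\z g_i)(\z_t),
\end{equation*}
and likewise for $\pat^\alphab_\z$. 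Summing over $i$, applying the triangle inequality, and recalling that $\sum_i|D^\alpha_\z g_i|\le|D^\alpha_\z V|_+$ (the $+$-convention being exactly what discards the cancellations produced by the different weights $1-q_i$, so that the right-hand sides of $(ii)$--$(iii)$ are naturally read through the $+$-norm) reduces both statements to the single scalar estimate
\begin{equation*}
t^{\,\delta_M\,\q\cdot\alphab-2\delta_M(1-q_i)}\le t^{\,\delta_0(|\alphab|+2)-2\delta_M}\,T^{|\alphab|+2}\qquad\text{for all }t\in(0,T]\text{ and all }i.
\end{equation*}

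This scalar estimate is the heart of the matter and the only place I expect genuine work. Using $\q\cdot\alphab\ge q_m|\alphab|$ and $1-q_i\le 1-q_m$, one checks that $\delta_M\,\q\cdot\alphab-2\delta_M(1-q_i)\ge\delta_0(|\alphab|+2)-2\delta_M$, the excess being exactly $\delta_M\sum_j(q_j-q_m)\alphab_j+2\delta_M(q_i-q_m)$; since $\delta_M\le 1$ and $q_j-q_m\le q_M-q_m\le\tfrac12$, this excess lies in $[0,|\alphab|+2]$. One then splits the range of $t$: for $t\in(0,1]$ a larger exponent only makes $t^{(\cdot)}$ smaller, so the left side is $\le t^{\delta_0(|\alphab|+2)-2\delta_M}\le T^{|\alphab|+2}t^{\delta_0(|\alphab|+2)-2\delta_M}$ because $T\ge 1$; for $t\in(1,T]$ one writes $t^{(\cdot)}=t^{\delta_0(|\alphab|+2)-2\delta_M}\cdot t^{\mathrm{excess}}$ and bounds $t^{\mathrm{excess}}\le T^{\mathrm{excess}}\le T^{|\alphab|+2}$. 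This establishes $(ii)$ and $(iii)$.

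Finally $(iv)$ and $(v)$ are the special cases $|\alpha|=1$ and $|\alpha|=2$ of $(iii)$, multiplied by $t^{3/2}$ and $t^2$ respectively: the arithmetic identities $3\delta_0-2\delta_M=\delta_3-\tfrac32$ and $4\delta_0-2\delta_M=2\delta_2-2$ — immediate from $\delta_0=q_m/(2(1-q_M))$ and $\delta_M=1/(2(1-q_M))$ — turn the exponent $\delta_0(|\alpha|+2)-2\delta_M$ produced by $(iii)$ into $\delta_3-\tfrac32$ and $2\delta_2-2$, and summing the resulting bounds over the first-order (respectively second-order) real derivatives composing $\nabla V$ and $\Delta V$ yields the claims. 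The one delicate point throughout is that $f$ is merely quasi-homogeneous, not homogeneous, so $V$ itself is not scale-homogeneous; this is precisely what forces the argument to proceed summand by summand and to involve the constants $T^{|\alphab|+2}$ and the $+$-norm to absorb the mismatch between the per-term exponents.
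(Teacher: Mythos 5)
Your proposal is correct and takes essentially the same route as the paper: both rest on the quasi-homogeneous scaling of the derivatives of $f$ (Lemma \ref{sect-2-Lemm-1}), the same exponent comparison $\delta_M\,\q\cdot\alphab-2\delta_M(1-q_i)\ge\delta_0(|\alpha|+2)-2\delta_M$ coming from $q_m\le q_i\le q_M$, and the same absorption of the nonnegative excess into the factor $T^{|\alpha|+2}$ for $t\in(0,T]$. The only cosmetic difference is that you scale $D^\alpha_\z|\pat_i f|^2$ exactly and pass to the $+$-norm at the end, while the paper first bounds $|D^\alpha_\z V|\le|D^\alpha_\z V|_+$ and scales each factor $|\pat^{\alphab_1}_\z\pat_i f|\,|\pat^{\alphab_2}_\z\pat_i f|$ inside that sum.
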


\begin{proof}It suffices to prove (iii). By (\ref{sec0-ineq1}), we have
\begin{align*}
|D^\alpha_\z V|(\z)&\le |D^\alpha_\z V|_+(\z)=\sum_i\sum_{\alpha_1+\alpha_2=\alpha}\frac{\alpha!}{\alpha_1!\alpha_2!}|\pat^{\alphab_1}_\z \pat_i f|\cdot |\pat^{\alphab_2}_\z \pat_i f|\\
&\le \sum_{i}t^{\delta_M \q\cdot\alphab-2\delta_M(1-q_i)}\sum_{\alpha_1+\alpha_2=\alpha}\frac{\alpha!}{\alpha_1!\alpha_2!}|\pat^{\alphab_1}_\z \pat_i f|(\z_t)\cdot |\pat^{\alphab_2}_\z \pat_i f|(\z_t)\\
&\le \sum_{i}t^{\delta_M q_m|\alpha|-2\delta_M(1-q_m)}t^{\delta_M(\q\cdot\alphab-q_m|\alpha|)+2\delta_M(q_i-q_m)}\sum_{\alpha_1+\alpha_2=\alpha}\frac{\alpha!}{\alpha_1!\alpha_2!}|\pat^{\alphab_1}_\z \pat_i f|(\z_t)\cdot |\pat^{\alphab_2}_\z \pat_i f|(\z_t)\\
&\le t^{\delta_0|\alpha|-2\delta_M(1-q_m)}T^{(|\alpha|+2)(\delta_M(q_M-q_m))}|D^\alpha_\z V|_+(\z_t)\\
&\le t^{\delta_0|\alpha|-2\delta_M(1-q_m)}T^{(|\alpha|+2)}|D^\alpha_\z V|_+(\z_t)\end{align*}
\end{proof}

\subsubsection*{\underline{Mean value estimate}}

Consider the mean value of $V$ between two points $\z$ and $\w$:
$$
g(\z,\w):=\int^1_0 V(\tau(\z-\w)+\w)d\tau.
$$
It is obvious that $g(\z,\w)$ is symmetric about $\z$ and $\w$ and satisfies the following equality
\begin{equation}
(\z-\w)\cdot \nabla_\z g+g(\z,\w)=V(\z),
\end{equation}
where $\z$ is identified with the real vector $(x_1,\cdots,x_n,y_1,\cdots,y_n)$ and $\nabla_\z=(\nabla_{x_1},\cdots,\nabla_{y_n})$ is the gradient operator.

If $f$ is a function or a matrix-valued function, which have the expression $f(\z,\w)=f(\z-\w,\w)$, then we define the notation:
\begin{equation}
\{f(\tau(\z-\w),\w)\}_\tau=\max_{\tau\in [0,1]}||f(\tau(\z-\w),\w)||.
\end{equation}

\begin{lm}\label{sec1-lmb1} Fix any $T\ge 1$. We have the estimate of $g$:
\begin{align*}
(1)&\quad|D^\alpha_\z g|(\z,\w)\le t^{\delta_0|\alpha|-2\delta_M(1-q_m)} T^{|\alpha|+2} \{|D^\alpha_\z V|_+ (\tau(\z_t-\w_t),\w_t)\}_\tau,\;\forall t\in (0,T].
\end{align*}
In particular, we have
\begin{align*}
(2)&\quad t^{\frac{3}{2}}|\nabla_\z g|(\z,\w)\le t^{\delta_3} T^3\{|\nabla_\z V|_+(\tau(\z_t-\w_t),\w_t)\}_\tau\\
(3)&\quad t^2 |\Delta_\z g|(\z,\w)\le t^{2\delta_2} T^4 \{|\Delta_\z V|_+(\tau(\z_t-\w_t),\w_t)\}_\tau,
\end{align*}
\end{lm}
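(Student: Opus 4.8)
The plan is to prove statement (1) directly and then to read off (2) and (3) as its special cases $|\alpha|=1$ and $|\alpha|=2$, exactly as (iv) and (v) were deduced from (iii) in Lemma~\ref{sec1-lma1}. Concretely, taking $|\alpha|=1$ in (1) and multiplying by $t^{3/2}$ produces the prefactor $t^{3/2+\delta_0-2\delta_M(1-q_m)}T^{3}$, while taking $|\alpha|=2$ and multiplying by $t^{2}$ produces $t^{2+2\delta_0-2\delta_M(1-q_m)}T^{4}$; using $\delta_0=\delta_M q_m$ and $\delta_M=\tfrac{1}{2(1-q_M)}$ one checks the two identities
\begin{equation*}
\tfrac{3}{2}+\delta_0-2\delta_M(1-q_m)=\delta_3,\qquad 2+2\delta_0-2\delta_M(1-q_m)=2\delta_2,
\end{equation*}
and summing the resulting bounds over the components of $\nabla_\z$, resp.\ over the $2n$ second-order derivatives entering $\Delta_\z$, gives (2) and (3).

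For (1), I would differentiate under the integral sign. Since $V$ is a polynomial and the $\tau$-integration runs over the compact interval $[0,1]$, this is legitimate; and because the argument $\tau(\z-\w)+\w=\tau\z+(1-\tau)\w$ is affine in $\z$, every real $\z$-derivative landing on $V(\tau(\z-\w)+\w)$ brings out a factor $\tau$, so that
\begin{equation*}
D^\alpha_\z g(\z,\w)=\int_0^1 \tau^{|\alpha|}\,(D^\alpha_\z V)\bigl(\tau(\z-\w)+\w\bigr)\,d\tau.
\end{equation*}
Using $0\le\tau\le1$ together with the pointwise inequality $|D^\alpha_\z V|\le|D^\alpha_\z V|_+$ that follows from (\ref{sec0-ineq1}) applied termwise to $V=\sum_i|\pat_i f|^2$, this already gives
\begin{equation*}
|D^\alpha_\z g|(\z,\w)\le\int_0^1 |D^\alpha_\z V|_+\bigl(\tau(\z-\w)+\w\bigr)\,d\tau.
\end{equation*}

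It then remains to feed in Lemma~\ref{sec1-lma1}(iii) pointwise. For each fixed $\tau$, applying that inequality at the point $\mathbf p_\tau:=\tau(\z-\w)+\w$ yields
\begin{equation*}
|D^\alpha_\z V|_+(\mathbf p_\tau)\le t^{\delta_0|\alpha|-2\delta_M(1-q_m)}\,T^{|\alpha|+2}\,|D^\alpha_\z V|_+\bigl((\mathbf p_\tau)_t\bigr),
\end{equation*}
where $(\cdot)_t$ denotes the anisotropic rescaling $u_i\mapsto t^{\delta_M q_i}u_i$. Since this rescaling is linear, $(\mathbf p_\tau)_t=\tau(\z_t-\w_t)+\w_t$. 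The prefactor $t^{\delta_0|\alpha|-2\delta_M(1-q_m)}T^{|\alpha|+2}$ is independent of $\tau$, so I pull it out of the $\tau$-integral and bound $\int_0^1(\cdots)\,d\tau\le\max_{\tau\in[0,1]}(\cdots)$, which by the definition of the bracket notation equals $\{\,|D^\alpha_\z V|_+(\tau(\z_t-\w_t),\w_t)\,\}_\tau$. This is exactly (1).

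The argument is mechanical and I do not expect a genuine obstacle. The only points needing a little attention are the commutation of $D^\alpha_\z$ with the $\tau$-integral (harmless, the extra factor $\tau^{|\alpha|}\le1$ only helping), the observation that the affine combination $\tau(\z-\w)+\w$ goes over to $\tau(\z_t-\w_t)+\w_t$ under the anisotropic rescaling, and the elementary exponent bookkeeping recorded above that converts (1) into (2) and (3).
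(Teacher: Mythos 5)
Your argument is exactly the intended one (the paper states this lemma without proof as an immediate consequence of the definition of $g$ and Lemma~\ref{sec1-lma1}(iii)): differentiate under the $\tau$-integral using that $\tau(\z-\w)+\w$ is affine in $\z$, bound $\tau^{|\alpha|}\le 1$, apply the pointwise rescaling inequality at $\tau(\z-\w)+\w$ noting $(\tau(\z-\w)+\w)_t=\tau(\z_t-\w_t)+\w_t$, and replace the integral by the maximum $\{\cdot\}_\tau$; the exponent identities $\tfrac32+\delta_0-2\delta_M(1-q_m)=\delta_3$ and $2+2\delta_0-2\delta_M(1-q_m)=2\delta_2$ check out. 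One small refinement for (2)--(3): rather than summing the component bounds after taking the $\tau$-maximum (which yields $\sum_i\{|D^{e_i}_\z V|_+\}_\tau$, a priori larger than $\{|\nabla_\z V|_+\}_\tau$), apply the scaling inequality to the full gradient, resp.\ Laplacian, inside the $\tau$-integral and only then maximize, which lands exactly on the stated right-hand sides.
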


Let
\begin{equation}
\E_1(\z.\w,t)=e^{-tg(\z,\w)},
\end{equation}
then for any multiple index $\alpha$ and any $l\in \N$, we have identities
\begin{align}
&D^l_t \E_1=(-g)^l \E_1\\
&D^\alpha_\z \E_1=G^\alpha(\z,\w,t)\E_1,
\end{align}
where $G^\alpha$ consisting of the derivatives of $g$ has the form $G^\alpha(\z-\w,\w,t)$ and are defined inductively by
\begin{align*}
&G^{e_i}=-tD_{z_i}g\\
&G^{\alpha+e_i}=D_{z_i}G^\alpha+(-tD_{z_i}g)G^\alpha,
\end{align*}
where $e_i=(0,\cdots,0,1,0,\cdots,0)$ is the standard $i$-th unit vector in $\R^{2n}$.

Note that each term in the summation of $G^\alpha$ has the form
$$
(-t)^s D^{\alpha_1}g\cdots D^{\alpha_s}g,
$$
where $\alpha_i$ are multiple indices such that $\alpha_1+\cdots+\alpha_s=\alpha$. Each such term has an upper bound:
$$
t^s |D^{\alpha_1}g\cdots D^{\alpha_s}g|\le t^s\{|D^{\alpha_1} V|_+\}_\tau\cdots\{|D^{\alpha_s} V|_+\}_\tau.
$$
We define a "norm" $F^\alpha_g(\z,\w,t)$ related to $G^\alpha$ to be the corresponding summation of the terms
$t^s\{|D^{\alpha_1} V|_+\}_\tau\cdots\{|D^{\alpha_s} V|_+\}_\tau$.

\begin{lm}\label{sect-2-lemm-2.13} For any $t\in (0,T]$, there is
\begin{equation}
|G^\alpha|(\z,\w, t)\le F^\alpha_g(\z,\w,t)\le t^{(-\frac{1}{2}+\delta_3)|\alpha|}T^{3|\alpha|}F^\alpha_g(\z_t,\w_t,1).
\end{equation}
\end{lm}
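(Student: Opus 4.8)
The plan is to prove the two inequalities separately. The first inequality $|G^\alpha|(\z,\w,t)\le F^\alpha_g(\z,\w,t)$ is essentially built into the definition of $F^\alpha_g$: from the inductive description of $G^\alpha$ one sees that $G^\alpha$ is a finite sum of terms of the form $(-t)^s D^{\alpha_1}g\cdots D^{\alpha_s}g$ with $\alpha_1+\cdots+\alpha_s=\alpha$, and for each such term the pointwise bound
$$
t^s\,|D^{\alpha_1}g\cdots D^{\alpha_s}g|\le t^s\{|D^{\alpha_1}V|_+\}_\tau\cdots\{|D^{\alpha_s}V|_+\}_\tau
$$
holds by Lemma \ref{sec1-lmb1}(1) applied term by term (more precisely, by the pointwise estimate $|D^{\alpha_j}_\z g|(\z,\w)\le \{|D^{\alpha_j}_\z V|_+(\tau(\z-\w),\w)\}_\tau$, which is immediate from differentiating $g(\z,\w)=\int_0^1 V(\tau(\z-\w)+\w)\,d\tau$ under the integral sign). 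Summing over all the terms and invoking the triangle inequality gives $|G^\alpha|\le F^\alpha_g$.

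For the second inequality I would track the scaling behavior of each summand of $F^\alpha_g$ under $\z\mapsto\z_t$. Fix a term $t^s\{|D^{\alpha_1}V|_+\}_\tau\cdots\{|D^{\alpha_s}V|_+\}_\tau$ with $\sum_j\alpha_j=\alpha$ and $s\ge 1$. Applying Lemma \ref{sec1-lmb1}(1) to each factor (or equivalently the mean-value version of Lemma \ref{sec1-lma1}(iii)), each $\{|D^{\alpha_j}V|_+\}_\tau$ at the point $(\z,\w)$ is bounded by $t^{\delta_0|\alpha_j|-2\delta_M(1-q_m)}T^{|\alpha_j|+2}\{|D^{\alpha_j}V|_+\}_\tau$ evaluated at the scaled point $(\tau(\z_t-\w_t),\w_t)$. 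Multiplying the $s$ factors together, the exponent of $t$ coming from the $V$-derivatives is $\delta_0|\alpha|-2s\delta_M(1-q_m)$, and combined with the prefactor $t^s$ this gives $t^{\delta_0|\alpha|-s(2\delta_M(1-q_m)-1)}$; since $2\delta_M(1-q_m)-1=\frac{1-2q_m}{1-q_M}>0$ (using $\delta_M=\tfrac{1}{2(1-q_M)}$ and $q_m\le q_M\le \tfrac12$) this is largest when $s$ is smallest, i.e. $s=1$, giving exponent $\delta_0|\alpha|-(2\delta_M(1-q_m)-1)$. I would then check that this quantity equals $(-\tfrac12+\delta_3)|\alpha|$ when $|\alpha|=1$ and dominates it for $|\alpha|\ge 1$ — indeed $\delta_0-(2\delta_M(1-q_m)-1)=\delta_M q_m-2\delta_M(1-q_m)+1 = -\tfrac12+\delta_3$ by direct substitution of $\delta_M q_m$ and $\delta_M(q_M-q_m)$ into the definition $\delta_3=\frac{1-3(q_M-q_m)}{2(1-q_M)}$ — so for $s=1$ the bound is exactly $t^{(-\frac12+\delta_3)|\alpha|}$, and for $s\ge 2$ it is no worse. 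Collecting the $T$-powers, the product of $T^{|\alpha_j|+2}$ over $j$ is $T^{|\alpha|+2s}\le T^{3|\alpha|}$ (using $s\le|\alpha|$, valid since each $\alpha_j\ne 0$). This yields the stated bound for every summand, and summing reproduces $F^\alpha_g(\z_t,\w_t,1)$.

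The main obstacle is bookkeeping: one has to argue uniformly over all compositions $\alpha=\alpha_1+\cdots+\alpha_s$ that the $t$-exponent is minimized (hence the bound maximized in the worst case) at $s=1$ and that the $T$-exponent never exceeds $3|\alpha|$, and then verify the arithmetic identity $\delta_M q_m-2\delta_M(1-q_m)+1=-\tfrac12+\delta_3$. Neither step is deep, but both require care since the ``norm'' $F^\alpha_g$ is defined as a specific sum rather than a single monomial, so the scaling estimate has to be applied termwise and only then summed. I expect no genuine difficulty beyond this combinatorial-arithmetic verification, since all the analytic content has already been isolated in Lemmas \ref{sec1-lma1} and \ref{sec1-lmb1}.
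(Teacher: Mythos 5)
Your overall strategy --- bounding each summand $t^s\{|D^{\alpha_1}V|_+\}_\tau\cdots\{|D^{\alpha_s}V|_+\}_\tau$ termwise by the scaling estimate of Lemma \ref{sec1-lma1} and then collecting exponents --- is exactly the route the paper takes, and your treatment of the first inequality as well as the arithmetic identity $\delta_0-(\delta_M-\delta_2)=-\tfrac12+\delta_3$ are correct (modulo a slip: $2\delta_M(1-q_m)-1=\frac{q_M-q_m}{1-q_M}=\delta_M-\delta_2$, not $\frac{1-2q_m}{1-q_M}$, and it can vanish when $q_M=q_m$, which is harmless since only nonnegativity is used).

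The genuine gap is in the passage from the per-term $t$-exponent $\delta_0|\alpha|-s\,c$, with $c:=\delta_M-\delta_2\ge 0$, to the uniform bound $t^{(-\frac12+\delta_3)|\alpha|}$. First, your identification of the extremal case is backwards: since $(-\tfrac12+\delta_3)|\alpha|=\delta_0|\alpha|-c\,|\alpha|$ and $s\le|\alpha|$, equality of exponents occurs at $s=|\alpha|$, not at $s=1$; at $s=1$ the exponent is $\delta_0|\alpha|-c$, which equals the target only when $|\alpha|=1$, so the claim that ``for $s=1$ the bound is exactly $t^{(-\frac12+\delta_3)|\alpha|}$'' is false for $|\alpha|\ge 2$. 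Second, and more seriously, the implication ``exponent dominates the target exponent $\Rightarrow$ the bound is no worse'' is valid only for $t\le 1$, whereas the lemma is asserted for all $t\in(0,T]$ with $T\ge 1$; for $t\in(1,T]$, $s<|\alpha|$ and $c>0$ one has $t^{\delta_0|\alpha|-sc}\ge t^{(\delta_0-c)|\alpha|}$, so your chain of inequalities reverses exactly where it matters. The paper closes this hole by writing $t^{\delta_0|\alpha|-sc}\le t^{(\delta_0-c)|\alpha|}\,T^{(|\alpha|-s)(\delta_M-\delta_2)}$ (using $t\le T$) and then invoking $0\le\delta_M-\delta_2\le 1$ together with $s\le|\alpha|$ to bound the total $T$-power by $T^{|\alpha|+2s+(|\alpha|-s)}\le T^{3|\alpha|}$ --- this extra absorption into the $T$-factor is precisely why the statement carries $T^{3|\alpha|}$ rather than your $T^{|\alpha|+2s}$. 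With that correction inserted, your argument coincides with the paper's proof.
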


\begin{proof}It suffices to prove the second inequality. Each term of $F^\alpha_g$ has the form
$$
t^s\{|D^{\alpha_1} V|_+\}_\tau\cdots\{|D^{\alpha_s} V|_+\}_\tau,
$$
where $\alpha_1+\cdots+\alpha_s=\alpha$.
By Lemma \ref{sec1-lma1}, we have estimate
\begin{align*}
&t^s\{|D^{\alpha_1} V|_+\}_\tau\cdots\{|D^{\alpha_s} V|_+\}_\tau(\z,\w)\\
&\le t^s\cdot t^{\delta_0|\alpha_1|-2\delta_M(1-q_m)}T^{|\alpha|+2}\cdots t^{\delta_0|\alpha_s|-2\delta_M(1-q_m)}T^{|\alpha_s|+2}\{|D^{\alpha_1} V|_+\}_\tau\cdots\{|D^{\alpha_s} V|_+\}_\tau(\z_t,\w_t)\\
&\le t^{\delta_0|\alpha|-s(\delta_M-\delta_2)}T^{|\alpha|+2s}\{|D^{\alpha_1} V|_+\}_\tau\cdots\{|D^{\alpha_s} V|_+\}_\tau(\z_t,\w_t)\\
&\le t^{|\alpha|(\delta_0+\delta_2-\delta_M)}T^{|\alpha|+2s+(|\alpha|-s)(\delta_M-\delta_2)}\{|D^{\alpha_1} V|_+\}_\tau\cdots\{|D^{\alpha_s} V|_+\}_\tau(\z_t,\w_t)\\
&\le t^{|\alpha|(-\frac{1}{2}+\delta_3)}T^{3|\alpha|}\{|D^{\alpha_1} V|_+\}_\tau\cdots\{|D^{\alpha_s} V|_+\}_\tau(\z_t,\w_t).
\end{align*}
Here we used the fact that $\delta_0+\delta_2-\delta_M=-1/2+\delta_3$ and $0\le\delta_M-\delta_2\le 1$. So we proved the lemma.
\end{proof}

\subsubsection*{\underline{Estimate of the standard heat kernel}}

It is known that $\pat_t-\Delta$ has the heat kernel on $\C^n$:
\begin{equation}
\E_0(\z,\w,t)=(4\pi t)^{-n}e^{-\frac{|\z-\w|^2}{4t}}.
\end{equation}
Write $z_j=x_j+iy_j$ and $w_j=u_j+iv_j$. Let $D_{z_i}$ represent the derivatives $D_{x_i}$ or $D_{y_i}$ while $z_i-w_i$ is understood in this case as $x_i-u_i$ or $y_i-v_i$. Then we have
 \begin{align*}
&D_{z_i} \E_0=(-\frac{z_i-w_i}{2t})\E_0\\
&D_t\E_0=(-\frac{n}{t}+\frac{|\z-\w|^2}{4t^2})\E_0,
\end{align*}
Furthermore, given multiple index $\alpha$ and any $l\in \N$, we have identities
\begin{align}
&D^\alpha_\z \E_0=t^{-|\alpha|}P^{|\alpha|}_0(\z-\w,t)\E_0\\
&D^l_t \E_0=t^{-2l}P_1^{2l}(|\z-\w|,t)\E_0.
\end{align}
Here $P^{|\alpha|}_0(\z-\w,t)$ is a polynomial of variables $\z-\w, t$ and its degree with respect to $\z-\w$ is $|\alpha|$ and the higher degree term has constant coefficient, while $P^{2l}_1(|\z-\w|,t)$ is a polynomial of $|\z-\w|, t$ and its degree with respect to $\z-\w$ is $2l$ and the higher degree term has constant coefficient.

Define $|P^{|\alpha|}_0|_+(\z-\w,t)$ to be "the norm" of $P^{|\alpha|}_0$ which is the summation of the absolute value of each monomial appearing in $P^{|\alpha|}_0$. Similarly, we can define $|P_1^{2l}|_+$.

\begin{lm}\label{sect-2-lemm-2.14} We have estimates
\begin{align}
&|D^\alpha_\z \E_0|(\z,\w)\le t^{-(1+\delta_Mq_M)|\alpha|}T^{|\alpha|}|P^{|\alpha|}_0|_+(\z_t-\w_t,1)\E_0(\z,\w)\\
&|D^l_t \E_0|(\z,\w)\le t^{-2(1+\delta_Mq_M)l}T^{2l}|P^{2l}_0|_+(\z_t-\w_t,1)\E_0(\z,\w)
\end{align}
\end{lm}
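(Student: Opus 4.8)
The plan is to start from the explicit formula $\E_0(\z,\w,t)=(4\pi t)^{-n}e^{-|\z-\w|^2/4t}$ and the identities $D^\alpha_\z\E_0=t^{-|\alpha|}P_0^{|\alpha|}(\z-\w,t)\E_0$ and $D^l_t\E_0=t^{-2l}P_1^{2l}(|\z-\w|,t)\E_0$ already recorded before the statement, and simply track how the rescaling $z_i\mapsto z_i^t=t^{\delta_Mq_i}z_i$ acts on the polynomial factors. First I would pass to the norm: $|D^\alpha_\z\E_0|\le t^{-|\alpha|}|P_0^{|\alpha|}|_+(\z-\w,t)\,\E_0$, since the exponential is positive and $|P_0^{|\alpha|}|_+$ dominates $|P_0^{|\alpha|}|$ termwise. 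Note the exponential factor $e^{-|\z-\w|^2/4t}$ is scale-invariant in the sense that matters: $|\z-\w|^2/4t$ need not equal $|\z_t-\w_t|^2/4t$, but since each $\delta_M q_i\ge 0$ we have $|z_i^t-w_i^t|=t^{\delta_Mq_i}|z_i-w_i|\le t^{0}|z_i-w_i|$ is false in general — rather, for $t\le T$ one must be careful about the direction of the inequality. The correct move is: the exponential $\E_0(\z,\w)$ itself is left untouched on the right-hand side of the claimed inequality, so I only need to compare the polynomial prefactors $|P_0^{|\alpha|}|_+(\z-\w,t)$ against $|P_0^{|\alpha|}|_+(\z_t-\w_t,1)$.

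The key step is the following monomial-by-monomial estimate. A monomial in $P_0^{|\alpha|}(\z-\w,t)$ has the shape $c\,(\z-\w)^\beta t^{j}$ with $|\beta|\le|\alpha|$ and, by the structure of Hermite-type polynomials, the power of $t$ paired with $(\z-\w)^\beta$ satisfies $j\ge (|\alpha|-|\beta|)$ appropriately — more precisely one reads off from $D_{z_i}\E_0=(-(z_i-w_i)/2t)\E_0$ that each differentiation contributes either a factor $(z_i-w_i)/t$ or hits an existing $(z_j-w_j)/t$ to produce $1/t$, so every monomial of $P_0^{|\alpha|}$ is a product of factors each of which is $(z_i-w_i)$ times a bounded power of $t$; collecting, $t^{-|\alpha|}(\z-\w)^\beta t^{j}$ with $j = |\alpha|-|\beta|$ modulo the "lower order in $\z-\w$" corrections. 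Writing $z_i-w_i = t^{-\delta_Mq_i}(z_i^t-w_i^t)$ and using $q_i\le q_M$ together with $t\le T$, each factor $(z_i-w_i)$ becomes $t^{-\delta_Mq_i}(z_i^t-w_i^t)$, so $(\z-\w)^\beta = t^{-\delta_M\,\q\cdot\beta}(\z_t-\w_t)^\beta$ with $\q\cdot\beta\le q_M|\beta|\le q_M|\alpha|$. Bounding the remaining explicit powers of $t$ by the corresponding powers of $T$ (legitimate since $t\le T$ and the exponents have a fixed sign after collecting), one arrives at $t^{-|\alpha|-\delta_Mq_M|\alpha|}T^{|\alpha|}|P_0^{|\alpha|}|_+(\z_t-\w_t,1)$, which is exactly the claimed bound $t^{-(1+\delta_Mq_M)|\alpha|}T^{|\alpha|}|P_0^{|\alpha|}|_+(\z_t-\w_t,1)\,\E_0$. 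The $t$-derivative estimate is identical with $|\alpha|$ replaced by $2l$, using $D^l_t\E_0=t^{-2l}P_1^{2l}(|\z-\w|,t)\E_0$ and the same substitution applied to each factor of $|\z-\w|$.

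The main obstacle — really the only subtle point — is bookkeeping the sign of the net $t$-exponent in each monomial so that replacing $t$ by $T$ (for $t\le T$) is an \emph{upper} bound rather than a lower one, and simultaneously verifying that the total $T$-power never exceeds $T^{|\alpha|}$ (respectively $T^{2l}$). This requires the precise claim made just before the lemma, namely that in $P_0^{|\alpha|}(\z-\w,t)$ the degree in $\z-\w$ is exactly $|\alpha|$ with the top-degree term carrying a \emph{constant} coefficient, so the "worst" monomial is $(\z-\w)^\alpha$ with no compensating negative power of $t$ from the polynomial itself; all lower-degree monomials carry strictly positive powers of $t$, which only help. Once this structural fact is in hand the estimate is a mechanical substitution, so I would state it carefully and then dispatch both inequalities in one paragraph.
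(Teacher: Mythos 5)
Your argument is correct and is essentially the paper's own proof: both rewrite $z_i-w_i=t^{-\delta_M q_i}(z_i^t-w_i^t)$, bound the worst (top-degree, constant-coefficient) monomial by $t^{-\delta_M q_M|\alpha|}T^{|\alpha|}$ using $q_i\le q_M$, $t\le T$, $T\ge 1$, and leave the exponential factor $\E_0(\z,\w)$ untouched, with the $t$-derivative case handled identically via $P_1^{2l}$. The only blemishes are cosmetic: the aside about the exponential is muddled before you land on the right conclusion, and the $t$-power attached to a degree-$|\beta|$ monomial is $(|\alpha|-|\beta|)/2$ rather than $|\alpha|-|\beta|$, but since it is nonnegative this does not affect the estimate.
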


\begin{proof}It suffices to check the highest term $|\z-\w|^{|\alpha|}$. We have
\begin{align*}
|\z-\w|^{|\alpha|}&=(\sum_i |\z_i-\w_i|^2)^{\frac{|\alpha|}{2}}=(\sum_i |\z^t_i-\w^t_i|^2 t^{-2\delta_Mq_i})^{\frac{|\alpha|}{2}}\\
&\le t^{-\delta_Mq_M|\alpha|}T^{|\alpha|}|\z_t-\w_t|^{|\alpha|}.
\end{align*}
This proves the first inequality. One can prove the second inequality in the same way.
\end{proof}
 
 \section{Heat kernel of complex 1-dimensional harmonic oscillator}
 
 Consider the complex 1-dimensional harmonic oscillator given by the section-bundle system $(\C, \frac{i}{2} dz\otimes d\zb, f=\frac{\tau}{2}z^2)$. The heat kernel of this system will be written down explicitly. 

Notice that $\Delta_f$ has local expression: 
\begin{equation}
\Delta_f=\Delta_{\bpat}-(g^{\mub \nu}\nabla_\nu f_l \iota_{\pat_{\mub}}dz^l\wedge+\overline{g^{\mub \nu}\nabla_\nu f_{\pat_\mub}dz^l\wedge})+|\nabla f|^2.
\end{equation}

Hence we have the formula for this special system. 
\begin{equation}
\Delta_f=-2\pat_z\pat_\zb-2(\tau \iota_{\pat_\zb}dz\wedge +\bar{\tau}\iota_z d\zb\wedge)+2|\tau|^2|z|^2.
\end{equation}

\subsubsection*{\underline{Heat kernel of $1$ forms}}

Let $\varphi=u dz+v d\zb$, then
\begin{align*}
\Delta_f\varphi=2(-\pat_z\pat_\zb u+|\tau|^2|z|^2u+\tau v)dz+2(-\pat_z\pat_\zb v+ |\tau|^2|z|^2 v+\tau u)d\zb.
\end{align*}

The spectrum problem $\Delta_f\varphi=\lambda \varphi$ is transformed to the following problem:
\begin{equation}
\begin{cases}
-\pat_z\pat_\zb u+|\tau|^2|z|^2u+\tau v=\frac{\lambda}{2} u\\
-\pat_z\pat_\zb v+ |\tau|^2|z|^2 v+\tau u=\frac{\lambda}{2} v.
\end{cases}
\end{equation}

We have an evolution $\tau_R$ such that
\begin{equation}
\tau_\R(udz+vd\zb)=(\frac{\tau}{|\tau|}vdz+\frac{\bar{\tau}}{|\tau|}ud\zb).
\end{equation}

The involution commutes with $\Delta_f$ and has two eigenspaces $E_\mp$ with respect to $\mp 1$, where $E_\mp$ are generated by 
$\varphi_\mp=v(\mp \frac{\tau}{|\tau|}dz+d\zb)$.

Consider the restriction of $\Delta_f$ to the space $E_+$, then $\Delta_f \varphi_+=\lambda \varphi_+$ is equivalent to the problem
\begin{equation}
-\pat_z\pat_\zb v+|\tau|^2|z|^2 v+|\tau| v=\frac{\lambda}{2} v.
\end{equation}
On the other hand, when restricted to $E_-$ , the problem of $\Delta_f \varphi_-=\lambda \varphi_-$ is equivalent to the problem
\begin{equation}
-\pat_z\pat_\zb v+|\tau|^2|z|^2 v-|\tau| v=\frac{\lambda}{2} v,
\end{equation}
or equivalently
\begin{equation}
(-\pat_z\pat_\zb v+|\tau|^2|z|^2)v=(\frac{\lambda}{2}+|\tau|)v. 
\end{equation}

Note that $\Delta_0=\pat^2_x+\pat^2_y=4\pat_z\pat_\zb$, the above equation can be written as the real form: 
\begin{equation}
[-\Delta_0+4 |\tau|^2 (x^2+y^2)] v=2(\lambda+2|\tau|)v. 
\end{equation}

Remember that the real 1-dimensional harmonic oscillator $\hat{H}_0=-\pat^2_x+4|\tau|^2 x^2$ has eigenvalues 
$\lambda_k^0=2|\tau|(1+2k),k=0,1,\cdots$ and the corresponding 1-dimensional eigenspace generated by 
\begin{equation}
\varphi_k^0=H_k(\sqrt{2|\tau|}x)e^{-|\tau|x^2}.
\end{equation}

Now $H_0=-\Delta_0+4|\tau|^2(x^2+y^2)$ is the 2-dimensional harmonic oscillator and there is
$$
H_0=\hat{H}_0(x)+\hat{H}_0(y).
$$
Hence $H_0$ has eigenvalues $\lambda^0_{k,l}=\lambda^0_k+\lambda^0_l, k,l=0,1,\cdots$ and the corresponding eigenforms $\{\varphi^0_k(x)\varphi^0_l(y)\}$. 
So $\Delta_f$ has the spectrum $\lambda_{k,l}^-(|\tau|)=2|\tau|(k+l), k,l=0,1,\cdots$ and the corresponding eigenforms  
\begin{equation}
\varphi^-_{k,l}(z)=(\varphi^0_k(x)\varphi^0_l(y))(-\frac{\tau}{|\tau|}dz+d\zb).
\end{equation}

The equation of $\Delta_f \varphi^+=\lambda \varphi^+$ is equivalent to 
\begin{equation}
H_0 v=2(\lambda-2|\tau|)v,
\end{equation}
where $\varphi^+=v(\frac{\tau}{|\tau|}dz+d\zb)$.

So $\Delta_f|_{E^+}$ has the spectrum $\lambda_{k,l}^+(|\tau|)=2|\tau|(k+l+2), k,l=0,1,\cdots$ and the corresponding eigenforms  
\begin{equation}
\varphi^+_{k,l}(z)=(\varphi^0_k(x)\varphi^0_l(y))(\frac{\tau}{|\tau|}dz+d\zb).
\end{equation}

In particular, we have the ground state for the Schr\'odinger operator.

\begin{prop} $\Delta_f \varphi=0$ has 1-dimensional solution space generated by 
\begin{equation}
\varphi(z)=e^{-\tau |z|^2}( -\frac{\tau}{|\tau|}dz+d\zb)
\end{equation}
\end{prop}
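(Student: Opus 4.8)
The plan is to identify the kernel of $\Delta_f$ acting on forms directly from the spectral analysis already carried out in this section, rather than to solve the PDE from scratch. Recall that $\Delta_f$ acts diagonally (up to the involution $\tau_\R$) on $0$-, $1$-, and $2$-forms; since $\Delta_f$ is a nonnegative self-adjoint operator with purely discrete spectrum on the harmonic oscillator system (by Theorem~\ref{thm:intro-1}, or directly from the explicit eigenvalue lists above), a kernel element is exactly an eigenform with eigenvalue $0$. So I would go degree by degree.

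First, on $1$-forms: the analysis above splits $L^2\Omega^1$ into the $\tau_\R$-eigenspaces $E_\mp$, and we computed the spectra $\lambda^-_{k,l}(|\tau|)=2|\tau|(k+l)$ on $E_-$ and $\lambda^+_{k,l}(|\tau|)=2|\tau|(k+l+2)$ on $E_+$. The only vanishing eigenvalue occurs in $E_-$ at $k=l=0$, with eigenform $\varphi^-_{0,0}(z)=\varphi^0_0(x)\varphi^0_0(y)(-\tfrac{\tau}{|\tau|}dz+d\zb)$. Since $\varphi^0_0(x)=H_0(\sqrt{2|\tau|}x)e^{-|\tau|x^2}=e^{-|\tau|x^2}$ and likewise in $y$, the scalar factor is $e^{-|\tau|(x^2+y^2)}=e^{-|\tau||z|^2}$. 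The remaining task is to show this matches $e^{-\tau|z|^2}$ up to normalization: indeed $|e^{-\tau|z|^2}|=e^{-\re(\tau)|z|^2}$, so strictly the holomorphic-looking representative $e^{-\tau|z|^2}$ and the real Gaussian $e^{-|\tau||z|^2}$ differ unless $\tau$ is real and positive; I would double-check the intended normalization of $f=\tfrac{\tau}{2}z^2$ and present the ground state in the form consistent with the operator $\Delta_f=-2\pat_z\pat_\zb-2(\tau\iota_{\pat_\zb}dz\wedge+\bar\tau\iota_z d\zb\wedge)+2|\tau|^2|z|^2$, checking by direct substitution that $\Delta_f\big(e^{-\tau|z|^2}(-\tfrac{\tau}{|\tau|}dz+d\zb)\big)=0$: the Clifford term $-2\tau\iota_{\pat_\zb}dz\wedge$ sends $d\zb\mapsto dz$ and the $-2\bar\tau\iota_z d\zb\wedge$ term sends $dz\mapsto d\zb$ (up to the explained signs), and these cancel against the scalar part $-2\pat_z\pat_\zb+2|\tau|^2|z|^2$ applied to the Gaussian, which produces exactly the cross terms with the right sign given the $-\tau/|\tau|$ coefficient.

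Second, I should confirm there is no kernel in degrees $0$ and $2$, so that the solution space on $\Lambda^\bullet$ is genuinely $1$-dimensional (concentrated in degree $1$), consistent with the Hodge-theoretic statement $\Hc^p\cong\Omega^n/df\wedge\Omega^{n-1}$ which for $n=1$ has dimension $\mu(f)=1$ sitting in degree $1$. For $0$-forms, $\Delta_f$ restricted to functions is $-2\pat_z\pat_\zb+2|\tau|^2|z|^2=\tfrac12(-\Delta_0+4|\tau|^2(x^2+y^2))$, the $2$D harmonic oscillator with ground-state energy $2|\tau|>0$, hence no kernel; similarly on $2$-forms $u\,dz\wedge d\zb$ one gets the shifted operator with strictly positive bottom eigenvalue. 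This pins down the kernel completely.

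The main obstacle, such as it is, is bookkeeping rather than conceptual: getting the signs and the $\tau$-versus-$|\tau|$ and $\tau/|\tau|$ phases exactly right in the Clifford term of $\Delta_f$, and reconciling the two natural ways of writing the Gaussian ground state. Everything else follows immediately from the eigenvalue lists derived above, since vanishing of a nonnegative discrete-spectrum operator on a form is equivalent to that form being the zero-eigenvalue eigenform, and we have already enumerated all eigenvalues and eigenforms in each degree. I would therefore present the proof as: quote the spectrum on $E_\mp$, observe $0$ is attained only at $\varphi^-_{0,0}$, rewrite its Gaussian factor, and verify by one line of direct computation that $\Delta_f\varphi=0$; then remark that degrees $0$ and $2$ contribute nothing.
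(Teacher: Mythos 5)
Your proposal is correct and follows essentially the same route as the paper: the proposition is read off from the eigenvalue/eigenform enumeration just derived, with the kernel coming from $\lambda^-_{0,0}=0$ in the $E_-$ sector and ground state $\varphi^0_0(x)\varphi^0_0(y)(-\tfrac{\tau}{|\tau|}dz+d\zb)$, plus the observation that the $0$- and $2$-form spectra start at $2|\tau|>0$. Your side remark is also well taken: the scalar factor produced by the computation is $e^{-|\tau||z|^2}$, so the exponent $\tau$ in the stated formula should be read as $|\tau|$ (or $\tau$ taken real and positive), exactly as your direct substitution check confirms.
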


The heat kernel of $\Delta_f$ has the form:
\begin{equation}
K(z,w,t)=\sum_{k,l=0}^\infty e^{-\lambda_{k,l}^- t}\varphi_{k,l}^-(z)\otimes \varphi_{k,l}^-(w)+\sum_{k,l=0}^\infty e^{-\lambda_{k,l}^+t}\varphi_{k,l}^+(z)\otimes \varphi_{k,l}^+(w)
\end{equation}

We have Mehler's formula for real 1-dimensional harmonic oscillators as below:
\begin{align*}
&\sum_{n\ge 0}e^{-(2n+1)t}H_n(x)H_n(y)e^{-\frac{x^2+y^2}{2}}\\
=&(4\pi t)^{-\frac{1}{2}}(\frac{2t}{\sinh 2t})^{\frac{1}{2}}\exp\left( -\frac{1}{4t}\left[ \frac{2t}{\tanh 2t}(x^2+y^2)-\frac{2t}{\sinh2t}(2xy) \right] \right)
\end{align*}

Using the Mehler's formula, we can explicitly compute the heat kernel as follows. Set $z=x+iy, w=u+iv$:
\begin{align*}
&\sum_{k,l=0}^\infty e^{-\lambda_{k,l}^- t}\varphi_{k,l}^-(z)\otimes \varphi_{k,l}^-(w)\\
=&\sum_{k,l=0}^\infty e^{-2|\tau|(k+l)t}(\varphi^0_k(x)\varphi^0_l(y))(-\frac{\tau}{|\tau|}dz+d\zb)\otimes (\varphi^0_k(u)\varphi^0_l(v))(-\frac{\tau}{|\tau|}dw+d\bar{w})\\
=&\left\{\sum_{k=0}^\infty e^{-2|\tau|kt}(\varphi^0_k(x)\varphi^0_k(u)) \right\}\left\{\sum_{l=0}^\infty e^{-2|\tau|lt}(\varphi^0_l(y)\varphi^0_l(v))\right\}(-\frac{\tau}{|\tau|}dz+d\zb)\otimes (-\frac{\tau}{|\tau|}dw+d\bar{w})\\
=&(4\pi |\tau|t)^{-1}(\frac{2|\tau| t}{\sinh 2 |\tau| t})\exp \left\{ {-\frac{1}{2t}[(|z|^2+|w|^2)(\frac{2|\tau| t}{\tanh 2|\tau| t})-(z\bar{w}+\bar{z}w)\frac{2|\tau|t}{\sinh 2|\tau| t}]+2|\tau| t}\right\} \\
&(-\frac{\tau}{|\tau|}dz+d\zb)\otimes (-\frac{\tau}{|\tau|}dw+d\bar{w})\\
=&(4\pi |\tau|t)^{-1}(\frac{2|\tau| t}{\sinh 2 |\tau| t})\exp\left\{{-\frac{|z-w|^2}{2t}\frac{2|\tau| t}{\sinh 2|\tau| t}-|\tau| (|z|^2+|w|^2)\tanh |\tau| t+2|\tau| t}\right\}\\
&(-\frac{\tau}{|\tau|}dz+d\zb)\otimes (-\frac{\tau}{|\tau|}dw+d\bar{w})
\end{align*}

Similarly, we have 
\begin{align*}
&\sum_{k,l=0}^\infty e^{-\lambda_{k,l}^+t}\varphi_{k,l}^+(z)\otimes \varphi_{k,l}^+(w)\\
=&(4\pi |\tau|t)^{-1}(\frac{2|\tau| t}{\sinh 2 |\tau| t})\exp\left\{{-\frac{|z-w|^2}{2t}\frac{2|\tau| t}{\sinh 2|\tau| t}-|\tau| (|z|^2+|w|^2)\tanh |\tau| t-2|\tau| t}\right\}\\
&(\frac{\tau}{|\tau|}dz+d\zb)\otimes (\frac{\tau}{|\tau|}dw+d\bar{w})
\end{align*}

\begin{prop} The heat kernel has the explicit formula:
\begin{align}
&K(z,w,t)=(4\pi |\tau|t)^{-1}(\frac{2|\tau| t}{\sinh 2 |\tau| t})\exp\left\{{-\frac{|z-w|^2}{2t}\frac{2|\tau| t}{\sinh 2|\tau| t}-|\tau| (|z|^2+|w|^2)\tanh |\tau| t+2|\tau| t}\right\}\nonumber\\
&(-\frac{\tau}{|\tau|}dz+d\zb)\otimes (-\frac{\tau}{|\tau|}dw+d\bar{w}) \nonumber\\
+&(4\pi |\tau|t)^{-1}(\frac{2|\tau| t}{\sinh 2 |\tau| t})\exp\left\{{-\frac{|z-w|^2}{2t}\frac{2|\tau| t}{\sinh 2|\tau| t}-|\tau| (|z|^2+|w|^2)\tanh |\tau| t-2|\tau| t}\right\}\nonumber\\
&(\frac{\tau}{|\tau|}dz+d\zb)\otimes (\frac{\tau}{|\tau|}dw+d\bar{w}).
\end{align}
\end{prop}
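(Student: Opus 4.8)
The plan is to read $K(z,w,t)$ off the spectral resolution of $\Delta_f$ on $1$-forms and then sum the resulting series in closed form with Mehler's formula; this is exactly what the displays just above the statement accomplish, so the proof is essentially a matter of organizing them. Since $f=\tfrac{\tau}{2}z^{2}$ is a non-degenerate quasi-homogeneous polynomial on $\C$ with weight $q_{1}=\tfrac12\le\tfrac12$, Corollary~\ref{crl:intro-2} applies and $\Delta_{f}$ on $L^{2}\Omega^{\bullet}(\C)$ has purely discrete spectrum together with a complete orthonormal basis of eigenforms. Consequently, on the degree-one component, $e^{-t\Delta_{f}}$ has the (absolutely convergent, for $t>0$) Schwartz kernel
\[
K(z,w,t)=\sum_{j}e^{-\lambda_{j}t}\,\psi_{j}(z)\otimes\psi_{j}(w),
\]
the sum taken over any orthonormal eigenbasis $\{\psi_{j}\}$ of $L^{2}\Omega^{1}(\C)$. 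So it remains to exhibit such a basis explicitly and evaluate the sum.

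For the basis I would use the reduction already carried out: the real involution $\tau_{\R}$ commutes with $\Delta_{f}$ and splits $L^{2}\Omega^{1}(\C)=E_{+}\oplus E_{-}$, and under the identification $\varphi_{\mp}=v\bigl(\mp\tfrac{\tau}{|\tau|}dz+d\zb\bigr)\mapsto v$ the restriction of $\Delta_{f}$ to $E_{\mp}$ becomes the scalar Schr\"odinger operator $\tfrac12 H_{0}\mp 2|\tau|$, with $H_{0}=\hat H_{0}(x)+\hat H_{0}(y)$ the two-dimensional harmonic oscillator. Separating variables diagonalizes $H_{0}$ by the products $\varphi^{0}_{k}(x)\varphi^{0}_{l}(y)$, $\varphi^{0}_{k}(x)=H_{k}(\sqrt{2|\tau|}x)e^{-|\tau|x^{2}}$ (normalized to an orthonormal family), which yields the eigenforms $\varphi^{\mp}_{k,l}$ with eigenvalues $\lambda^{-}_{k,l}=2|\tau|(k+l)$ and $\lambda^{+}_{k,l}=2|\tau|(k+l+2)$. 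Substituting these into the expansion $K(z,w,t)=\sum_{k,l}e^{-\lambda^{-}_{k,l}t}\varphi^{-}_{k,l}(z)\otimes\varphi^{-}_{k,l}(w)+\sum_{k,l}e^{-\lambda^{+}_{k,l}t}\varphi^{+}_{k,l}(z)\otimes\varphi^{+}_{k,l}(w)$ displayed above and pulling the constant $1$-form factors $\bigl(\mp\tfrac{\tau}{|\tau|}dz+d\zb\bigr)\otimes\bigl(\mp\tfrac{\tau}{|\tau|}dw+d\bar{w}\bigr)$ out of each sum, every sum factors as a product of two single-variable Mehler series, one in the real parts $(x,u)$ of $(z,w)$ and one in the imaginary parts $(y,v)$, with the time and space variables rescaled according to the frequency $2|\tau|$.

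Applying the Mehler formula recalled above to each factor and recombining produces, for each of $E_{-}$ and $E_{+}$, a Gaussian in $z-w$ with the prefactor $(4\pi|\tau|t)^{-1}\tfrac{2|\tau|t}{\sinh 2|\tau|t}$. The one step that is not purely formal is the trigonometric rearrangement of the exponent that Mehler delivers,
\[
-\frac{1}{2t}\Bigl[(|z|^{2}+|w|^{2})\frac{2|\tau|t}{\tanh 2|\tau|t}-(z\bar{w}+\bar{z}w)\frac{2|\tau|t}{\sinh 2|\tau|t}\Bigr],
\]
into the form in the statement: writing $|z-w|^{2}=|z|^{2}+|w|^{2}-z\bar{w}-\bar{z}w$ and invoking the half-angle identity $\coth 2\theta-\tfrac{1}{\sinh 2\theta}=\tanh\theta$ (equivalently $\tfrac{\cosh 2\theta-1}{\sinh 2\theta}=\tanh\theta$), one peels off the term $-|\tau|\tanh(|\tau|t)(|z|^{2}+|w|^{2})$ and is left with $-\tfrac{|z-w|^{2}}{2t}\tfrac{2|\tau|t}{\sinh 2|\tau|t}$, while the spectral shifts $\mp 2|\tau|$ on $E_{\mp}$ account for the additive $\pm 2|\tau|t$ inside the exponentials. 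Adding the $E_{-}$ and $E_{+}$ contributions, each with its own form factor, gives the asserted formula. I do not expect any genuine obstacle here: convergence is automatic from Corollary~\ref{crl:intro-2}, and the only points requiring care are the Hermite normalization (so that the overall constant, in particular the $(4\pi|\tau|t)^{-1}$, comes out right) and the sign and half-angle bookkeeping in the exponent.
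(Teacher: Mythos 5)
Your proposal is correct and follows essentially the same route as the paper: the paper's own derivation is exactly the spectral expansion over the eigenforms $\varphi^{\mp}_{k,l}$ obtained from the involution splitting $E_{\pm}$, followed by factoring each sum into two one-variable Mehler series and rearranging the exponent with the half-angle identity, with the shifts $\mp 2|\tau|$ producing the $\pm 2|\tau|t$ terms. Your added remarks on Hermite normalization and on convergence via the discrete-spectrum result are consistent with, and only make explicit, what the paper leaves implicit.
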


\subsubsection*{\underline{Heat kernel of $0$ and $2$-forms}}

In this case, the Laplacian operators of $0$ and $2$ forms are equal and have the following form acting on the functions:
$$
\Delta_f=-2\pat_z\pat_\zb+2|\tau|^2|z|^2.
$$
The eigenvalue equation $\Delta_f \varphi=\lambda \varphi$ can be written as 
$$
-\Delta_0 \varphi+4 |\tau|^2 (x^2+y^2)\varphi=2\lambda\varphi. 
$$
Therefore, the eigenvalue $\lambda_{k,l}=2|\tau|(k+l+1),k,l=0,1\cdots$, and the eigenforms are 
\begin{equation}
\varphi_{k,l}(z)=\varphi^0_k(x)\varphi^0_l(y). 
\end{equation}
It is easy to see that the heat kernel is 
\begin{align}
&K(z,w,t)=(4\pi |\tau|t)^{-1}(\frac{2|\tau| t}{\sinh 2 |\tau| t})\exp\left\{{-\frac{|z-w|^2}{2t}\frac{2|\tau| t}{\sinh 2|\tau| t}-|\tau| (|z|^2+|w|^2)\tanh |\tau| t}\right\}.
\end{align}
Furthermore, we have
\begin{align}
&K(z,z,t)=(4\pi |\tau|t)^{-1}(\frac{2|\tau| t}{\sinh 2 |\tau| t})\exp\left\{{-2|\tau| |z|^2\tanh |\tau| t}\right\},
\end{align}
and 
\begin{equation}
\Tr e^{-t\Delta_f}=\Tr K(z,z,t)=(\frac{1}{2\sinh\frac{t}{2}})^2.
\end{equation}

\section{Expansion of heat kernel as $t\to 0$}
 
\subsection{Parametrix of heat equation}

\

Consider the matrix Schr\"odinger operator $\Delta_f=-\Delta+V(\z)\cdot I+B(\z)$ defined on $\C^n$ with coordinates $\z=(z_1,\cdots,z_n), z_j=x_j+iy_j$. Here $V(\z)=|\pat f|^2$, $B(\z)=L_f$ and $\Delta=4\pat_z\pat_\zb=4\sum_i \pat_i\pat_\ib$.  Let $L=\pat_t+\Delta_f$ be the heat operator.

We assume that the parametrix  of $L$ has the following form
\begin{equation}\label{sect-4-equa-0}
P_k(\z,\w,t)=\E_0(\z,\w,t)\E_1(\z,\w,t)\sum_{j=0}^k t^j U_j(\z,\w).
\end{equation}
where the matrix functions $U_j(\z,\w)$ will be determined below.

We have
\begin{align*}
LP_k&=(\pat_t-\Delta_\z+V(\z)\cdot I+B(\z))\left\{\E_0(\z,\w,t)\E_1(\z,\w,t)\sum_{j=0}^k t^j U_j(\z,\w)\right\}\\
&=\sum_{j=0}^k\left\{(\pat_t-\Delta_\z+V(\z)) (\E_0\E_1)t^j U_j+\E_0\E_1 j t^{j-1}U_j-2\nabla_\z \E_0\cdot \nabla_\z U_j \E_1 t^j-2\E_0 \nabla_z \E_1\cdot \nabla_\z U_j t^j\right.\\
&\left. -\E_0\E_1 \Delta_\z U_j t^j+\E_0\E_1B(\z)\cdot U_j t^j\right\}\\
&=\sum_{j=0}^k\E_0\E_1\left\{(t\Delta_\z g-t^2(\nabla_\z g)^2)t^j U_j+[-(\z-\w)\cdot\nabla_\z g-g(\z,\w)+V(\z)]t^jU_j+j t^{j-1}U_j\right.\\
&\left. +(\z-\w)\cdot \nabla_\z U_j t^{j-1}+2\nabla_\z g\cdot \nabla U_j t^{j+1}-t^j\Delta_\z U_j+B(\z)\cdot U_j t^j\right\}\\
&=\E_0\E_1\left\{\sum_{j=0}^{k-1}[(j+1)U_{j+1}+(\z-\w)\cdot \nabla_\z U_{j+1}]t^j+\sum_{j=0}^k(-\Delta_\z U_j+B(\z)\cdot U_j )t^j+\right.\\
&\left.+\sum_{j=0}^k [\Delta_\z g U_j+2\nabla_\z g\cdot \nabla_\z U_j]t^{j+1}+\sum_{j=0}^k [-(\nabla_\z g)^2 U_j]t^{j+2}\right\}\\
&=\E_0\E_1\left\{ \sum_{j=0}^{k-1}[(j+1)U_{j+1}+(\z-\w)\cdot \nabla_\z U_{j+1}-\Delta_\z U_j +B(\z)\cdot U_j]t^j-\Delta_\z U_k t^k+B(\z)\cdot U_k t^k\right.\\
&+\left.\sum_{j=1}^{k+1} [\Delta_\z g U_{j-1}+2\nabla_\z g\cdot \nabla_\z U_{j-1}]t^{j}+\sum_{j=2}^{k+2} [-(\nabla_\z g)^2 U_{j-2}]t^{j}\right\}.
\end{align*}
Set the initial datum
\begin{equation}
U_0\equiv I.
\end{equation}
If we consider the $j=0$ term in the above summation, we obtain the equation of $U_1=U_1(\z,\w)$:
\begin{equation}
U_{1}+(\z-\w)\cdot \nabla_\z U_{1}=-B(\z).
\end{equation}
Denote by $r=|\z-\w|$, then the above equality is equivalent to
$$
\frac{d}{dr}(rU_1)=-B(\z).
$$
$U_1$ can be solved below
\begin{equation}
U_1(\z,\w)=-\frac{1}{r}\int^r_0 B(\tau\cdot \frac{\z-\w}{|\z-\w|}+\w)d\tau=-\int^1_0 B(s(\z-\w)+\w)ds.
\end{equation}
The matrix $U_1(\z,\w)$ is symmetric with respect to the variables $\z$ and $\w$.

For $1\le j\le k-1$, we obtain the identities
\begin{equation}\label{sec1-equ1}
(j+1)U_{j+1}+(\z-\w)\cdot \nabla_\z U_{j+1}-\Delta_\z U_j+B(\z)\cdot U_j+\Delta_\z g U_{j-1}+2\nabla_\z g\cdot \nabla_\z U_{j-1}-(\nabla_\z g)^2 U_{j-2}=0.
\end{equation}
For $j=1$, we have
\begin{equation}
2U_2+(\z-\w)\cdot \nabla_\z U_{2}=\Delta_\z U_1-B(\z)\cdot U_1-\Delta_\z g ,
\end{equation}
or equivalently
\begin{equation}
\frac{d}{dr}(r^2 U_2)=(\Delta_\z U_1-B(\z)\cdot U_1-\Delta_\z g)r
\end{equation}
This gives
\begin{align*}
&U_2(\z,\w)\\
=&\frac{1}{r^2}\int_0^r (\Delta_\z U_1-B\cdot U_1-\Delta_\z g)(\tau\frac{\z-\w}{|\z-\w|}+\w)\tau d\tau\\
=&\int_0^1(\Delta_\z U_1-B\cdot U_1-\Delta_\z g)(\tau(\z-\w)+\w)\tau d\tau\\
=&\int_0^1\tau d\tau \int_0^1(-\Delta_\z V)(\lambda\tau(\z-\w)+\w)\lambda^2 d\lambda+\int^1_0 \tau d\tau
\int^1_0 B(\tau(\z-\w)+\w)\cdot B(s\tau(\z-\w)+\w)ds\\
=&\int_0^1d\lambda \int_0^\lambda (-\Delta_\z B-\Delta_\z V)(\tau(\z-\w)+\w)\tau d\tau+\int^1_0 d\tau\int^\tau_0 B(\tau(\z-\w)+\w)\cdot B(\lambda(\z-\w)+\w)d\lambda\\
=&\int_0^1d\lambda \int_0^\lambda (-\Delta_\z V)(\tau(\z-\w)+\w)\tau d\tau+\int^1_0 d\tau\int^\tau_0 B(\tau(\z-\w)+\w)\cdot B(\lambda(\z-\w)+\w)d\lambda
\end{align*}
Notice that $U_2$ has the form
$$
U_2(\z,\w)=U_2(\z-\w,\w).
$$
For $2\le j\le k-1$, we can solve $U_{j+1}$ from (\ref{sec1-equ1}) in terms of $U_j$ and $U_{j-1}$:
\begin{align*}
U_{j+1}=\frac{1}{r^{j+1}}\int^r_0\left\{\Delta_\z U_j-B\cdot U_j-\Delta_\z g U_{j-1}-2\nabla_\z g\cdot \nabla_\z U_{j-1}+(\nabla_\z g)^2 U_{j-2} \right\}\tau^j d\tau
\end{align*}
Inductively, if $U_j(\z,\w)=U_j(\z-\w,\w)$ and $U_{j-1}(\z,\w)=U_{j-1}(\z-\w,\w)$, $U_{j+1}(\z,\w)$ has the form:
\begin{align*}
&U_{j+1}(\z-\w,\w)\\
=&\int^1_0\left\{\Delta_\z U_j(\tau(\z-\w),\w)-B\cdot U_j(\tau(\z-\w),\w)-\Delta_\z g(\tau(\z-\w),\w) U_{j-1}(\tau(\z-\w),\w)\right.\nonumber\\
-2&\nabla_\z g(\tau(\z-\w),\w)\cdot \nabla_\z U_{j-1}(\tau(\z-\w),\w)+
\left. (\nabla_\z g)^2(\tau(\z-\w),\w) U_{j-2}(\tau(\z-\w),\w) \right\}\tau^j d\tau.
\end{align*}

Up to now, $U_1,\cdots,U_k$ can be uniquely solved if $g(\z,\w)$ and $U_0\equiv 1$ are given above. Finally, we have
\begin{equation}
LP_k=R_k(\z,\w,t),
\end{equation}
where the remainder
\begin{align*}
&R_k(\z,\w,t)\\
=&\E_0\E_1\left\{[-\Delta_\z U_k+B(\z)\cdot U_k+\Delta_\z g U_{k-1}+2\nabla_\z g\cdot \nabla_\z U_{k-1}-(\nabla_\z g)^2 U_{k-2}]
t^k\right.\\
&+[\Delta_\z g U_{k}+2\nabla_\z g\cdot \nabla_\z U_{k}-(\nabla_\z g)^2 U_{k-1}]t^{k+1}+\left.[-(\nabla_\z g)^2 U_{k}]t^{k+2}\right\}\\
=&:\E_0\E_1\left\{\tilde{R}_{k,1}(\z,\w,t)+\tilde{R}_{k,2}(\z,\w,t)+\tilde{R}_{k,3}(\z,\w,t)\right\}\\
=&:\E_0\E_1\tilde{R}_k(\z,\w,t).
\end{align*}
The remainder also has the form $R_k(\z,\w,t)=R_k(\z-\w,\w,t)$.

\subsection{Formal expansion of heat kernel}

\

Suppose that $F$ and $G$ are two smooth matrix valued functions on $\C^n\times \C^n\times [0,\infty]$, then their convolution is defined by
$$
F*G(\z,\w,t)=\int^t_0 \int_{\C^n}F(\z,\x, t-\tau)\cdot G(\x,\w,\tau)d\x d\tau,
$$
if the integration exists.

Now if $R_k$ is proved to be smooth, then it is easy to see that
\begin{equation*}
L(P_k*R_k)=R_k+R_k*R_k.
\end{equation*}
Define
\begin{align}
&R^0_k:=0,R^1_k:=R_k(\z,\w,t),\;P^0_k:=P_k(\z,\w,t),\\
&R^i_k:=\substack{\underbrace{R_k*\cdots*R_k}\\i},\;P^i_k=P^{i-1}_k*R_k=P_k*R^i_k\label{sect-4-equa-61}
\end{align}
Let
\begin{equation}
P(\z,\w,t)=\sum_{i=0}^\infty (-1)^i P_k^i=P_k+P_k*(\sum_{i=0}^\infty (-1)^i R_k^i),
\end{equation}
then we have
$$
LP(\z,\w,t)=0.
$$
In fact, we have
\begin{align*}
LP&=\sum_{i=1}^\infty (-1)^i L(P_k*R_k^i)+R_k=\sum_{i=0}^\infty (-1)^i (R_k^i+R_k*R_k^i)=0.
\end{align*}

We claim that for sufficiently large $k$ $R_k$ is a smooth function and the series $\sum_{i=0}^\infty R^i_k$ converges absolutely and uniformly. We will treat the convergence problem in the next section. 

\subsection{Convergence of the expansion}

\

The following lemmas will be used in the convergence estimate.

\begin{lm}\cite[Chapter IV, Lemma 3]{Cha} Let $x,y,z$ belong to any metric space with metric $d$, then for any $\tau\in (0,t)$, we have
\begin{equation}
\frac{d^2(x,y)}{\tau}+\frac{d^2(y,z)}{t-\tau}\ge \frac{d^2(x,z)}{t}.
\end{equation}
\end{lm}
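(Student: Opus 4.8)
The plan is to reduce the inequality to an elementary one-variable statement and then invoke the triangle inequality of the ambient metric. Write $a=d(x,y)$, $b=d(y,z)$, $c=d(x,z)$, and set $s=\tau$, $u=t-\tau$, so that $s,u>0$ and $s+u=t$. Since $d$ is a metric we have $c\le a+b$, hence $c^2\le (a+b)^2$, and therefore it suffices to prove the sharper, purely numerical inequality
\[
\frac{a^2}{s}+\frac{b^2}{u}\ \ge\ \frac{(a+b)^2}{s+u}.
\]

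First I would clear denominators: multiplying both sides by $su(s+u)>0$, the claim becomes $a^2u(s+u)+b^2s(s+u)\ge (a+b)^2 su$; expanding both sides, all mixed terms cancel except that the left-hand side exceeds the right-hand side by $a^2u^2+b^2s^2-2ab\,su=(au-bs)^2\ge 0$. Thus the inequality holds, with equality exactly when $au=bs$. (Equivalently, one could observe that this is the Cauchy--Schwarz inequality applied to the pair $(\sqrt s,\sqrt u)$ and $(a/\sqrt s,\,b/\sqrt u)$, or that the function $\phi(s)=\frac{a^2}{s}+\frac{b^2}{t-s}$ on $(0,t)$ attains its minimum value $\frac{(a+b)^2}{t}$ at $s=\frac{at}{a+b}$; I would still prefer the sum-of-squares form for its transparency.)

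Combining the two steps gives
\[
\frac{d^2(x,y)}{\tau}+\frac{d^2(y,z)}{t-\tau}=\frac{a^2}{s}+\frac{b^2}{u}\ \ge\ \frac{(a+b)^2}{s+u}\ \ge\ \frac{c^2}{t}=\frac{d^2(x,z)}{t},
\]
which is the assertion.

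There is essentially no obstacle: the statement is an elementary inequality, and the only decision is which of the three equivalent arguments to record. I would present the identity $(au-bs)^2\ge 0$, since it makes both the proof and the equality case immediate, even though in the applications the lemma is used only as a one-sided bound — namely to split the Gaussian factor $\E_0(\z,\x,t-\tau)\E_0(\x,\w,\tau)$ in the convolution iterates $P^i_k$ so as to control the exponent $\tfrac{|\z-\x|^2}{4(t-\tau)}+\tfrac{|\x-\w|^2}{4\tau}$ from below by $\tfrac{|\z-\w|^2}{4t}$.
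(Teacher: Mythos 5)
Your proof is correct and is essentially the standard argument behind the cited result in Chavel: the triangle inequality reduces the claim to the numerical inequality $\frac{a^2}{s}+\frac{b^2}{u}\ge\frac{(a+b)^2}{s+u}$, which your sum-of-squares identity $(au-bs)^2\ge 0$ settles. The paper itself gives no proof (it only cites \cite{Cha}), and your argument, including the remark on how the lemma is used to split the Gaussian factors in the convolution estimates, is exactly what is needed.
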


\begin{lm}\label{sect-4-lemm-4.2} \cite[Page 159]{Cha} $\forall x>0,a>0$, there holds
\begin{equation}
x^a e^{-x}\le a^a e^{-a}.
\end{equation}
\end{lm}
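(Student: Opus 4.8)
The statement to be proved is the elementary calculus inequality: for all $x > 0$ and $a > 0$, one has $x^a e^{-x} \le a^a e^{-a}$.

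\medskip

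The plan is to treat this as a one-variable optimization problem in $x$ with $a>0$ held fixed. First I would set $\phi(x) = x^a e^{-x}$ for $x\in(0,\infty)$ and observe that $\phi$ is positive and differentiable there. Taking the derivative gives
\begin{equation*}
\phi'(x) = a x^{a-1}e^{-x} - x^a e^{-x} = x^{a-1}e^{-x}(a-x),
\end{equation*}
so $\phi'(x)>0$ on $(0,a)$ and $\phi'(x)<0$ on $(a,\infty)$, with the unique critical point at $x=a$. Hence $\phi$ attains its global maximum on $(0,\infty)$ at $x=a$, and $\phi(x)\le \phi(a) = a^a e^{-a}$ for every $x>0$, which is exactly the claimed inequality. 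A cleaner packaging avoiding any appeal to sign analysis is to take logarithms: set $\psi(x)=\log\phi(x) = a\log x - x$, note $\psi''(x) = -a/x^2 < 0$ so $\psi$ is strictly concave, and a strictly concave differentiable function lies below its tangent line at the critical point $x=a$ where $\psi'(a) = a/a - 1 = 0$; thus $\psi(x)\le\psi(a)$ and exponentiating gives the result. Either route is routine; there is essentially no obstacle here beyond noting that the domain is $(0,\infty)$ and that the behavior as $x\to 0^+$ and $x\to\infty$ (where $\phi\to 0$) does not interfere with the interior maximum.

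\medskip

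The only point requiring a word of care is that the inequality is stated for all $x>0$ and $a>0$ but $\phi$ is only differentiable (as written) on the open half-line; since $\phi$ is continuous and $\to 0$ at both ends of $(0,\infty)$ while being positive in the interior, the supremum is attained at an interior critical point, and we showed there is exactly one such point. I would present the logarithmic version as the proof since it makes the concavity argument transparent and requires only $\psi''<0$ together with $\psi'(a)=0$. I do not expect any genuine difficulty in this lemma; it is a standard auxiliary estimate used later in the heat-kernel convergence arguments.
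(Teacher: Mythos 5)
Your proof is correct: maximizing $\phi(x)=x^a e^{-x}$ (or, equivalently, noting the strict concavity of $\psi(x)=a\log x - x$ with critical point $x=a$) is the standard elementary argument, and the boundary behavior at $0^+$ and $\infty$ is handled properly. The paper gives no proof of its own here — it simply cites Chavel — so there is nothing to compare beyond noting that your routine calculus derivation is exactly what the cited reference supplies.
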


 \begin{lm}\cite[Page 163]{Cha} $\forall \alpha>0, \mu<1$,
 \begin{equation}
 \int^t_0 (t-\tau)^\alpha \tau^{-\mu}d\tau=\frac{\Gamma(1+\alpha)\Gamma(1-\mu)}{\Gamma(2+\alpha-\mu)}t^{\alpha+(1-\mu)},
 \end{equation}
 \end{lm}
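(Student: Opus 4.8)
The plan is to recognize the left-hand side as a rescaled Euler Beta integral and then invoke the classical Beta--Gamma identity. First I would perform the change of variables $\tau = t s$ with $s \in [0,1]$, so that $d\tau = t\, ds$ and $t - \tau = t(1-s)$; this isolates all of the $t$-dependence and yields
\begin{equation*}
\int_0^t (t-\tau)^\alpha \tau^{-\mu}\, d\tau = t^{\alpha + 1 - \mu} \int_0^1 (1-s)^\alpha s^{-\mu}\, ds .
\end{equation*}
The remaining integral is the Beta function $B(1-\mu,\,1+\alpha) = \int_0^1 s^{(1-\mu)-1}(1-s)^{(1+\alpha)-1}\, ds$, which converges: integrability at $s=0$ is guaranteed by $\mu < 1$, and integrability at $s=1$ by $\alpha > 0$ (in fact $\alpha > -1$ would already suffice).

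Next I would record the standard identity $B(p,q) = \Gamma(p)\Gamma(q)/\Gamma(p+q)$ for $p,q>0$ and apply it with $p = 1-\mu$, $q = 1+\alpha$, obtaining $B(1-\mu,1+\alpha) = \Gamma(1-\mu)\Gamma(1+\alpha)/\Gamma(2+\alpha-\mu)$, which is precisely the asserted constant. If one wishes to be self-contained, the Beta--Gamma identity follows by writing $\Gamma(p)\Gamma(q) = \int_0^\infty\!\!\int_0^\infty x^{p-1}y^{q-1}e^{-(x+y)}\,dx\,dy$ and substituting $x = uv$, $y = u(1-v)$ with $u>0$, $v\in(0,1)$ (Jacobian $u$), which factors the double integral as $\Gamma(p+q)\cdot B(p,q)$.

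Since the statement is quoted verbatim from \cite[Page 163]{Cha}, no genuine obstacle is expected; the proof is a one-line substitution plus a classical special-function identity. The only points deserving a word of care are the convergence of the Beta integral at the endpoint $\tau = 0$ — which is exactly what the hypothesis $\mu < 1$ secures — and the use of $\alpha > 0$ for integrability at the other endpoint.
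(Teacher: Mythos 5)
Your proposal is correct: the substitution $\tau=ts$ reduces the integral to the Euler Beta function $B(1-\mu,1+\alpha)$, and the Beta--Gamma identity gives exactly the stated constant, with the hypotheses $\mu<1$ and $\alpha>0$ ensuring convergence at the two endpoints. The paper does not prove this lemma but simply cites it from Chavel, and your argument is precisely the standard proof of that cited fact, so there is nothing to reconcile.
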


\subsubsection*{\underline{Estimate of matrix operator $B$}}

$B$ has the following expression:
$$
B=L_f=-(f_{\mu l}\;\iota_{\pat_\mub}\cdot dz^l\wedge +\overline{f_{\mu l}}\;\iota_{\pat_\mu}\cdot dz^{\lb}\wedge).
$$
Note that
$$
\frac{\pat}{\pat x_i}=\frac{\pat}{\pat z_i}+\frac{\pat }{\pat \zb_i},\;\frac{\pat}{\pat y_i}=\sqrt{-1}(\frac{\pat}{\pat z_i}-\frac{\pat}{\pat \zb_i}).
$$
We have
\begin{align*}
&\frac{\pat}{\pat x_i}B=-(f_{i\mu l}\iota_{\pat_\mub}\cdot dz^l\wedge+\overline{f_{i\mu l}}\iota_{\pat_\mu}\cdot dz^\lb\wedge)\\
&\frac{\pat}{\pat y_i}B=-\sqrt{-1}(f_{i\mu l}\iota_{\pat_\mub}\cdot dz^l\wedge-\overline{f_{i\mu l}}\iota_{\pat_\mu}\cdot dz^\lb\wedge).
\end{align*}
Since the operator $\iota_{\pat_\mub}\cdot dz^l\wedge$ and its complex conjugate have upper bound $1$, we obtain the matrix estimate
\begin{align*}
 ||D^\alpha_\z B||(z_1,\cdots,z_n)&\le 2\sum_{\mu l}|\pat^\alphab_\z f_{\mu l}|(z_1,\cdots,z_n)\quad(=:F^\alpha_B(\z))\\
 &\le 2\sum_{\mu l} t^{\delta_Mq\cdot \alphab-\delta_M(1-q_\mu-q_l)}|\pat^\alphab_\z f_{\mu l}|(z^t_1,\cdots,z^t_n)\\
 &\le t^{\delta_0(|\alpha|+2)-\delta_M}T^{|\alpha|+2} F^\alpha_B(\z_t).
\end{align*}

In the above estimate, we used Lemma \ref{sect-2-Lemm-1}. 
\begin{lm}\label{sec1-lma2}
\begin{equation}
t||D^\alpha_\z B||(\z)\le t F^\alpha_B(\z)\le t^{\delta_0 |\alpha|+\delta_2} T^{|\alpha|+2} F^\alpha_B(\z_t),
\end{equation}
where $\delta_2=2\delta_0+1-\delta_M=\frac{1-2(q_M-q_m)}{2(1-q_M)}$.
\end{lm}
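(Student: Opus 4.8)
The plan is to deduce the Lemma almost immediately from the chain of inequalities established in the paragraph preceding its statement, so that the only genuinely new step is multiplication by an extra factor of $t$ followed by an algebraic simplification of the exponent. The left inequality $t\|D^\alpha_\z B\|(\z)\le tF^\alpha_B(\z)$ is simply $t$ times the pointwise matrix bound $\|D^\alpha_\z B\|(\z)\le F^\alpha_B(\z)$; the latter is obtained by differentiating $B=L_f=-(f_{\mu l}\,\iota_{\pat_{\mub}}\cdot dz^l\wedge+\overline{f_{\mu l}}\,\iota_{\pat_{\mu}}\cdot dz^\lb\wedge)$ term by term (using that $f_{\mu l}$ is holomorphic, so the real derivatives $D^\alpha_\z$ acting on it reduce to the holomorphic derivatives $\pat^\alphab_\z$ up to unit-modulus factors) and using that the contraction--wedge operators $\iota_{\pat_{\mub}}\cdot dz^l\wedge$ and their conjugates have operator norm at most $1$ on $\Lambda^*\C^n$, the prefactor $2$ in $F^\alpha_B(\z)=2\sum_{\mu l}|\pat^\alphab_\z f_{\mu l}|(\z)$ accounting for the two summands.

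For the right inequality I would scale each summand $|\pat^\alphab_\z f_{\mu l}|$ separately. Writing $f_{\mu l}=\pat_\mu\pat_l f$ (up to a fixed constant coming from the flat metric), one has $\pat^\alphab_\z f_{\mu l}=\pat^{\alphab+e_\mu+e_l}_\z f$, and the weighted homogeneity of $f$ --- Lemma \ref{sect-2-Lemm-1} applied to the $n$-multi-index $\alphab+e_\mu+e_l$, together with the remark that $\pat^\beta_\z f\equiv 0$ as soon as $1-\q\cdot\beta<0$ (so the scaling identity holds trivially in that degenerate range) --- gives $|\pat^\alphab_\z f_{\mu l}|(\z)=t^{\delta_M\q\cdot\alphab-\delta_M(1-q_\mu-q_l)}\,|\pat^\alphab_\z f_{\mu l}|(\z_t)$ for all $t>0$. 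Summing over $\mu,l$, I then pull the $t$-power out uniformly: since $\q\cdot\alphab\ge q_m|\alphab|=q_m|\alpha|$ and $q_\mu,q_l\ge q_m$, the exponent is $\ge \delta_M q_m|\alpha|-\delta_M(1-2q_m)=\delta_0(|\alpha|+2)-\delta_M$, while the excess $\delta_M(\q\cdot\alphab-q_m|\alpha|)+\delta_M(q_\mu+q_l-2q_m)$ is nonnegative and bounded by $\delta_M(q_M-q_m)(|\alpha|+2)\le |\alpha|+2$ (using $\delta_M(q_M-q_m)\le\delta_M q_M\le 1/2$), hence is absorbed into $T^{|\alpha|+2}$ for $t\in(0,T]$, $T\ge 1$. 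This reproduces the displayed bound $F^\alpha_B(\z)\le t^{\delta_0(|\alpha|+2)-\delta_M}T^{|\alpha|+2}F^\alpha_B(\z_t)$.

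Finally, multiplying by $t$ and simplifying $1+\delta_0(|\alpha|+2)-\delta_M=\delta_0|\alpha|+(2\delta_0+1-\delta_M)=\delta_0|\alpha|+\delta_2$, and checking $2\delta_0+1-\delta_M=2\delta_M q_m+1-\delta_M=\frac{1-2(q_M-q_m)}{2(1-q_M)}$, completes the proof. I do not expect a real obstacle here: the argument is pure bookkeeping of powers of $t$ and $T$. The only two points I would handle with care are the homogeneity step --- invoking Lemma \ref{sect-2-Lemm-1} where $1-\q\cdot(\alphab+e_\mu+e_l)\ge 0$ and otherwise noting the relevant derivative of the polynomial $f$ vanishes --- and keeping the auxiliary $T$-powers present throughout, since $t$ is allowed to exceed $1$ when $T>1$ and the monotonicity of $t\mapsto t^a$ then runs in the opposite direction.
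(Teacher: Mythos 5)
Your proposal is correct and coincides with the paper's own argument: the paper proves this lemma by exactly the displayed chain of estimates preceding its statement (operator norm of the contraction--wedge factors at most $1$, the scaling identity of Lemma \ref{sect-2-Lemm-1} applied to $\pat^{\alphab}f_{\mu l}$, and absorption of the excess exponent into $T^{|\alpha|+2}$), after which multiplying by $t$ and simplifying $1+\delta_0(|\alpha|+2)-\delta_M=\delta_0|\alpha|+\delta_2$ gives the stated bound. Your explicit remark that the homogeneity identity holds trivially when $1-\q\cdot(\alphab+e_\mu+e_l)<0$ (the derivative then vanishes) is a small point the paper leaves implicit, but it does not change the argument.
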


\subsubsection*{\underline{Estimate of $U_i$}}

We estimate the derivatives of $U_1$.
\begin{align*}
||D^\alpha_\z U_1||(\z,\w)&\le ||\int^1_0 D^\alpha_\z B(s(\z-\w)+\w)s^{|\alpha|}ds||\\
&\le \{F^\alpha_B(s(\z-\w)+\w)\}_s\;(=:F^\alpha_1(\z,\w))
\end{align*}

The estimate of $U_2$ is as below. We have
\begin{align*}
&|U_2(\z,\w)|=|U_2(\z-\w,\w)|\\
&\le \{\{(\Delta_\z V)(\lambda\tau(\z-\w),\w)\}_\lambda\}_\tau+\{||B(\tau(\z-\w)+\w)||\cdot\{B(\lambda\tau(\z-\w)+\w)\}_\lambda \}_\tau\\
&\le\{(\Delta_\z V)(\tau(\z-\w)+\w)\}_\tau+\{||B(\tau(\z-\w)+\w)||^2 \}_\tau
\end{align*}
and then
\begin{align*}
&\{U_2(\tau(\z-\w),\w)\}_\tau\le \{|\Delta_\z V|_+(\tau(\z-\w)+\w)\}_\tau+\{||B(\tau(\z-\w)+\w)||^2 \}_\tau\\
&=:F_2(\z,\w)
\end{align*}
Similarly, we have derivative estimates:
\begin{align*}
\{D^\alpha_\z U_2(\z,\w)\}_\tau&:=\{D^\alpha_\z U_2(\tau(\z-\w),\w)\}_\tau\\
&\le \{|D^\alpha_\z \Delta_\z V|_+(\tau(\z-\w),\w)\}_\tau+\sum_{\alpha_1+\alpha_2=\alpha}\frac{\alpha!}{ \alpha_1!\alpha_2!}\{D^{\alpha_1}_\z B\}_\tau\cdot \{D^{\alpha_2}_\z B\}_\tau\\
&=:F^\alpha_2(\z,\w).
\end{align*}

If we set $F^\alpha_0\equiv 1$, We have the simple relations:
\begin{equation}
\{D^\alpha_\z U_i\}_\lambda\le F^\alpha_{i}(\z,\w), \text{for}\; i=0,1,2 \;\text{and any}\;\alpha.
\end{equation}

\begin{lm}\label{sec1-lma3}  Fix $T\ge 1$ and any $\alpha$. In case of $i=0,1,2$,, we have for any $t\in [0,T]$,
\begin{equation}
t^i\{D^\alpha_\z U_i\}_\lambda(\z,\w)\le t^i F^\alpha_{i}(\z,\w)\le t^{\delta_0|\alpha|+i\delta } T^{|\alpha|+4i} F^\alpha_{i}(\z_t,\w_t).
\end{equation}
where $\delta:=\frac{2\delta_3}{3}=\frac{1-3(q_M-q_m)}{3(1-q_M)}$.
\end{lm}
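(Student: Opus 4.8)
The plan is to reduce everything to the scaling inequality $t^iF^\alpha_i(\z,\w)\le t^{\delta_0|\alpha|+i\delta}T^{|\alpha|+4i}F^\alpha_i(\z_t,\w_t)$: the first inequality $t^i\{D^\alpha_\z U_i\}_\lambda\le t^iF^\alpha_i$ is just the relations recorded immediately above the statement, multiplied through by $t^i\ge0$. For the scaling inequality I would argue separately for $i=0,1,2$. The recurring mechanism is that the weighted dilation $\zeta\mapsto\zeta_t$ commutes with affine combinations in the first slot, $(s(\z-\w)+\w)_t=s(\z_t-\w_t)+\w_t$ (check coordinatewise using $z^t_i=t^{\delta_Mq_i}z_i$), so that a pointwise estimate at a point $s(\z-\w)+\w$, maximized over $s\in[0,1]$, reproduces the relevant $F$-quantity at $(\z_t,\w_t)$ with nothing left over. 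The case $i=0$ is trivial: $U_0\equiv I$, so $D^\alpha_\z U_0=0$ unless $\alpha=0$, and for $\alpha=0$ both sides reduce to $F^0_0$.

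For $i=1$, since $F^\alpha_1(\z,\w)=\{F^\alpha_B(s(\z-\w)+\w)\}_s$, I would divide the inequality of Lemma~\ref{sec1-lma2} by $t$ to obtain the pointwise bound $F^\alpha_B(\zeta)\le t^{\delta_0|\alpha|+\delta_2-1}T^{|\alpha|+2}F^\alpha_B(\zeta_t)$, apply it at $\zeta=s(\z-\w)+\w$, and take the max over $s$; this gives $tF^\alpha_1(\z,\w)\le t^{\delta_0|\alpha|+\delta_2}T^{|\alpha|+2}F^\alpha_1(\z_t,\w_t)$. To finish I must trade $\delta_2$ for $\delta$. A short computation gives $\delta_2-\delta=\tfrac1{6(1-q_M)}$, which is positive and, since $q_M\le\tfrac12$, at most $\tfrac13$; hence $t\mapsto t^{\delta_2-\delta}$ is increasing and $\le T^{\delta_2-\delta}\le T$ on $(0,T]$, so $t^{\delta_2}\le T\,t^{\delta}$ there. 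This costs one extra power of $T$ and still lands inside the budget $T^{|\alpha|+4}$.

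For $i=2$, $F^\alpha_2$ is a sum of a ``Laplacian term'' $\{|D^\alpha_\z\Delta_\z V|_+(\tau(\z-\w),\w)\}_\tau$ and a ``quadratic term'' $\sum_{\alpha_1+\alpha_2=\alpha}\tfrac{\alpha!}{\alpha_1!\alpha_2!}\{D^{\alpha_1}_\z B\}_\tau\{D^{\alpha_2}_\z B\}_\tau$, which I would treat one at a time. For the first, $D^\alpha_\z\Delta_\z V$ is a fixed linear combination of $\z$-derivatives of $V$ of order $|\alpha|+2$, so Lemma~\ref{sec1-lma1}(iii) applied to each of them (the same powers of $t$ and $T$ for all, so no extra constant) produces a factor $t^{\delta_0|\alpha|+4\delta_0-2\delta_M}T^{|\alpha|+4}$; after the $t^2$ prefactor the exponent of $t$ becomes $\delta_0|\alpha|+2(1+2\delta_0-\delta_M)=\delta_0|\alpha|+2\delta_2$, where I use the identity $\delta_2=2\delta_0+1-\delta_M$ from Lemma~\ref{sec1-lma2}. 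For the quadratic term, the pointwise form of Lemma~\ref{sec1-lma2} applied to each factor $D^{\alpha_j}_\z B$ contributes $t^{\delta_0|\alpha_j|+\delta_2-1}T^{|\alpha_j|+2}$, and the product of the two (with $|\alpha_1|+|\alpha_2|=|\alpha|$) together with the $t^2$ prefactor gives exactly the same $t^{\delta_0|\alpha|+2\delta_2}T^{|\alpha|+4}$ times the scaled quantity, the binomial constants riding along. Summing the two pieces and then converting $t^{2\delta_2}\le T\,t^{2\delta}$ on $(0,T]$ as before (here $2(\delta_2-\delta)=\tfrac1{3(1-q_M)}\le\tfrac23<1$) yields $t^2F^\alpha_2(\z,\w)\le t^{\delta_0|\alpha|+2\delta}T^{|\alpha|+5}F^\alpha_2(\z_t,\w_t)$, comfortably inside the budget $T^{|\alpha|+8}$.

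I expect the main obstacle to be the bookkeeping rather than any one estimate: making sure $D^\alpha_\z\Delta_\z V$ and all the $|\cdot|_+$-norms are reduced to precisely the expressions handled by Lemmas~\ref{sec1-lma1} and~\ref{sec1-lma2}, so that after the pointwise bounds the right-hand sides reassemble verbatim into $F^\alpha_i(\z_t,\w_t)$, and checking that the single arithmetic identity $\delta_2=2\delta_0+1-\delta_M$ — together with $0<\delta_2-\delta\le\tfrac13$, which is exactly where the standing hypothesis $q_M\le\tfrac12$ enters — makes both contributions carry the exponent $\delta_0|\alpha|+i\delta$ uniformly on all of $(0,T]$ and not merely for $t\le1$.
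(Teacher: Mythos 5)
Your proposal is correct and follows essentially the same route as the paper: $i=0$ dismissed as trivial, $i=1$ via Lemma \ref{sec1-lma2} applied pointwise along $s(\z-\w)+\w$ (using that the weighted dilation commutes with such affine combinations), and $i=2$ by splitting $F^\alpha_2$ into the $\Delta_\z V$ term (Lemma \ref{sec1-lma1}) and the quadratic $B$ term (Lemma \ref{sec1-lma2}), both landing on the exponent $\delta_0|\alpha|+2\delta_2$ before trading $\delta_2$ for $\delta$ at the cost of extra powers of $T$ within the budget $T^{|\alpha|+4i}$. Your arithmetic ($\delta_2=2\delta_0+1-\delta_M$, $\delta_2-\delta=\tfrac{1}{6(1-q_M)}$) matches what the paper uses implicitly, so no gap.
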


\begin{proof} Since $U_0\equiv I$, $i=0$ case is trivial. For $i=1$ case, we have

\begin{align*}
&t F^\alpha_{1}(\z,\w)=t\{F^\alpha_B(s(\z-\w),\w)\}_s\\
&\le t^{\delta_0|\alpha|+\delta_2}T^{|\alpha|+2}F^\alpha_1(\z_t,\w_t)\le t^{\delta_0|\alpha|+\delta}T^{|\alpha|+3}F^\alpha_1(\z_t,\w_t)
\end{align*}

For $i=2$ case, we have
\begin{align*}
&t^2 F^\alpha_{2}(\z,\w)\\
=&t^2 \{|D^\alpha_\z\Delta_\z V|_+ (\tau(\z-\w)+\w)\}_\tau+t^2\sum_{\alpha_1+\alpha_2=\alpha}\frac{\alpha!}{\alpha_1!\alpha_2!}\{D^{\alpha_1}_\z B\}_\tau \cdot \{D^{\alpha_2}B\}_\tau(\z,\w)\\
\le&  t^{\delta_0(|\alpha|+4)-2\delta_M+2}T^{|\alpha|+4} \{|D^\alpha_\z\Delta_\z V|_+ \}_\tau(\z_t,\w_t)\\
&+\sum_{\alpha_1+\alpha_2=\alpha}\frac{\alpha!}{\alpha_1!\alpha_2!}t^{\delta_0|\alpha_1|+\delta_2} t^{\delta_0|\alpha_2|+\delta_2} T^{|\alpha_1|+|\alpha_2|+4}\{D^{\alpha_1}_\z B\}_\tau \cdot \{D^{\alpha_2}B\}_\tau\\
\le& t^{\delta_0|\alpha|+2\delta_2}T_1^{|\alpha|+4}F^\alpha_{2}(\z_t,\w_t)\\
\le &t^{\delta_0|\alpha|+2\delta}T^{|\alpha|+6} F^\alpha_{2}(\z_t,\w_t)
\end{align*}
This proves $i=2$ case.
\end{proof}

Now we inductively define $F^\alpha_{j+1}$ for $j\ge 2$ and any $\alpha$ in terms of $F^\beta_{j},F^\beta_{j-1}$ and $F^\beta_{j-2}$ for $|\beta|\le |\alpha|$.

We have
\begin{align*}
&\{D^\alpha_\z U_{j+1}\}_\lambda\\
\le& \{D^\alpha_\z\Delta_\z U_j\}_\lambda+\sum_{\alpha_1+\alpha_2=\alpha}\frac{\alpha!}{\alpha_1!\alpha_2!}\{D^{\alpha_1} B\cdot D^{\alpha_2} U_{j}\}_\lambda+\sum_{\alpha_1+\alpha_2=\alpha}\frac{\alpha!}{\alpha_1!\alpha_2!}\{D^{\alpha_1}\Delta_\z g\cdot D^{\alpha_2} U_{j-1}\}_\lambda\\
&+2\sum_{\alpha_1+\alpha_2=\alpha}\frac{\alpha!}{\alpha_1!\alpha_2!}\{D^{\alpha_1}\nabla_\z g\cdot D^{\alpha_2}\nabla_\z U_{j-1}\}_\lambda+\sum_{\alpha_1+\alpha_2+\alpha_3=\alpha}\frac{\alpha!}{\alpha_1!\alpha_2!\alpha_3!}\{D^{\alpha_1}\nabla_\z g\cdot D^{\alpha_2}_\z \nabla_\z g D^{\alpha_3}U_{j-2}\}_\lambda\\
\le &\sum_{k=1}^n F^{\alpha+2e_k}_j+\sum_{\alpha_1+\alpha_2=\alpha}\frac{\alpha!}{\alpha_1!\alpha_2!}F^{\alpha_1}_B F^{\alpha_2}_j+\sum_{\alpha_1+\alpha_2=\alpha}\frac{\alpha!}{\alpha_1!\alpha_2!}\{|D^{\alpha_1}\Delta_\z V|_+\}_\lambda\cdot F^{\alpha_2}_{j-1}\\
&+2\sum_{k=1}^n\sum_{\alpha_1+\alpha_2=\alpha}\frac{\alpha!}{\alpha_1!\alpha_2!}\{|D^{\alpha_1+e_k}
V|_+\}_\lambda F^{\alpha_2+e_k}_{j-1}\\
&+\sum_{k=1}^n\sum_{\alpha_1+\alpha_2+\alpha_3=\alpha}\frac{\alpha!}{\alpha_1!\alpha_2!\alpha_3!}\{|D^{\alpha_1+e_k}
V|_+\}_\lambda\cdot \{|D^{\alpha_2+e_k}_\z  V|_+ \}_\lambda
F^{\alpha_3}_{j-2},
\end{align*}
where $e_k$ are the standard unit $n$ dimension vectors. We denote the right hand side of the above inequality by $F^\alpha_{j+1}$. It is obvious that each $F^\alpha_j$ has the form
$$
F^\alpha_j(\z,\w)=F^\alpha_j(\z-\w,\w).
$$
\begin{lm}\label{sect-4-lemm-4.6} For any $i\ge 1$, $T\ge 1$, any $\alpha$ and any $t\in [0,T]$, there is
\begin{equation}\label{sec1-equ2}
t^i\{D^\alpha_\z U_i\}_\lambda(\z,\w)\le t^i F^\alpha_{i}(\z,\w)\le t^{\delta_0|\alpha|+i\delta }T^{|\alpha|+4i}F^\alpha_{i}(\z_t,\w_t),
\end{equation}

\end{lm}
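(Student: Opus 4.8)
The plan is to prove \eqref{sec1-equ2} by strong induction on $i$. For $i=0,1,2$ it is exactly Lemma~\ref{sec1-lma3}, so assume $j\ge 2$ and that \eqref{sec1-equ2} holds, for every multi-index, at the three levels $j$, $j-1$ and $j-2$. Unwinding the recursive definition of $F^\alpha_{j+1}$ displayed just before the statement, $F^\alpha_{j+1}(\z,\w)$ is a finite sum --- over splittings of $\alpha$ into two or three pieces --- of five families of terms; each term is a product of exactly one ``$U$-factor'' $F^\gamma_m$ with $m\in\{j,j-1,j-2\}$ and at most two ``coefficient factors'', each being $F^\beta_B$ (from $B$), $\{|D^\beta_\z\Delta_\z V|_+\}_\lambda$ (from $\Delta_\z g$), or $\{|D^{\beta+e_k}_\z V|_+\}_\lambda$ (from $\nabla_\z g$). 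Since $F^\alpha_{j+1}(\z_t,\w_t)$ is the same sum with every factor re-evaluated at $(\z_t,\w_t)$, it suffices to prove the claimed inequality term by term.

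For a fixed term I would split the prefactor $t^{j+1}$ among the factors and rescale each by the estimate already available for it: $t^m$ goes to the $U$-factor, where the induction hypothesis \eqref{sec1-equ2} at level $m$ yields $t^{\delta_0|\gamma|+m\delta}T^{|\gamma|+4m}F^\gamma_m(\z_t,\w_t)$; $t^1$ goes to an $F^\beta_B$, where Lemma~\ref{sec1-lma2} yields $t^{\delta_0|\beta|+\delta_2}T^{|\beta|+2}$; $t^{3/2}$ goes to each $\nabla_\z g$-factor, where Lemma~\ref{sec1-lmb1} together with the scaling of $|\cdot|_+$ from Lemma~\ref{sec1-lma1} yields $t^{\delta_0|\beta|+\delta_3}T^{|\beta|+3}$; and $t^2$ goes to a $\Delta_\z g$-factor, yielding $t^{\delta_0|\beta|+2\delta_2}T^{|\beta|+4}$. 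In the one term $\sum_k F^{\alpha+2e_k}_j$, which has no coefficient factor, I give $t^j$ to the $U$-factor at the enlarged index $\alpha+2e_k$ and keep the leftover $t^1$; in the two terms built from $\nabla_\z g$ a leftover $t^{1/2}$ likewise remains.

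Multiplying the rescaled factors, the power of $\delta_0$ equals $\delta_0$ times a sum of multi-index lengths that is always at least $|\alpha|$ (the extra derivatives coming from $\Delta_\z$ and the gradients can only increase it, and $\delta_0\ge 0$), while the remaining exponent of $t$ adds up to at least $(j+1)\delta$. The latter is where the elementary relations $\delta_2-\delta=\tfrac{1}{6(1-q_M)}>0$, $3\delta=2\delta_3$, $0<\delta_3\le\delta_M\le 1$ and $\delta\le 1$ --- all immediate from $q_M-q_m<\tfrac{1}{3}$ --- are used; explicitly, $\delta_2+j\delta\ge(j+1)\delta$ for $B\cdot U_j$, $2\delta_2+(j-1)\delta\ge(j+1)\delta$ for $\Delta_\z g\cdot U_{j-1}$, $\delta_3+(j-1)\delta+\tfrac{1}{2}\ge(j+1)\delta$ for $\nabla_\z g\cdot\nabla_\z U_{j-1}$, $j\delta+1\ge(j+1)\delta$ for $\Delta_\z U_j$, and --- the tight case --- $2\delta_3+(j-2)\delta=(j+1)\delta$ for $(\nabla_\z g)^2U_{j-2}$, with equality. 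Each surplus (nonnegative) power of $t$ is then absorbed into $T$ via $t\le T$, $T\ge 1$; a brief count of these surpluses shows the accumulated power of $T$ never exceeds $|\alpha|+4(j+1)$, with equality possible only for the $\Delta_\z U_j$ term --- which is precisely why the coefficient of $i$ in the exponent $|\alpha|+4i$ must be $4$. Summing the finitely many terms gives \eqref{sec1-equ2} at level $j+1$, completing the induction.

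I expect the entire difficulty to be bookkeeping: for each of the five families one has to fix the derivative count of every factor, choose the split of $t^{j+1}$ so that no leftover power of $t$ becomes negative, and then verify the two scalar inequalities (accumulated power of $t$ at least $\delta_0|\alpha|+(j+1)\delta$; accumulated power of $T$ at most $|\alpha|+4(j+1)$), both of which come down to the handful of relations among $\delta_0,\delta_2,\delta_3,\delta$ and $\delta_M$ recorded above. No analytic input beyond the scaling lemmas of this subsection is required.
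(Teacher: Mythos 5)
Your proposal is correct and follows essentially the same route as the paper: induction on $i$ with base cases $i=0,1,2$ from Lemma \ref{sec1-lma3}, then a term-by-term rescaling of the five families in the recursive definition of $F^\alpha_{j+1}$ using Lemmas \ref{sec1-lma1} and \ref{sec1-lma2}, with surplus powers of $t$ absorbed into $T$. Your explicit exponent checks (including the tight case $2\delta_3+(j-2)\delta=(j+1)\delta$) match the computations in the paper's proof.
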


\begin{proof} Prove by induction. The inequality (\ref{sec1-equ2}) holds for $i=0,1,2$ due to Lemma \ref{sec1-lma3} .
Now we assume it holds for $i=j-2,j-1,j$, and we want to prove that it holds for $i=j+1$. By Lemma \ref{sec1-lma1} and \ref{sec1-lma2}, we have
\begin{align*}
&t^{j+1}F^{\alpha}_{j+1}(\z,\w)\\
&=\sum_{k=1}^n t^{j+1}F^{\alpha+2e_k}_j(\z,\w)+\sum_{\alpha_1+\alpha_2=\alpha}\frac{\alpha!}{\alpha_1!\alpha_2!}t F^{\alpha_1}_B t^{j}F^{\alpha_2}_j(\z,\w)\\
&+\sum_{\alpha_1+\alpha_2}\frac{\alpha!}{\alpha_1!\alpha_2!}t^2\{|D^{\alpha_1}\Delta_\z V|_+\}_\lambda(\z,\w)\cdot t^{j-1}F^{\alpha_2}_{j-1}(\z,\w)\\
&+2\sum_{k=1}^n\sum_{\alpha_1+\alpha_2=\alpha}\frac{\alpha!}{\alpha_1!\alpha_2!}t^2\{|D^{\alpha_1+e_k} V|_+\}_\lambda(\z,\w) t^{j-1}F^{\alpha_2+e_k}_{j-1}(\z,\w)\\
&+\sum_{k=1}^n\sum_{\alpha_1+\alpha_2+\alpha_3=\alpha}\frac{\alpha!}{\alpha_1!\alpha_2!\alpha_3!}t^3\{|D^{\alpha_1+e_k} V|_+\}_\lambda(\z,\w)\cdot \{|D^{\alpha_2+e_k}_\z  V|_+ \}_\lambda(\z,\w) t^{j-2}F^{\alpha_3}_{j-2}(\z,\w)\\
\le &\sum_{k=1}^n t^{\delta_0 |\alpha|+\delta (j+1)} T^{|\alpha|+4+4j} F^{\alpha+2e_k}_j(\z_t,\w_t)+\sum_{\alpha_1+\alpha_2=\alpha}\frac{\alpha!}{\alpha_1!\alpha_2!}t^{\delta_0|\alpha|+(j+1)\delta}T^{|\alpha|+4+4j} F^{\alpha_1}_B F^{\alpha_2}_j(\z_t,\w_t)\\
&+\sum_{\alpha_1+\alpha_2}\frac{\alpha!}{\alpha_1!\alpha_2!}t^{\delta_0|\alpha|+\delta(j+1)}T^{|\alpha|+4+4j}\{|D^{\alpha_1}\Delta_\z V|_+\}_\lambda(\z_t,\w_t)\cdot F^{\alpha_2}_{j-1}(\z_t,\w_t)\\
&+2\sum_{k=1}^n\sum_{\alpha_1+\alpha_2=\alpha}\frac{\alpha!}{\alpha_1!\alpha_2!}t^{\delta_0|\alpha|+\delta(j+1)}T^{|\alpha|+4+4j}\{|D^{\alpha_1+e_k} V|_+\}_\lambda(\z_t,\w_t) F^{\alpha_2+e_k}_{j-1}(\z_t,\w_t)\\
+&\sum_{k=1}^n\sum_{\alpha_1+\alpha_2+\alpha_3=\alpha}\frac{\alpha!}{\alpha_1!\alpha_2!\alpha_3!}t^{\delta_0|\alpha|+\delta(j+1)}T^{|\alpha|+4(j+1)}\{|D^{\alpha_1+e_k} V|_+\}_\lambda(\z_t,\w_t)\\
&\cdot \{|D^{\alpha_2+e_k}_\z  V|_+ \}_\lambda(\z_t,\w_t)F^{\alpha_3}_{j-2}(\z_t,\w_t)\\
&\le  t^{\delta_0|\alpha|+\delta(j+1)}T^{|\alpha|+4(j+1)}F^{\alpha}_{j+1}(\z_t,\w_t).
\end{align*}
This proves the conclusion.
\end{proof}

\subsubsection*{\underline{Estimate of the remainder}}

Now we have the estimate of the remainder $\tilde{R}_k(\z,\w,t)$.
\begin{align*}
&|\tilde{R}_k(\z,\w,t)|\\
\le& \sum^n_{l=1} t^k F^{2e_l}_k+t^k F_B F_k+t^k\{|\Delta_z V|_+\}_\lambda F^0_{k-1}+2\sum_l t^k\{|D^{e_l}_\z V|_+\}_\lambda F^{e_l}_{k-1}+t^k\{(|\nabla_z V|_+)^2\}_\lambda F^0_{k-2}\\
&+t^{k+1}\{|\Delta_\z V|_+\}_\lambda F^0_k+2t^{k+1}\sum_l \{|\nabla^{e_l}_\z V|_+\}_\lambda F^{e_l}_k+t^{k+1}\{(|\nabla_\z V|_+)^2\}_\lambda F^0_{k-1}\\
&+t^{k+2}\{(|\nabla_\z V|_+)^2\}_\lambda F^0_k\;\;(\text{defined as}\;\tilde{F}_k(\z,\w,t))
\end{align*}

\begin{align*}
&\le \sum^n_{l=1} t^{2\delta_0+\delta k} T^{2+4k} F^{2e_l}_k(\z_t,\w_t)+t^{\delta_2-1+\delta k}T^{4k+2}F_BF_k(\z_t,\w_t)\\
&+t^{-1+2\delta_2-\delta+\delta k} T^{4k}\{|\Delta_z V|_+\}_\lambda(\z_t,\w_t) F^0_{k-1}(\z_t,\w_t)+2\sum_l t^{-1+2\delta_2+\delta (k-1)} T^{4k}\{|D^{e_l}_\z V|_+\}_\lambda(\z_t,\w_t) F^{e_l}_{k-1}(\z_t,\w_t)\\
&+t^{-1+2\delta_3-2\delta+\delta k} T^{4k-2}\{(|\nabla_z V|_+)^2\}_\lambda(\z_t,\w_t) F^0_{k-2}(\z_t,\w_t)+t^{-1+2\delta_2+\delta k} T^{4(k+1)}\{|\Delta_\z V|_+\}_\lambda(\z_t,\w_t) F^0_k(\z_t,\w_t)\\
&+2t^{-\frac{1}{2}+\delta_3+\delta_0+\delta k} T^{4(k+1)}\sum_l \{|\nabla^{e_l}_\z V|_+\}_\lambda(\z_t,\w_t) F^{e_l}_k(\z_t,\w_t)\\
&+t^{-1+2\delta_3-\delta+\delta k} T^{4k+2}\{(|\nabla_\z V|_+)^2\}_\lambda(\z_t,\w_t) F^0_{k-1}(\z_t,\w_t)+t^{-1+2\delta_3+\delta k} T^{4k+6}\{(|\nabla_\z V|_+)^2\}_\lambda(\z_t,\w_t) F^0_k(\z_t,\w_t)\\
\le& t^{-1+\delta k} T^{4k+10}\tilde{F}_k(\z_t,\w_t,1).
\end{align*}

The estimate of higher derivatives follows:

\begin{align*}
&|D^\alpha_z\tilde{R}_k(\z,\w,t)|\\
&\le \sum^n_{l=1} t^k F^{\alpha+2e_l}_k+t^k\sum_{\alpha_1+\alpha_2=\alpha} \frac{\alpha!}{\alpha_1!\alpha_2 !}F^{\alpha_1}_B F^{\alpha_2}_k+t^k\sum_{\alpha_1+\alpha_2=\alpha} \frac{\alpha!}{\alpha_1!\alpha_2 !}\{|D^{\alpha_1}\Delta_z V|_+\}_\lambda F^{\alpha_2}_{k-1}\\
&+2\sum_l \sum_{\alpha_1+\alpha_2=\alpha} t^k\frac{\alpha!}{\alpha_1!\alpha_2 !}\{|D^{\alpha_1+e_l}_\z V|_+\}_\lambda F^{\alpha_2+e_l}_{k-1}+t^k\sum_{\alpha_1+\alpha_2+\alpha_3=\alpha}\frac{\alpha!}{\alpha_1!\alpha_2 !\alpha_3 !}\{|D^{\alpha_1}\nabla_z V|_+\}_\lambda\cdot \{|D^{\alpha_2}\nabla_z V|_+\}_\lambda F^{\alpha_3}_{k-2}\\
&+t^{k+1}\sum_{\alpha_1+\alpha_2=\alpha} \frac{\alpha!}{\alpha_1!\alpha_2 !}\{|D^{\alpha_1}\Delta_\z V|_+\}_\lambda F^{\alpha_2}_k+2t^{k+1}\sum_l \sum_{\alpha_1+\alpha_2=\alpha} \frac{\alpha!}{\alpha_1!\alpha_2 !}\{|D^{\alpha_1}\nabla^{e_l}_\z V|_+\}_\lambda F^{\alpha_2+e_l}_k\\
&+t^{k+1}\sum_{\alpha_1+\alpha_2+\alpha_3=\alpha}\frac{\alpha!}{\alpha_1!\alpha_2 !\alpha_3 !}\{|D^{\alpha_1}\nabla_z V|_+\}_\lambda\cdot \{|D^{\alpha_2}\nabla_z V|_+\}_\lambda F^{\alpha_3}_{k-1}\\
&+t^{k+2}\sum_{\alpha_1+\alpha_2+\alpha_3=\alpha}\frac{\alpha!}{\alpha_1!\alpha_2 !\alpha_3 !}\{|D^{\alpha_1}\nabla_z V|_+\}_\lambda\cdot \{ |D^{\alpha_2}\nabla_z V|_+\}_\lambda F^{\alpha_3}_{k}\;\;(\text{defined as}\; \tilde{F}_k^{\alpha}(\z,\w,t))
\\
&\le \sum^n_{l=1} t^{2\delta_0+\delta_0 |\alpha|+\delta k} T^{2+|\alpha|+4k}F^{\alpha+2e_l}_k(\z_t,\w_t)+t^{\delta_2-1+\delta_0 |\alpha|+\delta k}T^{4k+|\alpha|+2}\sum_{\alpha_1+\alpha_2=\alpha} \frac{\alpha!}{\alpha_1!\alpha_2 !}F^{\alpha_1}_B F^{\alpha_2}_k\\
&+t^{-1+2\delta_2-\delta+\delta_0 |\alpha|+\delta k} T^{|\alpha|+4k}\sum_{\alpha_1+\alpha_2=\alpha} \frac{\alpha!}{\alpha_1!\alpha_2 !}\{|D^{\alpha_1}\Delta_z V|_+\}_\lambda F^{\alpha_2}_{k-1}(\z_t,\w_t)\\
&+2\sum_l \sum_{\alpha_1+\alpha_2=\alpha}t^{-1+\delta_2+\delta_0 |\alpha|+\delta k} T^{|\alpha|+4k+1}\frac{\alpha!}{\alpha_1!\alpha_2 !}\{|D^{\alpha_1+e_l}_\z V|_+\}_\lambda F^{\alpha_2+e_l}_{k-1}(\z_t,\w_t)\\
&+t^{-1+2\delta_3-2\delta+\delta_0 |\alpha|+\delta k} T^{|\alpha|+4k-2} \sum_{\alpha_1+\alpha_2+\alpha_3=\alpha}\frac{\alpha!}{\alpha_1!\alpha_2 !\alpha_3 !}\{|D^{\alpha_1}\nabla_z V|_+\}_\lambda\cdot\{| D^{\alpha_2}\nabla_z V|_+\}_\lambda F^{\alpha_3}_{k-2}(\z_t,\w_t)
\end{align*}

\begin{align*}
&+t^{-1+2\delta_2+\delta_0 |\alpha|+\delta k} T^{|\alpha|+4(k+1)}\sum_{\alpha_1+\alpha_2=\alpha} \frac{\alpha!}{\alpha_1!\alpha_2 !}\{|D^{\alpha_1}\Delta_\z V|_+\}_\lambda F^{\alpha_2}_k(\z_t,\w_t)\\
&+2t^{-1+2\delta_2+\delta_0 |\alpha|+\delta k} T^{|\alpha|+4(k+1)}\sum_l \sum_{\alpha_1+\alpha_2=\alpha} \frac{\alpha!}{\alpha_1!\alpha_2 !}\{|D^{\alpha_1}\nabla^{e_l}_\z V|_+\}_\lambda F^{\alpha_2+e_l}_k(\z_t,\w_t)
\\
&+t^{-1+2\delta_3-\delta+\delta_0 |\alpha|+\delta k} T^{|\alpha|+4k+2} \sum_{\alpha_1+\alpha_2+\alpha_3=\alpha}\frac{\alpha!}{\alpha_1!\alpha_2 !\alpha_3 !}\{|D^{\alpha_1}\nabla_z V|_+\}_\lambda\cdot \{|D^{\alpha_2}\nabla_z V|_+\}_\lambda F^{\alpha_3}_{k-1}(\z_t,\w_t)\\
&+t^{-1+2\delta_3+\delta_0 |\alpha|+\delta k} T^{|\alpha|+4k+6}\sum_{\alpha_1+\alpha_2+\alpha_3=\alpha}\frac{\alpha!}{\alpha_1!\alpha_2 !\alpha_3 !}\{|D^{\alpha_1}\nabla_z V|_+\}\cdot\{| D^{\alpha_2}\nabla_z V|_+\}_\lambda F^{\alpha_3}_{k}(\z_t,\w_t)\\
&\le  t^{-1+\delta k+\delta_0 |\alpha|} T^{4k+10+|\alpha|}\tilde{F}_k^\alpha(\z_t,\w_t,1).
\end{align*}

\begin{lm}\label{sect4-lemm-4.7} For any $\alpha$, there holds
\begin{equation*}
(1)\quad |D^\alpha_z\tilde{R}_k(\z,\w,t)|\le \tilde{F}_k^\alpha(\z,\w,t)\le  t^{-1+\delta k+\delta_0 |\alpha|} T_1^{4k+10+|\alpha|}\tilde{F}_k^\alpha(\z_t,\w_t,1)
\end{equation*}
For any $0<l\le k$, there holds
\begin{align*}
(2)\quad |D^l_t D^\alpha_\z\tilde{R}_k(\z,\w,t)|&\le \frac{(k+2)!}{(k+2-l)!}t^{-l}\tilde{F}^\alpha_k(\z,\w,t)\nonumber\\
&\le \frac{(k+2)!}{ (k+2-l)!}t^{-l-1+\delta k+\delta_0 |\alpha|} T_1^{4k+10+|\alpha|}\tilde{F}_k^\alpha(\z_t,\w_t,1)
\end{align*}
\end{lm}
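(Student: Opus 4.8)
The plan is to reduce the lemma to the elementary observation that $\tilde R_k$ is a low-degree polynomial in $t$. First I would record that, since $g(\z,\w)$ and all the $U_j(\z,\w)$ are independent of $t$, the explicit formula for the remainder shows that $\tilde R_k=\tilde R_{k,1}+\tilde R_{k,2}+\tilde R_{k,3}$ is a polynomial in $t$ supported in degrees $k,k+1,k+2$, say $\tilde R_k(\z,\w,t)=t^k a_k(\z,\w)+t^{k+1}a_{k+1}(\z,\w)+t^{k+2}a_{k+2}(\z,\w)$; examining its definition shows the majorant has the same shape, $\tilde F_k^\alpha(\z,\w,t)=t^kG_k^\alpha+t^{k+1}G_{k+1}^\alpha+t^{k+2}G_{k+2}^\alpha$ with nonnegative functions $G_m^\alpha$. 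The first inequality of~(1), $|D^\alpha_\z\tilde R_k|\le\tilde F_k^\alpha$, is established verbatim in the display preceding the lemma, and because that bound is obtained term by term it actually yields $|D^\alpha_\z a_m|\le G_m^\alpha$ for $m=k,k+1,k+2$, which is what part~(2) will need. The second inequality of~(1) is precisely the long chain of scaling estimates displayed above the statement: one matches each block of terms against the corresponding $t$-homogeneous piece of $\tilde F_k^\alpha$ and applies Lemmas~\ref{sec1-lma1}, \ref{sec1-lma2} and~\ref{sect-4-lemm-4.6}, using the identities among $\delta_0,\delta_2,\delta_3,\delta,\delta_M$ recorded earlier to see that the worst surviving exponent of $t$ is the uniform value $-1+\delta k+\delta_0|\alpha|$, with total power $T_1^{4k+10+|\alpha|}$. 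So nothing new is needed for~(1) beyond transcribing that computation.

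For~(2) I would differentiate the polynomial in $t$ directly. For $0<l\le k$ one has $D^l_t(t^m)=\frac{m!}{(m-l)!}t^{m-l}$, whence
\begin{equation*}
|D^l_t D^\alpha_\z\tilde R_k|\le\sum_{m=k}^{k+2}\frac{m!}{(m-l)!}\,t^{m-l}\,G_m^\alpha .
\end{equation*}
The one fact to check is that the falling factorial $m\mapsto\frac{m!}{(m-l)!}=m(m-1)\cdots(m-l+1)$ is nondecreasing in $m$ for fixed $l$; hence every coefficient is at most $\frac{(k+2)!}{(k+2-l)!}$, and factoring out $t^{-l}$ gives
\begin{equation*}
|D^l_t D^\alpha_\z\tilde R_k|\le\frac{(k+2)!}{(k+2-l)!}\,t^{-l}\sum_{m=k}^{k+2}t^m G_m^\alpha=\frac{(k+2)!}{(k+2-l)!}\,t^{-l}\,\tilde F_k^\alpha(\z,\w,t),
\end{equation*}
which is the first inequality of~(2). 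The second inequality of~(2) is then immediate: multiply the scaling bound $\tilde F_k^\alpha(\z,\w,t)\le t^{-1+\delta k+\delta_0|\alpha|}T_1^{4k+10+|\alpha|}\tilde F_k^\alpha(\z_t,\w_t,1)$ from~(1) by $\frac{(k+2)!}{(k+2-l)!}t^{-l}$.

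I do not anticipate a genuine obstacle in this lemma. Essentially all of the work lives in part~(1), and it is bookkeeping rather than analysis: one must keep the $t^k$, $t^{k+1}$ and $t^{k+2}$ blocks of $D^\alpha_\z\tilde R_k$ separate and check that each is majorized by the matching $t$-homogeneous block of $\tilde F_k^\alpha$, and that after the rescaling lemmas are applied the three candidate exponents of $t$ all collapse to $-1+\delta k+\delta_0|\alpha|$. Part~(2) is purely formal: a $t$-derivative on a degree-$(k+2)$ polynomial in $t$ only lowers the power of $t$ and emits the combinatorial factor $\frac{(k+2)!}{(k+2-l)!}$, harmless because $k$ is ultimately fixed, and the monotonicity of the falling factorial is the single small observation needed to make the constant uniform in $l$.
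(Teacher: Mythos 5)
Your proposal is correct and follows essentially the same route the paper intends: part (1) is exactly the term-by-term scaling estimates displayed immediately before the lemma, and part (2) is the evident consequence of $\tilde{R}_k$ and $\tilde{F}^\alpha_k$ being polynomials in $t$ supported in degrees $k,k+1,k+2$ with $t$-independent, blockwise-majorized coefficients. Your observation that $\frac{m!}{(m-l)!}\le\frac{(k+2)!}{(k+2-l)!}$ for $m=k,k+1,k+2$ and $0<l\le k$ is precisely the small point needed to get the uniform factor $\frac{(k+2)!}{(k+2-l)!}t^{-l}$, after which the rescaled bound follows by multiplying the inequality from part (1).
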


\begin{prop}\label{sect-4-prop-1} For any $l\in \N$ and $\alpha$, we have the estimate of the remainder $R_k(\z,\w,t)$:
\begin{align}
& |D^\alpha_\z R_k(\z,\w,t)|\le t^{\delta k-\frac{3}{2}|\alpha|-1}T^{4k+4|\alpha|+10}\E_0(\z,\w,t) \{\E_1(\z_t,\w_t,1)\nonumber\\
&\sum_{\alpha_1+\alpha_2+\alpha_3=\alpha}\frac{\alpha}{\alpha_1!\alpha_2!\alpha_3!} |P^{|\alpha_1|}_0|_+(\z_t-\w_t,1)\cdot F_g^{\alpha_2}(\z_t,\w_t,1)\cdot \tilde{F}^{\alpha_3}_k(\z_t,\w_t,1)\}\label{sec1-equa-3}\\
&|D^l_t R_k(\z,\w,t)|\le t^{\delta k-3l-1}T^{4k+10+2l}\E_0(\z,\w,t)\big\{\E_1(\z_t,\w_t,1)\nonumber\\
&\sum_{l_1+l_2+l_3=l}\frac{l!}{ l_1! l_2! l_3!}\frac{(k+2)!}{ (k+2-l_3)!}\cdot |P^{2l_1}_1|_+(|\z_t-\w_t|,1)\cdot |g(\z_t,\w_t,1)|^{l_2}\cdot\tilde{F}_k(\z_t,\w_t,1)\big\}\label{sec1-equa-4}
\end{align}
Furthermore, if we denote the terms inside $\{\cdot \}$ of (\ref{sec1-equa-3}) and (\ref{sec1-equa-4}) by $\hat{F}^\alpha_{k,1}(\z_t,\w_t)$ and $\hat{F}^l_{k,2}(\z_t,\w_t)$ respectively, then there exists constants $c^{\alpha}_{k,1}, c^{l}_{k,2}, $ depending only on $k,\alpha, l$ and $V$ such that
\begin{align*}
(1)\quad&\hat{F}^\alpha_{k,i}(\z_t,\w_t),\;\int_{\C^n}\hat{F}^\alpha_{k,1}(\z_t,\w_t)d\z_t , \;\int_{\C^n}\hat{F}^\alpha_{k,1}(\z_t,\w_t)d\w_t\le  c^{\alpha}_{k,1}\\
(2)\quad&\hat{F}^l_{k,2}(\z_t,\w_t),\;\int_{\C^n}\hat{F}^l_{k,2}(\z_t,\w_t)d\z_t , \;\int_{\C^n}\hat{F}^l_{k,2}(\z_t,\w_t)d\w_t\le  c^l_{k,2}.
\end{align*}
\end{prop}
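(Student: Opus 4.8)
The plan is to differentiate $R_k=\E_0\E_1\tilde{R}_k$ by the Leibniz rule and to bound the three factors separately with the scaling lemmas already established. For the spatial derivatives I would write
\[
D^\alpha_\z R_k=\sum_{\alpha_1+\alpha_2+\alpha_3=\alpha}\frac{\alpha!}{\alpha_1!\,\alpha_2!\,\alpha_3!}\,(D^{\alpha_1}_\z\E_0)\,(D^{\alpha_2}_\z\E_1)\,(D^{\alpha_3}_\z\tilde{R}_k),
\]
and apply Lemma~\ref{sect-2-lemm-2.14} to $D^{\alpha_1}_\z\E_0$, Lemma~\ref{sect-2-lemm-2.13} to $D^{\alpha_2}_\z\E_1=G^{\alpha_2}\E_1$ together with the elementary bound $\E_1(\z,\w,t)\le\E_1(\z_t,\w_t,1)$ (valid for $t\le 1$, since $2\delta_M(1-q_i)=\tfrac{1-q_i}{1-q_M}\ge 1$ forces $g(\z_t,\w_t)\le t\,g(\z,\w)$, cf.\ Lemma~\ref{sec1-lma1}(i)), and Lemma~\ref{sect4-lemm-4.7}(1) to $D^{\alpha_3}_\z\tilde{R}_k$. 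The time derivatives are handled identically through the trinomial Leibniz expansion in $t$: the second estimate of Lemma~\ref{sect-2-lemm-2.14} controls $D^{l_1}_t\E_0$, the identity $D^{l_2}_t\E_1=(-g)^{l_2}\E_1$ with $|g(\z,\w)|\le t^{-2\delta_M(1-q_m)}|g(\z_t,\w_t)|$ controls $D^{l_2}_t\E_1$, and Lemma~\ref{sect4-lemm-4.7}(2) controls $D^{l_3}_t\tilde{R}_k$, supplying the combinatorial factor $\tfrac{(k+2)!}{(k+2-l_3)!}$.

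Multiplying the three bounds, the exponent of $t$ produced in the spatial case is $-(1+\delta_M q_M)|\alpha_1|+(-\tfrac12+\delta_3)|\alpha_2|+(-1+\delta k+\delta_0|\alpha_3|)$, which is $\ge\delta k-\tfrac32|\alpha|-1$ because each of $\tfrac12-\delta_M q_M$, $1+\delta_3$, $\tfrac32+\delta_0$ is nonnegative; restricting to $t\le 1$ this gives the factor $t^{\delta k-\frac32|\alpha|-1}$, and the $T$-powers plainly sum to at most $4k+10+4|\alpha|$. In the time-derivative case one uses instead $\delta_M q_M\le\tfrac12$ and $\delta_M(1-q_m)\le 1$, obtaining $t^{\delta k-3l-1}$ and $T^{4k+10+2l}$. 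What remains of the product after these $t$- and $T$-powers are pulled out is exactly $\E_0(\z,\w,t)$ multiplied by the sums denoted $\hat{F}^\alpha_{k,1}(\z_t,\w_t)$, respectively $\hat{F}^l_{k,2}(\z_t,\w_t)$, which is the content of (\ref{sec1-equa-3})--(\ref{sec1-equa-4}).

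For the boundedness assertions (1)--(2) I would argue as follows. By Corollary~\ref{sect-2-lemm-2.9}, $V(\z)\ge\tfrac1C|\z|^2-1$; since $\tau(\z_t-\w_t)+\w_t=(\tau(\z-\w)+\w)_t$ componentwise and the quadratic form $(u,v)\mapsto\int_0^1|\tau u+v|^2\,d\tau$ is positive definite, this yields $g(\z_t,\w_t)\ge c\,(|\z_t|^2+|\w_t|^2)-1$ for some $c=c(f)>0$, so $\E_1(\z_t,\w_t,1)=e^{-g(\z_t,\w_t)}$ decays like a Gaussian jointly in $\z_t$ and $\w_t$. On the other hand, unwinding their inductive definitions, the remaining ingredients $|P^{|\alpha_1|}_0|_+$, $|P^{2l_1}_1|_+$, $F^{\alpha_2}_g$, $|g|^{l_2}$ and $\tilde{F}^{\alpha_3}_k$ are assembled solely from the variable $\z_t-\w_t$ and from derivatives of the polynomials $f$ and $V=|\pat f|^2$, evaluated with a segment-maximum $\{\cdot\}_\tau$ along $[\w_t,\z_t]$; none of these operations destroys polynomial growth, so each is dominated by a fixed polynomial in $|\z_t|+|\w_t|$ with data depending only on $k,\alpha,l$ and $f$. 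Since a Gaussian beats every polynomial, $\hat{F}^\alpha_{k,1}(\z_t,\w_t)$ and $\hat{F}^l_{k,2}(\z_t,\w_t)$ are bounded by constants $c^\alpha_{k,1},c^l_{k,2}$ of this kind, and the same Gaussian factor makes $\int_{\C^n}\hat{F}^\alpha_{k,1}\,d\z_t$, $\int_{\C^n}\hat{F}^\alpha_{k,1}\,d\w_t$ and the $\hat{F}^l_{k,2}$-analogues finite and uniformly bounded.

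The main obstacle is not the Leibniz bookkeeping, which is mechanical once Lemmas~\ref{sect-2-lemm-2.14}, \ref{sect-2-lemm-2.13} and \ref{sect4-lemm-4.7} are available, but making the last paragraph airtight: one has to trace through the recursions defining $F^\alpha_g$, $F^\alpha_j$ and $\tilde{F}^\alpha_k$ to be sure that, evaluated at $(\z_t,\w_t)$, they admit a polynomial majorant in $|\z_t|+|\w_t|$ whose degree and coefficients are controlled by $k$, $\alpha$, $l$ and $f$ alone, and simultaneously that every $t$-exponent appearing in these recursions remains $\ge\delta k-\tfrac32|\alpha|-1$ (resp.\ $\ge\delta k-3l-1$); this is what lets the Gaussian weight $\E_1(\z_t,\w_t,1)$ dominate them and their $\C^n$-integrals uniformly.
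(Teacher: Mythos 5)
Your proposal follows essentially the same route as the paper: the trinomial Leibniz expansion of $R_k=\E_0\E_1\tilde R_k$, the bounds from Lemmas \ref{sect-2-lemm-2.14}, \ref{sect-2-lemm-2.13} and \ref{sect4-lemm-4.7} with the same exponent bookkeeping in $t$ and $T$, and then Corollary \ref{sect-2-lemm-2.9} to get quadratic lower growth of $g$ so that the Gaussian factor $\E_1(\z_t,\w_t,1)$ dominates the polynomially growing $F$-factors and yields the uniform and integral bounds. The step you flag as needing care is exactly what the paper supplies via the estimates (\ref{sec0-esti2})--(\ref{sec0-esti3}), the bound $g(\z,\w)\ge\frac18(|\z|^2+|\w|^2)$, and Lemma \ref{sect-4-lemm-4.2}, so your sketch is correct in substance.
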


\begin{proof} By Lemma \ref{sect-2-lemm-2.13}, \ref{sect-2-lemm-2.14} and \ref{sect4-lemm-4.7}, we have
\begin{align*}
&|D^\alpha_\z R_k(\z,\w,t)|\\
&\le\sum_{\alpha_1+\alpha_2+\alpha_3=\alpha}\frac{\alpha}{\alpha_1!\alpha_2!\alpha_3!} |D^{\alpha_1}\E_0 D^{\alpha_2}\E_1 D^{\alpha_3}\tilde{R}_k|\\
&\le \sum_{\alpha_1+\alpha_2+\alpha_3=\alpha}\frac{\alpha}{\alpha_1!\alpha_2!\alpha_3!} t^{-|\alpha_1|}|P^{|\alpha_1|}_0(\z-\w,t)|_+\cdot |G^{\alpha_2}(\z,\w,t)|\cdot \tilde{F}^{\alpha_3}_k(\z,\w,t)\E_0\E_1\\
&\le \sum_{\alpha_1+\alpha_2+\alpha_3=\alpha}\frac{\alpha}{\alpha_1!\alpha_2!\alpha_3!} t^{-(1+\delta_Mq_M)|\alpha_1|+(-\frac{1}{2}+\delta_3)|\alpha_2|-1+\delta k+\delta_0|\alpha_3|}T^{4k+10+|\alpha_1|+3|\alpha_2|+|\alpha_3|}\\
&\cdot |P^{|\alpha_1|}_0|_+(\z_t-\w_t,1)\cdot F^{\alpha_2}_g(\z_t,\w_t,1)\cdot \tilde{F}^{\alpha_3}_k(\z_t,\w_t,1)\E_1(\z_t,\w_t,1)\E_0(\z,\w,t).\\
&\le t^{\delta k-\frac{3}{2}|\alpha|-1}T^{4k+4|\alpha|+10}\sum_{\alpha_1+\alpha_2+\alpha_3=\alpha}\frac{\alpha}{\alpha_1!\alpha_2!\alpha_3!} |P^{|\alpha_1|}_0|_+(\z_t-\w_t,1)\cdot F^{\alpha_2}_g(\z_t,\w_t,1)\\
&\cdot \tilde{F}^{\alpha_3}_k(\z_t,\w_t,1)\E_1(\z_t,\w_t,1)\E_0(\z,\w,t).
\end{align*}
For any $l\in \N$, there is
\begin{align*}
&|D^l_t R_k(\z,\w,t)|\\
&\le\sum_{l_1+l_2+l_3=l}\frac{l!}{ l_1! l_2! l_3!}|D^{l_1}_t\E_0 D^{l_2}_t\E_1 D^{l_3}_t\tilde{R}_k|\\
&\le \sum_{l_1+l_2+l_3=l}\frac{l!(k+2)!}{ l_1! l_2! l_3!(k+2-l_3)!}t^{-2l_1}|P^{2l_1}_1(|\z-\w|,t)|\cdot\E_0(\z,\w,t)\cdot |(-g)^{l_2}|\E_1(\z,\w,t)
t^{-l_3}\tilde{F}_k(\z,\w,t)\\
&\le \sum_{l_1+l_2+l_3=l}\frac{l!(k+2)!}{ l_1! l_2! l_3!(k+2-l_3)!} t^{-2(1+\delta_Mq_M)l_1-2\delta_M(1-q_m)l_2-l_3-1+\delta k}T_1^{2l_1+2l_2+4k+10}\\
&\cdot |P^{2l_1}_1(|\z_t-\w_t|,1)|_+\cdot |g(\z_t,\w_t,1)|^{l_2}\cdot\E_1(\z_t,\w_t,1)\cdot\tilde{F}_k(\z_t,\w_t,1)\E_0(\z,\w,t)\\
&\le t^{\delta(k-2)-3l-1}T^{4k+10+2l}\sum_{l_1+l_2+l_3=l}\frac{l!(k+2)!}{ l_1! l_2! l_3!(k+2-l_3)!}\cdot |P^{2l_1}_1(|\z_t-\w_t|,1)|_+\\
&\cdot |g(\z_t,\w_t,1)|^{l_2}\cdot\E_1(\z_t,\w_t,1)\cdot\tilde{F}_k(\z_t,\w_t,1)\E_0(\z,\w,t)
\end{align*}
Note that $\hat{F}^\alpha_{k,1}$ and $\hat{F}^l_{k,2}$ are the finite summation of those terms having the following form with $\z_t,\w_t$ replaced by $\z,\w$:
$$
|\z-\w|^{s_1}\prod_{l=1}^{s_2}\{D^{\alpha_l}_\z g(\z,\w, 1)\}_\lambda e^{-g(\z,\w)}.
$$
An easy computation shows that 
$$
g(\z,\w)=\int^1_0|\tau(\z-\w)+\w|^2d\tau=\frac{1}{3}|\z-\w|^2+(\z-\w)\cdot \w+|\w|^2\ge \frac{1}{4}|\w|^2.
$$
Since $g(\z,\w)$ is symmetric about $\z$ and $\w$, in fact we have
\begin{equation}\label{sect-4-equa-5}
g(\z,\w)\ge \frac{1}{8}(|\z|^2+|\w|^2).
\end{equation}
By inequality (\ref{sec0-esti2}), (\ref{sec0-esti3}), (\ref{sect-4-equa-5}) and Corollary \ref{sect-2-lemm-2.9}, we have
\begin{align*}
&|\z-\w|^{s_1}\prod_{l=1}^{s_2}\{D^{\alpha_l}_\z g(\z,\w, 1)\}_\lambda e^{-g(\z,\w)}\\
&\le |\z-\w|^{s_1}\prod_{l=1}^{s_2}\max_{\x\in \overline{\z\w}}|D^{\alpha_l}_\z V(\x)|_+ e^{-g(\z,\w)}\\
&\le |\z-\w|^{s_1}\prod_{l=1}^{s_2}\max_{\x\in \overline{\z\w}}C_{\alpha_l}( V(\x)+1)^{2-|\alpha_l|q_m-2q_m} e^{-g(\z,\w)}\\
&\le \left(2^{s_1-1}\prod_{l=1}^{s_2}C_{\alpha_l} e \right)(|\z|^{s_1}+|\w|^{s_1}) \max_{\x\in \overline{\z\w}}( V(\x)+1)^{(2-2q_m)s_2} e^{-\frac{1}{c}\int^1_0 |\tau(\z-\w)+\w|^2}\\
&\le \left( 2^{s_1-1}\prod_{l=1}^{s_2}C_{\alpha_l} e \right)(|\z|^{s_1}+|\w|^{s_1})(C|\z|^{2\max_l |b_l|}+C|\w|^{2\max_l |b_l|}+2)^{(2-2q_m)s_2} e^{-\frac{1}{8c}(|\z|^2+|\w|^2)}\\
&\le c(k.\alpha)(|z|^2+|\w|^2+1)^\beta e^{-\frac{1}{8c}(|\z|^2+|\w|^2+1)}\\
&\le c(k,\alpha)\beta^\beta e^{-\beta}=:c^\alpha_{k,1},
\end{align*}
where $\beta=\frac{s_1}{2}+4\max_l |b_l|\{(1-q_m)s_2$ and we used Lemma \ref{sect-4-lemm-4.2} in the last second inequality.

Hence we get a uniform estimate
\begin{equation}
\hat{F}^\alpha_{k,1}\le c^{\alpha}_{k,1},
\end{equation}
where the constant $c^{\alpha}_{k,1}$ only depends on $k,\alpha$ and the constants appearing in the polynomial $f$.

Now the integral estimate follows obviously from the above upper bound estimate. Since the variable $\z$ and $\w$ have symmetry, we can get the similar integrability estimate w. r. t. $\w$.  So eventually we obtain the estimate about $\hat{F}^\alpha_{k,1}$ and $\hat{F}^l_{k,2}$.
\end{proof}

\begin{crl} Assume that $q_M-q_m<\frac{1}{3}$ and let $\delta=\frac{1-3(q_M-q_m)}{3(1-q_M)}$. Given $l_0\in \N$. If $k\in \N$ satisfies $ k>\frac{3l_0+n+1}{\delta}$, then $\tilde{R}$ is a continuous function defined on $\C^n\times \C^n\times [0,1]$ and has up to $2l_0$-th derivatives w. r. t. $\z$ and up to $l_0$-th derivatives w. r. t. $t$. It also holds for any $\alpha,|\alpha|\le 2l_0$,
$$
\lim_{t\to 0}||D^\alpha_\z R_k(\cdot,\cdot, t)||_{C(\C^{n})}=0,
$$
and for any $0\le a\le l_0$,
$$
\lim_{t\to 0}||D^a_t R_k(\cdot,\cdot, t)||_{C(\C^n)}=0.
$$
\end{crl}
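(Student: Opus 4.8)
The plan is to read the Corollary off directly from Proposition~\ref{sect-4-prop-1} together with the explicit algebraic shape of $\tilde{R}_k$; the genuine analytic content --- the rescaling estimates of Lemmas~\ref{sect-2-lemm-2.13}, \ref{sect-2-lemm-2.14} and~\ref{sect4-lemm-4.7} and the uniform bounds $c^{\alpha}_{k,1}, c^{l}_{k,2}$ manufactured in Proposition~\ref{sect-4-prop-1} --- is already available, so what remains is essentially to match the exponent of $t$ against the prescribed orders of differentiation and then to run a routine "uniform convergence of derivatives" argument to extend everything continuously to $t=0$.

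First I would dispose of the regularity of $\tilde{R}_k$ itself. By its construction in Section~4.1, $\tilde{R}_k=\tilde{R}_{k,1}+\tilde{R}_{k,2}+\tilde{R}_{k,3}$ is a polynomial in $t$ of degree $k+2$ whose coefficients are fixed polynomial expressions in the derivatives of $V$, $B=L_f$, $g$ and $U_0,\dots,U_k$, all of which are $C^\infty$ functions of $(\z,\w)$ alone. Hence $\tilde{R}_k\in C^\infty(\C^n\times\C^n\times\R)$; in particular it is continuous and possesses $\z$-derivatives and $t$-derivatives of every order on $\C^n\times\C^n\times[0,1]$. This is the trivial part of the statement; the substance concerns $R_k=\E_0\E_1\tilde{R}_k$ and its behaviour as $t\to 0$.

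Next I would feed the hypothesis $k>\frac{3l_0+n+1}{\delta}$ into the estimates~(\ref{sec1-equa-3}) and~(\ref{sec1-equa-4}). Fix $T=1$, since we work on $[0,1]$. For $|\alpha|\le 2l_0$, estimate~(\ref{sec1-equa-3}), the uniform bound $\hat{F}^{\alpha}_{k,1}\le c^{\alpha}_{k,1}$ of part~(1), and the elementary inequality $\E_0(\z,\w,t)\le (4\pi t)^{-n}$ give
\begin{equation*}
\sup_{(\z,\w)\in\C^n\times\C^n}|D^\alpha_\z R_k(\z,\w,t)|\le (4\pi)^{-n}c^{\alpha}_{k,1}\,t^{\,\delta k-\frac32|\alpha|-1-n},\qquad t\in(0,1].
\end{equation*}
Because $|\alpha|\le 2l_0$ yields $\delta k-\frac32|\alpha|-1-n\ge \delta k-3l_0-1-n$, and the choice of $k$ makes $\delta k-3l_0-1-n>0$, the right-hand side is $O(t^{\eta})$ with $\eta>0$, uniformly in $(\z,\w)$; hence $\|D^\alpha_\z R_k(\cdot,\cdot,t)\|_{C(\C^n\times\C^n)}\to 0$ as $t\to 0^+$. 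The same computation applied to~(\ref{sec1-equa-4}) with $0\le a\le l_0$ and part~(2) gives $\sup|D^a_t R_k(\z,\w,t)|\le (4\pi)^{-n}c^{a}_{k,2}\,t^{\,\delta k-3a-1-n}$ with $\delta k-3a-1-n\ge \delta k-3l_0-1-n>0$, so $\|D^a_t R_k(\cdot,\cdot,t)\|_{C}\to 0$ as well. (The polynomial factors $|P^{|\alpha_1|}_0|_+$, $|P^{2l_1}_1|_+$ and $|g|^{l_2}$ inside $\hat{F}^{\alpha}_{k,1}$, $\hat{F}^{l}_{k,2}$ are evaluated at $\z_t,\w_t$ and are controlled, exactly as in the proof of Proposition~\ref{sect-4-prop-1}, by Lemma~\ref{sect-4-lemm-4.2}, so these suprema are genuinely finite.) Finally, to promote these uniform limits into regularity of $R_k$ on the closed slab, I would set $R_k(\cdot,\cdot,0)\equiv 0$ and invoke the standard extension principle: $R_k$ is $C^\infty$ on $\C^n\times\C^n\times(0,1]$, and since $D^\alpha_\z R_k$ and $D^a_t R_k$ converge uniformly to $0$ as $t\to0^+$ for all $|\alpha|\le 2l_0$ and $a\le l_0$, the extended $R_k$ is continuous on $\C^n\times\C^n\times[0,1]$, is $C^{2l_0}$ in $\z$ and $C^{l_0}$ in $t$ there (with all these derivatives vanishing at $t=0$): indeed $D^\alpha_\z R_k(\cdot,\cdot,0)=0$ because $R_k(\cdot,\cdot,0)\equiv 0$, and for the $t$-derivatives one argues inductively on $a$ using $\frac1t\int_0^t D^{a}_sR_k(\z,\w,s)\,ds\to 0$, which shows $D^{a}_tR_k(\z,\w,0)$ exists, equals $0$, and coincides with $\lim_{t\to0^+}D^{a}_tR_k$; the mixed derivatives up to these orders are treated identically via a Leibniz expansion and Lemma~\ref{sect4-lemm-4.7}(2).

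The step that requires care --- though it is bookkeeping rather than a real obstacle, since Proposition~\ref{sect-4-prop-1} has already absorbed the hard estimates --- is the interplay of the three competing powers of $t$: the gain $t^{\delta k}$ coming from the $k$ extra orders in the parametrix, against the loss $t^{-n}$ from the Gaussian prefactor of $\E_0$ and the further loss $t^{-\frac32|\alpha|}$ (resp.\ $t^{-3a}$) produced when derivatives fall on $\E_0$ and $\E_1$. Demanding that the net exponent remain strictly positive for all $|\alpha|\le 2l_0$ and $a\le l_0$ is precisely what forces the threshold $k>\frac{3l_0+n+1}{\delta}$; and such $k$ exist only because $\delta=\frac{1-3(q_M-q_m)}{3(1-q_M)}>0$, which is where the standing assumption $q_M-q_m<\frac13$ is used.
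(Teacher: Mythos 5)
Your argument is correct and follows exactly the route the paper intends: the corollary is stated as an immediate consequence of Proposition \ref{sect-4-prop-1}, and your proof just reads off the estimates (\ref{sec1-equa-3})--(\ref{sec1-equa-4}) with $\E_0\le(4\pi t)^{-n}$, the uniform bounds $c^{\alpha}_{k,1},c^{l}_{k,2}$, and the exponent count $\delta k-3l_0-n-1>0$ forced by $k>\frac{3l_0+n+1}{\delta}$. The only material you add beyond what the paper leaves implicit is the careful extension of $R_k$ by zero to $t=0$, which is a correct and harmless elaboration.
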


\begin{crl}Assume that $q_M-q_m<\frac{1}{3}$ and let $k\in \N$ satisfies $ k>\frac{3l_0+n+1}{\delta}$. Then for any such $k$, there exists a family of smooth kernel $P_k(\z,\w,t)$ such that, for every $l_0\in \N$
\begin{enumerate}
\item for every $T>0$, the operator $P_k(\cdot,\cdot,t)$ form a uniformly bounded family of operators on the space $C^{2l_0}_0(\C^n)$ for every $t\in (0,T]$;
\item for every $h(\z)\in C^{2l_0}_0(\C^n)$, we have
$$
\lim_{t\to 0}P_k*h=h
$$
with respect to the norm $||\cdot||_{C^{2l_0}(\C^n)}$.
\item the remainder $R_k(\z,\w,t)$ satisfies the estimate
$$
||R_k(\cdot,\cdot)||_{C^{2l_0}}\le Ct^{\delta k-3l_0-n-1}.
$$
\end{enumerate}
\end{crl}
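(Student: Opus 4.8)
The plan is to read off the three assertions from the pointwise bounds for $P_k$ and $R_k$ now available, essentially by bookkeeping on the powers of $t$. I will use two reductions throughout. First, for $t\in(0,1]$,
$$
\E_1(\z,\w,t)=e^{-tg(\z,\w)}\le e^{-g(\z_t,\w_t)}=\E_1(\z_t,\w_t,1),
$$
because $\tau(\z_t-\w_t)+\w_t=(\tau(\z-\w)+\w)_t$, so $g(\z_t,\w_t)=\int_0^1V((\tau(\z-\w)+\w)_t)\,d\tau\le t\,g(\z,\w)$ by Lemma~\ref{sec1-lma1}(i). Second, Corollary~\ref{sect-2-lemm-2.9} together with $\int_0^1|\tau(\z-\w)+\w|^2\,d\tau\ge\tfrac16(|\z|^2+|\w|^2)$ gives $g(\z_t,\w_t)\ge c_0(|\z_t|^2+|\w_t|^2)-1$; since $V$, $B=L_f$, the $U_j$ and the derivatives of $g$ are all dominated by polynomials in their scaled arguments, the Gaussian $\E_1(\z_t,\w_t,1)$ absorbs every spatial factor produced by differentiation, uniformly in $t$.

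Assertion (3) is then immediate: for $|\alpha|\le 2l_0$, (\ref{sec1-equa-3}), the uniform bound $\hat F^\alpha_{k,1}(\z_t,\w_t)\le c^\alpha_{k,1}$ from Proposition~\ref{sect-4-prop-1} and $\E_0(\z,\w,t)\le(4\pi t)^{-n}$ yield $|D^\alpha_\z R_k(\z,\w,t)|\le C\,t^{\delta k-\frac32|\alpha|-1-n}$; the exponent is smallest at $|\alpha|=2l_0$, so for $t\in(0,T]$ every such term, and (from (\ref{sec1-equa-4})) every $t$-derivative term up to order $l_0$, is $\le C\,t^{\delta k-3l_0-n-1}$. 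The hypothesis $k>\frac{3l_0+n+1}{\delta}$ makes this exponent positive.

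For (1) I would first prove the $C^0$-bound: the definition of $P_k$, the two reductions and Lemma~\ref{sect-4-lemm-4.6} with $\alpha=0$ give $|P_k(\z,\w,t)|\le\E_0(\z,\w,t)\,\E_1(\z_t,\w_t,1)\sum_{j=0}^kt^{j\delta}T^{4j}F^0_j(\z_t,\w_t)\le C_k\,\E_0(\z,\w,t)$ for $t\in(0,1]$, whence $\sup_\z\int_{\C^n}|P_k(\z,\w,t)|\,d\w\le C_k$; for $t\in[1,T]$ one argues the same way with $\E_1\le e^{-g(\z,\w)}$ in place of $V(\z_t)\le tV(\z)$. Since $D^\alpha_\z\E_0$ costs $t^{-|\alpha|}$, the $C^{2l_0}$-bound cannot be read off directly from $\int|D^\alpha_\z P_k|\,d\w$; instead, as for the flat heat kernel, I would use $D_{z_i}\E_0=-D_{w_i}\E_0$ to integrate by parts in $\w$. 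Writing $P_k=\E_0 Q_k$ with $Q_k=\E_1\sum_{j\le k}t^jU_j$, for $h\in C^{2l_0}_0(\C^n)$ and $|\alpha|\le 2l_0$ this gives
$$
D^\alpha_\z\big(P_k(t)h\big)(\z)=\sum_{\beta\le\alpha}\sum_{\gamma\le\beta}\binom{\alpha}{\beta}\binom{\beta}{\gamma}\int_{\C^n}\E_0(\z,\w,t)\big(D^\gamma_\w D^{\alpha-\beta}_\z Q_k\big)(\z,\w,t)\big(D^{\beta-\gamma}_\w h\big)(\w)\,d\w,
$$
so $h$ loses at most $2l_0$ derivatives and every remaining derivative falls on $Q_k$; it then suffices to bound $\sup_\z\int_{\C^n}\E_0\,|D^\gamma_\w D^{\alpha-\beta}_\z Q_k|\,d\w$, which via Lemma~\ref{sect-2-lemm-2.13} (the $\E_1$-factor), Lemma~\ref{sect-4-lemm-4.6} (the $U_j$), the $(\z,\w)$-symmetry of $g$ and the $U_j$, and the two reductions, becomes an integral of $\E_0$ against a bounded function. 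For (2) I would write $P_k(t)h-h=(\E_0(t)h-h)+\int_{\C^n}\E_0\big[(\E_1-1)h+\E_1\sum_{j\ge1}t^jU_j\,h\big]\,d\w$: the first term tends to $0$ in $C^{2l_0}$ because $\E_0(t)$ is an approximate identity with $D^\alpha(\E_0(t)h)=\E_0(t)(D^\alpha h)$, and the correction carries an explicit positive power of $t$ (from $t^j$, $j\ge1$, and $|\E_1-1|\le tg$), the rapid decay of $\E_0(\z,\w,t)$ in $|\z-\w|$ handling $\z$ far from $\operatorname{supp}h$.

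The hard part will be the uniformity in $t$ of the $C^{2l_0}$-bound in (1): each derivative that lands on $\E_1=e^{-tg}$ produces $-t\nabla g\,\E_1$, and by Lemma~\ref{sec1-lma1}(iv) the best one can do is $t|\nabla_\z g|\le t^{-\frac12+\delta_3}T^3\{|\nabla_\z V|_+\}_\tau(\z_t,\w_t)$, i.e.\ a factor $t^{-\frac12+\delta_3}$ per such derivative. The assumption $q_M-q_m<\tfrac13$ only gives $\delta_3>0$; to make every exponent $(-\tfrac12+\delta_3)p_2+\delta_0p_3$ non-negative one wants $\delta_3\ge\tfrac12$, i.e.\ $q_m\ge\tfrac23q_M$, and when $q_m<\tfrac23q_M$ this scheme yields only $\|P_k(t)\|_{C^{2l_0}\to C^{2l_0}}\le C\,t^{-2l_0(\frac12-\delta_3)}$. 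This weaker bound is still enough for the construction of the heat kernel, since it is dominated by the positive power $t^{\delta k-3l_0-n-1}$ of $R_k$ once $k$ is large, but recovering genuine uniform boundedness would require exhibiting a cancellation among the $\E_1$-derivative terms.
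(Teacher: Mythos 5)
Your handling of parts (2) and (3) is sound and, in fact, considerably more explicit than what the paper provides: the paper's entire proof consists of the remark that it suffices to show $\lim_{t\to 0}P_k*h=h$ in the continuous norm, obtained by dominated convergence, with (1) and (3) implicitly read off from the parametrix and remainder estimates of the preceding subsection. Your derivation of (3) from Proposition~\ref{sect-4-prop-1} together with $\E_0(\z,\w,t)\le (4\pi t)^{-n}$ is exactly the intended one, and your treatment of (2) --- splitting off the flat approximate identity $\E_0(t)$, using $D^\alpha(\E_0(t)h)=\E_0(t)(D^\alpha h)$, and exploiting the compact support of $h$ together with the Gaussian localization of $\E_0$ so that every term in which a derivative falls on $\E_1\sum_j t^jU_j$ carries a genuine positive power of $t$ on the region that matters --- supplies the details the paper leaves to the reader (the paper does not even address the $C^{2l_0}$ norm, only the sup norm).

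The one place where you do not deliver the statement is part (1), and you say so yourself. After integrating by parts to move derivatives off $\E_0$, each derivative landing on $\E_1=e^{-tg}$ costs, by Lemma~\ref{sec1-lma1}, a factor $t^{\delta_3-\frac12}=t^{(3q_m-2q_M)/(2(1-q_M))}$, so your scheme gives uniform boundedness on $C^{2l_0}_0(\C^n)$ only when $q_m\ge\frac{2}{3}q_M$, and otherwise only the $t$-dependent bound $C\,t^{-2l_0(\frac12-\delta_3)}$; weights such as $(\tfrac12,\tfrac14)$ (e.g. $z_1^2+z_2^4$) satisfy $q_M-q_m<\tfrac13$ but not $q_m\ge\tfrac23 q_M$, so as a proof of claim (1) under the stated hypothesis this is incomplete. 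To be fair, the paper supplies no argument for (1) either: its own estimate $|D^\alpha_\z P_k|\le C_{\alpha,k}\,t^{-\frac32|\alpha|+\delta}T^{4k+4|\alpha|}\E_0(\z,\w,t)$ likewise yields only a $t$-dependent operator norm on $C^{2l_0}$, so you have located a genuine soft spot that the paper glosses over rather than missed an argument that is available in the text. Your observation that the weaker, $t$-dependent bound is still enough for the convergence of the series $\sum_i(-1)^iP^i_k$ in Theorem~\ref{sect-3-main-theorem-1} is the right mitigation, but it does not establish (1) as stated; closing that gap would require exhibiting cancellation among the $\E_1$-derivative terms (or strengthening the hypothesis on the weights), which neither you nor the paper provides.
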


\begin{proof} It suffices to prove $\lim_{t\to 0}P_k*f=f$ in the continuous norm. This is obtained by the dominant convergence theorem.
\end{proof}

\begin{lm}Let $a>-1,b>-1$, then we have
\begin{equation}
\int^t_0(t-\tau)^a\tau^b d\tau\le \frac{t^{a+b+1}}{\max(a,b)+1}.
\end{equation}
\end{lm}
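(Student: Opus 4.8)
The plan is to derive the bound from a single elementary inequality, $\tau^{\,b}\le t^{\,b}$ for $0\le\tau\le t$, together with the reflection symmetry of the integral, and nothing more.

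First I would apply the substitution $\tau\mapsto t-\tau$, which carries $\int_0^t(t-\tau)^a\tau^b\,d\tau$ to $\int_0^t\tau^a(t-\tau)^b\,d\tau$; hence it is no loss of generality to assume $a=\max(a,b)$, so that the larger exponent is the one earmarked for the denominator and $b$ is the exponent to which $\tau\le t$ will be applied. Then, using $\tau^{\,b}\le t^{\,b}$ on $[0,t]$,
\[
\int_0^t(t-\tau)^a\tau^b\,d\tau\ \le\ t^{\,b}\int_0^t(t-\tau)^a\,d\tau\ =\ t^{\,b}\,\frac{t^{\,a+1}}{a+1}\ =\ \frac{t^{\,a+b+1}}{\max(a,b)+1},
\]
the middle step being the explicit antiderivative $\int_0^t(t-\tau)^a\,d\tau=t^{a+1}/(a+1)$, which is legitimate precisely because $a>-1$. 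This also exhibits the extremal configuration, with equality when $\min(a,b)=0$. An equivalent packaging, which I would prefer in order to keep the exposition parallel to the exact Beta-integral evaluation quoted earlier, is to rescale $\tau=ts$ and reduce to $\int_0^1(1-s)^a s^b\,ds\le 1/(\max(a,b)+1)$, i.e. after the same symmetry reduction to $\int_0^1(1-s)^a s^b\,ds\le\int_0^1(1-s)^a\,ds$, which is immediate from $s^b\le 1$ on $[0,1]$; in that form the lemma is just the crude companion $\Gamma(1+\alpha)\Gamma(1+\beta)/\Gamma(2+\alpha+\beta)\le 1/(\max(\alpha,\beta)+1)$ of the Beta-function identity.

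There is no substantial obstacle here; the only point that requires attention is the bookkeeping in the symmetry reduction — one must be sure the denominator carries $\max(a,b)+1$ rather than $\min(a,b)+1$, and correspondingly apply $\tau\le t$ to the factor with the smaller exponent. I would also flag that the displayed bound genuinely uses that the smaller exponent is $\ge 0$ (for $a=-\tfrac12$, $b=0$, $t=1$ the left side is $2$ and the right side is $1$): the hypothesis $a,b>-1$ is exactly what makes the integral converge, while $a,b\ge 0$ is what the stated estimate rests on, and it is in that range that the lemma is invoked in the convolution estimates, where only nonnegative powers of $t$ occur.
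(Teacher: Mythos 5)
Your argument is essentially the paper's: the paper rescales $\tau=ts$ and bounds $\int_0^1(1-s)^a s^b\,ds$ by $\min\{\tfrac{1}{a+1},\tfrac{1}{b+1}\}$, which is exactly your step of estimating the factor with the smaller exponent by $1$ (equivalently $\tau^b\le t^b$ after your symmetry reduction) and integrating the other factor exactly. Your caveat is also well taken: that estimate needs the smaller exponent to be nonnegative, so the stated hypothesis $a,b>-1$ alone does not suffice (your example $a=-\tfrac12$, $b=0$, $t=1$ gives left side $2$ and right side $1$), and the paper's own one-line proof silently uses the same nonnegativity, which does hold in the convolution estimates where the lemma is applied.
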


\begin{proof}We have
\begin{align*}
&\int^t_0(t-\tau)^a\tau^b d\tau=t^{a+b+1}\int^1_0(1-\tau)^a\tau^b d\tau\\
\le& t^{a+b+1}\min\{\frac{1}{a+1},\frac{1}{b+1}\}=\frac{t^{a+b+1}}{\max(a,b)+1}
\end{align*}
\end{proof}

Consider the estimate of the convolution $R^2_k(\z,\w,t)=R_k*R_k$ for $k$ large. We have
\begin{align*}
&|D^\alpha_z R^2_k(\z,\w,t)|\\
\le& \int^t_0\int_{\C^n} |D^\alpha_\z R_k(\z,\x,t-\tau)|\cdot |R_k(\x,\w,\tau)|d\x d\tau\\
\le& \int^t_0 (4\pi)^{-2n}e^{-\frac{|\z-\w|^2}{4t}}(t-\tau)^{-n}\tau^{-n}(t-\tau)^{\delta k-\frac{3|\alpha|}{2}-1}T^{4k+4|\alpha|+10}\sup_{\x_t\in \C^n}\hat{F}^\alpha_{k,1}(\z_{t-\tau},\x_{t-\tau})\\
&\tau^{\delta k-1}T^{4k+10}\tau^{-2\sum_i q_i}\int_{\C^n} \hat{F}_{k,1}(\x_\tau,\w_\tau)d\x_\tau \\
\le& \int^t_0 (t-\tau)^{\delta k-\frac{3|\alpha|}{2}-n-1}\tau^{\delta(k-2)-1-n-2\sum_i q_i} d\tau [c^\alpha_{k,1}T^{4k+4|\alpha|+10}][c_{k,1}T^{4k+10}]e^{-\frac{|\z-\w|^2}{4t}}\\
\le& \frac{t^{2\delta k-2n-1-2\sum_i q_i-\frac{3}{2}|\alpha|}}{(\delta k-n-2\sum_i q_i)-\frac{3}{2}|\alpha|}[c^\alpha_{k,1}T^{4k+4|\alpha|+10}][c_{k,1}T^{4k+10}] e^{-\frac{|\z-\w|^2}{4t}}
\end{align*}

In particular, we have
\begin{equation}
|R^2_k(\z,\w,t)|\le \frac{t^{2[\delta k-n]-2\sum_i q_i-1}}{\delta k-n-2\sum_i q_i}[c_{k,1}T_1^{4k+10}]^2e^{-\frac{|\z-\w|^2}{4t}}.
\end{equation}

Suppose that we have the estimate
\begin{align}
&|R^l_k(\z,\w,t)|\nonumber\\
\le &\frac{t^{l[\delta k-n]-2\sum_i q_i-1}[c_{k,1}T_1^{4k+10}]^le^{-\frac{|\z-\w|^2}{4t}}}{[\delta k-n-2\sum_i q_i]\cdots[(l-1)(\delta k-n)-2\sum_i q_i]}\label{sec1-equa-6}.
\end{align}

We have the following estimate
\begin{align*}
&|D^\alpha_\z R_k^{l+1}(\z,\w,t)|\\
\le& \int^t_0 \int_{\C^n} D^\alpha_\z R_k(\z,\x,t-\tau)R_k^l(\x,\w,\tau)d\tau d\x\\
\le& \int^t_0 \int_{\C^n} |D^\alpha_\z R_k(\z,\x,t-\tau)|\sup_{\x\in \C^n} |R_k^l(\x,\w,\tau)|d\tau d\x\\
\le& e^{-\frac{|\z-\w|^2}{4t}}\int^t_0 (t-\tau)^{\delta k-n-1-\frac{3}{2}|\alpha|}[c^\alpha_{k,1}T^{4k+4|\alpha|+10}]\\
& \frac{t^{l[\delta k-n]-2\sum_i q_i-1}[c_{k,1}T^{4k+10}]^l }{[\delta k-n-2\sum_i q_i]\cdots[(l-1)(\delta k-n)-2\sum_i q_i]}d\tau\\
\le& \frac{e^{-\frac{|\z-\w|^2}{4t}}t^{(l+1)[\delta k-n]-2\sum_i q_i-1-\frac{3}{2}|\alpha|}[c^\alpha_{k,1}T^{4k+4|\alpha|+10}]
[c_{k,1}T^{4k+10}]^l}{[l(\delta k-n)-2\sum_i q_i-\frac{3}{2}|\alpha|]\cdots [\delta k-n-2\sum_i q_i] }\\
\end{align*}
In particular, this shows that the estimate (\ref{sec1-equa-6}) holds for all $l$ by induction argument.

Similarly, we have the estimate
\begin{align}
&|D^{l_0}_t R_k^{l+1}(\z,\w,t)|\nonumber\\
\le& \frac{e^{-\frac{|\z-\w|^2}{4t}}t^{(l+1)[\delta k-n]-2\sum_i q_i-1-3l_0}[c^{l_0}_{k,2}T^{4k+2l_0+10}]
[c_{k,2}T^{4k+10}]^l}{[l(\delta k-n)-2\sum_i q_i-3l_0]\cdots [\delta k-n-2\sum_i q_i] }
\end{align}

\begin{lm}Assume that $q_M-q_m<\frac{1}{3}$ and let $k\in \N$ satisfies $ k>\frac{3l_0+n+1+2\sum_i q_i}{\delta}$. Given $l_0\in \N$ and $|\alpha|\le 2l_0,\kappa\le l_0$. Denote by $a_k=\frac{2\sum_i q_i}{\delta k-n}, A=\frac{c_{k,1}T^{4k+10}}{\delta k-n}$. Then we have the estimate
\begin{align}
&|D^\alpha_\z R_k^{l+1}(\z,\w,t)|\le C\frac{t^{(l+1)[\delta k-n]-2\sum_i q_i-1-\frac{3}{2}|\alpha|}A^l }{\Gamma(l+1-a_k)}e^{-\frac{|\z-\w|^2}{4t}}\\
&|D^{\kappa}_t R_k^{l+1}(\z,\w,t)|\le C\frac{t^{(l+1)[\delta k-n]-2\sum_i q_i-1-3\kappa}A^l }{\Gamma(l+1-a_k)}e^{-\frac{|\z-\w|^2}{4t}},
\end{align}
where $C$ is a constant depending only on $k,l_0,T$ and $f$.
\end{lm}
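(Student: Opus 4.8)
The plan is to reduce the statement to the one-step convolution estimates established just before the lemma, and then convert the iterated product of constants in their denominators into a ratio of Gamma functions, absorbing everything else into a single constant. Concretely, I would take as given the induction bound (\ref{sec1-equa-6}) for $|R^l_k|$ and, for each $|\alpha|\le 2l_0$ and each $\kappa\le l_0$, the one-step estimates for $|D^\alpha_\z R^{l+1}_k|$ and $|D^\kappa_t R^{l+1}_k|$ displayed just above — these arise from $D^\alpha_\z R^{l+1}_k=(D^\alpha_\z R_k)*R^l_k$, Proposition~\ref{sect-4-prop-1}, (\ref{sec1-equa-6}), Lemma~\ref{sect4-lemm-4.7}, and the elementary bound $\int^t_0(t-\tau)^a\tau^b\,d\tau\le t^{a+b+1}/(\max(a,b)+1)$. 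In the $\z$-derivative case the denominator there is exactly
\[
D^\alpha_l:=\Big(l(\delta k-n)-2{\textstyle\sum_i q_i}-\tfrac32|\alpha|\Big)\prod_{j=1}^{l-1}\big(j(\delta k-n)-2{\textstyle\sum_i q_i}\big),
\]
and in the $t$-derivative case one replaces $\tfrac32|\alpha|$ by $3\kappa$; all the other coefficients — $c^\alpha_{k,1}T^{4k+4|\alpha|+10}$, $[c_{k,1}T^{4k+10}]^l$, and the factorial factors coming from the $t$-derivatives of $\tilde R_k$ — become either powers of $A$ after division by $(\delta k-n)^l$, or constants depending only on $k,l_0,T,f$.

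Next I would evaluate $D^\alpha_l$ asymptotically. The hypothesis $k>\frac{3l_0+n+1+2\sum_i q_i}{\delta}$ gives $\delta k-n>3l_0+1+2\sum_i q_i$, so $a_k=\frac{2\sum_i q_i}{\delta k-n}\in(0,1)$, $\Gamma(1-a_k)$ is finite and positive, and
\[
\prod_{j=1}^{l-1}\big(j(\delta k-n)-2{\textstyle\sum_i q_i}\big)=(\delta k-n)^{l-1}\prod_{j=1}^{l-1}(j-a_k)=(\delta k-n)^{l-1}\,\frac{\Gamma(l-a_k)}{\Gamma(1-a_k)}.
\]
For the leading factor, put $\varepsilon=\frac{3|\alpha|}{2(\delta k-n)}$; then $\varepsilon\le\frac{3l_0}{\delta k-n}<1-a_k\le l-a_k$ for every $l\ge1$, whence
\[
l(\delta k-n)-2{\textstyle\sum_i q_i}-\tfrac32|\alpha|=(\delta k-n)(l-a_k)\Big(1-\tfrac{\varepsilon}{l-a_k}\Big)\ge(\delta k-n)(l-a_k)\Big(1-\tfrac{\varepsilon}{1-a_k}\Big)\ge\frac{(\delta k-n)(l-a_k)}{3l_0+1},
\]
and the same holds with $3\kappa$ in place of $\tfrac32|\alpha|$. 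Multiplying and using $(l-a_k)\Gamma(l-a_k)=\Gamma(l+1-a_k)$, I get $D^\alpha_l\ge(3l_0+1)^{-1}(\delta k-n)^l\,\Gamma(l+1-a_k)/\Gamma(1-a_k)$, and likewise for the $t$-derivative denominator.

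Substituting this lower bound into the one-step estimates finishes the proof: $[c_{k,1}T^{4k+10}]^l/(\delta k-n)^l=A^l$, the $t$-power and the Gaussian $e^{-|\z-\w|^2/4t}$ already match the statement, and the surviving factors $(3l_0+1)\Gamma(1-a_k)$ and $c^\alpha_{k,1}T^{4k+4|\alpha|+10}$ (resp.\ $c^{l_0}_{k,2}T^{4k+2l_0+10}$ and the factorial factors), maximized over the finite ranges $|\alpha|\le 2l_0$ and $\kappa\le l_0$, collapse into one constant $C=C(k,l_0,T,f)$. I expect the only genuinely delicate point to be the uniform-in-$l$ comparability of the perturbed leading factor with $l-a_k$ together with its strict positivity at $l=1$; this is exactly what the quantitative lower bound on $k$ secures, the case $l=1$ reducing to the explicit inequality $1-a_k-\varepsilon\ge(3l_0+1)^{-1}>0$.
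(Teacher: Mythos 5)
Correct, and essentially the same route as the paper: the paper states this lemma as a direct repackaging of the iterated convolution estimates displayed immediately before it, and your proof supplies exactly the missing bookkeeping, namely rewriting $\prod_{j=1}^{l-1}\bigl(j(\delta k-n)-2\sum_i q_i\bigr)=(\delta k-n)^{l-1}\,\Gamma(l-a_k)/\Gamma(1-a_k)$, bounding the perturbed top factor from below by $(\delta k-n)(l-a_k)/(3l_0+1)$ using the assumed lower bound on $k$, and absorbing all remaining factors into $A^l$ and a constant $C=C(k,l_0,T,f)$. One cosmetic slip: in your closing sentence the explicit $l=1$ inequality should read $1-a_k-\varepsilon\ge\frac{1-a_k}{3l_0+1}$ (which is what your main chain actually yields and is all that is needed), not $1-a_k-\varepsilon\ge(3l_0+1)^{-1}$; this does not affect the argument.
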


\subsubsection*{\underline{Estimate of the parametrix}}

Note that $P_k=\E_0\E_1(\sum_{j=0}^k t^j U_j(\z,\w)$ and $P^{l+1}_k=P_k*R^{l+1}_k$. We have
\begin{align*}
&|D^\alpha_\z P_k|\\
\le & \sum_{j=0}^k \sum_{\alpha_1+\alpha_2+\alpha_3=\alpha}\frac{\alpha}{\alpha_1!\alpha_2!\alpha_3!} |D^{\alpha_1}\E_0 D^{\alpha_2}\E_1 D^{\alpha_3}(t^j U_j)|\\
&\le \sum_{j=0}^k \sum_{\alpha_1+\alpha_2+\alpha_3=\alpha}\frac{\alpha}{\alpha_1!\alpha_2!\alpha_3!} t^{-|\alpha_1|}|P^{|\alpha_1|}_0(\z-\w,t)|_+\cdot |G^{\alpha_2}(\z,\w,t)|\cdot t^j  F_j^{\alpha_3}(\z,\w,t)\E_0\E_1\\
&\le \sum_{j=0}^k \sum_{\alpha_1+\alpha_2+\alpha_3=\alpha}\frac{\alpha}{\alpha_1!\alpha_2!\alpha_3!} t^{-(1+\delta_M q_M)|\alpha_1|+(-\frac{1}{2}+\delta_3)|\alpha_2|+\delta_0|\alpha_3|+\delta j} T^{|\alpha_1|+3|\alpha_2|+|\alpha_3|+4j}\\
&\cdot |P^{|\alpha_1|}_0|_+(\z_t-\w_t,1)\cdot F^{\alpha_2}_g(\z_t,\w_t,1)\cdot F_j^{\alpha_3}(\z_t,\w_t,1)\E_1(\z_t,\w_t,1)\E_0(\z,\w,t).\\
&\le C_{\alpha,k} t^{-\frac{3}{2}|\alpha|+\delta}T^{4k+4|\alpha|}\E_1(\z_t,\w_t,1)\E_0(\z,\w,t)\\
&\le C_{\alpha,k}t^{-\frac{3}{2}|\alpha|+\delta}T^{4k+4|\alpha|}\E_0(\z,\w,t)
\end{align*}
Similarly, for $l_0\in \N$, we have
$$
|D^{l_0}_t P_k|\le C_{\alpha,k}t^{-l_0+\delta}T^{4k}\E_1(\z_t,\w_t,1)\E_0(\z,\w,t).
$$
Now we have
\begin{align*}
&|D^\alpha_\z P^{l+1}_k|\\
\le &\int^t_0 \int_{\C^n} |D^\alpha_\z P_k(\z,\x,t-\tau)||R^{l+1}_k(\x,\w,\tau)|d\x d\tau\\
\le &C_{\alpha,k} \int^t_0(t-\tau)^{-\frac{3}{2}|\alpha|+\delta}\tau^{(l+1)(\delta k-n)-2\sum_i q_i-1}d\tau \frac{A^{l} T^{4k+4|\alpha|}}{\Gamma(l+1-a_k)}\int_{\C^n}\E_0(\z,\x,t-\tau)\E_0(\x,\w,\tau)d\x\\
\le &C\E_0(\z,\w,t) \frac{A^{l} T^{4k+4|\alpha|}}{\Gamma(l+1-a_k)}\frac{t^{(l+1)(\delta k-n)+\delta-\frac{3}{2}|\alpha|-2\sum_i q_i }}{(l+1)(\delta k-n)-2\sum_i q_i}\\
\le &C\frac{A^{l+1}}{\Gamma(l+2-a_k)}t^{(l+1)(\delta k-n)+\delta-n-\frac{3}{2}|\alpha|-2\sum_i q_i }.
\end{align*}
Similarly, we obtain
$$
|D^\kappa_t P^{l+1}_k|\le C\frac{A^{l+1}}{\Gamma(l+2-a_k)}t^{(l+1)(\delta k-n)+\delta-n-\kappa-2\sum_i q_i }.
$$

Based on the estimate of $P^{l}_k$ for $l\in \N$, we get the following convergence result. 

\begin{thm}\label{sect-3-main-theorem-1} Let $f$ be a non-degenerate quasi-homogeneous polynomial on $\C^n$ satisfying $q_M-q_m<\frac{1}{3}$ and let $k\in \N$ satisfies $k>\frac{3l_0+n+1+2\sum_i q_i}{\delta}$. Fix $T>0$. Then for any $t\in (0,T]$, the series
$$
P(\z,\w,t)=\sum^\infty_{i=0}(-1)^i P^i_k(\z,\w,t)
$$
converges for any $(\z,\w)\in \C^n\times \C^n$, and further $P(\z,\w,t)$ has up to $2l_0$-order $\z$-derivatives and up to $l_0$ order $t$ derivatives. $P(\z,\w,t)$ is the unique heat kernel of the operator $\pat_t+\Delta_f$.  
\end{thm}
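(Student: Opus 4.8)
The plan is to promote the formal construction of Section~4 to a genuine heat kernel by feeding the quantitative bounds already assembled (Proposition~\ref{sect-4-prop-1} and the iterated-convolution estimates that follow it) into a convergent majorant, and then to identify the resulting kernel with that of the semigroup $e^{-t\Delta_f}$ furnished by Fan's spectral theorem. \textbf{Step 1: convergence and regularity.} For $|\alpha|\le 2l_0$ and $\kappa\le l_0$ one has, with $A=\frac{c_{k,1}T^{4k+10}}{\delta k-n}$ and $a_k=\frac{2\sum_i q_i}{\delta k-n}$, bounds of the form
\[
|D^\alpha_\z P^{l+1}_k(\z,\w,t)|,\ |D^\kappa_t P^{l+1}_k(\z,\w,t)|\ \le\ C\,\frac{A^{l+1}}{\Gamma(l+2-a_k)}\,t^{(l+1)(\delta k-n)+\delta-n-\frac{3}{2}|\alpha|-\kappa-2\sum_i q_i}\,\E_0(\z,\w,t)
\]
(taking $\kappa=0$ for the spatial derivative and $\alpha=0$ for the time derivative). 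The assumption $q_M-q_m<\tfrac13$ makes $\delta>0$, and $k>\frac{3l_0+n+1+2\sum_i q_i}{\delta}$ forces $\delta k-n>3l_0+1+2\sum_i q_i>0$, so every exponent of $t$ above — including the exponent $\delta-\frac32|\alpha|-\kappa$ carried by the $l=0$ term $P_k$ itself — is bounded below on $(0,T]$, while $\sum_{l\ge0}A^{l+1}/\Gamma(l+2-a_k)<\infty$ since $1/\Gamma$ decays super-exponentially. Hence $\sum_i(-1)^i P^i_k$ and all its $\z$-derivatives of order $\le2l_0$ and $t$-derivatives of order $\le l_0$ converge absolutely and uniformly on compact subsets of $\C^n\times\C^n\times(0,T]$; thus $P$ is well-defined with the claimed regularity and may be differentiated term by term.

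\textbf{Step 2: $P$ is the heat kernel.} For $k$ in this range $R_k$ is smooth (by the corollaries to Proposition~\ref{sect-4-prop-1}) and $L(P_k*R^i_k)=R^i_k+R_k*R^i_k$, so applying $L$ term by term to the now-justified series yields the telescoping identity $LP=\sum_{i\ge1}(-1)^i L(P_k*R^i_k)+R_k=\sum_{i\ge0}(-1)^i(R^i_k+R_k*R^i_k)=0$. For the initial condition write $P*h=P_k*h+P_k*(Q*h)$ with $Q:=\sum_{i\ge1}(-1)^i R^i_k$; the cited corollaries give $P_k*g\to g$ in $\|\cdot\|_{C^{2l_0}}$ for $g\in C^{2l_0}_0(\C^n)$ and $\|Q(\cdot,\cdot,t)\|\to0$ as $t\to0$, whence $P*h\to h$. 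Therefore $P$ solves $(\pat_t+\Delta_f)P=0$ and reproduces the identity as $t\to0$, i.e.\ it is a heat kernel for $\pat_t+\Delta_f$ on $\C^n$.

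\textbf{Step 3: uniqueness.} By Corollary~\ref{crl:intro-2} the system $(\C^n,f)$ is strongly tame, so by Theorem~\ref{thm:intro-1} the self-adjoint operator $\Delta_f$ has purely discrete spectrum with a complete orthonormal basis of eigenforms; consequently $e^{-t\Delta_f}$ is a strongly continuous semigroup on $L^2(\Lambda^\bullet\C^n)$ with an $L^2$ integral kernel that solves the same initial value problem. Given two heat kernels, for each fixed $\w$ their difference $u$ solves $(\pat_t+\Delta_f)u=0$ with $u(\cdot,0)=0$; pairing with an eigenform $\phi$ of eigenvalue $\lambda$ gives $\tfrac{d}{dt}\langle u,\phi\rangle=-\lambda\langle u,\phi\rangle$ with $\langle u,\phi\rangle|_{t=0}=0$, so $\langle u,\phi\rangle\equiv0$ for all $\lambda$, whence $u\equiv0$. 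Thus $P$ coincides with the kernel of $e^{-t\Delta_f}$, which settles uniqueness.

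\textbf{Main obstacle.} The substance of the argument is Step~1: combining the anisotropic rescaling $\z\mapsto\z_t$ with the growth control of $V=|\pat f|^2$ and its derivatives (Lemmas~\ref{sec1-lma1}--\ref{sect4-lemm-4.7}, Proposition~\ref{sect-4-prop-1}) into a single $\Gamma$-function majorant for the convolution powers $P^i_k=P_k*R^i_k$ that is uniform in $(\z,\w)$ \emph{and} controls all derivatives up to the prescribed orders. This is exactly what pins down the constraint $q_M-q_m<\tfrac13$ (to keep $\delta>0$) and the lower bound on $k$; once the majorant is in place, verifying $LP=0$, matching the $\delta$-initial data, and the uniqueness argument are formal. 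A secondary point is the justification of term-by-term application of $L$, $D^\alpha_\z$ and $D^l_t$ to the series, which is handled by the local uniform convergence produced by the same estimates.
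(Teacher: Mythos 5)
Your proposal follows essentially the same route as the paper: it assembles the parametrix estimates (Proposition \ref{sect-4-prop-1}, Lemma \ref{sect4-lemm-4.7} and the iterated-convolution bounds with the $\Gamma$-function majorant for $P_k^{l+1}$) to get absolute, locally uniform convergence of $\sum_i(-1)^iP_k^i$ together with the stated derivatives, and then verifies $LP=0$ and the $\delta$-initial condition via the telescoping identity and the corollaries on $P_k*h\to h$ and the vanishing of $R_k$ as $t\to 0$. Your Step 3 makes the uniqueness assertion explicit via Fan's discrete-spectrum theorem, which the paper leaves implicit, but this is a routine supplement rather than a different method; the argument is correct.
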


\section{Trace class and a vanishing theorem}

In this section, we return to the consideration of more general section-bundle system $(M,g,f)$ which was introduced in \cite{Fa}. We will show that under mild conditions that the heat kernel $e^{-t\Delta_f}$ is of trace class. Hence we can define the $i$-th zeta function $\Theta^i_f(s)$ for $\re(s)>2n,n=\dim_\C M$. Furthermore, if $f$ is a non-degenerate quasi-homogeneous polynomial defined on the Euclidean space $\C^n$, we can show that the first zeta function of $f$ vanishes. 

Denote by
$$
\Delta_f=\Delta_f^0+H^1_f,
$$
where
\begin{equation}
\Delta^f_0:=-\sum_{\mu \nu}g^{\nub \mu}\nabla_\mu \nabla_\nub +|\nabla
f|^2,
\end{equation}
and
\begin{equation}
H^1_f:=R+L_f,
\end{equation}
where $R$ is the curvature operator and $L_f $ is defined as in Section 2. 
 
\begin{lm}\label{lm:trac-equi} If $(M,g,f)$ is strongly tame, then $e^{-t\Delta_f}$ is of trace class if and only if $e^{-t\Delta_f^0}$ is of trace class.
\end{lm}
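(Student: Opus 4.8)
The plan is to reduce the trace-class property of $e^{-t\Delta_f}$ to that of $e^{-t\Delta_f^0}$ by a perturbation argument, exploiting that $H^1_f = R + L_f$ is, in the strongly tame setting, small relative to $\Delta_f^0$ in a sense that is controlled at infinity. First I would record the a priori bounds: on a section-bundle system with bounded geometry the curvature operator $R$ is a bounded (zeroth-order) operator, while $L_f$ is algebraic in the Clifford generators with operator norm dominated by $|\nabla^2 f|$ (as in the local expression $L_f = -(g^{\bar\mu\nu}\nabla_\nu f_l\,\iota_{\partial_{\bar\mu}}dz^l\wedge + \text{c.c.})$). Hence there is a constant $C_0$ with $\|H^1_f\| \le C_0(1+|\nabla^2 f|)$ pointwise. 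By the strong tameness hypothesis, for any $\varepsilon>0$ we have $|\nabla^2 f| \le \varepsilon|\nabla f|^2 + C_\varepsilon$ outside a compact set (indeed $|\nabla f|^2 - C|\nabla^2 f|\to\infty$ gives exactly this), so $H^1_f$ is relatively form-bounded by $\Delta_f^0$ with relative bound zero. In particular both $\Delta_f$ and $\Delta_f^0$ are self-adjoint, bounded below, with the same form domain, and $\Delta_f \ge \tfrac12 \Delta_f^0 - C_1$ and symmetrically $\Delta_f^0 \ge \tfrac12\Delta_f - C_1$ for a suitable constant $C_1$.

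Next I would turn the operator inequality into a trace comparison. From $\Delta_f \ge \tfrac12\Delta_f^0 - C_1$ and the operator monotonicity of $x\mapsto e^{-tx}$ one does not directly get a trace inequality, so instead I would use the standard device: if $A,B$ are self-adjoint and bounded below with $A \ge \tfrac12 B - C_1$, then by the spectral theorem and the min-max principle the eigenvalue counting functions satisfy $N_A(\lambda) \le N_B(2\lambda + 2C_1)$, whence
\begin{equation}
\Tr e^{-tA} = \int_0^\infty e^{-t\lambda}\,dN_A(\lambda) \le e^{2tC_1}\,\Tr e^{-(t/2)B}
\end{equation}
after the substitution $\mu = 2\lambda+2C_1$ and a routine estimate. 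Applying this with $A = \Delta_f$, $B = \Delta_f^0$ shows that if $e^{-(t/2)\Delta_f^0}$ is trace class for all $t>0$ then so is $e^{-t\Delta_f}$; applying it with the roles reversed (using $\Delta_f^0 \ge \tfrac12\Delta_f - C_1$) gives the converse. Here one uses that $e^{-t\Delta_f^0}$, $e^{-t\Delta_f}$ being trace class for one $t>0$ is equivalent to it holding for all $t>0$ by the semigroup property and the fact that these heat operators are bounded.

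The main obstacle, and the step that needs care, is the relative form-boundedness with relative bound zero: one must check that the commutator/cross terms in passing between the full Laplacian $\Delta_f$ and the ``scalar'' model $\Delta_f^0$ are genuinely lower order and that the bounded-geometry hypothesis supplies uniform control on $R$ and on the metric contractions $g^{\bar\mu\nu}$. Once that is in hand, the argument is soft. I would also remark that an alternative to the eigenvalue-counting step is to write, via Duhamel, $e^{-t\Delta_f} = e^{-t\Delta_f^0} - \int_0^t e^{-(t-s)\Delta_f}H^1_f\,e^{-s\Delta_f^0}\,ds$ and estimate the trace norm of the correction by $\|H^1_f (\Delta_f^0+C)^{-1}\|_\infty \cdot \|(\Delta_f^0+C)e^{-s\Delta_f^0}\|_{\mathrm{tr}}$, but the eigenvalue-comparison route avoids having to track the semigroup bound $\|e^{-(t-s)\Delta_f}\|$ near $s=t$ and is cleaner. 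Either way the conclusion is the stated equivalence.
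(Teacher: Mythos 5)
Your proposal is correct and follows essentially the same route as the paper: the key step in both is that strong tameness (plus bounded geometry) makes $H^1_f=R+L_f$ infinitesimally form-bounded relative to $\Delta_f^0$ (the paper cites inequality (95) of \cite{Fa} for exactly the bound you re-derive), after which the min-max comparison of eigenvalues and summation of the exponential series gives the trace comparison. The only minor difference is the converse direction, where the paper simply observes that $\Delta_f$ coincides with $\Delta_f^0$ on $0$-forms, while you run the same perturbation inequality symmetrically; both are fine.
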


\begin{proof} By Theorem \ref{thm:intro-1}, $\Delta_f$ has purely discrete spectrum. There exists a complete orthonormal basis of $k$-forms $\{\phi_n\}_{n=1}^\infty$  in the domain $D(\Delta_f)$ such that $\Delta_n \phi_n=\lambda_n \phi_n$ with $\lambda_1\le \lambda_2\le \cdots $ and $\lambda_n\to \infty$.

$\lambda_n$ is given by the min-max principle, i.e.,
\begin{equation}
\lambda_n(\Delta_f)=\sup_{\varphi_1,\cdots,\varphi_{n-1}}U(\varphi_1,\cdots,\varphi_{n-1}),
\end{equation}
where
\begin{equation}
U(\varphi_1,\cdots,\varphi_m)=\inf_{\stackrel{\psi\in
D(\Delta_f);||\psi||=1}{\psi\in [\varphi_1,\cdots,\varphi_m]^\perp}}(\psi,
\Delta_f\psi).
\end{equation}
where $[\varphi_1,\cdots,\varphi_m]^\perp$ is the perpendicular complement
space of the subspace $[\varphi_1,\cdots,\varphi_m]$ generating by
$\varphi_1,\cdots,\varphi_m$.

On the other hand, by the inequality (95) of \cite{Fa}, one has 
\begin{equation}
|(H_f^1\varphi,\varphi)|\le
\epsilon(\Delta_f^0\varphi,\varphi)+(C_\epsilon+C_R)(\varphi,\varphi),
\end{equation}
which induces
\begin{equation}
(\Delta_f \varphi,\varphi)\ge (1-\epsilon)(\Delta_f^0\varphi,\varphi)-(C_\epsilon+C_R)(\varphi,\varphi),
\end{equation}
where $C_\epsilon,C_R$ are constants depending on $\epsilon $ and the geometry of $M$. 
By min-max principle, this gives
\begin{equation}
\lambda_n=\lambda_n(\Delta_f)\ge (1-\epsilon)\lambda^0_n-(C_\epsilon+C_R),\;\lambda^0_n:=\lambda_n(\Delta_f^0)
\end{equation}
Therefore, we have
\begin{equation}
\Tr_{L^2}e^{-t\Delta_f}=\sum_n e^{-\lambda_n}\le e^{C_\epsilon+C_R}\sum_n e^{-(1-\epsilon)t\lambda_n^0}=e^{C_\epsilon+C_R}\Tr_{L^2}(e^{-(1-\epsilon)t\Delta_f^0}).
\end{equation}
This shows that if $e^{-t\Delta_f^0}$ is of trace class, then $e^{-t\Delta_f}$ is of trace class. On the other hand, when acting on the 0-forms, the form Laplacian $\Delta_f$ is just $\Delta_f^0$, hence the vice versa conclusion also holds. 
\end{proof}

Now we consider the complex one dimensional section-bundle system $(\C, \frac{\sqrt{-1}}{2}dz\wedge d\bar{z},\frac{1}{2}z^2)$.  Set $\Delta_1=\Delta_{z^2/2}$. The heat kernel $K(z,w,t)$ of $\Delta_1^0$ is calculated explicitly in Section 2 and we have
\begin{equation}
\Tr e^{-t\Delta_1^0}=\Tr K(z,z,t)=\left(\frac{1}{2\sinh t/2}\right)^2.
\end{equation}
The above trace is also holomorphic w.r.t. $t\in \IH^+$. 
Similarly, w can consider the complex $n$-dimensional system $\left(\C^n, \frac{\sqrt{-1}}{2}\sum_{j=1}^n, \frac{1}{2}(z_1^2+\cdots+z_n^2)\right)$.
We denote its twisted Laplacian as $\Delta_{n}$, then we have 
\begin{equation}
\Tr e^{-t\Delta_{n}^0}=\left(\frac{1}{2\sinh t/2}\right)^{2n},
\end{equation}
which is obviously holomorphic w.r.t. $t\in \IH^+$.

\begin{lm} Assume that $f=f(z_1,\cdots,z_N)$ is a non-degenerate quasihomogeneous polynomial with each charge $q_i\le 1/2$. Then we have 
\begin{equation}
\Tr e^{-t\Delta_f^0}<\infty,\;\forall t\in \IH^+,
\end{equation}
and the trace is holomorphic w.r.t. $t\in \IH^+$. 
\end{lm}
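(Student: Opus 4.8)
The plan is to reproduce, in this non-compact Euclidean setting, the strategy already used in the proof of Lemma~\ref{lm:trac-equi}: bound $\Delta_f^0$ from below by a harmonic oscillator whose heat trace was computed explicitly in Section~3, and then transfer finiteness and holomorphy through the min-max principle.

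First I would note that on the flat space $\C^N$ the connection entering $\Delta_f^0=-\sum_{\mu\nu}g^{\nub\mu}\nabla_\mu\nabla_\nub+|\nabla f|^2$ is trivial in the standard frame, so $\Delta_f^0$ preserves the form type and acts on every component as one and the same scalar Schr\"odinger operator $S:=-c\,\Delta+V$, where $V=|\nabla f|^2$, $\Delta=4\pat_z\pat_\zb$, and $c>0$ is the constant fixed by the normalization of $g$. Hence $\Tr e^{-t\Delta_f^0}$ is a fixed finite multiple of $\Tr_{L^2(\C^N)}e^{-tS}$, and it suffices to control the latter. Since on $0$-forms one has $\Delta_f=\Delta_f^0=S$, Corollary~\ref{crl:intro-2} (applicable because every $q_i\le\tfrac12$) shows that $S$ has purely discrete spectrum $0\le\mu_1\le\mu_2\le\cdots$ with $\mu_n\to\infty$.

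Next I would invoke Corollary~\ref{sect-2-lemm-2.9}, which gives a constant $C_0>0$ with $V(\z)\ge\tfrac1{C_0}|\z|^2-1$ pointwise, to obtain the operator inequality $S\ge T-a$ for some constant $a>0$, where $T:=-c\,\Delta+\tfrac1{C_0}|\z|^2$ is the $2N$-dimensional harmonic oscillator on $\R^{2N}\cong\C^N$. Its eigenvalues are $\beta(2N+2|m|)$, $m\in\N^{2N}$, with $\beta=\sqrt{c/C_0}$, so
$$
\Tr_{L^2(\C^N)}e^{-tT}=\bigl(2\sinh(\beta t)\bigr)^{-2N},
$$
which is finite and holomorphic for $\re t>0$. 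By the min-max principle the inequality $S\ge T-a$ yields the eigenvalue comparison $\mu_n\ge\nu_n-a$ for all $n$, where $\nu_1\le\nu_2\le\cdots$ are the eigenvalues of $T$ — exactly the step executed in the proof of Lemma~\ref{lm:trac-equi}. Finally, for $t\in\IH^+$ I would estimate
$$
\sum_n\bigl|e^{-t\mu_n}\bigr|=\sum_n e^{-(\re t)\mu_n}\le e^{a\re t}\sum_n e^{-(\re t)\nu_n}=e^{a\re t}\bigl(2\sinh(\beta\,\re t)\bigr)^{-2N}<\infty,
$$
giving $\Tr e^{-t\Delta_f^0}<\infty$; and on each half-plane $\{\re t\ge\epsilon\}$ the entire functions $e^{-t\mu_n}$ are dominated by the summable majorant $e^{a\epsilon}e^{-\epsilon\nu_n}$, so by the Weierstrass $M$-test the series $\sum_n e^{-t\mu_n}$ converges locally uniformly on $\IH^+$ and its sum is holomorphic there.

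The main obstacle here is not a genuine analytic difficulty but a bookkeeping point: to make the min-max comparison rigorous one needs $\Dom(S^{1/2})\subseteq\Dom(T^{1/2})$, which follows at once from $|\z|^2\le C_0(V+1)$ (so that $\int V|\phi|^2<\infty$ forces $\int|\z|^2|\phi|^2<\infty$); this is precisely the mechanism already used in Lemma~\ref{lm:trac-equi}, and the rest of the constants (the relation between $|\nabla f|^2$ and $|\pat f|^2$, and the metric-dependent $c$) are absorbed harmlessly into $C_0$, $a$ and $\beta$.
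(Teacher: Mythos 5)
Your proposal is correct and follows essentially the same route as the paper's own proof: reduce to the scalar Schr\"odinger operator, use Corollary \ref{sect-2-lemm-2.9} to bound $\Delta_f^0$ below by a harmonic oscillator (up to an additive constant), compare eigenvalues by the min-max principle, and conclude finiteness and holomorphy from the explicit harmonic-oscillator trace. Your added details (the Weierstrass $M$-test for holomorphy on $\IH^+$ and the form-domain remark for the min-max comparison) simply make explicit what the paper leaves implicit.
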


\begin{proof}
The scalar Schr\"odinger equation is given by 
$$
\Delta_f^0=-\Delta^0+|\partial f|^2,
$$
where $|\partial f|^2=\sum_i |\frac{\partial f}{\partial z_i}|^2$.  By Corollary \ref{sect-2-lemm-2.9}, we have 
$$
\Delta_{C,n}^0-1\le \Delta_f^0,
$$
where $\Delta^0_{C,n}=\Delta^0_{(z_1^2+\cdots+z_N^2)/C}$.
By min-max principle of spectrum, this gives
$$
\lambda_k(\Delta_{C,n}^0)-1\le \lambda_k(\Delta_f^0).
$$
Hence we have
\begin{equation}
\Tr e^{-t\Delta_f^0}\le e \Tr e^{-t\Delta_{C,n}^0}<\infty,\;\forall t\in \IH^+,
\end{equation}
and obviously this trace is holomorphic w.r.t. $t\in \IH^+$. 
\end{proof}

\begin{crl} Let $f_s=f+\sum_{i=1}^m s_i g_i$ be the strong deformation of $f$, where $f$ is non-degenerate and quasihomogeneous, then we have 
\begin{equation}
\Tr e^{-t\Delta_{f_s}^0}<\infty,\;\forall t\in \IH^+,
\end{equation}
and the trace is holomorphic w.r.t. $t\in \IH^+$. 
\end{crl}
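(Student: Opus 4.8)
The plan is to repeat the proof of the previous lemma almost verbatim; the only genuinely new input needed is a coercivity bound for the potential of the deformed system. On functions one has $\Delta^0_{f_s}=-\Delta^0+|\pat f_s|^2$, so it suffices to establish that
\begin{equation}\label{eq:coerc-fs}
|\pat f_s(\z)|^2\ \ge\ \tfrac1C\,|\z|^2-C'\qquad\text{for all }\z\in\C^N,
\end{equation}
for suitable constants $C,C'>0$. Granting \eqref{eq:coerc-fs}, we obtain the form inequality $\Delta^0_{f_s}\ge\Delta^0_{C,N}-C'$, where $\Delta^0_{C,N}=\Delta^0_{(z_1^2+\cdots+z_N^2)/C}$ is the harmonic oscillator whose trace was computed explicitly in Section~3. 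Since \eqref{eq:coerc-fs} also forces the form domain of $\Delta^0_{f_s}$ into that of $\Delta^0_{C,N}$, the min--max principle applies and gives $\lambda_k(\Delta^0_{f_s})\ge\lambda_k(\Delta^0_{C,N})-C'$ for every $k$; in particular $\Delta^0_{f_s}$ has discrete spectrum. Hence, for $t\in\IH^+$,
\[
\sum_k\bigl|e^{-t\lambda_k(\Delta^0_{f_s})}\bigr|\ \le\ e^{\re(t)C'}\sum_k e^{-\re(t)\lambda_k(\Delta^0_{C,N})}\ =\ e^{\re(t)C'}\,\Tr e^{-\re(t)\Delta^0_{C,N}}\ <\ \infty,
\]
and this estimate is locally uniform on $\IH^+$; since every term $e^{-t\lambda_k}$ is entire, $\Tr e^{-t\Delta^0_{f_s}}=\sum_k e^{-t\lambda_k(\Delta^0_{f_s})}$ is then finite and holomorphic on $\IH^+$, which is the assertion.

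So the real work is \eqref{eq:coerc-fs}. I would exploit that a strong deformation only adds terms of strictly lower weighted degree: $f_s=f+\sum_i s_i g_i$ with each $g_i$ quasihomogeneous of charge $c_i<1$ for the weights $\q=(q_1,\dots,q_N)$ of $f$. For $\lambda>0$ put $\z_\lambda:=(\lambda^{q_1}z_1,\dots,\lambda^{q_N}z_N)$. Quasihomogeneity of $f$ and of the $g_i$ gives
\[
\pat_j f_s(\z_\lambda)\ =\ \lambda^{1-q_j}\Bigl(\pat_j f(\z)+\sum_i s_i\,\lambda^{c_i-1}\,\pat_j g_i(\z)\Bigr).
\]
Nondegeneracy of $f$ means its only critical point is the origin, so $c_0:=\min_{|\z|=1}|\pat f(\z)|>0$; and because each $c_i<1$, the correction $\sum_i s_i\lambda^{c_i-1}\pat_j g_i(\z)$ tends to $0$ uniformly over $\{|\z|=1\}$ as $\lambda\to\infty$. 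Choose $\lambda_0\ge1$ so that this correction has modulus $\le c_0/(2\sqrt N)$ for every $j$, every $|\z|=1$ and every $\lambda\ge\lambda_0$. Then, using $\lambda^{2(1-q_j)}\ge\lambda^{2(1-q_M)}$ (valid since $\lambda\ge1$) together with $|a+h|^2\ge\tfrac12|a|^2-|h|^2$, we get for $|\z|=1$ and $\lambda\ge\lambda_0$
\[
|\pat f_s(\z_\lambda)|^2\ \ge\ \lambda^{2(1-q_M)}\sum_j\Bigl|\pat_j f(\z)+\sum_i s_i\lambda^{c_i-1}\pat_j g_i(\z)\Bigr|^2\ \ge\ \lambda^{2(1-q_M)}\bigl(\tfrac12 c_0^2-\tfrac14 c_0^2\bigr)\ =\ \tfrac{c_0^2}{4}\,\lambda^{2(1-q_M)}.
\]
Every $\w\neq0$ is uniquely of the form $\w=\z_\lambda$ with $|\z|=1$, and for $\lambda\ge1$ we have $|\w|^2=\sum_j\lambda^{2q_j}|z_j|^2\le\lambda^{2q_M}$, i.e.\ $\lambda\ge|\w|^{1/q_M}$; moreover $|\w|\to\infty$ forces $\lambda\to\infty$. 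Hence, once $|\w|$ is large enough that the associated $\lambda$ exceeds $\lambda_0$,
\[
|\pat f_s(\w)|^2\ \ge\ \tfrac{c_0^2}{4}\,\lambda^{2(1-q_M)}\ \ge\ \tfrac{c_0^2}{4}\,|\w|^{2(1-q_M)/q_M}\ \ge\ \tfrac{c_0^2}{4}\,|\w|^2,
\]
the last inequality using $q_M\le\tfrac12$. On the complementary bounded set $|\pat f_s|^2\ge0$, so \eqref{eq:coerc-fs} follows after absorbing the bounded region into $C'$.

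The step I expect to be the main obstacle is exactly \eqref{eq:coerc-fs}: one cannot simply quote Corollary~\ref{sect-2-lemm-2.9}, because $f_s$ is no longer quasihomogeneous, so the FJR-type gradient estimate available for $f$ must be transported to $f_s$ by the weighted-rescaling argument above. This is precisely where both hypotheses enter in an essential way---that the deformation is \emph{strong}, i.e.\ all charges $c_i<1$, so that the correction decays after weighted rescaling, and that $f$ has an isolated critical point, so that $c_0>0$. The only technical care needed is the uniformity in $\lambda$ of the smallness of the correction over the unit sphere, which is what confines the pointwise estimate to large $|\w|$; this is harmless because $|\pat f_s|^2$ is automatically bounded on compact sets. (Equivalently, \eqref{eq:coerc-fs} exhibits $\Delta^0_{f_s}=-\Delta^0+|\pat f_s|^2$ as a Schr\"odinger operator with confining potential, from which self-adjointness and discreteness of the spectrum also follow by standard theory.)
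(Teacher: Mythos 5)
Your proposal is correct, and its overall skeleton is the paper's: work with the scalar operator $\Delta^0_{f_s}=-\Delta^0+|\pat f_s|^2$, establish the coercivity $|\pat f_s|^2\ge \tfrac1C|\z|^2-C'$, and then repeat the comparison with the rescaled harmonic oscillator from the preceding lemma via the min--max principle, which yields finiteness of the trace and, by locally uniform convergence of $\sum_k e^{-t\lambda_k}$ on compact subsets of $\IH^+$, its holomorphy. Where you genuinely diverge is in how the coercivity is obtained. The paper's proof is a single sentence --- ``every deformation polynomial can be upper bounded by a quantity depending on $|\nabla f|$'' --- i.e.\ it implicitly invokes Proposition~\ref{prop-esti1} to bound the deformation terms by powers of $|\pat f|+1$ and thereby transport Corollary~\ref{sect-2-lemm-2.9} from $f$ to $f_s$. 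You instead prove coercivity by the weighted rescaling $\z\mapsto\z_\lambda$, compactness of the unit sphere, the fact that a non-degenerate quasi-homogeneous $f$ has its only critical point at the origin (so $\min_{|\z|=1}|\pat f|>0$), and the strict inequality $c_i<1$ for the charges of the $g_i$. This is a more self-contained mechanism: it bypasses the FJR-type estimate entirely and still works when the exponent $(c_i-q_j)/(1-q_M)$ produced by that estimate is $\ge 1$, so that $|\pat_j g_i|$ is not directly dominated by $|\pat f|$; the paper's route is shorter only because Proposition~\ref{prop-esti1} is already available and because it leaves the details to the reader. One caveat: the paper never defines ``strong deformation'' (the notion is imported from \cite{Fa}), and your argument rests on reading it as ``each $g_i$ is quasi-homogeneous of charge $c_i<1$ with respect to the weights of $f$''; this is the natural interpretation and consistent with the paper's hint, but you should state it explicitly as the hypothesis you use. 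The remaining technical points (uniqueness of the representation $\w=\z_\lambda$ with $|\z|=1$, the form-domain inclusion needed for the min--max comparison, and the locally uniform estimate giving holomorphy) are handled correctly in your write-up.
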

\begin{proof} This is due to the fact that every deformation polynomial can be upper bounded by a quantity depending on $|\nabla f|$. 
\end{proof}

\begin{ex}Consider the system $(\C^*, \frac{\sqrt{-1}}{2}\frac{dz}{z}\wedge {\frac{d\bar{z}}{\bar{z}}}, f=z+\frac{1}{z})$. If we change the coordinate by $z=e^w, w=w_1+iw_2\in \R\times S^1$, then the system is transformed to $(\R\times S^1, \frac{\sqrt{-1}}{2}dw\wedge d\bar{w}, f=e^w+e^{-w})$. The scalar twisted Laplacian is given by
\begin{equation}
\Delta_f^0=-\partial_w\partial_{\bar{w}}+4|\sinh w|^2=H_1+H_2.
\end{equation}
Here
\begin{align*}
&H_1=-\frac{1}{4}\partial^2_{w_1}+(e^{2w_1}+e^{-2w_1})\\
&H_2=-\frac{1}{4}\partial^2_{w_2}-2\cos 2w_2.
\end{align*}
Hence the eigenvalue of $\Delta_f^0$ has the form $\lambda_{1,n}+\lambda_{2,n}$, where $\lambda_{1,n}$ are the eigenvalues of $H_1$ and $\lambda_{2,n}$ are the eigenvalues of $H_2$. 

Note that $H_2$ is an second order elliptic operator defined on the compact manifold $S^1$, the number of its negative eigenvalues is finite and its heat kernel is of trace class, i.e., $\Tr e^{-t H_2}<\infty$. For the operator $H_1$, we have operator inequality:
$$
H_1\ge -\frac{1}{4}\partial^2_{w_1}+(2+4w_1^2):=H_0+2,
$$
which shows that $\lambda_{1,n}\ge \lambda_n(H_0)+2$. 

Since $H_0$ is just the harmonic oscillator, the heat kernel of $H_1$ is of trace class. Therefore, we have 
$$
\Tr e^{-t\Delta_f^0}\le C \Tr e^{-tH_1}\Tr e^{-tH_2}<\infty.
$$
The trace $\Tr e^{-t\Delta_f^0}$ is holomorphic w.r.t. $t\in H^+$, since it is controlled by the trace of the heat kernel of a harmonic oscillator 
and an second order elliptic operator on a compact manifold. 
\end{ex}

let $f\in \C[z_1,\cdots,z_n,z_1^{-1},\cdots,z_n^{-1}]$ be a Laurent
polynomial defined on $(\C^*)^n$ having the form
\begin{equation}
f(z_1,\cdots,z_n)=\sum_{\alpha=(\alpha_1,\cdots, \alpha_n)\in \Z^n}
c_\alpha z^\alpha.
\end{equation}

\begin{lm}\label{lm:trac-laur} Assume that the above $f$ is convenient and non-degenerate, Then we have
\begin{equation}
\Tr e^{-t\Delta_f^0}<\infty,
\end{equation}
and this trace is holomorphic w.r.t $t\in H^+$. 
\end{lm}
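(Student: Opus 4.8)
The plan is to transport the problem to the natural flat model of the torus and then compare $\Delta_f^0$ with a Schr\"odinger operator whose potential is confining. Writing $z_j=e^{w_j}$ with $w_j=s_j+\sqrt{-1}\,\theta_j\in\R\times(\R/2\pi\Z)$, the system $\bigl((\C^*)^n,\,\tfrac{\sqrt{-1}}{2}\sum_j\frac{dz^j}{z^j}\wedge\frac{d\bar z^j}{\bar z^j},\,f\bigr)$ becomes $\R^n\times(S^1)^n$ with the flat K\"ahler metric $\tfrac{\sqrt{-1}}{2}\sum_j dw_j\wedge d\bar w_j$, and on functions
\[
\Delta_f^0=-\Delta^0+V,\qquad V=\sum_j\bigl|\partial_{w_j}f\bigr|^2=\sum_j\Bigl|\sum_{\alpha}c_\alpha\,\alpha_j\,z^{\alpha}\Bigr|^2 ,
\]
where $z=e^{w}$, $z^\alpha=e^{\langle\alpha,w\rangle}$, and $\Delta^0$ is the flat Laplacian of $\R^n\times(S^1)^n$. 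First I would prove an exponential lower bound for the potential: there are constants $c,\epsilon,C>0$ with $V(w)\ge c\,e^{2\epsilon|s|}-C$, where $s=\re w\in\R^n$. This is the step that uses both hypotheses. Convenience of $f$ means that $0$ lies in the interior of the Newton polytope $\Delta(f)$, so the support function $h(u)=\max_{\alpha}\langle\alpha,u\rangle$ (maximum over the exponents of $f$) satisfies $\epsilon:=\min_{|u|=1}h(u)>0$; along a ray $s=\rho u$ the monomials of maximal exponential growth rate $\rho\,h(u)$ are exactly those supported on the face $\sigma_u\subset\Delta(f)$ with outer normal $u$, and Kouchnirenko non-degeneracy (the restriction $f_{\sigma_u}$ has no critical point in $(\C^*)^n$, so $\sum_j|\partial_{w_j}f_{\sigma_u}|^2$ is strictly positive on $(\C^*)^n$, hence bounded below on that ray by a positive multiple of $e^{2\rho h(u)}$) forces the leading part of $V$ not to cancel; a compactness argument over $u$, over $\theta$, and over the finitely many faces then gives the stated bound. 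Alternatively, the exponential growth of $V$ can be read off from the estimates proving the strong tameness of this system in \cite{Fa} (cf.\ Corollary~\ref{crl:intro-2}).

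Granting the lower bound, set $H_0:=-\Delta^0+c\,e^{2\epsilon|s|}$ on $\R^n\times(S^1)^n$. It separates as $H_0=H_0^{\R}\otimes 1+1\otimes(-\Delta_{(S^1)^n})$ with $H_0^{\R}=-\Delta_{\R^n}+c\,e^{2\epsilon|s|}$ a Schr\"odinger operator on $\R^n$ with a proper potential. By the Golden--Thompson inequality $\Tr e^{-tH_0^{\R}}\le(4\pi t)^{-n/2}\int_{\R^n}e^{-t c\,e^{2\epsilon|s|}}\,ds<\infty$ for $t>0$, hence $|\Tr e^{-tH_0^{\R}}|\le\Tr e^{-(\re t)H_0^{\R}}<\infty$ for $\re t>0$; moreover $\Tr e^{t\Delta_{(S^1)^n}}<\infty$ because $(S^1)^n$ is compact, so $\Tr e^{-tH_0}=\Tr e^{-tH_0^{\R}}\cdot\Tr e^{t\Delta_{(S^1)^n}}<\infty$ for $\re t>0$. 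Since $V\ge c\,e^{2\epsilon|s|}-C$ we have $\Delta_f^0\ge H_0-C$ in the sense of quadratic forms, and as $\Delta_f^0$ has purely discrete spectrum (strong tameness, Corollary~\ref{crl:intro-2}) the min-max principle gives $\lambda_k(\Delta_f^0)\ge\lambda_k(H_0)-C$ for every $k$; therefore
\[
\Tr e^{-t\Delta_f^0}=\sum_k e^{-t\lambda_k(\Delta_f^0)}\le e^{Ct}\sum_k e^{-t\lambda_k(H_0)}=e^{Ct}\,\Tr e^{-tH_0}<\infty\qquad(t>0),
\]
and, bounding $|e^{-t\lambda_k}|$ by $e^{-(\re t)\lambda_k}$, one also gets $\sum_k|e^{-t\lambda_k(\Delta_f^0)}|<\infty$ for all $\re t>0$.

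For holomorphy, the eigenvalues $\lambda_k:=\lambda_k(\Delta_f^0)$ are bounded below and, by the estimate just obtained, $\sum_k e^{-\sigma\lambda_k}<\infty$ for every $\sigma>0$. Near any $t_*$ with $\re t_*>0$, fix $\sigma_*\in(0,\re t_*)$; then on a small disc about $t_*$ the tail $\sum_{\lambda_k\ge0}e^{-t\lambda_k}$ is dominated by $\sum_k e^{-\sigma_*\lambda_k}$, so it converges uniformly there, and being a locally uniform limit of entire functions — together with the finitely many entire terms coming from the $\lambda_k<0$ — it follows that $\Tr e^{-t\Delta_f^0}$ is holomorphic on $H^+$. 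I expect the only real difficulty to be the exponential growth of $V$ forced by convenience together with non-degeneracy; once that is established, the rest is the same soft comparison already used above for the polynomial cases.
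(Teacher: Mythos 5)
Your proposal is correct and follows essentially the same route as the paper: pass to cylinder coordinates $z_j=e^{w_j}$, use convenience plus non-degeneracy (the paper simply quotes \cite[Proposition 2.47]{Fa}, which is exactly your exponential lower bound $V\ge C_0e^{2\theta_0|\re w|}-C$) to get a confining potential, compare eigenvalues by the min-max principle, and deduce trace finiteness and holomorphy of the Dirichlet-type series. The only deviation is the comparison model: the paper further weakens the exponential bound to a quadratic one and compares with the harmonic oscillator whose trace it has computed explicitly, while you keep the exponential well and invoke Golden--Thompson together with the compact torus factor --- both are valid.
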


\begin{proof} Note that under the change of the coordinate $z_i=e^{w_i}$, the operator $\Delta_f^0$ is defined on $(\R^1\times S^1)^n$ and have the form
\begin{equation}
\Delta_f^0=-\sum_i \partial_{w_i}\partial_{\bar{w}_i}+\sum_i \left|\sum_\alpha a_\alpha \alpha_i e^{\langle \alpha,w\rangle}  \right|^2=-\sum_i \partial_{w_i}\partial_{\bar{w}_i}+F(w).
\end{equation}
In \cite[Proposition 2.47]{Fa}, we have proved that there exists a constant $C_0$ such that 
\begin{equation}
 F(w)\ge C_0 \left|e^{M(\beta)|w|} \right|^2-C.
\end{equation}
Here $M(\beta)|w|=\re(\langle w, \alpha_s\rangle)$ and $\alpha_s$ is one of the special vectors representing the vertices of the Newton polyhedron of $f$ such that the number $M(\beta)|w|$ is the largest among the numbers $\re(\langle w, \alpha\rangle)$ for any vertices $\alpha$. Since $0$ is in the interior of the Newton polyhedron $\Delta(f)$, there is a number $1\ge \theta_0>0$ such that for any $w$, 
\begin{equation}
F(w)\ge C_0 e^{2\theta_0|\re(w)|}-C\ge c_0|\re(w)|^2-C=c_0\sum_i |\re(w_i)|^2-C.
\end{equation}
Denote by
\begin{equation}
H^0=-\sum_i \partial_{w_i}\partial_{\bar{w}_i}+c_0\sum_i |\re(w_i)|^2-C
\end{equation}
Then by the fact that 
\begin{equation}
\lambda_n(\Delta_f^0)\ge \lambda(H^0),
\end{equation}
we have 
$$
\Tr e^{-t\Delta_f^0}\le C\Tr e^{-t H^0}<\infty,
$$
and $\Tr e^{-t\Delta_f^0}$ is holomorphic w.r.t $t\in H^+$. 

\end{proof}

Let $f$ be a convenient and nondegenerate Laurent polynomial. The
$\C$-vector space
$Q_f=\C[z_1,\cdots,z_n,z_1^{-1},\cdots,z_n^{-1}]/J_f$ is finite
dimensional and its dimension $\mu(f)$ is the sum of the Milnor
numbers of $f$ at each critical point. Let $\{g_i,i=1,\cdots,r\}$ be
the set of monomials such that their corresponding lattice points
are contained in the interior of the Newton polyhedron of $f$ and
injects to the Jacobi space $Q_f$. Then the deformation
\begin{equation}\label{eq:deform-Laurent-1}
f_t(z):=f(z)+\sum_{j=1}^r s_j g_j.
\end{equation}
is called the subdiagram deformation. It was shown in \cite[PP. 23]{Sab} that for any $s=(s_1,\cdots,s_r)\in \C^r$ the Laurent
polynomial $f_s(z)$ is convenient and nondegenerate. Therefore by
Lemma \ref{lm:trac-laur}, we have the following
result.

\begin{crl} Let $f$ be a convenient and nondegenerate Laurent polynomial defined on $(\C^*)^n$ and $f_s(z)$ is the subdiagram deformation of $f$, then for any $s$ we have 
\begin{equation}
\Tr e^{-t\Delta_{f_s}^0}<\infty,\forall t\in \IH^+,
\end{equation}
and the trace is holomorphic w.r.t. $t\in \IH^+$. 
\end{crl}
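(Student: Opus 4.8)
The plan is to reduce the statement directly to Lemma~\ref{lm:trac-laur}, applied to $f_s$ in place of $f$. The only point requiring attention is that the two hypotheses of that lemma --- convenience and non-degeneracy --- are preserved under subdiagram deformation; everything else is then automatic.

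First I would recall the setup: the monomials $g_1,\dots,g_r$ appearing in $f_s(z)=f(z)+\sum_{j=1}^r s_j g_j$ correspond to lattice points contained in the \emph{interior} of the Newton polyhedron $\Delta(f)$. Hence the Newton polyhedron is unchanged, $\Delta(f_s)=\Delta(f)$; in particular $f_s$ is again convenient, and $0$ remains in the interior of $\Delta(f_s)$. Non-degeneracy is exactly the content of the result of Sabbah quoted in the paragraph preceding the statement (\cite[p.~23]{Sab}): for every $s\in\C^r$ the Laurent polynomial $f_s$ is convenient and non-degenerate, with $\mu(f_s)=\mu(f)$. Thus $f_s$ satisfies all the hypotheses of Lemma~\ref{lm:trac-laur}.

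Finally I would simply invoke Lemma~\ref{lm:trac-laur} with $f$ replaced by $f_s$, obtaining $\Tr e^{-t\Delta_{f_s}^0}<\infty$ for all $t\in\IH^+$ and holomorphy of the trace in $t$. I expect the only (mild) subtlety to be bookkeeping of constants: the estimate $F_s(w)\ge c_0|\re(w)|^2-C$ used in the proof of Lemma~\ref{lm:trac-laur} comes from \cite[Proposition 2.47]{Fa}, whose constants $C_0,C,\theta_0$ a priori depend on the coefficients of $f_s$; but the combinatorial input (the Newton polyhedron and its special vertices) is locally constant in $s$ while the coefficients depend polynomially on $s$, so these constants can be chosen locally uniformly in the parameter. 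This yields the comparison $\Delta_{f_s}^0\ge H^0_s$ with a harmonic-oscillator-type operator and hence the trace-class property and holomorphy for each fixed $s$, which is all that is claimed.
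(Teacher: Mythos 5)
Your proposal is correct and follows essentially the same route as the paper: the paper also invokes Sabbah's result that every subdiagram deformation $f_s$ remains convenient and non-degenerate, and then applies Lemma~\ref{lm:trac-laur} directly to $f_s$. The extra remarks on uniformity of the constants in $s$ are harmless but not needed, since the claim is for each fixed $s$ and the lemma's constants may depend on the polynomial.
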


By Lemma \ref{lm:trac-equi} and the analysis of the above cases, we obtain the following conclusions:

\begin{thm}\label{sect-trac-main-1} Assume that $(M,g,f)$ is of strongly tame. Then $\Tr_{L^2}e^{-t \Delta_f}<\infty$ for any $t\in \IH^+:=\{t\in \C|\re(t)>0\}$ and it is holomorphic with respect to $t\in \IH^+$ if $f$ is one of the following cases:
\begin{enumerate}
\item $f=f(z_1,\cdots,z_n)$ is a non-degenerate quasi-homogeneous function;
\item $f=f_s(z_1,\cdots,z_N)$ is a strong deformation of $f$ in (1);
\item $f=f(z_1,\cdots,z_n)$ is a convenient and non-degenerate Laurent polynomial defined on $(\C^*)^n$;
\item $f=f_s(z_1,\cdots,z_n)$ is the subdiagram deformation of $f$ in (3).
\end{enumerate}
\end{thm}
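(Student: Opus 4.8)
The plan is to assemble the statement from the lemmas and corollaries proved above; no genuinely new argument is needed. First I would apply Lemma~\ref{lm:trac-equi}: since $(M,g,f)$ is strongly tame, $e^{-t\Delta_f}$ is of trace class if and only if $e^{-t\Delta_f^0}$ is, where $\Delta_f^0=-g^{\nub\mu}\nabla_\mu\nabla_\nub+|\nabla f|^2$ is the scalar Schr\"odinger operator. This reduces the four cases to the scalar statements already established. Furthermore, the eigenvalue comparison in the proof of that lemma, $\lambda_n(\Delta_f)\ge(1-\epsilon)\lambda_n(\Delta_f^0)-(C_\epsilon+C_R)$, yields $\Tr_{L^2}e^{-t\Delta_f}\le e^{C_\epsilon+C_R}\,\Tr_{L^2}e^{-(1-\epsilon)t\Delta_f^0}$ for real $t>0$, so finiteness (and, after the remark below, holomorphy) of the scalar trace transfers to the form trace.

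Next I would dispatch the four cases by citing the corresponding result. Case (1) follows from the Lemma above giving $\Tr e^{-t\Delta_f^0}\le e\,\Tr e^{-t\Delta_{C,n}^0}$ via Corollary~\ref{sect-2-lemm-2.9}, where $\Delta_{C,n}^0=\Delta^0_{(z_1^2+\cdots+z_N^2)/C}$ is a rescaled harmonic oscillator whose heat trace is an explicit product of hyperbolic functions of $t$, as computed in Section~4. Case (2) is the Corollary on strong deformations, using that every deformation monomial is bounded by a quantity controlled by $|\nabla f|$. Case (3) is Lemma~\ref{lm:trac-laur}: under the substitution $z_i=e^{w_i}$ the potential satisfies $F(w)\ge c_0|\re(w)|^2-C$ by \cite[Proposition 2.47]{Fa} together with convenience of $f$, so $\Delta_f^0\ge H^0$ for a harmonic oscillator $H^0$ on $(\R\times S^1)^n$, whence the bound. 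Case (4) is the Corollary on subdiagram deformations, reduced to case (3) since $f_s$ remains convenient and nondegenerate by \cite[p.~23]{Sab}.

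For holomorphy on $\IH^+$ I would argue uniformly. In every case the comparison operator is a tensor product of (rescaled) harmonic oscillators, possibly tensored with a second-order elliptic operator on a compact manifold; its heat trace $\sum_n e^{-t\mu_n}$, with $\mu_n\to\infty$ at least linearly, converges absolutely and locally uniformly on $\IH^+$, since on each strip $\{\re t\ge\sigma>0\}$ one has $|e^{-t\mu_n}|\le e^{-\sigma\mu_n}$. Since $\lambda_n(\Delta_f^0)$ (respectively $\lambda_n(\Delta_f)$) exceeds $\mu_n$ up to a bounded additive shift and a fixed positive multiplicative constant, the series $\sum_n e^{-t\lambda_n(\Delta_f^0)}$ and $\sum_n e^{-t\lambda_n(\Delta_f)}=\Tr_{L^2}e^{-t\Delta_f}$ are likewise absolutely and locally uniformly convergent on $\IH^+$; being locally uniform limits of holomorphic functions, they are holomorphic there.

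The only delicate point---the ``main obstacle,'' such as it is---is that the eigenvalue inequalities $\lambda_n(\Delta_f)\ge(1-\epsilon)\lambda_n(\Delta_f^0)-(C_\epsilon+C_R)$ and $\lambda_n(\Delta_f^0)\ge\lambda_n(H^0)$ are statements about the real spectrum, so to deploy them for complex $t$ one must first pass through the modulus bound $|e^{-t\lambda}|=e^{-\re(t)\lambda}$ on a vertical strip, extract local uniform convergence there, and only then invoke holomorphy of locally uniform limits. The trace-class (finiteness) assertion itself is immediate from the cited results, so the proof is short.
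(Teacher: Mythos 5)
Your proposal is correct and follows essentially the same route as the paper, which likewise deduces the theorem by combining Lemma~\ref{lm:trac-equi} with the scalar-operator lemmas for the four cases (harmonic-oscillator comparison via Corollary~\ref{sect-2-lemm-2.9}, the deformation corollary, Lemma~\ref{lm:trac-laur}, and the subdiagram corollary). Your explicit treatment of holomorphy through locally uniform convergence of $\sum_n e^{-t\lambda_n}$ on strips $\{\re t\ge\sigma>0\}$ is exactly the step the paper leaves implicit, so no further changes are needed.
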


Now we proved that $\Tr e^{-t\Delta_f}$ is bounded by the trace of harmonic oscillator. Since the first term in the expansion of the heat kernel $e^{\Delta_{n}^0}$ has the order $t^{-2n}$, we can define the zeta functions as follows.

\begin{df}Under the assumption about $f$ in Theorem \ref{sect-trac-main-1}, we can define the $i$-th Zeta function for $\re(s)> 2n 
$:
\begin{equation}
\Theta^i_f(s)=\frac{1}{2\Gamma(s)}\int^\infty_0 \str(N^i (e^{-t\Delta_f}-\Pi))t^{s-1}dt, i=1,2,\cdots,
\end{equation}
where $\Pi: L^2\Lambda^*(\C^n)\to \Hc^*$ is the projection operator into the space of $L^2$ harmonic forms and $N$ is the number operator. 
\end{df}

It is immediate that $\Theta^i_f(s)$ is analytic in the domain $\{\re(s)\ge 2n\}$. At first, we have a vanishing theorem.

\begin{thm}\label{sec4:thm-1} Let $f$ be a non-degenerate quasi-homogeneous polynomial, then
\begin{equation}
\Theta^1_f(s)\equiv 0
\end{equation}
\end{thm}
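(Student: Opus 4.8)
The plan is to show that the integrand is identically zero, i.e. $\str\big(N(e^{-t\Delta_f}-\Pi)\big)=0$ for every $t>0$; then $\Theta^1_f(s)=\tfrac{1}{2\Gamma(s)}\int_0^\infty 0\cdot t^{s-1}\,dt\equiv 0$ for $\re(s)>2n$. First I would record the structural facts I need. Since $\bpat_f$ raises the total form degree by one and $\bpat_f^\dag$ lowers it by one, the Laplacian $\Delta_f=\bpat_f\bpat_f^\dag+\bpat_f^\dag\bpat_f$ commutes with the number operator $N$; and by Theorem~\ref{thm:intro-1} together with Theorem~\ref{sect-trac-main-1}, $\Delta_f$ has purely discrete spectrum and $e^{-t\Delta_f}$ is of trace class. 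Hence I may decompose $L^2\Lambda^\bullet(\C^n)=\bigoplus_{\lambda\ge 0}\bigoplus_{k=0}^{2n}\Hc^k_\lambda$ into finite–dimensional joint eigenspaces of $\Delta_f$ and $N$, put $d_k(\lambda)=\dim\Hc^k_\lambda$, and observe that the $\lambda=0$ part is exactly $\Hc^\bullet$ and is killed by $e^{-t\Delta_f}-\Pi$. Therefore
\begin{equation*}
\str\big(N(e^{-t\Delta_f}-\Pi)\big)=\sum_{\lambda>0}e^{-t\lambda}\,S_\lambda,\qquad S_\lambda:=\sum_{k=0}^{2n}(-1)^k\,k\,d_k(\lambda),
\end{equation*}
and it suffices to prove $S_\lambda=0$ for every $\lambda>0$.

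For this I would use two inputs. \emph{(a) Acyclicity on positive eigenspaces.} For $\lambda>0$ the finite complex $0\to\Hc^0_\lambda\xrightarrow{\bpat_f}\cdots\xrightarrow{\bpat_f}\Hc^{2n}_\lambda\to 0$ is exact: if $\omega\in\Hc^k_\lambda$ and $\bpat_f\omega=0$, then $\omega=\lambda^{-1}\bpat_f\bpat_f^\dag\omega=\bpat_f(\lambda^{-1}\bpat_f^\dag\omega)$ with $\lambda^{-1}\bpat_f^\dag\omega\in\Hc^{k-1}_\lambda$; hence its Euler characteristic vanishes, $\sum_k(-1)^k d_k(\lambda)=0$. \emph{(b) A degree–reversing symmetry of $\Delta_f$.} Let $J$ be the conjugate–linear Hodge star of the flat K\"ahler metric on $\C^n$; it sends $\Lambda^{p,q}$ anti–unitarily onto $\Lambda^{n-p,n-q}$, hence $\Lambda^k$ onto $\Lambda^{2n-k}$. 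I claim $J\Delta_f=\Delta_f J$. Indeed $|\pat f|^2$ is scalar; $J$ commutes with $\Delta_{\bpat}$ by the K\"ahler identities; and $J$ conjugates $\bpat$ to a multiple of $\bpat^\dag$ and, being conjugate–linear, conjugates $\pat f\wedge=\sum_i\pat_i f\,dz^i\wedge$ to a multiple of $(\pat f\wedge)^\dag$ (conjugate–linearity is what replaces the holomorphic coefficients $\pat_i f$ by $\overline{\pat_i f}$). For a suitable normalization of $J$ these two multiples agree, so $J\bpat_f J^{-1}=c\,\bpat_f^\dag$ for a unit scalar $c$; taking adjoints gives $J\bpat_f^\dag J^{-1}=\bar c\,\bpat_f$, whence $J\Delta_f J^{-1}=|c|^2\,\Delta_f=\Delta_f$. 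Since $\lambda$ is real, $J$ then restricts to a $\C$–antilinear isomorphism $\Hc^k_\lambda\xrightarrow{\ \sim\ }\Hc^{2n-k}_\lambda$, so $d_k(\lambda)=d_{2n-k}(\lambda)$.

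Combining (a) and (b): substituting $k\mapsto 2n-k$ in $S_\lambda$ and using $(-1)^{2n-k}=(-1)^k$ and $d_{2n-k}(\lambda)=d_k(\lambda)$,
\begin{equation*}
S_\lambda=\sum_{k=0}^{2n}(-1)^k(2n-k)\,d_k(\lambda)=2n\sum_{k}(-1)^k d_k(\lambda)-S_\lambda=-S_\lambda,
\end{equation*}
so $S_\lambda=0$, whence $\str\big(N(e^{-t\Delta_f}-\Pi)\big)\equiv 0$ and $\Theta^1_f(s)\equiv 0$. (The same argument can be run directly at the level of traces: by (b), $\Str\big((2n-N)e^{-t\Delta_f}\big)=\overline{\Str\big(Ne^{-t\Delta_f}\big)}=\Str\big(Ne^{-t\Delta_f}\big)$, the last equality because $\Str(Ne^{-t\Delta_f})=\sum_k(-1)^k k\,\Tr(e^{-t\Delta_f}|_{\Lambda^k})$ is real; hence $\Str(Ne^{-t\Delta_f})=n\,\Str(e^{-t\Delta_f})$ and likewise $\Str(N\Pi)=n\,\Str(\Pi)$, while $\Str(e^{-t\Delta_f}-\Pi)=0$ since the supertrace annihilates the supercommutator $\Delta_f e^{-t\Delta_f}=[\bpat_f,\,\bpat_f^\dag e^{-t\Delta_f}]$.)

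I expect the main obstacle to be step (b): checking carefully that the degree–reversing anti–unitary commutes with the \emph{full} Schr\"odinger operator, i.e. that the first–order term $L_f$ is preserved and that the scalar from the $\bpat$–piece of $\bpat_f$ matches the scalar from the $\pat f\wedge$–piece — this is exactly where one must use the conjugate–linear Hodge star (not the $\C$–linear one) and, if necessary, twist it by a degree–dependent phase. The eigenspace decomposition, the acyclicity on positive eigenvalues, and the final combinatorial cancellation are all routine once the Hodge theory of \cite{Fa} is in hand; note in particular that the precise location of the $L^2$ harmonic forms plays no role here, since the $\lambda=0$ contribution is removed by $\Pi$.
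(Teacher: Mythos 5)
Your reduction to showing $S_\lambda=0$ for each $\lambda>0$, and your step (a) (exactness of $(\Hc^\bullet_\lambda,\bpat_f)$ for $\lambda>0$, hence $\sum_k(-1)^kd_k(\lambda)=0$), are fine. The genuine gap is step (b), and it sits exactly where you yourself flagged the main obstacle: the conjugate-linear Hodge star does \emph{not} commute with $\Delta_f$; it intertwines $\Delta_f$ on $k$-forms with $\Delta_{-f}$ on $(2n-k)$-forms, i.e.\ it flips the sign of the superpotential, just as in the real Witten deformation $d+dh\wedge$ (where the star exchanges the potentials $|dh|^2\mp\Delta h$ on complementary degrees). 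This is precisely what the paper's own identities $*\bpat_f\bpat_f^\dag=\pat_{-f}^\dag\pat_{-f}*$ and $*\bpat_f^\dag\bpat_f=\pat_{-f}\pat_{-f}^\dag*$ record. Concretely, already for $n=1$ with the flat metric one computes, on the component $\Lambda^1\to\Lambda^0$, that $\bar{*}\,\bpat\,\bar{*}^{-1}=+\bpat^\dag$ while $\bar{*}\,(f'dz\wedge)\,\bar{*}^{-1}=-(f'dz\wedge)^\dag$, so $J\bpat_fJ^{-1}=c\,\bpat_{-f}^\dag$ rather than $c\,\bpat_f^\dag$. No ``suitable normalization'' can repair this: conjugation by $cJ$ coincides with conjugation by $J$ for every scalar $c$, and a phase depending only on the \emph{total} degree multiplies both summands $\bpat^\dag$ and $(\pat f\wedge)^\dag$ of the same component $\Lambda^m\to\Lambda^{m-1}$ by the same factor, so it cannot change their relative sign. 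As written, (b) therefore only yields $d_k(\lambda,f)=d_{2n-k}(\lambda,-f)$, and your argument never closes the loop back to $f$ --- note that your proposal nowhere uses the hypothesis that $f$ is quasi-homogeneous, whereas the paper's proof uses it at exactly this point: it first proves $Q_f+Q_{-f}=0$ by the $*$-symmetry together with the pairing of adjacent degrees (your step (a) in trace form), and then shows $\Theta^1_{-f}=\Theta^1_f$ via the coordinate change $I_\xi$ with $\xi^d=-1$ and $f\circ I_\xi=-f$.

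The gap is repairable in two ways. Either follow the paper and use quasi-homogeneity to conjugate $\Delta_{-f}$ into $\Delta_f$ degree-preservingly by $I_\xi^*$; or observe that the unitary $\sigma$ acting as $(-1)^q$ on $\Lambda^{p,q}$ already does this for \emph{arbitrary} holomorphic $f$: both summands of $L_f$ shift $q$ by $\mp1$, so $\sigma L_f\sigma^{-1}=-L_f$ while $\Delta_{\bpat}$ and $|\pat f|^2$ are untouched, whence $\sigma\Delta_f\sigma^{-1}=\Delta_{-f}$. Equivalently, the degree-reversing antiunitary you want is $(-1)^q\bar{*}$ --- a sign depending on the antiholomorphic degree, not a phase in the total degree. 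With that correction your (a)+(b) argument does give $S_\lambda=0$ and hence the theorem, and it is then essentially the paper's proof reorganized eigenvalue-by-eigenvalue (with the interesting byproduct that the $f\mapsto-f$ detour, and hence quasi-homogeneity, is not needed for this particular cancellation, only for the trace-class setup that makes $\Theta^1_f$ well defined).
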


\begin{proof}
We can rewrite $\Theta^1_{f}(s)$ as 
$$
\Theta^1_{f}(s)=\frac{1}{2}\sum_{k=0}^{2n}(-1)^k k\frac{1}{\Gamma(s)}\int^\infty_0 \Tr(e^{-t\Delta^k_{f}}-\Pi)t^{s-1}dt.
$$

Let $\D^k$ be the space of all smooth $k$-forms which tend exponentially to zero at the infinity. Define $\D^k_1=\bpat_f \D^{k-1}$ and $\D^k_2=\bpat_f^\dag \D^{k+1}$. Then by Hodge theorem, $\D_k=\Hc^k\oplus \D_1^k\oplus \D_2^k $. Note that $\D^0_1=0,\D^{2n}_2=0$. 

On $\D_1^k$, the Laplacian $\Delta^k_f=\bpat_f\bpat^\dag_f$ and is invariant under the action of heat kernel $e^{-t\Delta^k_f}$, whose restriction  is denoted by $(e^{-t\bpat_f\bpat^\dag_f})_{1,k}$. Similarly, we can define $(e^{-t\bpat^\dag_f\bpat_ f})_{2,k}$ to be the restriction of the heat kernel to $\D^k_2$.  We have
$$
\Tr(e^{-t\Delta^k_f}-\Pi^k)=\Tr(e^{-t\bpat_f\bpat^\dag_f})_{1,k}+\Tr(e^{-t\bpat^\dag_f\bpat_f})_{2,k}
$$
Now $\bpat_f$ is $1-1$ from $\D_2^{k}$ onto $\D_1^{k+1}$. In fact, we have 
$$
\bpat_f(e^{-t\bpat^\dag_f\bpat_f})_{2,k}=(e^{-t\bpat_f\bpat^\dag_f})_{1,k+1}\bpat_f.
$$
Therefore, we have
\begin{equation}\label{sec4:eq-1}
\Tr(e^{-t\bpat^\dag_f\bpat_f})_{2,k}=\Tr(e^{-t\bpat_f\bpat^\dag_f})_{1,k+1}.
\end{equation}
Since $\Delta_f=\bpat_f\bpat^\dag_f+\bpat^\dag_f\bpat_f=\pat_f\pat^\dag_f+\pat^\dag_f\pat_f$, the analogous identity (\ref{sec4:eq-1}) also holds for $\pat_f,\pat^\dag_f$:
\begin{equation}
\Tr(e^{-t\pat^\dag_f\pat_f})_{2,k}=\Tr(e^{-t\pat_f\pat^\dag_f})_{1,k+1}.
\end{equation}

On the other hand, $*$-operator provides another symmetry:
$$
*\bpat_f\bpat^\dag_f=\pat^\dag_{-f}\pat_{-f}*,\;*\bpat^\dag_f\bpat_f=\pat_{-f}\pat^\dag_{-f}*,
$$
therefore we have 
\begin{align*}
&\Tr(e^{-t\bpat_f\bpat^\dag_f})_{1,k}=\Tr(e^{-t\pat^\dag_{-f}\pat_{-f}})_{2,2n-k},\\
&\Tr(e^{-t\bpat^\dag_f\bpat_f})_{2,k}=\Tr(e^{-t\pat_{-f}\pat^\dag_{-f}})_{1,2n-k}.
\end{align*}
We have 
\begin{align*}
Q_f:=&\sum_{k=0}^{2n}(-1)^k k\Tr(e^{-t\Delta^k_f}-P^k)\\
=&\sum_{k=0}^{2n}(-1)^k k\left( \Tr(e^{-t\bpat_f\bpat^\dag_f})_{1,k}+\Tr(e^{-t\bpat^\dag_f\bpat_f})_{2,k}\right)\\
=&\sum_{k=0}^{2n}(-1)^k k\left( \Tr(e^{-t\pat^\dag_{-f}\pat_{-f}})_{2,2n-k}+\Tr(e^{-t\pat_{-f}\pat^\dag_{-f}})_{1,2n-k}\right)\\
=&\sum_{k=0}^{2n}(-1)^{(2n-k)} (2n-k)\left( \Tr(e^{-t\pat^\dag_{-f}\pat_{-f}})_{2,k}+\Tr(e^{-t\pat_{-f}\pat^\dag_{-f}})_{1,k}\right)\\
=&2n\sum_{k=0}^{2n}(-1)^k\left( \Tr(e^{-t\pat^\dag_{-f}\pat_{-f}})_{2,k}+\Tr(e^{-t\pat_{-f}\pat^\dag_{-f}})_{1,k}\right)-Q_{-f}.
\end{align*}
Hence, we have
\begin{align*}
Q_f+Q_{-f}=&2n\sum_{k=0}^{2n}(-1)^k\left( \Tr(e^{-t\pat^\dag_{-f}\pat_{-f}})_{2,k}+\Tr(e^{-t\pat_{-f}\pat^\dag_{-f}})_{1,k}\right)\\
=&2n\sum_{k=0}^{2n}(-1)^k\left( \Tr(e^{-t\pat_{-f}\pat^\dag_{-f}})_{1,k+1}+\Tr(e^{-t\pat_{-f}\pat^\dag_{-f}})_{1,k}\right)\\
=&0
\end{align*}
So
$$
\Theta^1_f+\Theta^1_{-f}=0.
$$
Since $f$ is quasi-homogeneous, there exist integers $d,k_1,\cdots, k_n$ such that for any $\lambda\in \C^*$
$$
\lambda^d f(z_1,\cdots,z_n)=f(\lambda^{k_1}z_1,\cdots,\lambda^{k_n}z_n).
$$
Hence we can take a $\xi$ satisfying $\xi^d=-1$. Define the matrix $I_\xi=\diag(\xi^{k_1},\cdots,\xi^{k_n})$ which gives an isomorphism $I_\xi:\C^n\to \C^n$ such that  $f(I_\xi\cdot \z)=-f(\z)$. So we have the pull back map $I_\xi^*:\Omega^\bullet(\C^n)\to \Omega^\bullet(\C^n)$. It is easy to check that 
\begin{equation}
(I_\xi^*)^{-1}\circ\Delta_{-f}\circ I_\xi^*=\Delta_f.
\end{equation} 
Hence if $p(\z,\w,t)$ is the heat kernel of $\Delta_f$, then $I_\xi^*P(\z,\w,t)=P(I_\xi(\z),I_\xi(\w),t)$ is the heat kernel of $\Delta_{-f}$. This implies that $\Theta^1_{-f}=\Theta^1_f$ and then
$$
\Theta^1_{f}(s)=\Theta^1_{-f}(s)=0
$$. 

\end{proof}

\section{Index Theorem and torsion invariants}

In this section, we always assume that $f$ is a non-degenerate quasi-homogeneous polynomial. Above all, we prove a corresponding index theorem for the associated chain complex. We follow loosely the procedure to prove a local index theorem for the Dirac operator on a closed compact manifold (see for example~\cite{BGV}). However the details are vastly different. By analyzing the asymptotic expansion of the heat kernel as $t\to 0$, we can extend the zeta functions $\Theta^i_f(s)$ (for $i\ge 2$) meromorphically to $\C$ such that they are regular at $s=0$. Finally, we define the $i$-th torsion invariants $T^i_f$ of singularities via this analytic extension at $s=0$.  As an example, we compute the torsion invariants of $A_r$-singularities. When $r=1$, our torsion invariants is related to the famous Riemann zeta function. Note that if we use the expansion of heat kernel we always assume the condition $q_M-q_m<1/3$ to grantee the conclusion of Theorem \ref{sect-3-main-theorem-1}. 

First we prove a McKean-Singer formula as follows:
\begin{prop}
$${d\over  dt}{\str}(\exp (-t\Delta_{f})=0$$ 
\end{prop}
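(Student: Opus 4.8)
The plan is to prove the McKean--Singer formula $\frac{d}{dt}\str(e^{-t\Delta_f}) = 0$ by differentiating under the trace and exploiting the supersymmetry of the chain complex associated to $\bpat_f$. First I would justify differentiating term by term: by Theorem \ref{sect-trac-main-1} the operator $e^{-t\Delta_f}$ is trace class for $t \in \IH^+$, the trace is holomorphic in $t$, and since $\Delta_f$ has purely discrete spectrum with eigenvalues $\lambda_n \to \infty$ (Theorem \ref{thm:intro-1}), one has $\str(e^{-t\Delta_f}) = \sum_{k=0}^{2n}(-1)^k \Tr(e^{-t\Delta_f^k}) = \sum_n (-1)^{\deg\phi_n} e^{-t\lambda_n}$, which converges absolutely together with all its $t$-derivatives on compact subsets of $\IH^+$. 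Hence $\frac{d}{dt}\str(e^{-t\Delta_f}) = -\sum_n (-1)^{\deg\phi_n}\lambda_n e^{-t\lambda_n} = -\Str(\Delta_f e^{-t\Delta_f})$.

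Next I would use the Hodge decomposition $\D^k = \Hc^k \oplus \bpat_f\D^{k-1} \oplus \bpat_f^\dag\D^{k+1}$ (from \cite{Fa}, as already invoked in the proof of Theorem \ref{sec4:thm-1}). On $\Hc^k$ the operator $\Delta_f$ acts as zero, so those terms drop out. On the image pieces $\D_1^k = \bpat_f\D^{k-1}$ and $\D_2^k = \bpat_f^\dag\D^{k+1}$, the intertwining relation $\bpat_f(e^{-t\bpat_f^\dag\bpat_f})_{2,k} = (e^{-t\bpat_f\bpat_f^\dag})_{1,k+1}\bpat_f$ established in the proof of Theorem \ref{sec4:thm-1} — together with the fact that $\bpat_f: \D_2^k \to \D_1^{k+1}$ is an isomorphism — yields $\Tr(\Delta_f^k e^{-t\Delta_f^k})_{2,k} = \Tr(\Delta_f^{k+1} e^{-t\Delta_f^{k+1}})_{1,k+1}$, since $\Delta_f$ on $\D_2^k$ is $\bpat_f^\dag\bpat_f$ and on $\D_1^{k+1}$ is $\bpat_f\bpat_f^\dag$, and these are conjugate under $\bpat_f$. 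Writing $a_k := \Tr(\Delta_f^k e^{-t\Delta_f^k})_{1,k}$ and $b_k := \Tr(\Delta_f^k e^{-t\Delta_f^k})_{2,k}$, we have $b_k = a_{k+1}$, so
\begin{align*}
\Str(\Delta_f e^{-t\Delta_f}) &= \sum_{k=0}^{2n}(-1)^k (a_k + b_k) = \sum_{k=0}^{2n}(-1)^k a_k + \sum_{k=0}^{2n}(-1)^k a_{k+1} \\
&= \sum_{k=0}^{2n}(-1)^k a_k + \sum_{k=1}^{2n+1}(-1)^{k-1} a_k = a_0 + (-1)^{2n+1}a_{2n+1} = 0,
\end{align*}
using $a_0 = \Tr(\cdots)_{1,0} = 0$ because $\D_1^0 = \bpat_f\D^{-1} = 0$, and $a_{2n+1} = 0$ since there are no $(2n+1)$-forms. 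Therefore $\frac{d}{dt}\str(e^{-t\Delta_f}) = -\Str(\Delta_f e^{-t\Delta_f}) = 0$.

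The main obstacle I anticipate is the analytic justification rather than the algebra: one must be careful that the restricted operators $(e^{-t\bpat_f\bpat_f^\dag})_{1,k}$ and $(e^{-t\bpat_f^\dag\bpat_f})_{2,k}$ are themselves trace class (so that the decomposition of $\Tr(e^{-t\Delta_f^k} - \Pi^k)$ into the two pieces is legitimate) and that the nonzero spectra of $\bpat_f\bpat_f^\dag$ on $\D_1^{k+1}$ and $\bpat_f^\dag\bpat_f$ on $\D_2^k$ genuinely coincide with multiplicity — both of which follow from the discreteness of the spectrum of $\Delta_f$ in Theorem \ref{thm:intro-1} and the standard fact that $\bpat_f$ gives a spectrum-preserving isomorphism between $\overline{\im\,\bpat_f^\dag}$ and $\overline{\im\,\bpat_f}$ in each degree. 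These points are already implicitly used in the proof of Theorem \ref{sec4:thm-1}, so I would simply cite that argument. Everything else is the telescoping cancellation above.
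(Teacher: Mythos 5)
Your argument is correct, but it takes a different route from the paper. The paper's proof is a one-line supersymmetry argument: writing $\Delta_f=D_f^2$ with $D_f=\bpat_f+\bpat_f^\dag$ odd, it observes that $\frac{d}{dt}\str(e^{-t\Delta_f})$ equals (up to sign) $\Str(D_f^2e^{-t\Delta_f})=\tfrac12\Str[D_f,D_fe^{-t\Delta_f}]$, and the supertrace of a supercommutator of odd (trace-class-smoothed) operators vanishes. You instead differentiate the eigenvalue expansion and kill $\Str(\Delta_f e^{-t\Delta_f})$ by the Hodge decomposition plus the intertwining $\bpat_f(e^{-t\bpat_f^\dag\bpat_f})_{2,k}=(e^{-t\bpat_f\bpat_f^\dag})_{1,k+1}\bpat_f$, i.e.\ the same degree-shifting isospectrality already used in the paper's proof of the vanishing theorem (Theorem \ref{sec4:thm-1}), followed by a telescoping sum. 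The two proofs are really two faces of the same supersymmetry; yours is longer but more explicit about the analytic prerequisites (trace class of the restricted semigroups, discreteness of the spectrum, termwise differentiation), which the paper's formal supercommutator identity leaves implicit for unbounded operators, while the paper's version buys brevity and avoids invoking the Hodge decomposition at all. One cosmetic slip: in your telescoping the leftover boundary term should be $a_0+(-1)^{2n}a_{2n+1}=a_0+a_{2n+1}$, not $a_0+(-1)^{2n+1}a_{2n+1}$; since $a_0=a_{2n+1}=0$ (no $(-1)$-forms or $(2n+1)$-forms), the conclusion is unaffected.
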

\begin{proof}
Notice that $\Delta_{f}=D_{f}^{2}$, where $D_f=\bpat_f+\bpat_f^\dag$. We get
$${d\over  dt}{ \str}(\exp (-t\Delta_{f})={\rm Str}(D_{f}^{2}\exp (-t\Delta_{f})=\frac{1}{2} {\rm Str}[D,D\exp (-t\Delta_{f}]=0.$$ 
\end{proof}

\begin{lm}
$$\lim_{t\to\infty}{\str}(\exp (-t\Delta_{f})=\e (H^{*}(\Omega^{*},D_{f}),$$where the right hand side is defined as the euler characteristic of the cohomology associated to the chain complex $(\Omega^{*},D_{f})$, which equals to $(-1)^{n}\mu$.
\end{lm}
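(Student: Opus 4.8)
The plan is to compute the long-time limit of $\str(\exp(-t\Delta_f))$ by using the spectral decomposition of $\Delta_f$ on each $\Lambda^k$, together with the Hodge decomposition available for the strongly tame system. Since $(M,g,f)$ with $f$ a non-degenerate quasi-homogeneous polynomial satisfies the hypotheses of Theorem~\ref{thm:intro-1}, each $\Delta_f^k$ has purely discrete spectrum with a complete orthonormal eigenbasis, and by Theorem~\ref{sect-trac-main-1} the operator $e^{-t\Delta_f^k}$ is of trace class for every $t>0$. Therefore $\Tr e^{-t\Delta_f^k}=\sum_{\lambda\in\mathrm{Spec}(\Delta_f^k)} m_\lambda e^{-t\lambda}$, where $m_\lambda<\infty$ is the multiplicity. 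As $t\to\infty$ every strictly positive eigenvalue contributes a term tending to $0$, and a standard dominated-convergence argument (the $t=1$ trace dominates) shows $\lim_{t\to\infty}\Tr e^{-t\Delta_f^k}=\dim\ker\Delta_f^k=\dim\Hc^k$. Summing with signs, $\lim_{t\to\infty}\str(\exp(-t\Delta_f))=\sum_{k=0}^{2n}(-1)^k\dim\Hc^k$.

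Next I would identify $\sum_k(-1)^k\dim\Hc^k$ with the Euler characteristic of the cohomology of the complex $(\Omega^\bullet,D_f)$. By the Hodge--de Rham theorem of \cite{Fa} (recalled in Section~2), $\Hc^k\cong H^k_{((2),\bpat_f)}(M)$, the $L^2$ $\bpat_f$-cohomology, so $\sum_k(-1)^k\dim\Hc^k=\sum_k(-1)^k\dim H^k_{((2),\bpat_f)}=\e(H^\bullet(\Omega^\bullet,D_f))$, where $D_f=\bpat_f+\bpat_f^\dag$ and the cohomology of the two-step-graded complex is computed via the harmonic representatives. This is just the usual statement that an elliptic-type complex with a Hodge decomposition has Euler characteristic equal to the alternating sum of Betti numbers.

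Finally I would evaluate this Euler characteristic for a non-degenerate quasi-homogeneous polynomial. By the Hodge-de Rham theorem quoted from \cite{Fa}, $\Hc^p=0$ for $p\neq n$ and $\Hc^n\cong \Omega^n(M)/df\wedge\Omega^{n-1}(M)\cong \Jac(f)$, the Milnor ring, which has dimension $\mu=\mu(f)$. Hence $\sum_k(-1)^k\dim\Hc^k=(-1)^n\mu$, which gives the claimed value. (Here the grading is the total form degree on $\C^n$, running $k=0,\dots,2n$, but the $(0,q)$-type concentration forces the contribution to sit in degree $k=n$; alternatively one invokes the $\Z/2$-graded McKean--Singer picture from the preceding proposition, where the super-trace only sees $\dim\Hc^{even}-\dim\Hc^{odd}=(-1)^n\mu$.)

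The main obstacle is the interchange of limit and infinite sum in $\lim_{t\to\infty}\sum_\lambda m_\lambda e^{-t\lambda}$: this requires uniform control, which is supplied by the trace-class property established in Theorem~\ref{sect-trac-main-1} (the tail of the spectral sum at $t=1$ dominates uniformly for $t\geq 1$), so the analytic content is genuinely the earlier trace-class theorem; the rest is the algebraic bookkeeping of the Hodge decomposition and the identification of the harmonic space with the Milnor ring, both already in \cite{Fa}.
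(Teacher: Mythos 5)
Your argument is correct, and in fact the paper states this lemma without any proof at all, treating it as a standard consequence of the Hodge theory of \cite{Fa}; what you wrote is exactly the expected justification. The chain is sound: Theorem \ref{thm:intro-1} gives purely discrete non-negative spectrum (since $\Delta_f=D_f^2\ge 0$), Theorem \ref{sect-trac-main-1} gives summability of $\sum_\lambda m_\lambda e^{-\lambda}$, dominated convergence for $t\ge 1$ then yields $\lim_{t\to\infty}\Tr e^{-t\Delta_f^k}=\dim\Hc^k$, and the $L^2$ Hodge--de Rham theorem quoted in Section 2 collapses the alternating sum to $(-1)^n\dim\bigl(\Omega^n/df\wedge\Omega^{n-1}\bigr)=(-1)^n\mu(f)$. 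The only slip is the parenthetical remark about ``$(0,q)$-type concentration'': the harmonic space is concentrated in total degree $n$ and is represented by forms of type $(n,0)$ (multiples of $dz_1\wedge\cdots\wedge dz_n$), not $(0,q)$; since the step you actually use is $\Hc^p=0$ for $p\neq n$, this does not affect the proof, but the phrasing should be corrected.
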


In the rest of the section, we analyze the quantity $\lim_{t\to 0}{\rm Str}(\exp (-t\Delta_{f})$. By Section 4, we have an asymptotic expansion of the corresponding heat kernel when $t$ is small, which a piori does not warrant the existence of the above-mentioned limit.

\begin{thm}\label{index}
$$
\lim_{t\to 0}{ \str}(\exp (-t\Delta_{f})=\frac{(-1)^{n}}{\pi^n}\int_{\C^{n}}\exp (-|\pat f|^{2})|\det \partial^{2}f|^{2}d\text{vol}.
$$
\end{thm}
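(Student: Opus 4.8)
The plan is to compute $\lim_{t\to 0}\Str(\exp(-t\Delta_f))$ by integrating the pointwise supertrace of the heat kernel along the diagonal. By Theorem \ref{sect-3-main-theorem-1}, for $k$ large enough the heat kernel $P(\z,\w,t)$ of $\pat_t+\Delta_f$ exists and differs from the parametrix $P_k=\E_0\E_1\sum_{j=0}^k t^j U_j$ by $\sum_{i\ge 1}(-1)^i P_k^i = P_k*\sum_{i\ge 1}(-1)^iR_k^i$; the estimates of Proposition \ref{sect-4-prop-1} and the subsequent lemmas show the correction terms are $O(t^{\delta k - n - 2\sum_i q_i})$ uniformly on the diagonal, hence they contribute nothing to $\lim_{t\to0}\int_{\C^n}\str P(\z,\z,t)\,d\text{vol}$ once $k$ is chosen so $\delta k > n + 2\sum_i q_i$ (and the $\z$-integral converges because of the Gaussian decay in $\w_t=\z_t$ from $\E_1$). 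So the limit is controlled entirely by $\int_{\C^n}\str\bigl(\E_0(\z,\z,t)\E_1(\z,\z,t)\sum_{j=0}^k t^j U_j(\z,\z)\bigr)d\text{vol}$.

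First I would note $\E_0(\z,\z,t)=(4\pi t)^{-n}$ and $\E_1(\z,\z,t)=e^{-tg(\z,\z)}=e^{-tV(\z)}=e^{-t|\pat f|^2}$. So the relevant integral is $(4\pi t)^{-n}\sum_{j=0}^k t^j \int_{\C^n} e^{-t|\pat f|^2}\,\str U_j(\z,\z)\,d\text{vol}$. The crucial pointwise input is the supertrace computation: $U_0\equiv I$ has $\str I = 0$ (Lemma \ref{sec2:lm-0}, since the identity is a sum of monomials each of supertrace $0$ except none equals $\prod c_i\hc_i\bc_i\hbc_i$); more generally $U_j(\z,\z)$ is built from $B=L_f$ and derivatives of $V$, and by Proposition \ref{sec1:prop-1} the only terms with nonzero supertrace are those involving the full product $L_f^{2n}$, whose supertrace is $(2n)!(-1)^n 4^n|\det\pat^2 f|^2$ (for the standard metric). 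Tracking which $U_j$ can contain such a term: $U_1 = -\int_0^1 B(s(\z-\w)+\w)ds$, so $U_1(\z,\z) = -B(\z) = -L_f(\z)$, and iterating the recursion one sees $U_n(\z,\z)$ contains $\frac{(-1)^n}{n!}L_f^n$ among its terms (from the $B\cdot U_{j}$ part of the recursion, restricted to the diagonal where the transport terms drop out), while the terms with $j<n$ cannot assemble $L_f^{2n}$ and the terms with $j>n$ carry extra powers of $t$. The key identity to establish is that $\sum_{j=0}^k t^j\,\str U_j(\z,\z) = t^n\,\str\!\bigl(\tfrac{(-1)^n}{n!}L_f^n\bigr) + o(t^n)$ as $t\to0$ — wait, that is not quite right; rather I would argue that after multiplying by $(4\pi t)^{-n}$ only the $j=n$ term survives the $t\to 0$ limit, because lower $j$ give vanishing supertrace and higher $j$ give positive powers of $t$, provided the $t$-dependence of $e^{-t|\pat f|^2}$ is handled by rescaling.

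The honest way to handle $e^{-t|\pat f|^2}$ and extract the limit is the substitution $\z\mapsto \z_t$, i.e. $z_i^t = t^{\delta_M q_i}z_i$, combined with the homogeneity identities of Lemma \ref{sect-2-Lemm-1}: under this change of variables $|\pat f|^2$, $\pat^2 f$, and $L_f$ all rescale by explicit powers of $t$, $d\text{vol}$ picks up $t^{-2\delta_M\sum q_i}$, and $(4\pi t)^{-n}$ combines with these so that the integral becomes $t$-independent up to the tail corrections. Carrying this out, the $j=n$ contribution is $(4\pi t)^{-n} t^n \int_{\C^n} e^{-t|\pat f|^2}\frac{(-1)^n}{n!}(2n)!(-1)^n4^n|\det\pat^2 f|^2\,d\text{vol}$; but $|\det\pat^2 f|^2$ also rescales, and the clean statement emerges: $\frac{(2n)!\,4^n}{n!\,(4\pi)^n} = \frac{(2n)!\,4^n}{n!\,4^n\pi^n} = \frac{(2n)!}{n!\pi^n}$ — which is not $\frac{1}{\pi^n}$, so I must have mis-tracked the combinatorial factor, and the correct bookkeeping is that the $U_n$-diagonal supertrace equals $\frac{1}{n!}\str((-L_f)^n)$ is wrong; instead one should get exactly $\frac{(-1)^n}{\pi^n}|\det\pat^2 f|^2$ after all factors combine, and pinning down this constant is the main obstacle. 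Concretely, I would compute $\str U_n(\z,\z)$ directly from the recursion \eqref{sec1-equ1} restricted to the diagonal, use Proposition \ref{sec1:prop-1} to see only the $B^n$-type term contributes $\frac{(2n)!}{2^{2n}}(-1)^n 4^n|\det\pat^2 f|^2$ times the appropriate rational coefficient from iterating $\tfrac{1}{r^{j}}\int_0^r(\cdot)\tau^{j-1}d\tau$, and verify that this rational coefficient together with $(4\pi)^{-n}$ yields precisely $(-1)^n\pi^{-n}$; then apply the rescaling $\z\mapsto\z_t$ to convert $e^{-t|\pat f|^2}|\det\pat^2 f|^2 d\text{vol}$ into the claimed $t$-independent Gaussian integral, and invoke dominated convergence (justified by the estimates in Section 4 and Corollary \ref{sect-2-lemm-2.9}, which give $|\pat f|^2\ge \frac1C|\z|^2-1$ hence integrability) to pass to the limit, discarding the $j\ne n$ terms and the $P_k*R_k^i$ remainders.
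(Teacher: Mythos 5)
Your overall strategy---restrict the parametrix to the diagonal, discard the convolution remainders $P_k*R_k^i$ for $k$ large, and isolate the one coefficient $U_j$ whose supertrace survives, then rescale the resulting Gaussian integral---is the same as the paper's, but you have mis-identified the surviving coefficient, and that is exactly the source of the constant you could not make work. By Proposition \ref{sec1:prop-1} (equivalently Lemma \ref{sec2:lm-0}), $\str L_f^m=0$ for every $1\le m<2n$; the first power of $B=L_f$ with nonzero supertrace is $L_f^{2n}$, not $L_f^{n}$. Hence the ``$B^n$-type'' term inside $U_n(\z,\z)$ that you propose to track has supertrace zero, and in fact $\str U_j(\z,\z)\equiv 0$ for all $1\le j\le 2n-1$ (the paper gets this by induction on the recursion for the $U_j$). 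The first contributing coefficient is $U_{2n}$, with $\str U_{2n}(\z,\z)=\str(L_f^{2n})/(2n)!=(-1)^n4^n|\det\pat^2f|^2$ for the flat metric. With this the bookkeeping closes: $(4\pi t)^{-n}\,t^{2n}\,(-1)^n4^n|\det\pat^2f|^2=\tfrac{(-1)^n t^n}{\pi^n}|\det\pat^2f|^2$, the $4^n$ cancels, and one is left with $\tfrac{(-1)^n}{\pi^n}\,t^n\int_{\C^n}e^{-t|\pat f|^2}|\det\pat^2f|^2\,d\text{vol}$, whose $t\to0$ limit (indeed its $t$-independence, cf.\ Corollary \ref{sect-5-coro-index1}) gives the stated formula, while the terms $U_{2n+j}$, $j\ge1$, contribute only $O(t^{n+\delta})$.

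So your proposed repair---recompute the rational coefficient of the $B^n$ term in $U_n$ coming from the iterated $\tfrac{1}{r^{j}}\int_0^r(\cdot)\tau^{j-1}d\tau$ and hope it combines with $(4\pi)^{-n}$ to give $\pi^{-n}$---cannot succeed: that term contributes $0$ identically, and the mismatch $\tfrac{(2n)!}{n!\pi^n}$ versus $\tfrac{1}{\pi^n}$ you noticed is the symptom of working at order $t^{n}$ instead of $t^{2n}$ in the $U_j$ expansion. A confirming sanity check: with $j=n$ the prefactor $(4\pi t)^{-n}t^{n}$ is $O(1)$, and then $\int_{\C^n}e^{-t|\pat f|^2}|\det\pat^2f|^2\,d\text{vol}$ diverges as $t\to0$, so the power counting could not have been right; the extra $t^{n}$ supplied by $j=2n$ is precisely what is absorbed by the quasi-homogeneous rescaling and renders the Gaussian integral finite. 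Your treatment of the remainder terms and the final dominated-convergence/rescaling step is consistent with the paper and fine as far as it goes, but the core combinatorial identification must be corrected as above.
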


\begin{proof} We have the expansion of the trace of the heat kernel of $m$ forms as follows:
$$
P(\z,\w,t)=\E(\z,\w,t)\E_1(\z,\w,t)\sum_{j=0}^\infty t^j U_j(\z,\w)\in \Omega^m(\C^n_\z\times \C^n_\w,[0,T]),
$$
where 
\begin{align*}
&U_0\equiv I\\
&U_1(\z,\w)=-\int^1_0 B(s(\z-\w)+\w)ds,
\end{align*}
and $U_j$ is inductively determined by 
\begin{equation}\label{sec4:eq-1}
U_j(\z,\w)=\frac{1}{j}\{\Delta_\z U_{j-1}-B(\z)\cdot U_{j-1}(\z,\w)-\Delta_\z g\cdot U_{j-2}-2\nabla_\z g\cdot \nabla_\z U_{j-2}+(\nabla g)^2U_{j-3} \}.
\end{equation}
Set $\z=\w$, we obtain
$$
P(\z,\z,t)=(4\pi t)^{-n}e^{-t|\pat f|^2}\sum_{i=0}^\infty t^j U_j(\z,\z).
$$

Note that 
$$
B=L_f=\frac{1}{4}(f_{\mu l}+\overline{f_{\mu l}})(\hat{\bar{c}}_\mu c_l-\bar{c}_\mu \hat{c}_l)+\frac{1}{4}(f_{\mu l}-\overline{f_{\mu l}})(\hat{\bar{c}}_\mu \hat{c}_l-\bar{c}_\mu c_l).
$$
According to Lemma \ref{sec2:lm-0}, $\str B^k\neq 0$ only if $k=2nj $ for some $j\in \N$. An induction argument shows that $\str U_j\equiv 0$ for $1\le i\le 2n-1$, and $\str(U_{2n})=\str(L_f^{2n})/(2n)!$ due to Proposition \ref{sec1:prop-1}. For $j\ge 1$, by applying induction to (\ref{sec4:eq-1}), we can prove that 
$$
t^{2n+j}\Str (e^{-t|\pat f|^2} U_{2n+j}(\z,\z))\le O(t^{n+\delta}), \delta=\frac{1-3(q_M-q_m)}{2(1-q_M)}.
$$
hence 
\begin{align*}
&\lim_{t\to 0}\Str(e^{-t\Delta_f})=\lim_{t\to 0}\Str(P(\z,\z,t))\\
=&\lim_{t\to 0}(4\pi t)^{-n}\int_{\C^n}e^{-t|\pat f|^2}\str\left(\frac{ t^{2n}  L_f^{2n}}{(2n)!}\right)d\text{vol}\\
=&\lim_{t\to 0}\frac{(-1)^n}{\pi^n}\int_{\C^n}e^{-t|\pat f|^2} t^n |\det \pat^2 f|^2 d\text{vol}\\
=&\frac{(-1)^n}{\pi^n}\int_{\C^n}e^{-|\pat f|^2} |\det \pat^2 f|^2 d\text{vol}.
\end{align*}
\end{proof}

\begin{crl}\label{sect-5-coro-index1} Under the conditions of Theorem \ref{sect-3-main-theorem-1}, The Milnor number $\mu(f)$ is given by the Gaussian type integral
\begin{equation}
\mu(f)=\frac{t^n}{\pi^n}\int_{\C^{n}}\exp (-t|\pat f|^{2})|\det \partial^{2}f|^{2}d\text{vol},\forall t>0,
\end{equation}
if we take $f(z)=z^2/2$, then we obtain the well-known formula:
\begin{equation}
\pi^n=\int_{\C^n}e^{-|\z|^2}d\text{vol}.
\end{equation}
\end{crl}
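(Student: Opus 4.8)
The plan is to deduce the corollary from Theorem~\ref{index} together with the McKean--Singer formula and the Euler-characteristic lemma that precede it. First I would observe that, by the McKean--Singer formula (the Proposition above), the function $t\mapsto\str(\exp(-t\Delta_f))$ is constant on $(0,\infty)$. By the Lemma this constant equals $\e(H^{*}(\Omega^{*},D_f))=(-1)^n\mu(f)$ (letting $t\to\infty$), while by Theorem~\ref{index} it also equals $\dfrac{(-1)^n}{\pi^n}\int_{\C^n}\exp(-|\pat f|^2)\,|\det\pat^2 f|^2\,d\text{vol}$ (letting $t\to 0$). Equating the two evaluations and cancelling $(-1)^n$ gives
\[
\mu(f)=\frac{1}{\pi^n}\int_{\C^n}\exp(-|\pat f|^2)\,|\det\pat^2 f|^2\,d\text{vol},
\]
which is the claimed identity for $t=1$.

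To obtain the statement for an arbitrary $t>0$, I would apply this $t=1$ identity to the rescaled polynomial $g:=\sqrt{t}\,f$. Since $g$ differs from $f$ by the nonzero scalar $\sqrt t$, it is again non-degenerate and quasi-homogeneous with the very same weights $q_1,\dots,q_n$ (in particular it still satisfies $q_M-q_m<\tfrac13$ and the hypotheses of Theorem~\ref{sect-3-main-theorem-1}), and its Jacobian ideal equals that of $f$, so $\mu(g)=\mu(f)$. From $\pat g=\sqrt t\,\pat f$ we get $|\pat g|^2=t|\pat f|^2$ and $|\det\pat^2 g|^2=t^{\,n}|\det\pat^2 f|^2$; substituting these into the $t=1$ formula written for $g$ produces exactly
\[
\mu(f)=\frac{t^{\,n}}{\pi^n}\int_{\C^n}\exp(-t|\pat f|^2)\,|\det\pat^2 f|^2\,d\text{vol},\qquad\forall\,t>0.
\]

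For the special case I would take $f(\z)=\tfrac12\sum_{i=1}^n z_i^2$, which is non-degenerate and quasi-homogeneous with all weights equal to $1/2$ (so $q_M-q_m=0$) and has Milnor number $\mu(f)=\dim_\C\C[z_1,\dots,z_n]/(z_1,\dots,z_n)=1$. Here $\pat_i f=z_i$, hence $|\pat f|^2=|\z|^2$, and $\pat^2 f=I$, hence $|\det\pat^2 f|^2=1$; the identity at $t=1$ then collapses to $\pi^n=\int_{\C^n}e^{-|\z|^2}\,d\text{vol}$ (which is in any case the $n$-th power of the classical Gaussian integral $\int_\R e^{-x^2}dx=\sqrt\pi$).

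All the analytic input is already in place --- the trace-class property of $e^{-t\Delta_f}$ from Section~6, the small-time computation of Theorem~\ref{index}, and the McKean--Singer and Euler-characteristic identities --- so I do not expect a substantial obstacle here; the only points needing a moment's care are verifying that passing from $f$ to $\sqrt t\,f$ preserves non-degeneracy, quasi-homogeneity with unchanged weights, and the Milnor number (so that the rescaling step is legitimate for every $t>0$), and checking that the quadratic example genuinely meets the standing assumption $q_M-q_m<\tfrac13$.
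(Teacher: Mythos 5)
Your proposal is correct and follows essentially the route the paper intends: the corollary is obtained by combining the McKean--Singer constancy of $\str(e^{-t\Delta_f})$ with the $t\to\infty$ limit $(-1)^n\mu(f)$ and the $t\to 0$ limit computed in Theorem~\ref{index}. Your extra step of applying the $t=1$ identity to the rescaled polynomial $\sqrt{t}\,f$ (which keeps the weights, the non-degeneracy, and the Milnor number unchanged) is a legitimate and welcome way to justify the ``$\forall t>0$'' form of the statement, which the paper leaves implicit; the quadratic special case is handled correctly as well.
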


\begin{lm}\label{sect-6-lemm-gaus-1}
Let $\alpha,\beta$ be multiple indices. Then the following integration has the property:
\begin{equation}
\int_{\C^n} \exp (-t|\pat f|^{2}) \z^\alpha\bar{\z}^\beta d\text{vol}\neq 0\;\text{iff}\;\alpha=\beta.
\end{equation}
\end{lm}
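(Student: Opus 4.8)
The plan is to prove the two implications separately, the forward one ($\alpha=\beta\Rightarrow$ nonvanishing) by a positivity argument and the reverse one by symmetry. First I would record that, by Corollary~\ref{sect-2-lemm-2.9}, $|\pat f|^2\ge\tfrac1C|\z|^2-1$, so $\exp(-t|\pat f|^2)$ decays faster than any Gaussian and the integral converges absolutely for every $t>0$ and all $\alpha,\beta$. If $\beta=\alpha$ the integrand is $\exp(-t|\pat f|^2)|\z^\alpha|^2\ge 0$ and is strictly positive on the nonempty open set where $z_j\ne 0$ for each $j$ with $\alpha_j>0$; hence the integral is strictly positive, in particular nonzero. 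This settles the direction $\alpha=\beta\Rightarrow$ nonvanishing.

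For the converse I would exploit symmetries of the weight $\exp(-t|\pat f|^2)$. The quasi-homogeneous circle action comes first: for $\theta\in\R$ the unitary substitution $z_j\mapsto e^{\sqrt{-1}q_j\theta}z_j$ preserves $d\text{vol}$, and since $\pat_if$ is quasi-homogeneous of weighted degree $1-q_i$, its value at the rotated point equals $e^{\sqrt{-1}(1-q_i)\theta}(\pat_if)(\z)$, so $|\pat_if|^2$, and hence $|\pat f|^2$ and the weight, are invariant; meanwhile $\z^\alpha\bar{\z}^\beta\mapsto e^{\sqrt{-1}(\q\cdot\alpha-\q\cdot\beta)\theta}\z^\alpha\bar{\z}^\beta$. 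Substituting in the integral and varying $\theta$ forces it to vanish unless $\q\cdot\alpha=\q\cdot\beta$. Next I would invoke the diagonal symmetry group $G_f=\{\lambda\in(\C^*)^n:\ f(\lambda\cdot\z)=f(\z)\}$, which is finite (hence unitary) for a non-degenerate quasi-homogeneous $f$: it fixes $d\text{vol}$ and the weight and sends $\z^\alpha\bar{\z}^\beta\mapsto\lambda^{\alpha-\beta}\z^\alpha\bar{\z}^\beta$, so the integral also vanishes unless $\lambda^{\alpha-\beta}=1$ for all $\lambda\in G_f$.

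The hard part — and the main obstacle — is passing from these necessary conditions to $\alpha=\beta$, and this step is in fact false for a general non-degenerate quasi-homogeneous $f$: for $f=z_1^3+z_1z_2^2$ (with $q_1=q_2=\tfrac13$ and an isolated singularity) the vector $\gamma=(2,-2)$ has $\q\cdot\gamma=0$ and $\lambda^\gamma=\lambda_1^2\lambda_2^{-2}=1$ for all $\lambda\in G_f=\{\lambda:\lambda_1^3=\lambda_1\lambda_2^2=1\}$, yet $\gamma\ne 0$, and reducing the angular part of the integral to a modified Bessel function shows $\int_{\C^2}\exp(-t|\pat f|^2)z_1^2\bar{z}_2^2\,d\text{vol}\ne 0$. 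What is really needed is invariance of $|\pat f|^2$ under the full coordinatewise torus $(S^1)^n$, equivalently that each $\pat_if$ be a single monomial; for a non-degenerate quasi-homogeneous polynomial this holds precisely when $f$ is of Fermat type $f=\sum_ic_iz_i^{a_i}$, in particular in the one-variable case $f=z^{r+1}$ relevant to the $A_r$ singularities treated below. Under that hypothesis the $i$-th circle factor of $(S^1)^n$ sends $z_i^{\alpha_i}\bar{z}_i^{\beta_i}\mapsto e^{\sqrt{-1}(\alpha_i-\beta_i)\theta_i}z_i^{\alpha_i}\bar{z}_i^{\beta_i}$ independently, forcing $\alpha_i=\beta_i$ for each $i$ and hence $\alpha=\beta$; together with the positivity argument this proves the lemma. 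So I would establish the lemma under the Fermat hypothesis — which covers every application in this paper — and retain only the weaker conclusion $\q\cdot\alpha=\q\cdot\beta$ in the general case.
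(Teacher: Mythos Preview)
The paper states this lemma without proof, so there is no argument on the paper's side to compare against. Your analysis is in fact sharper than the paper's: the positivity argument for the direction $\alpha=\beta\Rightarrow$ nonvanishing is correct, the symmetry argument via the quasi-homogeneous circle action yielding the necessary condition $\q\cdot\alpha=\q\cdot\beta$ is correct, and your counterexample $f=z_1^3+z_1z_2^2$ with $\alpha=(2,0)$, $\beta=(0,2)$ genuinely shows that the lemma is \emph{false} as stated for general non-degenerate quasi-homogeneous $f$. Concretely, $|\pat f|^2=9|z_1|^4+|z_2|^4+4|z_1|^2|z_2|^2+6\,\re(z_1^2\bar z_2^2)$ depends on the angles only through $\cos 2(\theta_1-\theta_2)$; the angular integral reduces to $-4\pi^2 I_1(6t\,r_1^2r_2^2)<0$ for $r_1,r_2>0$, so the full integral is a strictly negative real number, hence nonzero although $\alpha\ne\beta$.

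Your diagnosis that the ``only if'' direction requires invariance of $|\pat f|^2$ under the full coordinate torus $(S^1)^n$, equivalently that $f$ be of Fermat type, is correct, and your proof under that hypothesis is complete. It is worth recording that the sole use of this lemma in the paper is in the proof of Lemma~\ref{sect-6-lemm-expa-1}, where one only needs the surviving monomials to produce rational powers of $t$ under the rescaling $z_i\mapsto t^{q_i/2}z_i$. Since your argument already forces $\q\cdot\alpha=\q\cdot\beta$ whenever the integral is nonzero, the rescaling still gives the factor $t^{-(\alpha+\beta)\cdot\q/2-|\q|}=t^{-\alpha\cdot\q-|\q|}$, and the fractional-power expansion goes through with the same form of exponents. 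So the application can be salvaged from the weaker statement you proved, but the lemma as written in the paper is incorrect, and you have identified this accurately.
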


Now we begin the definition of the torsion invariants. 

At first, we have the expansion formula for the $k$-form heat kernel:
\begin{equation}
P^{(k)}(\z,\w,t)=\E_0\E_1(\sum_{i=0}^\infty t^i U^{(k)}_i(\z,\w))=\frac{1}{(4\pi t)^n}e^{-\frac{|\z-\w|^2}{4}}e^{-t g(\z,\w)}(\sum_{k=0}^\infty t^i U^{(k)}_i(\z,\w)).
\end{equation}

\begin{lm}\label{sect-6-lemm-expa-1}
\begin{align}
&\Tr(e^{-t\Delta^k_f})=\int_{\C^n}\tr (P^{(k)}(\z,\z,t)) d\z\nonumber\\
=&\frac{1}{(4\pi t)^n} \int_{\C^n} e^{-t |\pat f|^2(\z)}\left(\sum_{k=0}^\infty t^i \tr(U^{(k)}_i(\z,\z))\right)d\z=\sum_{i=1}^\infty a^k_i t^{\alpha^k_i},
\end{align}
where $\alpha^k_i\in \Q$ satisfying:
\begin{enumerate}
\item $\alpha^k_1=-(n+2|\q|)<\alpha^k_2<\cdots<\alpha^k_{i_{k0}}=0<\alpha^k_{i_{k0}+1}<\cdots$.
\item $\alpha^k_i$ has the form $m+\alpha\cdot q-2|\q|-n, m\in \N$ and  $\alpha>0$ is a multiple index. 
\end{enumerate}
\end{lm}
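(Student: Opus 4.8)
The plan is to push everything through the explicit parametrix and then integrate. Putting $\z=\w$ in $P^{(k)}(\z,\w,t)=\E_0\E_1\sum_{i\ge0}t^iU^{(k)}_i(\z,\w)$ and using $g(\z,\z)=\int_0^1 V(\z)\,d\tau=V(\z)$ with $V=|\pat f|^2$ gives the pointwise formula $P^{(k)}(\z,\z,t)=(4\pi t)^{-n}e^{-tV(\z)}\sum_{i\ge0}t^iU^{(k)}_i(\z,\z)$. Because the true heat kernel is $P=P_k+\sum_{l\ge1}(-1)^lP^l_k$, and by the estimates proving Theorem~\ref{sect-3-main-theorem-1} the corrections $P^l_k$ carry only powers $t^{l(\delta k-n)+\cdots}$ which exceed any prescribed order once $k$ is large, for the expansion to that order one may replace $\Tr(e^{-t\Delta^k_f})$ by $\frac1{(4\pi t)^n}\int_{\C^n}e^{-tV(\z)}\bigl(\sum_{i=0}^{k}t^i\tr U^{(k)}_i(\z,\z)\bigr)\,d\z$; the full series then results by letting the order go to infinity and bounding the tail with the $U_i$–estimates of Section~4 (here the standing hypothesis $q_M-q_m<\tfrac13$ enters, exactly as in Theorem~\ref{sect-3-main-theorem-1}).

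Next I would record that $\tr U^{(k)}_i(\z,\z)$ is a polynomial in $(\z,\bar\z)$ whose monomials have controlled weighted degree. This is an induction on $i$ along the recursion~(\ref{sec4:eq-1}): $U_0\equiv I$; $U_1(\z,\z)=-B(\z)$, with $B=L_f$ having entries $\pm f_{\mu l},\pm\overline{f_{\mu l}}$ times Clifford generators; and $U_j(\z,\z)$ is assembled from $\Delta_\z U_{j-1}$, $BU_{j-1}$, $(\Delta_\z g)U_{j-2}$, $\nabla_\z g\cdot\nabla_\z U_{j-2}$ and $(\nabla g)^2U_{j-3}$, all polynomial since $g(\z,\w)=\int_0^1 V(\tau(\z-\w)+\w)\,d\tau$ is polynomial in $(\z,\w,\bar\z,\bar\w)$. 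Quasi-homogeneity of $f$ gives every monomial of $\pat^\alpha_\z f$ weighted degree $1-\alpha\cdot\q$; propagating this shows each monomial $\z^a\bar\z^b$ of $\tr U^{(k)}_i(\z,\z)$ has $a\cdot\q,\ b\cdot\q\ge0$ and at most linear in $i$. This is the bookkeeping that forces the eventual exponents to be rationals of the asserted shape, correlated with $i$.

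The core step applies Lemma~\ref{sect-6-lemm-gaus-1}: $e^{-tV}$ annihilates every monomial $\z^a\bar\z^b$ with $a\neq b$, so $\int_{\C^n}e^{-tV(\z)}\tr U^{(k)}_i(\z,\z)\,d\z=\sum_\alpha u^{(k)}_{i,\alpha}\,J_\alpha(t)$, summed over the diagonal coefficients only, where $J_\alpha(t)=\int_{\C^n}e^{-tV(\z)}|\z^\alpha|^2\,d\z$. The crux is the small-$t$ expansion of $J_\alpha$. Since $V=\sum_i|\pat_i f|^2$ is only piecewise weighted homogeneous (the summand $|\pat_i f|^2$ has weighted degree $2(1-q_i)$), a single dilation $z_i\mapsto t^{-\delta_Mq_i}z_i$ does not normalise it; the plan is to dilate, split $tV$ into a top-weighted coercive part plus a remainder carrying strictly positive powers of $t$, expand $\exp(-\text{remainder})$ and integrate term by term — equivalently, to pass to the Mellin transform and study the meromorphic continuation of $s\mapsto\int_{\C^n}|\z^\alpha|^2V(\z)^{-s}\,d\z$ (convergent for $\re s$ large since $V\gtrsim|\z|^2$ by Corollary~\ref{sect-2-lemm-2.9}), whose poles — located by a weighted/toric resolution of the behaviour of $V$ at infinity — are the exponents. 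Nondegeneracy of $f$ should force these poles to be simple, so that $J_\alpha(t)=\sum_j b_{\alpha,j}t^{\gamma_{\alpha,j}}$ with no $\log t$ and with each $\gamma_{\alpha,j}$ of the asserted type.

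Finally one multiplies by $(4\pi t)^{-n}t^i$, sums over $i$ and $\alpha$ and reindexes: this gives $\Tr(e^{-t\Delta^k_f})=\sum_ia^k_it^{\alpha^k_i}$ with $\alpha^k_i=m+\alpha\cdot\q-2|\q|-n$, $m\in\N$; the weighted-degree bound of step two ensures the $t^i$-factor always outweighs the extra negativity of $J_\alpha$, so the least exponent can only come from $i=0,\alpha=0$, i.e. from the constant $\tr U^{(k)}_0$, whence $\alpha^k_1=-(n+2|\q|)$, while the exponent $0$ is always present (e.g. $m=n$, $\alpha=(2,\dots,2)$) and its coefficients reproduce, after alternating summation over $k$, the index $(-1)^n\mu(f)$ of Theorem~\ref{index} — here Proposition~\ref{sec1:prop-1} and Lemma~\ref{sec2:lm-0} identify which $\tr U_i(\z,\z)$ survive. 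The main obstacle is step three: making the expansion of $J_\alpha(t)$ precise and proving its exponents are exactly $\{m+\alpha\cdot\q-2|\q|-n\}$ with no logarithmic corrections, which is where the quasi-homogeneity and nondegeneracy of $f$ must be used in an essential way.
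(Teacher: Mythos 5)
Your overall architecture coincides with the paper's: restrict the parametrix to the diagonal, invoke Lemma \ref{sect-6-lemm-gaus-1} to discard the off-diagonal monomials, and read the exponents off the model integrals $J_\alpha(t)=\int_{\C^n}e^{-tV}|\z^\alpha|^2\,d\text{vol}$, with the convergence of the tail handled by the estimates behind Theorem \ref{sect-3-main-theorem-1}. But the actual content of the lemma is exactly your ``step three,'' and you leave it open: you label it the main obstacle, and the route you lean toward (Mellin transform of $V$, meromorphic continuation with poles located by a weighted/toric resolution, and a hope that nondegeneracy makes the poles simple so that no $\log t$ appears) is neither carried out nor easy to carry out --- ruling out logarithms and pinning the exponents to the arithmetic progression $m+\alpha\cdot\q-2|\q|-n$ is precisely what must be proved, and your sketch supplies no argument for it. So, as written, the proposal does not establish the statement.

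The missing idea is an elementary identity that makes the whole pole analysis unnecessary, and it also corrects your diagnosis that ``a single dilation does not normalise $V$.'' Write $t|\pat_i f|^2=t^{q_i}\cdot t^{1-q_i}|\pat_i f|^2$ for each $i$; since $\pat_i f$ is quasi-homogeneous of weight $1-q_i$, the anisotropic dilation $z_i\mapsto t^{q_i/2}z_i$ gives the exact factorization $e^{-tV(\z)}=e^{-V(\z_t)}\,e^{\sum_i(1-t^{q_i})|\pat_i f|^2(\z_t)}$. Hence $J_\alpha(t)$ is a fixed negative power of $t$ (from the monomial and the Jacobian; together with the prefactor $(4\pi t)^{-n}$ and the term $U_0\equiv I$, $\alpha=0$, this accounts for the leading exponent $\alpha^k_1$) times $\int_{\C^n}e^{-V(\bu)}\exp\bigl(\sum_i(1-t^{q_i})|\pat_i f|^2(\bu)\bigr)|\bu^\alpha|^2\,d\text{vol}$. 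Expanding the exponential and then each $(1-t^{q_i})^j$ binomially --- legitimate because the full integrand is $e^{-\sum_i t^{q_i}|\pat_i f|^2(\bu)}|\bu^\alpha|^2$, integrable by Corollary \ref{sect-2-lemm-2.9} --- yields a generalized power series whose exponents are nonnegative integer combinations of $1$ and the weights $q_i$, shifted by the fixed negative constant; in particular the exponents are rational, of the asserted form, and no $\log t$ can occur. Note also that in this splitting the remainder terms carry factors $(1-t^{q_i})$, which are not ``strictly positive powers of $t$'' as you describe; the positive powers only emerge after the binomial expansion. This rescaling-and-splitting computation is in fact the first alternative you mention but do not execute; once it replaces the Mellin detour, the rest of your outline (the polynomial structure of $\tr U^{(k)}_i(\z,\z)$ via the recursion, and the identification of the leading and zeroth-order terms) goes through as in the paper.
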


\begin{proof}It was shown by Theorem \ref{sect-3-main-theorem-1} that the trace of the heat kernel has convergent series w. r. t. $t$. At first, we have $\alpha^k_1\ge -2n$ by comparising with $n$-dimensional harmonic oscillator in Section 5. It suffices to prove that the power indices $\alpha^k_i\in \Q$ and $\alpha^k_1=-(n+2|\q|)$. By Lemma \ref{sect-6-lemm-gaus-1}, we only need to consider the expansion of the integration:
$$
J(t)=\int_{\C^n} \exp (-t|\pat f|^{2}) \z^\alpha\bar{\z}^\alpha d\text{vol}_{\z}.
$$
We have
\begin{align*}
J(t)&=\int_{\C^n}e^{-\sum_i t^{(1-q_i)}|\pat_i f|^2} e^{\sum_i (1-t^{q_i})t^{1-q_i}|\pat_i f|^2}\z^\alpha\bar{\z}^\alpha d\text{vol}_{\z}\\
&=\int_{\C^n}e^{-|\pat f |^2(\z_t)} \sum_{k=0}^\infty \frac{1}{k!} \left( \sum_i (1-t^{q_i}) |\pat_i f|^2(\z_t) \right)^k \z_t^\alpha\bar{\z}_t^\alpha d\text{vol}_{\z_t} t^{-(2\alpha\cdot \q+2|\q|)},
\end{align*}
where $\z_t=(z_1^t,\cdots,z^t_n),z_i^t=t^{q_i/2}$. Now it is easy to see that the conclusion hold. 
\end{proof}

The following conclusion is obvious. 
\begin{lm}
$\int^\infty_\frac{1}{2} \tr(e^{-t\Delta^k_f}-\Pi)t^{s-1}dt$ is a holomorphic function of $s\in \C$. 
\end{lm}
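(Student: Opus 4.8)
The plan is to show that the tail integral $\int_{1/2}^\infty \tr(e^{-t\Delta_f^k}-\Pi)t^{s-1}\,dt$ defines an entire function of $s$ by a standard Mellin-transform argument: the only sources of non-holomorphy in such integrals are the $t\to 0$ and $t\to\infty$ endpoints, and here the small-$t$ endpoint has been excised, so only the decay at $t\to\infty$ matters. First I would recall from Theorem \ref{sect-trac-main-1} (together with Lemma \ref{lm:trac-equi} and the harmonic-oscillator comparison) that $\Tr e^{-t\Delta_f^k}<\infty$ for every $t>0$; in fact the comparison $\lambda_n(\Delta_f^k)\ge (1-\epsilon)\lambda_n(\Delta^0_{C,n})-(C_\epsilon+C_R)$ from the proof of Lemma \ref{lm:trac-equi} gives a bound $\Tr e^{-t\Delta_f^k}\le C e^{-ct}$ for $t\ge 1/2$, valid with some constants $c>0$, $C>0$ depending only on $k$, $f$ and the fixed lower cutoff, once the finitely many eigenvalues below a threshold are absorbed. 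The key point is that $\Delta_f^k$ has a spectral gap above its (finite-dimensional) kernel: since $\Pi$ is the orthogonal projection onto $\Hc^k$, the operator $e^{-t\Delta_f^k}-\Pi$ restricted to $(\Hc^k)^\perp$ has all eigenvalues $e^{-t\lambda}$ with $\lambda\ge\lambda_{i_{k0}+1}>0$, so $\|e^{-t\Delta_f^k}-\Pi\|_{\mathrm{Tr}}\le e^{-t\lambda_{\min}/2}\,\Tr e^{-\Delta_f^k/2}$ for $t\ge 1$, say, and is uniformly bounded on $t\in[1/2,1]$.

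Next I would carry out the actual holomorphy estimate. Fix a compact set $K\subset\C$ and write $\sigma_{\max}=\max_{s\in K}\re(s)$. For $t\ge 1/2$ we have $|t^{s-1}|=t^{\re(s)-1}\le \max(t^{\sigma_{\max}-1}, (1/2)^{\sigma_{\min}-1})$, which is dominated by $C_K\, t^{\sigma_{\max}}$ for $t\ge 1/2$. Combining with the exponential trace bound,
\begin{equation}
\int_{1/2}^\infty |\Tr(e^{-t\Delta^k_f}-\Pi)|\,|t^{s-1}|\,dt\le C_K'\int_{1/2}^\infty t^{\sigma_{\max}}e^{-ct}\,dt<\infty,
\end{equation}
uniformly for $s\in K$. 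Hence the integral converges locally uniformly in $s$. Since for each fixed $t>0$ the integrand $t\mapsto \Tr(e^{-t\Delta^k_f}-\Pi)t^{s-1}$ is holomorphic in $s$ (it is $t^{s-1}$ times a $t$-dependent constant), Morera's theorem together with Fubini — applied to $\oint_\gamma\int_{1/2}^\infty(\cdots)\,dt\,ds$ over any closed contour $\gamma\subset K$, the interchange being justified by the absolute bound just established — shows that $\int_{1/2}^\infty \Tr(e^{-t\Delta^k_f}-\Pi)t^{s-1}\,dt$ is holomorphic on all of $\C$.

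The only mild subtlety, and the step I would be most careful about, is pinning down the exponential decay $\Tr e^{-t\Delta_f^k}\le Ce^{-ct}$ as $t\to\infty$ rather than mere finiteness: one must note that after subtracting $\Pi$ the relevant eigenvalues are bounded below by the first positive eigenvalue $\lambda_{i_{k0}+1}$, and that $\Tr e^{-t\Delta_f^k}$ being finite for one value $t_0>0$ forces $\sum_n e^{-t\lambda_n}\le e^{-(t-t_0)\lambda_{\min}^+}\sum_n e^{-t_0\lambda_n}$ for $t\ge t_0$, where $\lambda_{\min}^+$ is the smallest eigenvalue exceeding $0$ (or, if one prefers, exceeding any fixed positive constant, absorbing the finitely many smaller ones into the constant $C$). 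This is immediate from the discreteness of the spectrum (Theorem \ref{thm:intro-1}) and positivity of $\Pi$; no new analysis is needed. Everything else is the routine Mellin-transform holomorphy argument, so the lemma follows.
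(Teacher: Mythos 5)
Your argument is correct and is exactly the standard justification the paper leaves implicit (the lemma is stated there without proof, introduced as ``obvious''): discreteness of the spectrum and finite-dimensionality of $\Hc^k$ give a positive smallest nonzero eigenvalue, hence $|\Tr(e^{-t\Delta^k_f}-\Pi)|\le Ce^{-ct}$ for $t\ge \tfrac12$, and the locally uniform convergence of the integral together with Morera/Fubini yields holomorphy on all of $\C$. The only slips are cosmetic: the unqualified bound $\Tr e^{-t\Delta^k_f}\le Ce^{-ct}$ in your opening paragraph fails when $\Hc^k\neq 0$ (you repair it immediately by subtracting $\Pi$), and $\lambda_{i_{k0}+1}$ is not the paper's notation for the first positive eigenvalue (the $i_{k0}$ there indexes exponents in the small-$t$ expansion, not eigenvalues), but neither point affects the proof.
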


On the other hand, for any $0<\epsilon<\frac{1}{2}$ we have the computation:
\begin{align*}
&\int^\frac{1}{2}_\epsilon \tr(e^{-t\Delta_f^k}-\Pi)t^{s-1}dt=\sum^{\infty}_{i\neq i_0}\frac{a^k_i}{\alpha^k_i+s}((\frac{1}{2})^{\alpha^k_i+s}-\epsilon^{\alpha^k_i+s}).
\end{align*}
For any $1\le i\le i_{k0}-1$ and $\re(s)<-\alpha^k_{i_{k0}-1}$,  the term $\epsilon^{\alpha_i+s}\to \infty$ as $\epsilon\to 0$. This is called the ultraviolet disvergence phenomenon. The zeta function $\Theta_f^i(s)$ is analytical in the domain $\re(s)>-\min_{k}(\alpha^k_1)$. To obtain the analytical continuation, instead we consider the renormalized zeta function of scale $[\epsilon,L]$: 
\begin{equation}
\Theta^{i}_{f}(s;\epsilon,L):=\frac{1}{2\Gamma(s)}\left(\int^L_\epsilon \str(N^i (e^{-t\Delta_{\tau,\bu}}-\Pi))t^{s-1}dt+\sum_{k=0}^{2n}(-1)^k k^i\sum^{i_{k0}-1}_{i=1}\frac{a^k_i}{\alpha^k_i+s}\epsilon^{\alpha^k_i+s}\right).
\end{equation}
We denote the first term by $\Theta^i_{f,1}(s;\epsilon,L)$ and the second term by $\Theta^i_{f,c}(s;\epsilon,L)$, i.e., the cancellation factor of $\Theta^i_{f,1}(s;\epsilon,L)$. 

The renormalized zeta function is defined as 
\begin{equation}
\Theta^{R,i}_{f}(s):=\lim_{L\to \infty}\lim_{\epsilon\to 0}\Theta^{i}_{f}(s;\epsilon,L).
\end{equation}
Now the function $\Theta^{R,i}_{f}(s)$ is a meromorphic function on $\C$ with simple pole $\alpha^k_i$ for $1\le i< i_{k0}$. Therefore $\Theta^{R,i}_{f}(s)$ is analytical at $s=0$, we can define the torsion invariants.

\begin{df}For $i\in\N$, the $i$-th torsion invariants $T^i(f)$ of the singularity $f$ is defined as 
\begin{equation}
\log T^i(f)=-(\Theta^{R,i}_f)'(0).
\end{equation}
\end{df}

\begin{prop}Let $(\C^{n_1},f_1(\z))$ and $(\C^{n_2},f_2(\w))$ be two non-degenerate quasihomogeneous polynomial which satisfying the condition $q_M-q_m<1/3$ for both $\z$ and $\w$, then we have the sum of the singularity $(\C^{n_1+n_2}, f_1(\z)+f_2(\w))$ and the identity of torsions
\begin{equation}
\log T^2(f_1\oplus f_2)=(-1)^{n_1}\mu(f_1)\log T^2(f_2)+(-1)^{n_2}\mu(f_2)\log T^2(f_1)  
\end{equation}
\end{prop}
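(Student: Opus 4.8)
The plan is to deduce the formula from the Thom--Sebastiani decomposition of the operators, reducing it to one pointwise-in-$t$ identity between super-traces and then transporting that identity through the Mellin transform. Write $f=f_1\oplus f_2$, $n=n_1+n_2$, $D_g=\bpat_g+\bpat_g^\dag$, and let $\Pi_g$, $N_g$ be the harmonic projection and number operator attached to $\Delta_g$. First I would set up the factorization: under the isometry $L^2\Lambda^*(\C^{n})\cong L^2\Lambda^*(\C^{n_1})\otimes L^2\Lambda^*(\C^{n_2})$ (completed tensor product) one has $\bpat_f=\bpat_{f_1}\otimes 1+1\otimes\bpat_{f_2}$, hence $D_f=D_{f_1}\otimes 1+1\otimes D_{f_2}$; since $D_{f_1}$ and $D_{f_2}$ are odd, the cross terms in $D_f^2$ cancel and
\begin{equation*}
\Delta_f=\Delta_{f_1}\otimes 1+1\otimes\Delta_{f_2},\qquad N=N_1\otimes 1+1\otimes N_2 .
\end{equation*}
Because each $\Delta_{f_i}$ has purely discrete spectrum (Theorem \ref{thm:intro-1}), tensor products of eigenforms diagonalize $\Delta_f$, so $e^{-t\Delta_f}=e^{-t\Delta_{f_1}}\otimes e^{-t\Delta_{f_2}}$ and $\Pi=\Pi_1\otimes\Pi_2$ (the latter since $\ker\Delta_f=\ker\Delta_{f_1}\otimes\ker\Delta_{f_2}$, both summands being nonnegative). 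As $f_1\oplus f_2$ is again non-degenerate and quasi-homogeneous (its weight set is the union of those of $f_1$ and $f_2$), $e^{-t\Delta_f}$ is of trace class by Theorem \ref{sect-trac-main-1}, with $\Tr e^{-t\Delta_f}=\Tr e^{-t\Delta_{f_1}}\cdot\Tr e^{-t\Delta_{f_2}}$; in particular $\Theta^2_{f_1\oplus f_2}$ is defined for $\re(s)>2n$.

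Next I would record four elementary facts about a single non-degenerate quasi-homogeneous $g$ on $\C^m$, to be applied to $g=f_1$ and $g=f_2$. (i) By the Hodge--de Rham theorem of \cite{Fa}, $\Hc^*_g$ is concentrated in degree $m$ with dimension $\mu(g)$, so $\Str\Pi_g=(-1)^m\mu(g)$ and $\Str(N_g\Pi_g)=(-1)^m m\,\mu(g)$. (ii) $\Str(e^{-t\Delta_g}-\Pi_g)\equiv 0$, from the McKean--Singer identity $\tfrac{d}{dt}\Str e^{-t\Delta_g}=0$ together with $\lim_{t\to\infty}\Str e^{-t\Delta_g}=\e(H^*(\Omega^*,D_g))=(-1)^m\mu(g)=\Str\Pi_g$. (iii) $\Str\big(N_g(e^{-t\Delta_g}-\Pi_g)\big)\equiv 0$ for every $t>0$: this is sharper than Theorem \ref{sec4:thm-1}, but its proof already yields it, since that proof shows $Q_g+Q_{-g}=0$ pointwise in $t$, while the linear automorphism $I_\xi$ with $g\circ I_\xi=-g$ conjugates $\Delta_{-g}$ to $\Delta_g$ degree by degree, forcing $Q_g=Q_{-g}$ pointwise, hence $Q_g\equiv 0$. (iv) $\Str(A\otimes B)=\Str(A)\Str(B)$ for degree-preserving $A,B$, immediate from $\Lambda^k(\C^n)=\bigoplus_{k_1+k_2=k}\Lambda^{k_1}(\C^{n_1})\otimes\Lambda^{k_2}(\C^{n_2})$ and $(-1)^k=(-1)^{k_1}(-1)^{k_2}$.

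Then I would combine these. With $H_i=e^{-t\Delta_{f_i}}$, split $e^{-t\Delta_f}-\Pi=(H_1-\Pi_1)\otimes H_2+\Pi_1\otimes(H_2-\Pi_2)$ and expand $N^2=N_1^2\otimes1+2N_1\otimes N_2+1\otimes N_2^2$; applying (iv), every resulting term contains a factor $\Str(H_1-\Pi_1)$, $\Str(N_1(H_1-\Pi_1))$, $\Str(H_2-\Pi_2)$ or $\Str(N_2(H_2-\Pi_2))$ --- all zero by (ii)--(iii) --- except the two collecting $\Str(\Pi_1)=(-1)^{n_1}\mu(f_1)$ and $\Str(H_2)=(-1)^{n_2}\mu(f_2)$, leaving, for every $t>0$,
\begin{multline*}
\Str\big(N^2(e^{-t\Delta_f}-\Pi)\big)\\=(-1)^{n_2}\mu(f_2)\,\Str\big(N_1^2(e^{-t\Delta_{f_1}}-\Pi_1)\big)\\+(-1)^{n_1}\mu(f_1)\,\Str\big(N_2^2(e^{-t\Delta_{f_2}}-\Pi_2)\big).
\end{multline*}
In particular the short-time expansion of the left side is inherited from those of the two factors (Lemma \ref{sect-6-lemm-expa-1}), so the renormalization of Section 7 applies to $f_1\oplus f_2$ with no new hypothesis on its weights. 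Multiplying by $\tfrac{1}{2\Gamma(s)}t^{s-1}$ and integrating over $(0,\infty)$, legitimate for $\re(s)>2n$ where the three integrals converge absolutely, gives $\Theta^2_{f_1\oplus f_2}(s)=(-1)^{n_2}\mu(f_2)\Theta^2_{f_1}(s)+(-1)^{n_1}\mu(f_1)\Theta^2_{f_2}(s)$ there. Each renormalized zeta function coincides with the convergent integral on $\{\re s>2n\}$ and is meromorphic on $\C$, so the identity persists in a neighbourhood of $s=0$ by the identity theorem; differentiating at $s=0$ and using $\log T^2(\cdot)=-(\Theta^{R,2}_{\cdot})'(0)$ yields precisely
\begin{equation*}
\log T^2(f_1\oplus f_2)=(-1)^{n_1}\mu(f_1)\log T^2(f_2)+(-1)^{n_2}\mu(f_2)\log T^2(f_1).
\end{equation*}

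The step I expect to be the main obstacle is the first one: making the operator-theoretic tensor decomposition fully rigorous --- identifying $L^2\Lambda^*(\C^{n_1+n_2})$ with the completed tensor product so that $\Delta_{f_1\oplus f_2}$ is literally $\Delta_{f_1}\otimes 1+1\otimes\Delta_{f_2}$ on the correct self-adjoint domain (invoking essential self-adjointness and spectral discreteness from \cite{Fa}), and deducing the matching factorizations of the heat semigroup, the harmonic projection, and the trace-class property. Once that is in place, the remaining steps are routine bookkeeping with Clifford super-traces together with a standard Mellin-transform and analytic-continuation argument.
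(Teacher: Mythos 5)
Your proof is correct and follows essentially the same route as the paper's: the Thom--Sebastiani tensor factorization of the heat semigroup and harmonic projection, the splitting $e^{-t\Delta_f}-\Pi=(H_1-\Pi_1)\otimes H_2+\Pi_1\otimes(H_2-\Pi_2)$, the expansion of $N^2=N_1^2\otimes 1+2N_1\otimes N_2+1\otimes N_2^2$, cancellation of the cross and residual terms via McKean--Singer and the vanishing of $\Str\bigl(N(e^{-t\Delta}-\Pi)\bigr)$, and then passage through the zeta functions to the derivative at $s=0$. Your only additions are making the operator-theoretic tensor setup explicit and observing that the pointwise-in-$t$ vanishing (which the proof of Theorem \ref{sec4:thm-1} indeed yields, though the theorem is stated only for $\Theta^1_f$) is what the argument actually needs --- a point the paper uses implicitly.
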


\begin{proof} We want to prove
\begin{equation}\label{sect6-zeta-func-1}
\Theta^2_{f_1\oplus f_2}(s)=(-1)^{n_1}\mu(f_1)\Theta^2_{f_2}(s)+(-1)^{n_2}\mu(f_2)\Theta^2_{f_1}(s),
\end{equation}
and the conclusion of this proposition follows from this fact. 

Let $\Pi^{p}=\sum_{p_1+p_2=p}\Pi^{p_1}_1\otimes \Pi^{p_2}_2$ be the projection from $L^2\Lambda^p(\C^{n_1}\times \C^{n_2})$ to the space $\Hc^p$ of $p$-harmonic forms on $\C^{n_1}\times \C^{n_2}$. We have
\begin{align*}
&\sum_{p_1,p_2}(-1)^{p_1+p_2}(p_1+p_2)^2 \Tr(e^{-t \Delta^{p_1}_{f_1}\otimes \Delta^{p_2}_{f_2}}-\Pi^{p_1}\otimes \Pi^{p_2})\\
=&\sum_{p_1,p_2}(-1)^{p_1+p_2}[p_1^2+2p_1p_2+p_2^2]\Tr\left( (e^{-t\Delta^{p_1}_{f_1}}-\Pi^{p_1}_1)\otimes e^{-t\Delta_2^{p_2}}+\Pi_1^{p_1}\otimes (e^{-t\Delta^{p_2}_2}-\Pi^{p_2}_2 )\right)\\
=&\sum_{p_1,p_2} (-1)^{p_1+p_2}p_1^2\left\{\Tr (e^{-t\Delta_{f_1}^{p_1}}-\Pi_1^{p_1})\Tr e^{-t\Delta_{f_2}^{p_2}} +\Tr \Pi_1^{p_1} \Tr (e^{-t\Delta^{p_2}_{f_2}}-\Pi^{p_2}_2 )\right\}\\
+&2\sum_{p_1,p_2}(-1)^{p_1+p_2}p_1p_2\left\{\Tr (e^{-t\Delta_{f_1}^{p_1}}-\Pi_1^{p_1})\Tr e^{-t\Delta^{p_2}_{f_2}}+\Tr \Pi_1^{p_1} \Tr (e^{-t\Delta^{p_2}_{f_2}}-\Pi^{p_2}_2 )  \right\}\\
+&\sum_{p_1,p_2} (-1)^{p_1+p_2}p_2^2\left\{\Tr e^{-t\Delta_{f_1}^{p_1}}\Tr (e^{-t\Delta^{p_2}_{f_2}}-\Pi^{p_2}_2)+\Tr (e^{-t\Delta^{p_1}_{f_1}}-\Pi^{p_1}_1 )\Tr \Pi^{p_2}_2\right\}\\
=&\left( \sum_{p_1} (-1)^{p_1} p_1^2 \Tr (e^{-t\Delta_{f_1}^{p_1}}-\Pi_1^{p_1})\right) \left( \sum_{p_2}(-1)^{p_2}\Tr e^{-t\Delta_{f_2}^{p_2}}\right) \\
+&\left( \sum_{p_1}(-1)^{p_1}\Tr e^{-t\Delta_{f_1}^{p_1}}\right) \left( \sum_{p_2} (-1)^{p_2} p_2^2 \Tr (e^{-t\Delta_{f_2}^{p_2}}-\Pi_2^{p_2}) \right). 
\end{align*}
Here we used index formula and the result in Theorem \ref{sec4:thm-1}. Now this identity gives (\ref{sect6-zeta-func-1}). 

\end{proof}

\subsubsection*{\underline{A special case: $A_r$ singularity}}

Consider a holomorphic function $f$ on the complex 1-dimensional Euclidean space $(\C,g=\frac{i}{2}dz\otimes d\zb)$. Then 
$$ 
\Delta_f=\Delta_{\bpat}-2(f''\iota_{\pat_\zb}dz\wedge+\overline{f''}\iota_{\pat_z}d\zb\wedge)+2|f'|^2.
$$
Hence when acting on $0$ or $2$ forms, $\Delta_f$ is the following scalar operator:
$$
\Delta_f^0=\Delta^2_f=\Delta_{\bpat}+2|f'|^2=2(-\pat_z\pat_\zb+|f'|^2).
$$
By \cite[Theorem 2.43]{Fa}, if $f$ is strongly tame, then $\Delta^0_f$ has purely discrete spectrum $\{\lambda_i,i=1,2,\cdots\}$. Hence we have 
\begin{align*}
 \Theta^m_f(s)&=\frac{1}{2\Gamma(s)}\int^\infty_0 \left(\sum_{k=0}^{2n}(-1)^k k^m\Tr(e^{-t\Delta^k_f}-\Pi)\right)t^{s-1}dt\\
 &=\frac{2^m-2}{2\Gamma(s)}\int^\infty_0 \Tr(e^{-t\Delta^0_f})t^{s-1}dt\\
 &=(2^{m-1}-1)\sum_{i=1}^\infty \lambda_i^{-s}
\end{align*}
In particular, we have 
\begin{equation}
\Theta^2_f(s)=\sum_{i=1}^\infty \lambda_i^{-s}
\end{equation}

Let $f=\frac{\tau z^2}{2}$ (i.e., complex 1 dimensional harmonic oscillator case), we have 
\begin{align*}
\Delta_f^0=&2(-\pat_z\pat_\zb+|\tau|^2|z|^2)\\
=&\frac{1}{2}\left\{(-\pat^2_x+4|\tau|^2x^2)+(-\pat_y^2+4|\tau|^2y^2)\right\}.
\end{align*}
Since $-\pat^2_x+4|\tau|^2x^2$ has eigenvalues $2|\tau|(1+2k),k=0,1\cdots$, the eigenvalue of $\Delta_f^0$ is given by 
\begin{equation}
\lambda_{k,l}=2|\tau|(1+k+l),\;k,l=0,1,\cdots,
\end{equation}
which implies that 
\begin{equation}
\Theta^2(s)=(2|\tau|)^{-s}\zeta(s-1),
\end{equation}
where $\zeta(s)$ is the {\em Riemann Zeta function}. Hence we have
\begin{equation}
 T^2(f)=(2|\tau|)^{-\frac{1}{12}}e^{-\zeta'(-1)}.
\end{equation}

\providecommand{\bysame}{\leavevmode\hbox
to3em{\hrulefill}\thinspace}

\end{document}